\documentclass[11pt,notitlepage,tightenlines,footinbib,prx,reprint,superscriptaddress]{revtex4-2}

\bibliographystyle{unsrt}

\usepackage{newpxtext,newpxmath}

\let\coloneqq\relax
\let\eqqcolon\relax

\usepackage[latin1]{inputenc}
\usepackage{amsthm}
\usepackage{amssymb}
\usepackage{amsmath}
\usepackage{bbold}
\usepackage{bbm}
\usepackage[pdftex, backref=page]{hyperref}
\usepackage{braket}
\usepackage{dsfont}
\usepackage{mathdots}
\usepackage{mathtools}
\usepackage{enumerate}
\usepackage[shortlabels]{enumitem}
\usepackage{csquotes}
\usepackage{stmaryrd}
\usepackage[cal=boondox]{mathalfa}
\usepackage{graphicx}
\usepackage{stackengine}
\usepackage{scalerel}
\usepackage{tensor}       
\usepackage{array}
\usepackage{makecell}
\newcolumntype{x}[1]{>{\centering\arraybackslash}p{#1}}
\usepackage{tikz}
\usepackage{pgfplots}
\usetikzlibrary{shapes.geometric, shapes.misc, positioning, arrows, arrows.meta, decorations.pathreplacing, decorations.pathmorphing, patterns, angles, quotes, calc}
\usepackage{booktabs}
\usepackage{xfrac}
\usepackage{siunitx}
\usepackage{centernot}
\usepackage{comment}
\usepackage{chngcntr}
\usepackage{caption}
\usepackage{subcaption}

\newtheorem{thm}{Theorem}
\newtheorem*{thm*}{Theorem}
\newtheorem{prop}[thm]{Proposition}
\newtheorem*{prop*}{Proposition}
\newtheorem{lemma}[thm]{Lemma}
\newtheorem*{lemma*}{Lemma}

\newtheorem*{cor*}{Corollary}

\newtheorem*{cj*}{Conjecture}
\newtheorem{Def}[thm]{Definition}
\newtheorem*{Def*}{Definition}

\newtheorem*{question*}{Question}

\newtheorem*{problem*}{Problem}

\makeatletter
\def\thmhead@plain#1#2#3{%
  \thmname{#1}\thmnumber{\@ifnotempty{#1}{ }\@upn{#2}}%
  \thmnote{ {\the\thm@notefont#3}}}
\let\thmhead\thmhead@plain
\makeatother

\theoremstyle{definition}
\newtheorem{rem}[thm]{Remark}

\newtheorem{ex}[thm]{Example}

\newcommand{\bb}{\begin{equation}\begin{aligned}\hspace{0pt}}
\newcommand{\bbb}{\begin{equation*}\begin{aligned}}
\newcommand{\ee}{\end{aligned}\end{equation}}
\newcommand{\eee}{\end{aligned}\end{equation*}}
\newcommand*{\coloneqq}{\mathrel{\vcenter{\baselineskip0.5ex \lineskiplimit0pt \hbox{\scriptsize.}\hbox{\scriptsize.}}} =}
\newcommand*{\eqqcolon}{= \mathrel{\vcenter{\baselineskip0.5ex \lineskiplimit0pt \hbox{\scriptsize.}\hbox{\scriptsize.}}}}
\newcommand\floor[1]{\left\lfloor#1\right\rfloor}

\newcommand{\eqt}[1]{\stackrel{\mathclap{\scriptsize \mbox{#1}}}{=}}
\newcommand{\leqt}[1]{\stackrel{\mathclap{\scriptsize \mbox{#1}}}{\leq}}

\newcommand{\ketbra}[1]{\ket{#1}\!\!\bra{#1}}
\newcommand{\Ketbra}[1]{\Ket{#1}\!\!\Bra{#1}}
\newcommand{\ketbraa}[2]{\ket{#1}\!\!\bra{#2}}

\newcommand{\sumno}{\sum\nolimits}

\newcommand{\dd}{\mathrm{d}}

\newcommand{\tcr}[1]{{\color{red} #1}}

\newcommand{\id}{\mathds{1}}
\newcommand{\R}{\mathds{R}}
\newcommand{\N}{\mathds{N}}

\newcommand{\C}{\mathds{C}}

\newcommand{\locc}{\mathrm{LOCC}}

\DeclareMathOperator{\Tr}{Tr}

\DeclareMathAlphabet{\pazocal}{OMS}{zplm}{m}{n}

\DeclareMathOperator{\supp}{supp}
\DeclareMathOperator{\spec}{sp}

\DeclareMathOperator{\vol}{vol}

\newcommand{\HH}{\pazocal{H}}
\newcommand{\T}{\pazocal{T}}

\newcommand{\K}{\pazocal{K}}

\newcommand{\B}{\pazocal{B}}

\newcommand{\EE}{\mathcal{E}}

\newcommand{\lsmatrix}{\left(\begin{smallmatrix}}
\newcommand{\rsmatrix}{\end{smallmatrix}\right)}

\stackMath
\newcommand\xxrightarrow[2][]{\mathrel{%
  \setbox2=\hbox{\stackon{\scriptstyle#1}{\scriptstyle#2}}%
  \stackunder[5pt]{%
    \xrightarrow{\makebox[\dimexpr\wd2\relax]{$\scriptstyle#2$}}%
  }{%
   \scriptstyle#1\,%
  }%
}}

\newcommand{\tends}[2]{\xxrightarrow[\! #2 \!]{\mathrm{#1}}}

\stackMath

\makeatletter
\newcommand*\rel@kern[1]{\kern#1\dimexpr\macc@kerna}
\newcommand*\widebar[1]{%
  \begingroup
  \def\mathaccent##1##2{%
    \rel@kern{0.8}%
    \overline{\rel@kern{-0.8}\macc@nucleus\rel@kern{0.2}}%
    \rel@kern{-0.2}%
  }%
  \macc@depth\@ne
  \let\math@bgroup\@empty \let\math@egroup\macc@set@skewchar
  \mathsurround\z@ \frozen@everymath{\mathgroup\macc@group\relax}%
  \macc@set@skewchar\relax
  \let\mathaccentV\macc@nested@a
  \macc@nested@a\relax111{#1}%
  \endgroup
}

\counterwithin*{equation}{part}
\counterwithin*{thm}{part}
\counterwithin*{figure}{part}

\tikzset{meter/.append style={draw, inner sep=10, rectangle, font=\vphantom{A}, minimum width=30, line width=.8, path picture={\draw[black] ([shift={(.1,.3)}]path picture bounding box.south west) to[bend left=50] ([shift={(-.1,.3)}]path picture bounding box.south east);\draw[black,-latex] ([shift={(0,.1)}]path picture bounding box.south) -- ([shift={(.3,-.1)}]path picture bounding box.north);}}}
\tikzset{roundnode/.append style={circle, draw=black, fill=gray!20, thick, minimum size=10mm}}
\tikzset{squarenode/.style={rectangle, draw=black, fill=none, thick, minimum size=10mm}}

\definecolor{Blues5seq1}{RGB}{239,243,255}
\definecolor{Blues5seq2}{RGB}{189,215,231}
\definecolor{Blues5seq3}{RGB}{107,174,214}
\definecolor{Blues5seq4}{RGB}{49,130,189}
\definecolor{Blues5seq5}{RGB}{8,81,156}

\definecolor{Greens5seq1}{RGB}{237,248,233}
\definecolor{Greens5seq2}{RGB}{186,228,179}
\definecolor{Greens5seq3}{RGB}{116,196,118}
\definecolor{Greens5seq4}{RGB}{49,163,84}
\definecolor{Greens5seq5}{RGB}{0,109,44}

\definecolor{Reds5seq1}{RGB}{254,229,217}
\definecolor{Reds5seq2}{RGB}{252,174,145}
\definecolor{Reds5seq3}{RGB}{251,106,74}
\definecolor{Reds5seq4}{RGB}{222,45,38}
\definecolor{Reds5seq5}{RGB}{165,15,21}

\usepackage{lipsum}

\renewcommand{\mathcal}{\pazocal}

\newcommand{\HS}{\pazocal{H\!S}}

\newcommand{\cls}{\hspace{-0.8pt}c\hspace{-0.8pt}\ell}
\newcommand{\vre}{\varepsilon}
\newcommand{\deff}[1]{\textbf{\emph{#1}}}
\newcommand{\dmin}{d}

\makeatletter
\newcommand{\raisemath}[1]{\mathpalette{\raisem@th{#1}}}
\newcommand{\raisem@th}[3]{\raisebox{#1}{$#2#3$}}
\makeatother

\usepackage{textgreek}
\tikzset{cross/.style={cross out, draw=black, thick, minimum size=2*(#1-\pgflinewidth), inner sep=0pt, outer sep=0pt}, cross/.default={1mm}}
\usepackage[most]{tcolorbox}

\tcbset{colback=Blues5seq1, colframe=Blues5seq5, highlight math style= {enhanced, colframe=Blues5seq5,colback=Blues5seq1,boxsep=0pt}}
\tikzset{roundnode/.append style={circle, draw=black, fill=gray!20, thick, minimum size=10mm}}
\tikzset{redsquarenode/.style={rectangle, draw=black, fill=Reds5seq3, opacity=1, thick, minimum size=10mm}}

\allowdisplaybreaks

\let\tcr\relax

\begin{document}

\title{Testing the quantum nature of gravity without entanglement}

\author{Ludovico Lami}
\email{ludovico.lami@gmail.com}
\affiliation{QuSoft, Science Park 123, 1098 XG Amsterdam, The Netherlands}
\affiliation{Korteweg--de Vries Institute for Mathematics, University of Amsterdam, Science Park 105-107, 1098 XG Amsterdam, The Netherlands}
\affiliation{Institute for Theoretical Physics, University of Amsterdam, Science Park 904, 1098 XH Amsterdam, The Netherlands}
\affiliation{Institut f\"{u}r Theoretische Physik und IQST, Universit\"{a}t Ulm, Albert-Einstein-Allee 11, D-89069 Ulm, Germany}

\author{Julen S. Pedernales}
\email{julen.pedernales@uni-ulm.de}
\affiliation{Institut f\"{u}r Theoretische Physik und IQST, Universit\"{a}t Ulm, Albert-Einstein-Allee 11, D-89069 Ulm, Germany}

\author{Martin B. Plenio}
\email{martin.plenio@uni-ulm.de}
\affiliation{Institut f\"{u}r Theoretische Physik und IQST, Universit\"{a}t Ulm, Albert-Einstein-Allee 11, D-89069 Ulm, Germany}

\begin{abstract}
Given a unitary evolution $U$ on a multi-partite quantum system and an ensemble of initial states, how well can $U$ be simulated by local operations and classical communication (LOCC) on that ensemble? We answer this question by establishing a general, efficiently computable upper bound on the maximal LOCC simulation fidelity --- what we call an `LOCC inequality'. We then apply our findings to the fundamental setting where $U$ implements a quantum Newtonian Hamiltonian over a gravitationally interacting system. Violation of our LOCC inequality can rule out the LOCCness of the underlying evolution, thereby establishing the non-classicality of the gravitational dynamics, which can no longer be explained by a local classical field. 
As a prominent application of this scheme we study systems of quantum harmonic oscillators initialised in coherent states following a normal distribution and interacting via Newtonian gravity, and discuss a possible physical implementation with torsion pendula. One of our main technical contributions is the analytical calculation of the above LOCC inequality for this family of systems. As opposed to existing tests based on the detection of gravitationally mediated entanglement, our proposal works with coherent states alone, and thus it does not require the generation of largely delocalised states of motion nor the detection of entanglement, which is never created at any point in the process.
\end{abstract}

\maketitle

\section{Introduction} \label{sec_intro}
The weakest force of Nature, gravity, is also the cruellest. The desire to understand its elusive character has often stirred the very roots of science, breeding so many of the greatest achievements of human thought
; at the same time, the extreme weakness of the gravitational interaction, compared to all of the other forces, has placed us in the bitter situation of not, yet, being able to investigate its ultimate nature. The question of whether gravity is fundamentally a classical or a quantum field, or perhaps an entity of an altogether different type, remains so far unanswered. 

While probing phenomena where the \tcr{quantum degrees of freedom of gravity could directly be tested 
remains prohibitive with current or near-term technologies, determining whether spacetime could exist in a superposition of distinct classical configurations} 
is a substantially easier, if still challenging, endeavour. A first proposal in that direction was put forth by Feynman~\cite{Feynman1957}, who presented a Gedankenexperiment in which a mass capable of generating a gravitational field is placed in a superposition of different spatial locations; such a mass could induce --- via the gravitational interaction --- a superposition in a second test mass, and the physical existence of such quantum coherences (mediated by gravity) can be tested, at least in principle.

In quantum as well as post-quantum theories, correlated superpositions naturally lead to the notion of entanglement~\cite{cones-3}. 
Therefore, further developing the above idea leads to the natural prediction that no classical gravitational field can entangle two quantum mechanical systems~\cite{Kafri-Taylor-1, Krisnanda2017, Galley2020}. 
Recently, proposals seeking to detect gravitationally generated entanglement~\cite{lindner2005testing, Kafri-Taylor-1, Schmoele2016, Bose2017, Krisnanda2020, Miao2020, Chevalier2020, pedernales2020motional, cosco2021enhanced, weiss2021large, Pedernales2022}
have prompted a discussion of the conclusions that could be drawn from these experiments regarding the quantum nature of gravity~\cite{Danielson2022, Christodoulou2022, Christodoulou2022-2, Sidajaya2022, Fragkos2022, Doener2022, Carney2022, Streltsov2022, Spaventa2023}. 
The concept underlying these proposals is the creation of coherently delocalised quantum states of two masses, in such a way that different branches of the wavefunction of one mass have different distances to those of the other mass. As a result of the gravitational interaction, evolution in time will assign diverging phases to those components, effectively entangling the two systems. Such entanglement can then be detected experimentally in various ways. For an introduction to these topics, we refer the reader to the recent review~\cite{tabletop}.

It is worth noting that besides the impossibility of entanglement generation, a classical gravity would be subjected to several other constraints that may also have observable consequences. For example, the continuous measurement to which a quantum source should be subjected by hand of a classical gravity should produce a decoherence effect~\cite{Diosi-thesis, Diosi2011, Kafri-Taylor-1, Kafri-Taylor-2, Kafri-Taylor-3, Pfister2016, Oppenheim2022b}. Other proposed experiments seek to exploit the spatial superposition of source masses~\cite{lindner2005testing, Schmoele2016, Carlesso2017, Carlesso2019, Haine2021}, or aim to detect the possible non-Gaussian signatures of a genuine quantum theory of gravity~\cite{Howl2021}.

For the sake of this discussion, let us focus our attention on the gravitational entanglement proposals. In layman's terms, the main idea underlying these proposals is to `challenge' gravity to create an entangled state between separated systems, starting from a situation where no entanglement is present. \emph{If the gravitational interaction --- assumed to be local --- were to be mediated by a classical field, the time evolution operator it could induce on the system would necessarily~\cite{Kafri-Taylor-1, Kafri-Taylor-2, Kafri-Taylor-3, Bose2017, Marletto2017, Krisnanda2017} be representable as a sequence of local operations assisted by classical communication (LOCC)~\cite{LOCC}.} This is sometimes known as the \emph{LOCC theorem}. Since LOCC operations cannot turn an unentangled state into an entangled one~\cite{Plenio-Virmani,Horodecki-review}, the above task is impossible to accomplish for a classical gravity. To generate that entanglement, instead, gravity will have to resort to its quantum nature, assuming it has one, and by doing so it will reveal it to us. \tcr{Note that here we have implicitly assumed a very specific meaning for the term \emph{classical gravity}, namely, any gravitational interaction between quantum objects that can be effectively described by LOCC operations. This, in turn, also sets the meaning of the term \emph{quantum gravity} as any gravitational interaction that cannot be described by LOCC operations. This delineation, which is a matter of choice, establishes the nomenclature we will adhere to for the remainder of this article.}

The main issue with the experimental implementation of the above proposals is that placing in a coherently delocalised quantum state masses that are sufficiently large to generate an appreciable gravitational field has proved to be a formidable task. To get an idea of the challenges associated with these proposals, consider that the heaviest mass for which spatial superposition, in the form of matter-wave interference, has been observed is a large molecule of mass $\sim \SI{4e-23}{\kg}$~\cite{Fein2019}. By contrast, the smallest source mass whose gravitational field has been measured
is just below $\SI{100}{\mg}$~\cite{westphal2021measurement}.

\subsection{The idea} \label{subsec_idea}

As we have seen, the gravitational entanglement proposal rests on the LOCC theorem. However, it does not exploit its full power: on the contrary, the only consequence drawn from the theorem is that no entanglement can be generated by a classical gravitational field when the system is initially prepared in some unentangled state. Even if applied to all unentangled inputs --- which is often not considered --- this amounts, in quantum information parlance, to the statement that LOCC operations are non-entangling. 
However, one key insight from the theory of entanglement manipulation is that the set of non-entangling operations is much larger than that of LOCCs~\cite{LOCC, irreversibility}. 
For an extreme example of this difference, consider the swap operator, acting on a bipartite system $AB$ as $F_{AB} \ket{\psi}_A\ket{\phi}_B = \ket{\phi}_A\ket{\psi}_B$: this operation cannot generate any entanglement when acting on an unentangled state, yet it is highly non-LOCC --- indeed, it involves the perfect exchange of quantum information between the two parties, which cannot be accomplished with a classical communication channel~\cite{eisert2000optimal}. 
In other words, there is no way for two separate parties Alice and Bob who do not have access to a quantum communication channel --- e.g.\ an optical fibre --- to successfully exchange two unknown quantum states $\ket{\psi}_A$ and $\ket{\phi}_B$.

What the above discussion teaches us is that there are so many more constraints that a quantum evolution should obey, besides the mere absence of entanglement generation, in order for it to be a genuine LOCC. As a result, the failure to observe entanglement generation does not imply the classicality of the gravitational interaction, in the same way as observing an entirely entanglement-free swap $\ket{\psi}_A\ket{\phi}_B \mapsto \ket{\phi}_A\ket{\psi}_B$ should not lead us to conclude that the underlying process is purely classical --- on the contrary, there \emph{must} be some exchange of quantum information going on between $A$ and $B$. The fundamental general question we will tackle here is therefore:
\begin{center}
\emph{how do we decide whether a particular set of observations of a process is incompatible with an LOCC dynamics?}
\end{center}
In a nutshell, our main contribution is to answer the above question by figuring out some of the constraints that every LOCC-model should obey. We dub these constraints \emph{LOCC inequalities}, because the role they play here is somewhat similar to the role of Bell inequalities in the theory of quantum non-locality~\cite{Brunner-review}: as the violation of a Bell inequality certifies that the underlying process, whose inner workings we ignore, is unequivocally non-local, in the same way the violation of one of our LOCC inequalities certifies that the unknown evolution undergone by the system cannot be 
LOCC, i.e.\ it cannot be explained by introducing a mediator which is a local classical field. We then proceed to show how these LOCC inequalities can be applied to rule out the LOCCness of a time evolution associated with a gravitational Hamiltonian, even when no entanglement whatsoever is generated at any point in time during the process. 

Another way to look at our conceptual step forward is that while analysing gravitationally interacting quantum systems we do not look only at the preparation of a single output state. Rather, we consider an ensemble of possible input states and their resulting output. In this way, we can better characterise the channel that describes the full dynamics under gravity after a set time. This is a substantial conceptual improvement, because quantum channels, modelling the result of the time evolution to which open quantum systems are subjected, are more general and thus harder to simulate than states. It is also a practically relevant improvement, because, as we have seen with the swap example above, there are several dynamical processes that never involve any entanglement but are still impossible to reproduce by classical means. Resorting to these dynamical tests has the potential to by-pass the many problems related with entanglement generation and maintenance in realistic experimental environments, while retaining the capability of testing the quantum nature 
of the gravitational interaction.

To understand how this is possible at all, let us start by looking at a simple toy model. Imagine two identical quantum harmonic oscillators with masses $m$ and frequencies $\omega$, oscillating along the same straight line, and whose centres are separated by a distance $d$. The local Hamiltonian takes the form $H_1+H_2 = \sum_{i=1}^2 \left(\frac12 m\omega^2 x_i^2 + \frac{p_i^2}{2m} \right)$. At this point, we would like to include the gravitational interaction into the picture. \emph{If gravity is to be quantised}, general arguments~\cite{tabletop} suggest that this can be done, in the low-energy limit, by simply adding the Newtonian potential to the local Hamiltonian 
and treating the resulting expression
\bb
H_1+H_2 +H_{G} = \sum_{i=1}^2 \left(\frac12 m\omega^2 x_i^2 + \frac{p_i^2}{2m} \right) - \frac{Gm^2}{|d-x_1+x_2|}\, ,
\label{H_tot_two}
\ee
where the gravitational constant is given by $G = \SI{6.6743e-11}{\m^3\kg^{-1}\s^{-2}}$, as a \emph{quantum} Hamiltonian acting on the bipartite system~\cite{tabletop}. 
For a fully consistent derivation of this claim, we refer the reader to~\cite{Christodoulou2022}. If the position variance of each oscillator is much smaller than $d$, we can perform a Taylor expansion of $H_{G}$. Furthermore, although this is not strictly necessary, we can also assume that the gravitational interaction is much weaker than the local harmonic potential, i.e.\ $\frac{Gm}{d^3 \omega^2}\ll 1$; in this situation, we can safely make the rotating-wave approximation as well. This yields the total effective Hamiltonian~\cite{tabletop}
\bb
H_{\text{eff}} &= \sum_{i=1}^2 \left(\frac12 m \omega^2 x_i^2 + \frac{p_i^2}{2m} \right) + \frac{Gm^2}{d^3} \left(x_1x_2+\frac{p_1p_2}{(m\omega)^2} \right) .
\label{H_eff_two}
\ee
Let us assume that the two-oscillator system starts off in the product coherent state $\ket{\alpha} \otimes \ket{\beta}$, where $\alpha,\beta\in \C$~\cite{BUCCO}. 
After some time $t$, it will have evolved to the state
\bb
e^{-\frac{it}{\hbar} H_{\text{eff}} } \left(\ket{\alpha}\! \otimes\! \ket{\beta}\right) &= \ket{e^{i\omega t} \left(\cos (\gamma t) \alpha - i\sin (\gamma t) \beta \right) } \\
&\quad \otimes \ket{e^{i\omega t} \left(\cos (\gamma t) \beta - i\sin (\gamma t) \alpha \right) } ,
\label{elementary_evolution}
\ee
where $\gamma = \frac{Gm}{d^3 \omega}$. Note that this is a product state, i.e.\ it contains no entanglement; the same can be said of the initial state. However, there exists no $(\alpha,\beta)$-independent LOCC protocol implementing the transformation $\ket{\alpha}\otimes \ket{\beta} \to e^{-\frac{it}{\hbar} H_{\text{eff}}} \left(\ket{\alpha}\! \otimes\! \ket{\beta}\right)$ for all $\alpha,\beta$~\footnote{Such a protocol does exist, however, if we allow dependence on the input coherent states. Also, it is worth observing that the particular dynamics we consider \emph{can} generate entanglement if applied to a different unentangled input state --- for example squeezed states. We will return to these points later.}. To see why this must be the case, consider a time $t_0$ such that $\gamma t_0 = \pi/2$. Then
\bb
e^{-\frac{it_0}{\hbar} H_{\text{eff}}} \left(\ket{\alpha}\! \otimes\! \ket{\beta}\right) &= \ket{-i e^{i \omega t_0} \beta} \otimes \ket{-i e^{i \omega t_0} \alpha} .
\label{even_more_elementary_evolution}
\ee
Up to an immaterial local phase $-ie^{i\omega t_0}$, we have effectively swapped the two systems. As we discussed above, this is not possible by means of local operations and classical communication alone, without the assistance of some quantum communication channel and without knowing the values of $\alpha$ and $\beta$. Therefore, if we could ascertain that the quantum systems evolve as in~\eqref{elementary_evolution} for arbitrary coherent state inputs, we could immediately rule out the classicality of the gravitational field!

Although we take this qualitative observation as the starting point of our discussion, more work is needed to turn it into quantitative --- and testable --- predictions. Indeed, in any realistic scenario uncontrolled sources of noise will always alter the ideal time evolution~\eqref{elementary_evolution}. What we can observe, at best, is some approximate version of~\eqref{elementary_evolution}, and our intuition tells us that if the approximation is sufficiently good, the impossibility of implementing the dynamics via LOCC should nevertheless hold. And indeed, this intuition can be rigorously validated by employing our LOCC inequalities.

Although motivated by the quest for detection of the quantum nature 
of gravity, the scope of 
our analysis is actually broader. In fact, the scheme we present here is general enough to be applicable to any interaction whose quantum nature 
we wish to certify. Our main conceptual contribution in answering this question is that we do away with any assumption on what particular LOCC operation the unknown interaction may implement. When applied to gravity, this is especially crucial: indeed, on the one hand it is perfectly conceivable that the nature of the gravitational field is classical instead of quantum~\cite{gravitization-1}, 
while on the other there is currently no consensus as to what a classical theory of the gravitational field interacting with quantum sources should look like. Proposals that are more or less ambitious range from 
collapse models~\cite{Karolyhazy1966, Diosi1987, Diosi1989, Penrose1996, Kafri-Taylor-2, Tilloy2016} to fully-fledged theories~\cite{Oppenheim2018, Oppenheim2022, Oppenheim2023}. Because of all these theoretical alternatives, our treatment should and will encompass any possible LOCC implemented on a multi-partite quantum system. In this conceptually comprehensive setting, the above question becomes technically challenging, because the set of LOCCs is well known to be exceedingly difficult to characterise mathematically~\cite{LOCC}.

\subsection{Summary of results} \label{subsec_summary_results}

Our first contribution is to identify a relevant quantum information theoretic task, simulation of a unitary dynamics on an ensemble of states via LOCCs, and to define a quantitative figure of merit for it, the \emph{LOCC} (or \emph{classical}) \emph{simulation fidelity}, that in principle can be measured experimentally. This is described in abstract terms in \S~\ref{subsec_unitary_simulation_LOCCs}, while in \S~\ref{subsec_experiments} we explore how this scheme could be applied to design a very general class of experiments to determine the quantum nature 
of gravity.

Our first main theoretical contribution is Theorem~\ref{general_bound_thm} in \S~\ref{subsec_general_benchmarks}, where we state our most general LOCC inequality. This takes the form of an upper bound on the maximal fidelity that an LOCC channel can achieve when attempting to simulate a given global unitary $U$ on a multi-partite quantum system initially prepared in an unknown pure state taken from an ensemble $\{p_\alpha, \psi_\alpha\}_\alpha$. By applying it, we can \emph{benchmark} the aforementioned experiments, that is, we can determine quantitatively the threshold above which a successful detection of the quantum nature 
of gravity can be claimed (under our assumptions). To prove Theorem~\ref{general_bound_thm} we employ mathematical tools taken from quantum information science, mainly the min-entropy formalism, the Choi--Jamio{\l}kowski isomorphism, the theory of LOCC operations, and the positive partial transpose (PPT) entanglement criterion. As a consequence, our approach actually yields upper bounds on the simulation fidelity achievable with PPT preserving operations, which are a strictly larger and more powerful class than LOCC operations.

In \S~\ref{sec_proposed_implementation} we describe 
a particular family of thought experiments involving gravitationally interacting quantum harmonic oscillators, initially prepared in random coherent states. A detailed exposition can be found in \S~\ref{subsec_proposal}. Applying the general results of Theorem~\ref{general_bound_thm} to this setting is itself technically challenging, and the associated calculation constitutes one of our main mathematical contributions. In \S~\ref{subsec_quantitative_analysis} we solve this problem by means of Theorems~\ref{general_symplectic_bound_thm} and~\ref{recap_thm}; a simplified expression is obtained by looking at experiments taking place on small enough time scales (Theorem~\ref{recap_small_times_thm}). 
In \S~\ref{subsec_examples} we look at some concrete examples, while in \S~\ref{subsec_experimental_considerations} we consider possible implementations of our proposal with torsion pendula, touching upon some of the associated experimental challenges. Finally, in \S~\ref{sec_conclusions} we summarise our work, comparing our proposal with some existing ones and discussing relative strenghts and weaknesses.

The technical proofs of Theorems~\ref{general_symplectic_bound_thm},~\ref{recap_thm}, and~\ref{recap_small_times_thm} are presented in Appendices~\ref{subsec_proof_general_symplectic_bound}--\ref{subsec_proof_recap_small_times}. We strive to achieve a high level of mathematical rigour, which is in itself one of our foremost conceptual contributions. For example, most of our approximations come with rigorous quantitative estimates. As a by-product of our approach, we show how one can re-derive and in fact extend previous results on the theory of quantum benchmarks for teleportation~\cite{Massar1995, Werner1998, Horodecki-teleportation, Bruss1999, Hammerer2005, BUCCO}. This is detailed in Appendix~\ref{subsec_telep_threshold}. Besides being more general, our proofs have the advantage of being more transparent and shorter.

Let us summarise the organisation of the paper as follows:
\begin{enumerate} 

    \item[\S~\ref{sec_preliminaries}] We introduce our notation: quantum states and channels, continuous-variable systems and Gaussian operations, entanglement theory, and conditional min-entropy.

    \item[\S~\ref{sec_dynamical_experiments}] In \S~\ref{subsec_unitary_simulation_LOCCs} we define the task of unitary simulation via LOCCs. We describe how this could be applied to detecting the quantum nature 
    of gravity in \S~\ref{subsec_experiments}. A general LOCC inequality is derived in \S~\ref{subsec_general_benchmarks}. 

    \item[\S~\ref{sec_proposed_implementation}] In \S~\ref{subsec_proposal} we describe a particular family of experiments with gravitationally interacting harmonic oscillators. Applying the general results of Theorem~\ref{general_bound_thm} to this setting results in Theorems~\ref{general_symplectic_bound_thm},~\ref{recap_thm}, and~\ref{recap_small_times_thm} in \S~\ref{subsec_quantitative_analysis}. Proofs of these results can be found in Appendices~\ref{subsec_proof_general_symplectic_bound}--\ref{subsec_proof_recap_small_times}. Some examples are examined in \S~\ref{subsec_examples}, while experimental considerations and comparisons with prior proposals are deferred to \S~\ref{subsec_experimental_considerations}. Finally, \S~\ref{Assessment} is devoted to a more detailed assessment of a possible setup based on torsion pendula.

    \item[\S~\ref{sec_conclusions}] A thorough discussion of the implications of our findings is presented in \S~\ref{subsec_discussion}. There we also compare our proposal with previous ones, highlighting the main conceptual differences and the practical consequences these entail.


\end{enumerate}

\section{Notation} \label{sec_preliminaries}

This section contains a description of the (rather standard) notation to be used throughout the paper.

A generic quantum system is mathematically described by a complex Hilbert space $\HH$, not necessarily finite dimensional. A prominent role in the theory is played by two spaces~\footnote{If $\dim\HH = n <\infty$, all three of these can be identified with the space of $n\times n$ complex matrices.} of operators acting on $\HH$: that of all trace class operators, $\T(\HH)$, equipped with the \deff{trace norm}
\bb
\|X\|_1\coloneqq \Tr \sqrt{X^\dag X}\, ,
\label{trace_norm}
\ee
and that of all bounded operators, $\B(\HH)$, equipped with the \deff{operator norm}
\bb
\|X\|_\infty \coloneqq \sup_{\ket{\psi}\in \HH,\, \|\ket{\psi}\|=1} \left\|X\ket{\psi}\right\| .
\label{operator_norm}
\ee
For example, physical states are represented by density operators, i.e.\ positive semi-definite trace class operators with trace one. The simplest such states are pure states, of the form $\psi = \ketbra{\psi}$ for some $\ket{\psi}\in \HH$ with unit norm $\left\|\ket{\psi}\right\|=1$. The resemblance between two states $\rho$ and $\sigma$ of the same quantum system can be quantified via the \emph{Uhlmann fidelity}, or simply the \emph{fidelity}, defined by~\cite{Uhlmann-fidelity} 
\bb
F(\rho,\sigma) \coloneqq \left\|\sqrt{\rho}\sqrt{\sigma}\right\|_1^2 .
\label{fidelity}
\ee
Importantly, if one of the two states is pure then~\eqref{fidelity} reduces to the overlap, i.e.
\bb
F(\rho,\psi) = \braket{\psi|\rho|\psi}\, .
\label{fidelity_overlap}
\ee

A generic quantum measurement is represented by a normalised positive operator-valued measure (POVM)~\cite[Definition~11.29]{HOLEVO-CHANNELS-2}. In our case, we will mostly care about the case of binary measurements, modelled by pairs of bounded operators $(E_0, E_1)$ on $\HH$ satisfying $E_i\geq 0$ for all $i=0,1$ and $E_0+E_1=\id$. This measurement carried out on the state $\rho$ yields as outcome $i\in \{0,1\}$ with probability
\bb
P(i|\rho) = \Tr \rho E_i\, .
\label{outcome}
\ee

Consider now two generic quantum systems $A,A'$ with Hilbert spaces $\HH_A, \HH_{A'}$. A quantum channel $\Lambda$ from $A$ to $A'$, denoted $\Lambda:A\to A'$, is a physically implementable transformation between states of $A$ and states of $A'$~\cite{Stinespring}. It can be represented mathematically by a linear map $\Lambda:\T(\HH_A) \to \T(\HH_{A'})$ that is in addition completely positive~\cite{Choi} and trace preserving.

\subsection{Continuous-variable systems} \label{subsec_CV_systems}

A $n$-mode continuous-variable system has Hilbert space $\HH_n\coloneqq L^2(\R^n)$. Arranging the dimensionless canonical operators $\bar{x}_j,\bar{p}_k$ in a vector
\bb
\bar{r}\coloneqq (\bar{x}_1,\bar{p}_1,\ldots, \bar{x}_n,\bar{p}_n)^\intercal
\label{r}
\ee
which we will call the mode-wise decomposition, the \deff{canonical commutation relations} take the form
\bb
[\bar{r},\bar{r}^\intercal] = i \Omega\, \id_{\HH_n}\, ,\qquad \Omega \coloneqq \begin{pmatrix} 0 & 1 \\ -1 & 0 \end{pmatrix}^{\oplus n} ,
\label{CCR}
\ee
where $\Omega$ is referred to as the \deff{standard symplectic form}. Sometimes it is useful to construct also the \deff{annihilation} and \deff{creation operators}, defined by $a_j \coloneqq \left(\bar{x}_j + i \bar{p}_j\right)/\sqrt2$ and $a_j^\dag \coloneqq \left(\bar{x}_j - i \bar{p}_j\right)/\sqrt2$, respectively, whose commutation relations can be written as $[a_j,a_k]=0$ and $[a_j, a_k^\dag] = \delta_{j,k}$.

Acting repeatedly with creation operators on the \deff{vacuum state} $\ket{0}$, defined via $a_j\ket{0}=0$ for all $j$, yields the \deff{Fock states}. Considering for the moment only the single-mode case $n=1$, these are indexed by $k\in \N$ and are given by $\ket{k}\coloneqq (k!)^{-1/2} (a^\dag)^k\ket{0}$. Other notable states are the \deff{coherent states}, parametrised by a complex number $\alpha\in \C$ and given by
\bb
\ket{\alpha} \coloneqq e^{-|\alpha|^2/2} \sum_{k=0}^\infty \frac{\alpha^k}{\sqrt{k!}}\, \ket{k}\, .
\label{coherent}
\ee
These states form an over-complete set, in the sense that $\int \frac{\dd^2\!\alpha}{\pi}\, \ketbra{\alpha} = \id$~\cite[Exercise~12.6]{HOLEVO-CHANNELS-2}. Multi-mode coherent states are simply product states of single-mode coherent states, i.e.
\bb
\ket{\alpha}\coloneqq \ket{\alpha_1}\otimes \ldots \otimes \ket{\alpha_n}\, ,
\label{coherent_multimode}
\ee
where $\alpha = (\alpha_1,\ldots,\alpha_n)^\intercal \in \C^n$.

Again in the multi-mode case of arbitrary $n$, the \deff{displacement} (unitary) operator associated with a real vector $\delta\in \R^{2n}$ is defined by
\bb
D_\delta \coloneqq e^{-i\delta^\intercal \Omega r}\, .
\label{displacement}
\ee
When acting on the vacuum state, displacement operators yield coherent states. More precisely, we have that
\bb
D_\delta \ket{0} = \ket{\alpha_\delta}\, ,
\label{displacement_vacuum}
\ee
where $\alpha_\delta = \frac{1}{\sqrt2} \left( \delta_X + i \delta_P \right)$, $\delta_X \coloneqq \left( \delta_1,\delta_3,\ldots, \delta_{2n-1}\right)^\intercal$, and $\delta_P \coloneqq \left( \delta_2,\delta_4,\ldots, \delta_{2n}\right)^\intercal$.

A $2n\times 2n$ real matrix $S$ is called \deff{symplectic} if it preserves the quadratic form $\Omega$, i.e.\ if $S\Omega S^\intercal = \Omega$. 
Symplectic matrices form a group, called the symplectic group. To any symplectic matrix $S$ we can associate a unique unitary operator $U_S$ on the Hilbert space $\HH_n$, called the \deff{Gaussian unitary} corresponding to $S$, such that
\bb
U_S^\dag\, r\, U_S^{\phantom{\dag}} = Sr\, ,
\label{U_S}
\ee
where the vector of operators on the right-hand side is intended to have components $(Sr)_j = \sum_k S_{jk} r_k$. From~\eqref{U_S} it is not difficult to verify that displacement operators transform according to the rule $U_S^{\phantom{\dag}} D_\delta U_S^\dag = D_{S\delta}$ under the action of Gaussian unitaries.

Crucially, the group of Gaussian unitaries is generated by the evolutions induced by quadratic Hamiltonians. Here, a Hamiltonian is called quadratic if it is of the form $H_\mathrm{q} = \frac12 r^\intercal Q r$ for some $2n\times 2n$ real symmetric matrix $Q$, and the associated evolution is the unitary operator $e^{-i H_\mathrm{q}}$. (Without loss of generality, we incorporated the time in the Hamiltonian.) It turns out that
\bb
e^{-\frac{i}{2} r^\intercal Q r} = U_S\, ,\qquad S = e^{\Omega Q}\, .
\label{quadratic<-->symplectic}
\ee
Also, the correspondence $S\mapsto U_S$ is a group isomorphism between symplectic matrices and Gaussian unitaries.

A special sub-group within the symplectic group is formed by \emph{orthogonal} symplectic matrices $K$, which in addition to $K\Omega K^\intercal = \Omega$ also satisfy the orthogonality condition $KK^\intercal = \id$. A remarkable property of the corresponding Gaussian unitaries, also called in this case `passive unitaries', is that they send coherent states to coherent states. Specifically, if $K$ is orthogonal symplectic then
\bb
U_K \ket{\alpha} = \ket{K\alpha} \qquad \forall\ \alpha\in \C\, .
\label{passive_on_coherent}
\ee

Williamson's theorem states that any real positive definite $2n\times 2n$ matrix $W>0$ can be brought into the normal form
\bb
S^{-1} W S^{-\intercal} = \bigoplus_{j=1}^n \begin{pmatrix} \nu_j & 0 \\ 0 & \nu_j \end{pmatrix}
\ee
via symplectic congruence. Here, \tcr{$S^{-\intercal}$ denotes the inverse transpose of $S$, and} the numbers $\nu_j >0$ are called the \deff{symplectic eigenvalues} of $W$. Since these depend only on $W$, we will often denote them by $\nu_j(W)$. Using the fact that $S^{-\intercal} \Omega = \Omega S$ for every symplectic $S$, it is not difficult to show that they can be computed as
\bb
\spec\left(W\Omega\right) = \left\{\pm i\,\nu_1(W),\ldots,\pm i\,\nu_n(W)\right\}\, ,
\label{symplectic_spectrum_eigenvalues}
\ee
where $\spec(X)$ is the spectrum of $X$. For a given $W>0$ with the above Williamson's decomposition and an arbitrary $\delta\in \R^{2n}$, we can define the corresponding \deff{Gaussian operator} by
\bb
&\mathrm{G}_1[W,\delta] \coloneqq D_{\!\delta} U_S \left(\bigotimes_{j=1}^n \sum_{k=0}^\infty \frac{2}{\nu_j\!+\!1} \left( \frac{\nu_j\!-\!1}{\nu_j\!+\!1} \right)^k \!\ketbra{k}_j \right) U_S^\dag D_{\!\delta}^\dag ,
\label{Gaussian_operator}
\ee
where $\ket{k}_j$ is the $k^\text{th}$ Fock state on the $j^\text{th}$ mode. It turns out that the right-hand side of~\eqref{Gaussian_operator} does not depend on the particular choice of Williamson's decomposition for $W$, but only on $W$ and $\delta$ themselves. Indeed, $W$ and $\delta$ can be recovered from $\mathrm{G}_1[W,\delta]$ via the formulae
\bb
\delta = \Tr \mathrm{G}_1[W\!,\delta]\, r ,\quad W = \Tr \mathrm{G}_1[W\!,\delta]\, \{r,r^\intercal\} - \delta \delta^\intercal\! ,
\label{trace_formulae_V_delta}
\ee
with $\{X,Y\}\coloneqq XY +YX$ denoting the anti-commutator. Note that $\Tr \mathrm{G}_1[W,\delta] = 1$; moreover, $\mathrm{G}_1[W,\delta]\geq 0$ if and only if $\nu_j\geq 1$ for all $j=1,\ldots, n$, which in turn happens if and only if $W+i\Omega\geq 0$. When this is the case, $\mathrm{G}_1[W,\delta]$ is in fact a density operator, and is called a \deff{Gaussian state}. We will refer to $W$ as its \deff{covariance matrix} and to $\delta$ as its \deff{mean} (or \deff{displacement}) \deff{vector}. A Gaussian state is called \deff{centred} if its mean vanishes. Using~\eqref{U_S}, it is not difficult to verify that for all Gaussian unitaries $U_S$ it holds that
\bb
U_S^{\phantom{\dag}} \mathrm{G}_1[W,\delta] U_S^\dag = \mathrm{G}_1\big[SWS^\intercal, S\delta\big] .
\label{QCM_transformation_symplectic}
\ee
Another formula that we will find useful is that for the maximal eigenvalue of a Gaussian operator, which, incidentally, coincides with its operator norm~\eqref{operator_norm}. Indeed, one sees 
from~\eqref{Gaussian_operator} that
\bb
\lambda_{\max}\left(\mathrm{G}_1[W,\delta]\right) = \left\| \mathrm{G}_1[W,\delta] \right\|_\infty = \prod_{j=1}^n \frac{2}{\nu_j(W)+1}\, .
\label{operator_norm_QCM}
\ee
See also~\cite[Eq.~(346)]{LL-Renyi} for a related statement. \tcr{To deduce~\eqref{operator_norm_QCM} from~\eqref{Gaussian_operator}, it suffices to observe that the spectrum is invariant under unitary action, so that
\bb
\spec\left(\mathrm{G}_1[W,\delta]\right) &= \spec\left(\bigotimes_{j=1}^n \sum_{k=0}^\infty \frac{2}{\nu_j\!+\!1} \left( \frac{\nu_j\!-\!1}{\nu_j\!+\!1} \right)^k \!\ketbra{k}_j \right) \\
&= \left\{ \prod_{j=1}^n \frac{2}{\nu_j\!+\!1} \left( \frac{\nu_j\!-\!1}{\nu_j\!+\!1} \right)^{k_j}:\ k_1,\ldots,k_n \in \N \right\} ,
\label{spectrum_Gaussian_calculation}
\ee
where we used the shorthand $\nu_j = \nu_j(W)$. Now, due to the fact that $\left|\frac{\nu-1}{\nu+1}\right|\leq 1$ for all $\nu\geq 0$, it is clear that the maximum eigenvalue of $\mathrm{G}_1[W,\delta]$ coincides with the eigenvalue of maximum \emph{modulus}, and hence also with the operator norm of $\mathrm{G}_1[W,\delta]$ --- which, being self-adjoint, is such that $\left\|\mathrm{G}_1[W,\delta]\right\|_\infty = \max_{\lambda\in \spec(\mathrm{G}_1[W,\delta])} |\lambda|$. To calculate the right-hand side, it suffices to set $k_1 = \ldots = k_n=0$ in~\eqref{spectrum_Gaussian_calculation}, which yields precisely~\eqref{operator_norm_QCM}.}

\subsection{Separable states}

Assume that the quantum system of interest $A$ is multi-partite, which we will write $A=A_1\ldots A_n$; concretely, this means that $\HH_A$ admits the tensor product structure $\HH_A = \HH_{A_1}\otimes \HH_{A_n}$. Then we can identify a special class of states on $A$ called the (fully) \deff{separable states}; a state $\rho_A$ is separable --- or, more pedantically, separable with respect to the partition $A_1:A_2:\ldots: A_n$ --- if it admits the decomposition
\bb
\rho_A = \int \!\dd\mu(\psi_{A_1}, \ldots, \psi_{A_n})\ \psi_{A_1} \otimes \ldots \otimes \psi_{A_n}
\label{separable}
\ee
for some Borel probability measure $\mu$ on the product of the sets of local (normalised) pure states.

It is often of interest to determine whether a given state is separable or not. 
To address this issue, quantum information scientists have developed a number of criteria that may help to reach a decision efficiently for certain states. The \emph{positive partial transposition} (PPT) criterion for either finite-dimensional~\cite{PeresPPT, HorodeckiPPT, Horodecki-review} or continuous-variable systems~\cite{Simon00, Duan2000, revisited, BUCCO} is the most important for the purposes of this work. 

To describe it we first need to first introduce the notion of partial transposition. Given a multi-partite quantum system $A = A_1\ldots A_n$ and some $J\subseteq [n]\coloneqq \{1,\ldots, n\}$, the partial transposition is a map $\Gamma_{\!J}: \HS(\HH_A) \to \HS(\HH_{A})$ on the space of Hilbert--Schmidt operators on $\HH_A$ whose action $\Gamma_{\!J}(X) = X^{\Gamma_{\!J}}$ is defined by
\bb
\left( \bigotimes_{j\in J} X_{A_j} \otimes \bigotimes_{k\notin J} X_{A_k} \right)^{\Gamma_{\!J}} \coloneqq \bigotimes_{j\in J} X_{A_j}^\intercal \otimes \bigotimes_{k\notin J} X_{A_k}
\label{Gamma_J}
\ee
on product operators, and extended by linearity and continuity to the whole $\HS(\HH_A)$. It is easy to see that $\Gamma_{\!J}$ is an involution, i.e.\ $\big(X^{\Gamma_J}\big)^{\Gamma_J} = X$, and that moreover it is an isometry on the Hilbert space $\HS(\HH_A)$ equipped with the Hilbert--Schmidt scalar product, in formula
\bb
\Tr \left[ XY \right] = \Tr \left[ X^{\Gamma_{\!J}} Y^{\Gamma_{\!J}} \right] \qquad \forall\ X,Y\, .
\label{PT_isometry}
\ee
In practice, for us this will mean that when one takes the partial transpose of a trace class operator (e.g.\ a state) one is guaranteed to end up with a Hilbert--Schmidt and thus bounded operator, but \emph{not} with a trace class operator.

In~\eqref{Gamma_J}, the transposition $^\intercal$ is taken with respect to fixed orthonormal bases of the local spaces $\HH_{A_i}$, for $i=1,\ldots, n$; although the definition of $\Gamma_{\!J}$ does depend on it, this choice of bases turns out to be immaterial in applications. We will therefore assume that an orthonormal basis has been chosen on each $\HH_{A_i}$, but we will not specify it unless it is necessary for computations. The PPT criterion is then easily stated as follows.

\begin{lemma}[(PPT criterion~\cite{PeresPPT})] \label{PPT_criterion_lemma}
Every separable state $\rho_A$ on $A=A_1\ldots A_n$ satisfies that $\rho_A^{\Gamma_{\!J}}\geq 0$ for all $J\subseteq [n]$.
\end{lemma}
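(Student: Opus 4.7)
The plan is to verify the conclusion first on product pure states, then extend to the full convex (integral) representation of separable states by linearity. First I would take an arbitrary separable state $\rho_A$ and write it in the form~\eqref{separable}, i.e.\ $\rho_A = \int d\mu(\psi_{A_1},\ldots,\psi_{A_n})\, \psi_{A_1}\otimes \ldots \otimes \psi_{A_n}$ for some Borel probability measure $\mu$ on local pure states.

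Next I would analyse what $\Gamma_{\!J}$ does to a single product term: by the definition~\eqref{Gamma_J}, one gets $\bigotimes_{j\in J} \psi_{A_j}^\intercal \otimes \bigotimes_{k\notin J}\psi_{A_k}$. The key observation is that transposition (with respect to any fixed basis) preserves the spectrum of a self-adjoint operator, and hence sends positive semi-definite operators to positive semi-definite operators; in particular $\psi_{A_j}^\intercal \geq 0$ for every $j$. Since the tensor product of positive operators is positive, each partially transposed product term is itself a positive semi-definite operator (in fact, a valid product state).

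Finally, I would invoke linearity and continuity of $\Gamma_{\!J}$ on $\HS(\HH_A)$ to interchange it with the integral, obtaining
\bb
\rho_A^{\Gamma_{\!J}} = \int \!\dd\mu(\psi_{A_1}, \ldots, \psi_{A_n})\, \bigotimes_{j\in J}\psi_{A_j}^\intercal \otimes \bigotimes_{k\notin J}\psi_{A_k}\, .
\ee
Being an integral of positive semi-definite operators against a probability measure, the right-hand side is itself positive semi-definite, which gives $\rho_A^{\Gamma_{\!J}}\geq 0$ as desired.

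The only mildly delicate point is the interchange of $\Gamma_{\!J}$ with the integral in the infinite-dimensional setting, since $\Gamma_{\!J}$ is introduced as a map on Hilbert--Schmidt operators and a separable state need not a priori lie in $\HS(\HH_A)$. I would handle this by first truncating to finite-rank approximations of $\rho_A$ (for which the exchange is automatic by finite linearity), verifying that these converge to $\rho_A$ in trace norm, and then passing to the limit using the fact that $\Gamma_{\!J}$ is an isometry in Hilbert--Schmidt norm~\eqref{PT_isometry}; positivity is preserved under such limits in the weak operator topology, so the positivity established for the approximants carries over to $\rho_A^{\Gamma_{\!J}}$. This is the only technical hurdle; the algebraic core of the argument is entirely elementary.
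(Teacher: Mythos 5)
Your argument is correct and matches the paper's (one-line) proof sketch, which likewise decomposes the separable state via~\eqref{separable} and observes that transposition sends each local pure state to the (still positive semi-definite) complex-conjugate projector $\psi^\intercal = \psi^*$. One small inaccuracy in your closing technical remark: a trace-class operator is automatically Hilbert--Schmidt (since $\|X\|_2 \leq \|X\|_1$), so $\rho_A$ does lie in $\HS(\HH_A)$ a priori --- the genuine subtlety, which the paper notes elsewhere, is rather that $\rho_A^{\Gamma_{\!J}}$ may fail to be trace class --- but this does not affect the validity of your argument.
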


The proof of this fact leverages~\eqref{separable} together with the observation that the transposition acts as complex conjugation on pure states, i.e.\ $\psi^\intercal = \psi^*$.

\begin{rem} \label{nontrivial_J_rem}
Clearly, the above criterion yields a non-trivial condition only when $1\leq |J|\leq n-1$, i.e.\ when $J$ is neither empty nor the full set $[n]$. Moreover, the conditions $\rho_A^{\Gamma_{\!J}}\geq 0$ and $\rho_A^{\Gamma_{\!J^c}}\geq 0$, where $J^c = [n]\setminus J$ is the complementary set to $J$ within $[n]$, are equivalent. Therefore, in Lemma~\ref{PPT_criterion_lemma} we can assume without loss of generality that $1\leq |J| \leq \floor{n/2}$.
\end{rem}

When $A=A_1\ldots A_n$ is an $n$-mode continuous-variable system, a natural choice for a local orthonormal basis is the Fock basis. Since with respect to this basis $x^\intercal = x$ and $p^\intercal = -p$, the corresponding partial transposition can be interpreted as a `partial time reversal'. It will be of interest for us to be able to compute the partial transpose of any $n$-mode Gaussian operator. Using~\eqref{trace_formulae_V_delta} and the above observation, it is not difficult to see that
\bb
\mathrm{G}_1[W,\delta]^{\Gamma_J} =&\ \mathrm{G}_1\big[ \Sigma_J W \Sigma_J, \Sigma_J \delta \big] , \\
\Sigma_J \coloneqq&\ \id_{A_{J^c}} \oplus \bigoplus_{j\in J} \begin{pmatrix} 1 & \\ & -1 \end{pmatrix} .
\label{partial_transpose_Gaussian}
\ee

\subsection{LOCC channels}

In this paper we aim to study dynamical processes involving the gravitational interaction. Therefore, we need to translate the above theory from the state to the channel setting. \tcr{A channel $A\to A'$ is a linear, completely positive, and trace preserving (CPTP) map that 
takes as input trace class operators acting on the Hilbert space of $A$ and outputs operators acting on the Hilbert space of $A'$.} If both $A=A_1\ldots A_n$ and $A'=A'_1\ldots A'_n$ are multi-partite, then certain channels $A\to A'$ can be realised by means of local (quantum) operations on each of the parties $1,\ldots, n$ assisted by classical communication among those parties (LOCC)~\cite{LOCC}. The set of such channels will be denoted by $\locc$, or $\locc(A\to A')$ if there is need to specify the input and output systems.

In some sense, LOCC channels generalise separable states. Therefore, it is not surprising that Lemma~\ref{PPT_criterion_lemma} can be extended to provide a necessary criterion for a channel to be LOCC. To do this, we need a systematic way to connect states and channels; this can be done thanks to the Choi--Jamio\l kowski isomorphism~\cite{Choi, Jamiolkowski72}. We will now describe it briefly assuming for simplicity that the input system $A$ has finite dimension $d \coloneqq \dim \HH_A = \prod_{i=1}^n \dim \HH_{A_i}$.

Given a channel $\Lambda:A\to A'$, let us consider a copy $A''\simeq A$ of the system $A$, and let us construct the \deff{maximally entangled state} on $A A''$ as
\bb
\ket{\Phi_d^{AA''}}\coloneqq \frac{1}{\sqrt{d}} \sum_{\ell=1}^d \ket{\ell}_A\ket{\ell}_{A''}\, .
\label{max_ent}
\ee
Up to an irrelevant normalisation factor, the \deff{Choi--Jamio\l kowski operator} associated with $\Lambda$ is defined by
\bb
D_\Lambda^{AA'} \coloneqq d \left( I_{A} \otimes \Lambda_{A''\to A'} \right) \left(\Phi_{d}^{AA''} \right) .
\label{dynamical_matrix}
\ee
Now, the following generalises Lemma~\ref{PPT_criterion_lemma} to the channel setting.

\begin{lemma}[{\cite{Rains1997, LOCC}}] \label{LOCCs_are_PPT_lemma}
Let $A=A_1\ldots A_n$ and $A'=A'_1\ldots A'_n$ be multi-partite systems, and let $\Lambda:A\to A'$ be an LOCC. Then the Choi--Jamio\l kowski operator $D_\Lambda^{AA'}$ defined by~\eqref{dynamical_matrix} is separable with respect to the partition $A_1A'_1:\ldots :A_n A'_n$, and in particular
\bb
\left( D_\Lambda^{AA'} \right)^{\Gamma_{\!J}}\geq 0 \qquad \forall\ J\subseteq [n]\, ,
\ee
where the partial transposition $\Gamma_{\!J}$ is over all systems $A_j, A'_j$, for $j\in J$.
\end{lemma}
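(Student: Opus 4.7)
The plan is to prove the stronger statement that $D_\Lambda^{AA'}$ is (fully) separable across the partition $A_1A_1':\ldots:A_nA_n'$, from which the PPT conclusion for every $J\subseteq[n]$ follows by a direct application of Lemma~\ref{PPT_criterion_lemma}. The strategy is to represent an arbitrary LOCC as a \emph{separable channel}, i.e.\ a CPTP map of the form $\Lambda = \sum_k \Lambda_k^{(1)}\otimes\cdots\otimes\Lambda_k^{(n)}$ with each $\Lambda_k^{(i)}:A_i\to A_i'$ completely positive, and to plug this decomposition into the definition~\eqref{dynamical_matrix} of the Choi--Jamio\l kowski operator.

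First I would establish the containment $\locc\subseteq \sep$ by induction on the number of classical-communication rounds. A single round consists of one party $i$ performing a local quantum instrument $\{\mathcal{I}_x^{(i)}\}_x$ and broadcasting the classical outcome $x$ to the others, who then act with $x$-conditioned local CP maps; the overall map is a sum over $x$ of tensor products of local CP maps, and is therefore of the required separable form. Composing finitely many such rounds preserves the form, since products and sums of separable channels are separable. Since we are in finite dimension $d<\infty$, passing to the closure (needed to cover arbitrary LOCC protocols) does not take us outside the set of separable channels, which is closed in the norm topology of the relevant finite-dimensional operator space~\cite{LOCC}.

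Next I would exploit the factorisation of the maximally entangled state across the local cuts. Writing the basis of $A$ as a product basis $\ket{\ell}_A = \ket{\ell_1}_{A_1}\otimes\cdots\otimes\ket{\ell_n}_{A_n}$, and similarly for $A''$, one immediately checks that, up to a permutation of tensor factors that identifies $A:A''$ with $A_1A_1'':\ldots:A_nA_n''$,
\bb
\ket{\Phi_d^{AA''}} \,=\, \bigotimes_{i=1}^n \ket{\Phi_{d_i}^{A_iA_i''}}\, ,\qquad d_i \coloneqq \dim\HH_{A_i}\, ,\ d=\prod_i d_i\, .
\ee
Inserting the separable representation of $\Lambda$ into~\eqref{dynamical_matrix} then yields
\bb
D_\Lambda^{AA'} \,=\, \sum_k \bigotimes_{i=1}^n d_i\left(I_{A_i}\otimes \Lambda_k^{(i)}\right)\!\big(\Phi_{d_i}^{A_iA_i''}\big) ,
\ee
which exhibits $D_\Lambda^{AA'}$ as a sum of tensor products of positive operators on the bipartite systems $A_iA_i'$. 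This is by definition a (possibly subnormalised) separable operator with respect to $A_1A_1':\ldots:A_nA_n'$; a spectral decomposition of each tensor factor rewrites it in the form~\eqref{separable}. Lemma~\ref{PPT_criterion_lemma} then gives $\big(D_\Lambda^{AA'}\big)^{\Gamma_{\!J}}\geq 0$ for every $J\subseteq[n]$, noting that the partial transposition $\Gamma_{\!J}$ acting on $A_j A_j'$ for $j\in J$ is exactly the partial transposition with respect to a union of blocks of the partition above.

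The main obstacle is the first step: cleanly certifying that \emph{every} LOCC protocol, with possibly unbounded numbers of rounds and instrument outcomes, still falls within the class of separable channels. The inductive structure works transparently for finitely many rounds with finitely many outcomes, but the general case requires invoking the topological closure of $\sep$ in finite dimension, together with the standard fact that $\locc$ is defined as the closure of finite-round protocols~\cite{LOCC}; I would cite this rather than reconstruct it. Once $\Lambda\in\sep$ is in hand, the rest of the argument is essentially bookkeeping on the Choi--Jamio\l kowski isomorphism and an invocation of the state-level PPT criterion.
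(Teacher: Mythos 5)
Your proof is correct and it is the standard argument. Note that the paper itself does not prove Lemma~\ref{LOCCs_are_PPT_lemma} at all, but simply cites it as known from~\cite{Rains1997, LOCC}; your write-up reconstructs exactly the usual derivation one finds in those references, namely the containment $\locc\subseteq\sep$ (established by induction on communication rounds, with the closure of $\sep$ in finite dimension handling limits of finite-round protocols), followed by the observation that $\ket{\Phi_d^{AA''}}$ factorises across the local cuts, so that the Choi operator of a separable channel is a sum of tensor products of positive operators on the blocks $A_iA_i'$ and hence separable, whence Lemma~\ref{PPT_criterion_lemma} yields the PPT conclusion. The only cosmetic slip is that $D_\Lambda^{AA'}$ has trace $d$ rather than $1$ (it is \emph{super}-normalised, not subnormalised), but this is immaterial: positivity of partial transposes is invariant under multiplication by a positive scalar, and each tensor factor $d_i\big(I_{A_i}\otimes\Lambda_k^{(i)}\big)\big(\Phi_{d_i}^{A_iA_i''}\big)$ is manifestly positive semidefinite by complete positivity of $\Lambda_k^{(i)}$, which is all the argument needs.
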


As explained in Remark~\ref{nontrivial_J_rem}, also here we can assume without loss of generality that $1\leq |J|\leq \floor{n/2}$.

\subsection{Conditional min-entropy} \label{subsec_conditional_min_entropy}

We now introduce and discuss two quantities known as max-relative entropy and conditional min-entropy.

\begin{Def}[{\cite{Datta08}}]
Let $X,Y$ be bounded self-adjoint operators, not necessarily positive semi-definite. Their \deff{max-relative entropy} is given by $D_{\max}(X\|Y) \coloneqq \log d_{\max}(X\|Y)$, where
\bb
d_{\max}(X\|Y) &\coloneqq \inf\left\{\lambda > 0:\, X\leq \lambda Y\right\} ,
\label{d_max}
\ee
with the convention that $\log 0 = -\infty$ and $\log \inf \emptyset = +\infty$.
\end{Def}

A simple yet useful observation is the following: if $X$ and $Y>0$ are such that $Y^{-1/2} X Y^{-1/2}$ is a bounded operator, then
\bb
d_{\max}(X\|Y) &= \lambda_{\max}\left(Y^{-1/2} X Y^{-1/2} \right) ,
\label{d_max_lambda_max}
\ee
where $\lambda_{\max}(Z)$ denotes the maximal eigenvalue of $Z$.

\begin{Def}[{\cite[Definition~6.2]{TOMAMICHEL}}] \label{conditional_min_entropy_def}
Let $L_{AB}$ be a (not necessarily positive semi-definite) bounded operator on a bipartite quantum system $AB$. Define its \deff{conditional min-entropy} as
\bb
H_{\min}(A|B)_L \coloneqq&\ - \inf_{\xi_B} \left\{ - D_{\max}\left( L_{AB}\, \big\|\, \id_A\otimes \xi_B \right) \right\} \\
=&\ - \log \inf_{\xi_B} d_{\max}\left(L_{AB}\,\big\|\, \id_A\otimes \xi_B\right) ,
\label{min_cond_entropy}
\ee
where the optimisation runs over all normalised quantum states $\xi_B$ on $B$, and once again we convene that $-\log 0 = +\infty$ and $-\log \inf \emptyset = -\infty$.
\end{Def}

The conditional min-entropy of an arbitrary operator cannot in general be represented in closed form. However, the optimisation in~\eqref{min_cond_entropy} is a semi-definite program; as such, for finite-dimensional systems it can be solved efficiently, i.e.\ in time polynomial in the dimension~\cite{vandenberghe_1996}.

\section{General dynamical experiments to test the quantum nature of an interaction} \label{sec_dynamical_experiments}

\subsection{Unitary simulation via LOCCs} \label{subsec_unitary_simulation_LOCCs}

The quantum information-theoretic task that underpins our proposal for detecting the quantum nature 
of an interaction is that of simulation of a unitary by means of LOCCs. We can define it in rigorous terms by using the language of \deff{quantum hypothesis testing}. For a general introduction to this subject, we refer the reader to the excellent textbooks by Hayashi~\cite[Chapter~3]{HAYASHI} and Tomamichel~\cite[\S~7.1]{TOMAMICHEL}. Let $A = A_1\ldots A_n$ and $A'=A'_1\ldots A'_n$ be two $n$-partite quantum systems. To define the task, we need two ingredients. The first is a source that outputs random pure states of $A$ drawn from the ensemble
\bb
\mathcal{E} = \{p_\alpha, \psi_\alpha\}_\alpha\, ,
\label{ensemble}
\ee
where $\psi_\alpha=\ketbra{\psi_\alpha}$, and $p_{\alpha}$ is the probability for it to be drawn. This means that the source produces the state $\psi_\alpha$ with probability $p_\alpha$. The index $\alpha$ could be discrete or continuous, in which case $p$ will become a probability measure over some measure space. The second ingredient is an isometry
\bb
U: \HH_A\to \HH_{A'}
\label{isometry}
\ee
that connects $A$ with $A'$. Both $\mathcal{E}$ and $U$ are publicly known.

At the beginning of each round, a pure state is drawn from $\mathcal{E}$ and then handed over to an agent $G$, who knows the ensemble but not its particular realisation. $G$ takes one of the following two actions on the system:
\begin{itemize}
    \item[(NH)] \deff{Null hypothesis:} $G$ applies $U$.
    \item[(AH)] \deff{Alternative hypothesis}: $G$ attempts to simulate $U$ by implementing a suitably chosen LOCC channel $\Lambda\in \locc(A\to A')$.
\end{itemize}
These two options are depicted in Figure~\ref{simulation_fig} below. 

After $G$ has carried out one of these two actions, the output system $A'$ is sent to a verifier $V$, who knows $\alpha$ but not the strategy adopted by $G$. The goal of $V$ is to decide between the null hypothesis (NH), corresponding to the pure quantum state $\psi'_\alpha\coloneqq U\psi_\alpha U^\dag$, and the alternative hypothesis (AH), corresponding to a possibly mixed state $\Lambda(\psi_\alpha)$. Remember that $V$ knows $\psi_\alpha$ and $U$, and hence $\psi'_\alpha$, but \emph{not} $\Lambda$. In attempting to decide between (NH) and (AH), $V$ can make one of two distinct errors:
\begin{itemize}
\item[(1)] the \deff{type-1 error} consists in guessing (AH) while (NH) holds;
\item[(2)] conversely, the \deff{type-2 error} consists in guessing (NH) while (AH) is true.
\end{itemize}
These two errors are not necessarily equally consequential, an observation on which the study of asymmetric hypothesis testing is founded~\cite{Hiai1991, Ogawa2000, Brandao2010, berta_composite, gap}. 
In our setting, the goal of $G$ is to maximise the probability that $V$ incurs a type-2 error, that is to say, the probability that $G$'s attempt to simulate $U$ via LOCCs goes undetected.

In our case, we are particularly interested in a simple test that $V$ can carry out. While this is not necessarily optimal in the most general sense, we will see that in many important cases it is experimentally realisable with conceivable technology. Knowing both $\psi_\alpha$ and $U$, and thus being able to calculate $\psi'_\alpha$, $V$ can carry out the quantum measurement represented by the POVM $(\psi'_\alpha, \id-\psi'_\alpha)$ on the unknown state. If the outcome corresponding to $\psi'_\alpha$ is obtained, $V$ guesses (NH), otherwise (AH). In this way, assuming that the measurement is ideal the probability of a type-1 error is
\bb
P_1 = \sum_\alpha p_\alpha \Tr \left[ \psi'_\alpha \left(\id-\psi'_\alpha\right) \right] = 0\, ,
\label{P-I}
\ee
while the corresponding type-2 error probability evaluates to
\bb
P_2 &= \sum_\alpha p_\alpha \Tr \left[ \Lambda(\psi_\alpha) \psi'_\alpha \right] .
\label{P-II}
\ee
Knowing that $V$ will carry out the above test, $G$ now attempts to maximise $P_2$. We call the maximal value of $P_2$ that is obtainable by $G$ the \deff{LOCC simulation fidelity} (or the \deff{classical simulation fidelity}) of the isometry $U$ on the ensemble $\mathcal{E}$. It is given by 
\bb
F_{\cls}(\mathcal{E}, U) \coloneqq \sup_{\Lambda\in \locc(A\to A')} \sum_\alpha p_\alpha \Tr \left[ \Lambda(\psi_\alpha) \psi'_\alpha \right] .
\label{Fs}
\ee
As the name suggests, this is nothing but the maximal average fidelity (cf.~\eqref{fidelity}--\eqref{fidelity_overlap}) between the target state $\psi'_\alpha$ and its simulation $\Lambda(\psi_\alpha)$. The external optimisation in~\eqref{Fs} correspond to $G$ choosing the best possible strategy, i.e.\ the LOCC that minimises the average probability of the simulation being detected, among those allowed. Since $G$ does not know $\alpha$, the selected LOCC should work well simultaneously for most of the $\alpha$'s.

An elementary yet helpful observation is the following. Since local unitaries, i.e.\ unitaries of the form $V_{A} = \bigotimes_j V_{A_j}$ and $V'_{A'} = \bigotimes_j V'_{A'_j}$, are by construction LOCCs, and the set of LOCCs is closed by composition, the simulation fidelity cannot change by pre- or post-processing via local unitaries. Thus:

\begin{lemma} \label{local_unitaries_lemma}
Let $U:\HH_A \to \HH_{A'}$ be an isometry, and let $V_{A} = \bigotimes_j V_{A_j}$, $V'_{A'} = \bigotimes_j V'_{A'_j}$ be a pair of local unitaries. For every ensemble $\mathcal{E}=\{p_\alpha, \psi_\alpha\}_\alpha$ of states on $A$, it holds that
\bb
F_{\cls}\left(\mathcal{E},\, {V'}_{\hspace{-.8ex}A'}\, U\, V_A \right) = F_{\cls}\left(\mathcal{E},U \right) .
\ee
\end{lemma}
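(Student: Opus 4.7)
The plan is a short substitution argument resting on two observations: (i) a local unitary channel $\mathcal{V}(\rho) = V\rho V^\dag$ with $V = \bigotimes_j V_j$ is itself an LOCC channel (no classical communication is needed at all), and (ii) $\locc(A\to A')$ is closed under pre- and post-composition with LOCC channels. Taken together, these imply that the map $\Lambda \mapsto \mathcal{V}'_{A'} \circ \Lambda \circ \mathcal{V}_A$ is a bijection of $\locc(A\to A')$ onto itself, where $\mathcal{V}_A$ and $\mathcal{V}'_{A'}$ are the unitary channels associated with $V_A$ and $V'_{A'}$.

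Applied inside the supremum defining $F_{\cls}(\mathcal{E}, V'_{A'} U V_A)$ in~\eqref{Fs}, the substitution $\Lambda = \mathcal{V}'_{A'} \circ \widetilde\Lambda \circ \mathcal{V}_A$ gives $\Lambda(\psi_\alpha) = V'_{A'}\, \widetilde\Lambda(V_A \psi_\alpha V_A^\dag)\, {V'_{A'}}^\dag$; cyclicity of the trace then collapses the outer pair ${V'_{A'}}^\dag V'_{A'} = \id_{A'}$ at once, leaving the integrand
\begin{equation*}
\Tr\bigl[\widetilde\Lambda(V_A \psi_\alpha V_A^\dag)\, U\, V_A \psi_\alpha V_A^\dag\, U^\dag\bigr].
\end{equation*}
Writing $\phi_\alpha := V_A \psi_\alpha V_A^\dag$, this is exactly the integrand in $F_{\cls}(\{p_\alpha, \phi_\alpha\}_\alpha, U)$, so $F_{\cls}(\mathcal{E}, V'_{A'} U V_A) = F_{\cls}(\{p_\alpha, \phi_\alpha\}_\alpha, U)$. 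To identify this with $F_{\cls}(\mathcal{E}, U)$ one invokes once more the bijectivity of $\Lambda \mapsto \Lambda \circ \mathcal{V}_A$ on $\locc(A\to A')$, which absorbs the relabelling $\psi_\alpha \mapsto \phi_\alpha$ into the LOCC being optimised over.

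There is no real obstacle: the lemma is an algebraic consequence of the invariance of $\locc(A\to A')$ under composition with local unitary channels, and the proof amounts to a careful bookkeeping of the four local-unitary factors (two from $\Lambda$, two from the target $V'_{A'} U V_A$) so that they cancel in pairs. The substitution above handles the ${V'_{A'}}^\dag V'_{A'}$ pair directly, while the $V_A V_A^\dag$ pair is absorbed by a second application of the same bijectivity argument.
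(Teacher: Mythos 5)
Your first substitution is correct and establishes $F_{\cls}(\mathcal{E}, V'_{A'}UV_A) = F_{\cls}(\mathcal{E}', U)$ with $\mathcal{E}' = \{p_\alpha, V_A\psi_\alpha V_A^\dag\}_\alpha$. The gap is in the second step: the bijection $\Lambda\mapsto\Lambda\circ\mathcal{V}_A$ relabels only the \emph{input} ensemble, not the \emph{target} state $U\phi_\alpha U^\dag$, which still carries $V_A$ baked into $\phi_\alpha$. Carrying out the substitution $\widetilde\Lambda=\Lambda\circ\mathcal{V}_A^{-1}$ in $F_{\cls}(\mathcal{E}', U)$ yields $\sup_\Lambda\sum_\alpha p_\alpha \Tr\bigl[\Lambda(\psi_\alpha)\, UV_A\psi_\alpha V_A^\dag U^\dag\bigr] = F_{\cls}(\mathcal{E}, UV_A)$, i.e.\ you are back where you started, not at $F_{\cls}(\mathcal{E}, U)$. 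The residual $V_A$ sits sandwiched between $U$ and $U^\dag$; since $UV_AU^\dag$ is not a local unitary on $A'$ for generic $U$, no local pre- or post-processing can absorb it, and the second invocation of bijectivity does not do what you claim.

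In fact the pre-composition invariance $F_{\cls}(\mathcal{E}, UV_A)=F_{\cls}(\mathcal{E},U)$ is false, so the lemma as written has no correct proof. Take two qubits, $\mathcal{E}=\{(\frac12,\ketbra{00}),(\frac12,\ketbra{11})\}$, $U=\mathrm{CNOT}$, $V_A=H\otimes\id$. Here $U\ket{00}=\ket{00}$ and $U\ket{11}=\ket{10}$ are product states, so an LOCC that measures in the computational basis and re-prepares attains $F_{\cls}(\mathcal{E},U)=1$. But $UV_A\ket{00}=(\ket{00}+\ket{11})/\sqrt2$ and $UV_A\ket{11}=(\ket{01}-\ket{10})/\sqrt2$ are Bell states, while any $\Lambda\in\locc$ maps the product inputs to separable outputs, whose overlap with a Bell state is at most $1/2$; hence $F_{\cls}(\mathcal{E},UV_A)\leq 1/2$. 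Only the post-composition part $F_{\cls}(\mathcal{E},V'_{A'}U)=F_{\cls}(\mathcal{E},U)$ is true --- your first substitution with $V_A=\id$ proves precisely that --- and that is also the only case the paper ever invokes (the local unitaries arising from the linear Hamiltonian terms and from the free rotation sit on the \emph{left} of the evolution). The paper's own one-sentence justification before the lemma shares this over-reach, but all its downstream uses need only $V_A=\id$ and remain valid.
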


\begin{figure}[h!t] \centering
\begin{subfigure}{.45\textwidth}
\includegraphics[scale=1]{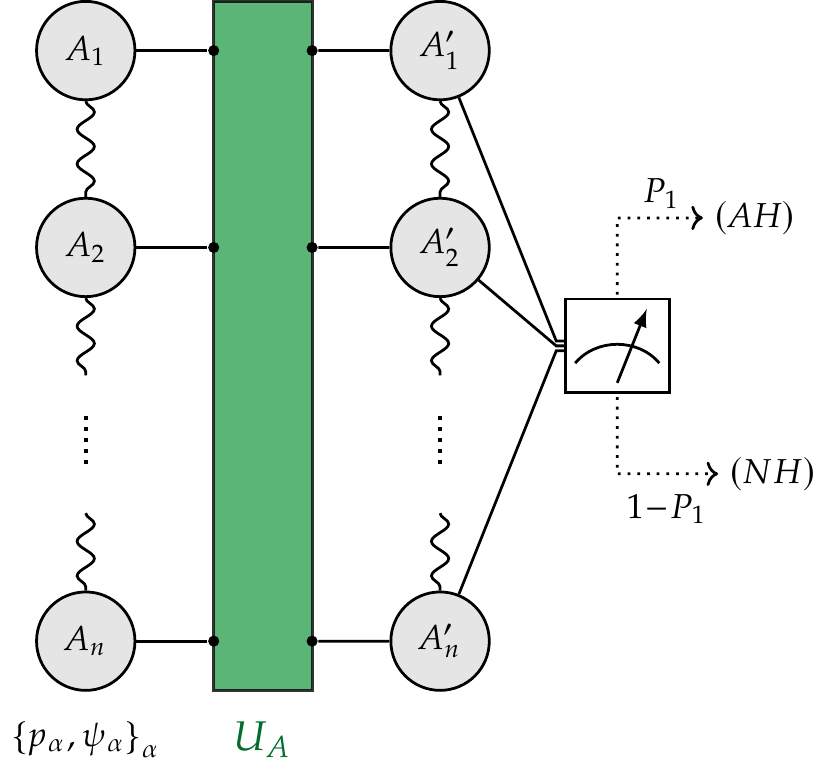}
\caption{}
\end{subfigure}
\par\bigskip
\begin{subfigure}{.45\textwidth}
\includegraphics[scale=1]{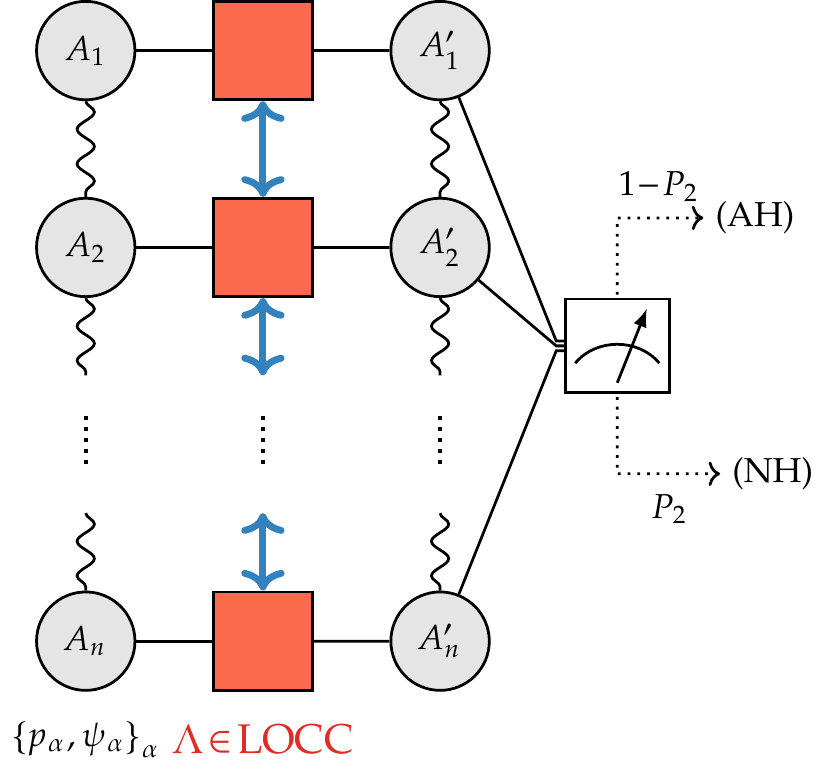}
\caption{}
\end{subfigure}
\caption{(a)~Quantum systems $A_1,\ldots, A_n$, initially prepared in a random pure state $\{p_\alpha, \psi_\alpha\}_\alpha$, evolve with a coherent global isometry $U$. (b)~The same systems undergo an evolution modelled by an LOCC. In both cases, at the end the value of $\alpha$ is revealed, and the system is subjected to the binary measurement $\left(\psi'_\alpha, \id-\psi'_\alpha\right)$, where $\ket{\psi'_\alpha}\coloneqq U \ket{\psi_\alpha}$.}
\label{simulation_fig}
\end{figure}

In \S~\ref{sec_dynamical_experiments} below, we will apply the above scheme to the case where $A_1,\ldots, A_n$ are massive quantum systems and $U$ is a unitary involving some local terms and an interaction of purely gravitational nature. The LOCC simulation of $U$ will then correspond to the dynamics induced by a classical --- and local --- gravitational field. However, it is worth remarking once again that the above scheme can actually test the quantum nature 
of an arbitrary interaction between quantum systems. Finally, a note of caution: here and in the rest of the paper the term 'quantum nature', 
referred to a dynamical process undergone by a multi-partite system, is to be understood as indicating the `non-LOCCness' of the channel modelling the process.

\subsection{A class of experiments to look for the quantum nature 
of gravity} \label{subsec_experiments}

We consider massive quantum systems $A_1,\ldots, A_n$, subjected to local Hamiltonians $H_1,\ldots, H_n$ and interacting exclusively via gravity, as modelled by some Hamiltonian $H_G$. The total Hamiltonian can thus be written as $H_{\mathrm{tot}} = \sum_{j=1}^n H_j + H_G$; accordingly, the time evolution operator associated to some time interval $t$ will be represented by the unitary
\bb
U = \exp\left[ -\frac{it}{\hbar}\, H_{\mathrm{tot}} \right] = \exp\left[ -\frac{it}{\hbar}\left( \sum_{j=1}^n H_j + H_G \right) \right]
\label{UA}
\ee
acting on $A$. Since we want to initialise the system $A=A_1\ldots A_n$ in a (random) pure state $\psi_\alpha$, ideally all such states should be easy to prepare.

Suppose that for a certain system $A$, unitary $U_A$ as in~\eqref{UA}, and ensemble $\{p_\alpha, \psi_\alpha\}_\alpha$, we have computed the LOCC simulation fidelity $F_{\cls}(\mathcal{E}, U)$ in~\eqref{Fs}. We now imagine to run the experiment depicted in Figure~\ref{simulation_fig} many times independently, drawing the initial pure states at random in an i.i.d.\ fashion. If gravity behaves as a quantum Hamiltonian, then in the case of ideal measurements we will always get the outcome (NH) corresponding to a null hypothesis guess. On the contrary, if gravity behaves as a local classical field then we will get the outcome (AH) corresponding to an LOCC simulation guess with frequency at least $1-F_{\cls}(\mathcal{E}, U)$. That is to say, and this is the key point, \emph{if we obtain (AH) with frequency lower than $1-F_{\cls}(\mathcal{E}, U)$, we can conclude that gravity did not behave as a local classical field.}

\subsection{General LOCC inequality} \label{subsec_general_benchmarks}

The above reasoning should convince the reader that our role as theoreticians is to compute the number $F_{\cls}(\mathcal{E}, U)$ as accurately as possible. Unfortunately, this is in general a mathematically intractable task, because the expression in~\eqref{Fs} involves an optimisation over the set of LOCCs, which is notoriously hard to characterise~\cite{LOCC}. On second thought, however, one sees that an exact computation is not needed; even an upper bound such as $F_{\cls}(\mathcal{E}, U) \leq f$, where $f\in [0,1)$, would suffice for our purposes, because it would allow us to state that obtaining (AH) with frequency lower than $1-f$ (and hence of $1-F_{\cls}(\mathcal{E}, U)$) is direct evidence of the quantum nature 
(i.e.\ non-LOCCness) of the gravitational interaction. 
In this sense, the status of the inequality $F_{\cls}(\mathcal{E}, U) \leq f$ is in some ways analogous to that of a Bell inequality in the theory of non-locality~\cite{Brunner-review}. As an experimental violation of a Bell inequality provides a definitive proof that the underlying process producing the correlations is non-local, similarly the experimental violation of an inequality of the form $F_{\cls}(\mathcal{E}, U) \leq f$ would prove that the dynamics undergone by the system does not conform to \emph{any} LOCC-model. 
In light of this reasoning, we shall refer to any such inequality as an \emph{LOCC inequality}. With this terminology, we can now say that our first contribution is a general LOCC inequality whose right-hand side is efficiently computable for any ensemble of pure states $\{p_\alpha, \psi_\alpha\}_\alpha$ and any isometry $U$.


\begin{thm}[(General LOCC inequality)] \label{general_bound_thm}
Let $\mathcal{E} = \left\{ p_\alpha, \psi_\alpha\right\}_\alpha$ be an ensemble of pure states $\psi_\alpha=\ketbra{\psi_\alpha}$ on $A=A_1\ldots A_n$. For an isometry $U:\HH_A \to \HH_{A'}$, where $A' = A'_1\ldots A'_n$, set $\ket{\psi'_\alpha}_{A'}\coloneqq U \ket{\psi_\alpha}_{A}$. Then the associated simulation fidelity satisfies
\bb
F_{\cls}(\mathcal{E}, U) &\leq \min_{J\subseteq [n]} \exp\left[ -H_{\min}(A'|A)_{R^{\Gamma_{\!J}}} \right] \\
&= \min_{J\subseteq [n]} \inf_{\xi_A} d_{\max} \left( R^{\Gamma_{\!J}}_{AA'} \,\Big\|\, \xi_A \otimes \id_{A'} \right) \\
&= \min_{J\subseteq [n]} \inf\left\{ \kappa:\, R^{\Gamma_{\!J}}_{AA'} \leq \kappa\, \xi_A \otimes \id_{A'} \right\}
\label{general_bound}
\ee
where $[n]= \{1,\ldots, n\}$, the state $R_{AA'}$ is defined by
\bb
R_{AA'} \coloneqq \sum_\alpha p_\alpha\, (\psi_\alpha^*)_{A} \otimes (\psi'_\alpha)_{A'}\, , \label{R_gamma}
\ee
the symbol $\Gamma_{\!J}$ denotes the partial transposition over all subsystems $A_j, A'_j$, for $j\in J$, the optimisations in~\eqref{general_bound} are over $\kappa\geq 0$ and states $\xi_A$, and the conditional min-entropy $H_{\min}(A'|A)_{R^{\Gamma_{\!J}}}$ is defined as in~\eqref{min_cond_entropy}.

If $\ket{\psi_\alpha} \coloneqq \bigotimes_{j=1}^n \ket{\psi_{\alpha,j}}_{A_j}$ is a product state for all $\alpha$, we can also state the simplified bound
\bb
F_{\cls}(\mathcal{E},U)
&\leq \min_{J\subseteq [n]} d_{\max}\left( R_{AA'}^{\Gamma_{\!J}}\, \Big\|\, R_A^{\Gamma_{\!J}} \otimes \id_{A'} \right) , 
\label{handier_general_bound}
\ee
where $R_A\coloneqq \Tr_{A'} R_{AA'}$, and $d_{\max}$ is defined by~\eqref{d_max}. With the above hypotheses, the right-hand side of~\eqref{handier_general_bound} is at most $1$. In both~\eqref{general_bound} and~\eqref{handier_general_bound}, we can assume without loss of generality that $1\leq |J|\leq \floor{n/2}$.
\end{thm}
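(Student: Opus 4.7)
The plan is to reduce the optimisation over LOCC channels to a semidefinite programming–style bound via the Choi--Jamio\l kowski isomorphism and the PPT necessary condition for LOCC channels (Lemma~\ref{LOCCs_are_PPT_lemma}). The first step is to rewrite the objective as a bilinear form in the Choi operator. A direct computation using~\eqref{dynamical_matrix} shows that for every channel $\Lambda:A \to A'$ and every density operator $\rho_A$, $\Tr[\Lambda(\rho_A)\, \phi_{A'}] = \Tr[(\rho_A^*\otimes \phi_{A'})\, D_\Lambda^{AA'}]$, so that averaging over the ensemble yields
\bb
\sum_\alpha p_\alpha\, \Tr[\Lambda(\psi_\alpha)\,\psi'_\alpha] = \Tr[R_{AA'}\, D_\Lambda^{AA'}] ,
\ee
with $R_{AA'}$ as in~\eqref{R_gamma}. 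Thus $F_{\cls}(\mathcal{E},U)$ equals the supremum of $\Tr[R_{AA'} D_\Lambda^{AA'}]$ over $\Lambda\in \locc(A\to A')$.

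Next I would invoke two structural properties of LOCC Choi operators: (i) $(D_\Lambda^{AA'})^{\Gamma_J}\geq 0$ for every $J\subseteq [n]$ by Lemma~\ref{LOCCs_are_PPT_lemma}, and (ii) $\Tr_{A'} D_\Lambda^{AA'} = \id_A$ because $\Lambda$ is trace preserving. Relaxing the set $\locc(A\to A')$ to the set of all Hermitian $D$ satisfying these two conditions for a single, fixed $J$ gives
\bb
F_{\cls}(\mathcal{E}, U) \leq \sup_{D} \bigl\{ \Tr[R_{AA'} D]:\ D^{\Gamma_{\!J}}\geq 0,\ \Tr_{A'} D = \id_A \bigr\}.
\ee
Setting $D' \coloneqq D^{\Gamma_{\!J}}$ and using the isometry property~\eqref{PT_isometry} I rewrite $\Tr[RD] = \Tr[R^{\Gamma_{\!J}} D']$; the constraint $\Tr_{A'} D = \id_A$ survives unchanged because partial trace over $A'$ commutes with partial transposition on $A_J$, so $\Tr_{A'}[D^{\Gamma_{\!J}}] = (\Tr_{A'} D)^{\Gamma_{\!A_J}}$ and $\id_A^{\Gamma_{\!A_J}} = \id_A$.

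The third step is weak SDP duality. For any state $\xi_A$ and any $\kappa\geq 0$ with $R_{AA'}^{\Gamma_{\!J}} \leq \kappa\,\xi_A\otimes \id_{A'}$, and for any feasible $D'\geq 0$ with $\Tr_{A'} D' = \id_A$,
\bb
\Tr[R^{\Gamma_{\!J}} D'] \leq \kappa \Tr\bigl[(\xi_A\otimes \id_{A'})\, D'\bigr] = \kappa \Tr[\xi_A \cdot \Tr_{A'} D'] = \kappa \Tr \xi_A = \kappa.
\ee
Taking the infimum over $(\xi_A,\kappa)$ yields $\Tr[R^{\Gamma_{\!J}} D'] \leq \inf_{\xi_A} d_{\max}(R^{\Gamma_{\!J}}_{AA'}\|\xi_A\otimes \id_{A'})$, which by Definition~\ref{conditional_min_entropy_def} equals $\exp[-H_{\min}(A'|A)_{R^{\Gamma_{\!J}}}]$. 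The three equivalent forms in~\eqref{general_bound} are then just direct rewritings of the definitions of $H_{\min}$ and $d_{\max}$, and I may restrict to $1\leq |J|\leq \floor{n/2}$ by Remark~\ref{nontrivial_J_rem}.

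Finally, for the product-input case I specialise the free state $\xi_A$ in the infimum. When each $\ket{\psi_\alpha}$ is a tensor product, $R_A = \Tr_{A'} R_{AA'} = \sum_\alpha p_\alpha \bigotimes_j (\psi_{\alpha,j})^*$ is fully separable, so by Lemma~\ref{PPT_criterion_lemma} applied pointwise, $R_A^{\Gamma_{\!J}}\geq 0$ is a legitimate state, and choosing $\xi_A = R_A^{\Gamma_{\!J}}$ in~\eqref{general_bound} gives~\eqref{handier_general_bound}. To see that the right-hand side of~\eqref{handier_general_bound} is at most $1$, it suffices to verify that $R_{AA'}^{\Gamma_{\!J}}\leq R_A^{\Gamma_{\!J}}\otimes \id_{A'}$: expanding both sides in the product ensemble, this reduces termwise to $(\psi'_\alpha)^{\Gamma_{\!A'_J}}\leq \id_{A'}$ after tensoring with the PSD factor $\bigotimes_{j\in J}\psi_{\alpha,j}\otimes \bigotimes_{k\notin J}\psi_{\alpha,k}^*$, and the inequality on the pure-state partial transpose follows because its spectrum consists of products $\pm\lambda_i \lambda_j$ of Schmidt coefficients of $\ket{\psi'_\alpha}$ across the $A'_J:A'_{J^c}$ cut, all bounded in modulus by $1$. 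I do not anticipate serious technical obstacles; the subtle points are the commutation of partial trace and partial transpose in the substitution $D' = D^{\Gamma_J}$, and the observation that in the product case $R_A^{\Gamma_{\!J}}$ is itself a state so that it is an admissible choice in the infimum defining the conditional min-entropy.
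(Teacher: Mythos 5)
Your argument is correct and is essentially the same as the paper's: both reduce the LOCC optimisation to a bilinear form in the Choi operator, invoke the PPT criterion (Lemma~\ref{LOCCs_are_PPT_lemma}) and trace preservation, and bound via the operator inequality $R^{\Gamma_J}_{AA'}\leq \kappa\,\xi_A\otimes\id_{A'}$, with the product-state case handled by the ansatz $\xi_A = R_A^{\Gamma_J}$ and the pure-state partial-transpose bound $\psi^{\Gamma}\leq\id$. The only cosmetic difference is that you frame the intermediate step as an explicit SDP relaxation with the substitution $D'=D^{\Gamma_J}$, whereas the paper directly chains the inequality inside the optimisation over $\Lambda$.
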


Before we prove the above result, it is worth discussing 
why it is useful. Inequality~\eqref{general_bound} constitutes a genuine LOCC inequality, in the form of 
a general upper bound on the LOCC simulation fidelity~\eqref{Fs}. Although the right-hand side of~\eqref{general_bound}, which depends essentially on the conditional min-entropy, is not a closed-form expression but rather the result of an optimisation itself (cf.~\eqref{min_cond_entropy}), such an optimisation, being a semi-definite program, is efficiently solvable in time polynomial in the dimension (see the discussion in \S~\ref{subsec_conditional_min_entropy}). This is to be compared with the optimisation defining $F_{\cls}$, which involves the mathematically intractable set of LOCCs and thus cannot be tackled effectively from a computational standpoint. That being said, since to design experiments to test the quantum nature 
of gravity we will need to apply the above result to infinite-dimensional systems composed of quantum harmonic oscillators, the more explicit upper bound~\eqref{handier_general_bound} will turn out to be quite helpful.

Another important point is that the right-hand side of~\eqref{general_bound} is a continuous function of the initial states and of the unitary evolution $U$, because the conditional min-entropy is itself continuous, as established by~\cite[Lemma~21]{Tomamichel2010}. This means that small experimental uncertainties in the states or the unitary will not invalidate our tests.

On a different note, the setting of the above Theorem~\ref{general_bound_thm} subsumes that of classical simulation of teleportation. In that context, our bound~\eqref{handier_general_bound} reproduces an important benchmark for quantum experiments known as the \deff{classical threshold}, whose computation in a variety of contexts has been the subject of intensive investigation~\cite{Massar1995, Werner1998, Horodecki-teleportation, Bruss1999, Hammerer2005, Adesso2008, Owari2008, jensen2011quantum, Chiribella2014, Calsamiglia2009, Yang2014}. In Appendix~\ref{subsec_telep_threshold} we explain how our techniques can be used to immediately recover and generalise many of these results, with the added benefit of more transparent proofs.

\begin{ex} \label{swap_ex}
Let us pause for a moment and look at a simple yet very instructive application of Theorem~\ref{general_bound_thm}. Consider a bipartite system $A=A_1A_2$, with both $A_1$ and $A_2$ having dimension $d$. Let the \deff{swap operator} $F_{A_1A_2}$ act on it as
\bb
F_{A_1A_2} \ket{\psi}_{A_1}\ket{\phi}_{A_2} \coloneqq \ket{\phi}_{A_1}\ket{\psi}_{A_2}\, .
\label{swap}
\ee
(In this case, $A' \simeq A$ is simply a copy of $A$.) We want to simulate the action of the swap on the ensemble $\EE_{\mathrm{H} \times \mathrm{H},\,d}\coloneqq \left\{ \psi\otimes \phi \right\}_{\psi,\phi}$, where $\psi = \ketbra{\psi}$ and $\phi = \ketbra{\phi}$ are independent, Haar-distributed local pure states. As we discussed in the Introduction, the swap operation comes up naturally when considering a simple model of two gravitationally interacting quantum harmonic oscillators. At the same time, our intuition suggests that its LOCC simulation fidelity should be bounded away from~$1$ as the swap operation can create entanglement in the presence of local ancillae~\cite{eisert2000optimal}. Now, due to Theorem~\ref{general_bound_thm} we can make this intuition quantitative. To do so, it suffices to compute the state $R_{AA'}$ in~\eqref{R_gamma} as
\bb
R_{AA'} &= \int \!\!\dd\psi\! \int \!\!\dd \phi\ \psi_{A_1}^* \otimes \phi_{A_2}^* \otimes \phi_{A'_1} \otimes \psi_{A'_2}\, ,
\ee
where $\dd\psi$ and $\dd\phi$ denote the local Haar measures. Setting $J=\{2\}$ and noting that $\psi^\intercal = \psi^*$ 
is simply the complex conjugate vector, this yields
\bb
R_{AA'}^{\Gamma_2} &= \left( \int \!\!\dd\psi\ \psi_{A_1} \otimes \psi_{A'_2} \right)^* \otimes \left(\int\!\! \dd \phi\ \phi_{A_2} \otimes \phi_{A'_1} \right) \\
&= \frac{2 S_{A_1A'_2}}{d(d+1)} \otimes \frac{2 S_{A_2A'_1}}{d(d+1)}\, ,
\label{R_AA'_Gamma_swap}
\ee
where $S \coloneqq (\id + F)/2$ is the projector onto the symmetric subspace. Eq.~\eqref{R_AA'_Gamma_swap} can be easily proved by using the techniques of~\cite{Werner}. One first notes that $\sigma \coloneqq \int \!\dd\psi\ \psi \otimes \psi$ is invariant under conjugation by any operator of the form $U\otimes U$, where $U$ is a generic $d\times d$ unitary. States with this property are precisely the linear combinations of the projectors onto the symmetric and anti-symmetric subspaces. Since $\sigma$ is furthermore supported onto the symmetric subspace, we conclude that it must be proportional to the corresponding projector.

Observing that $R_A^{\Gamma_2} = R_A = \id/d^2$ is the maximally mixed state and remembering~\eqref{d_max_lambda_max}, the bound~\eqref{handier_general_bound} yields immediately
\bb
F_{\cls} \left(\EE_{\mathrm{H} \times \mathrm{H},\,d},\, F \right) &\leq d^2 \lambda_{\max}\left( \frac{2 S_{A_1A'_2}}{d(d+1)} \otimes \frac{2 S_{A_2A'_1}}{d(d+1)} \right) \\
&= \left(\frac{2}{d+1}\right)^2 .
\label{swap_bound}
\ee
This bound is strictly larger than that in~\cite{Siddiqui2020, Siddiqui2022} (equal to $1/d^2$, see e.g.~\cite[Proposition~4]{Siddiqui2022}) because the setting considered there requires simulation over all possible input states, while we only look at an ensemble of product states. We will see later that~\eqref{swap_bound} is in fact tight, i.e.\ the LOCC simulation fidelity in this case turns out to be exactly equal to $4/(d+1)^2$: for the optimal strategy, $A_1$ and $A_2$ measure locally with the uniform POVM $(\psi)_\psi$, where $\psi$ ranges over all pure states, and send a classical description of the outcome to the other party, who reconstructs the corresponding state locally.

To interpret the above result~\eqref{swap_bound} consider the case of two qubits, $d=2$. The ensemble $\EE_{\mathrm{H}\times \mathrm{H},\, 2}$ can then be described geometrically in simple terms: it is constructed by drawing at random, and independently, two pure states distributed uniformly on the surface of the Bloch sphere. In this simple setting, the above bound implies that the best fidelity with which an LOCC can simulate a swap operation applied to such a random product state is $4/9\approx 0.44$. As our intuition suggested, the LOCC simulation fidelity is bounded away from $1$. Remember from the discussion around~\eqref{even_more_elementary_evolution} that the swap operation can indeed appear in a simple model of gravitational interaction between two quantum harmonic oscillators. In this context, and under our assumptions to be detailed later, any experiment of the type described in \S~\ref{subsec_experiments} that detects an LOCC simulation (AH) fewer than $5/9 \approx 56\%$ of the times could be considered as indicating the quantum nature 
of gravity.
\end{ex}

We are now ready to provide a complete mathematical proof of Theorem~\ref{general_bound_thm}. The reader who is rather interested in the applications of this general result to gravitationally interacting systems can jump to \S~\ref{sec_proposed_implementation}.

\begin{proof}[Proof of Theorem~\ref{general_bound_thm}]
For the sake of simplicity, we consider here the case where $A$ is finite dimensional and the ensemble is discrete. The general case where both of these assumptions are lifted is deferred to Appendix~\ref{infinite_dim_app}. Fix some $J\subseteq [n]$, and pick $\kappa\geq 0$ and a state $\xi_A$ such that
\bb
R_{AA'}^{\Gamma_J} \leq \kappa\, \xi_A\otimes \id_{A'}\, .
\label{general_bound_proof_aux}
\ee
Consider that
\begin{align}
&F_{\cls}(\mathcal{E}, U) \nonumber \\
&\quad = \sup_{\Lambda \in \locc} \sum_\alpha p_\alpha \braket{\psi'_\alpha| \Lambda(\psi_\alpha) |\psi'_\alpha} \nonumber \\
&\quad\eqt{1} \sup_{\Lambda \in \locc} \sum_\alpha p_\alpha \braket{(\psi_\alpha^*)_{A} (\psi'_\alpha)_{A'} \big| D_\Lambda^{AA'} \big| (\psi_\alpha^*)_{A} (\psi'_\alpha)_{A'}} \nonumber \\
&\quad\eqt{2} \sup_{\Lambda \in \locc} \Tr \left[ R_{AA'} D_\Lambda^{AA'} \right] \nonumber \\
&\quad\eqt{3} \sup_{\Lambda \in \locc} \Tr \left[ R_{AA'}^{\Gamma_{\!J}} \left(D_\Lambda^{AA'}\right)^{\Gamma_{\!J}} \right] \label{general_bound_proof} \\
&\quad\leqt{4} \sup_{\Lambda \in \locc} \kappa \Tr \left[ \xi_A\otimes \id_{A'} \left(D_\Lambda^{AA'}\right)^{\Gamma_{\!J}} \right] \nonumber \\
&\quad= \sup_{\Lambda \in \locc} \kappa \Tr_A \left[ \xi_A\, \Tr_{A'}\left(D_\Lambda^{AA'}\right)^{\Gamma_{\!J}} \right] \nonumber \\
&\quad\eqt{5} \sup_{\Lambda \in \locc} \kappa \Tr_A \left[ \xi_A\, \id_A \right] \nonumber \\
&\quad=\kappa\, . \nonumber \\
\end{align}
The above steps can be justified as follows. In~1 we introduced the un-normalised Choi--Jamio\l kowski state of $\Lambda$, defined by~\eqref{dynamical_matrix}, and used the formula
\bb
M\otimes \id \ket{\Phi_d} = \id\otimes M^\intercal \ket{\Phi_d}\, ,
\label{max_ent_trick}
\ee
valid for any $d\times d$ matrix $M$, to compute
\begin{align}
&\braket{\psi_\alpha^* \psi'_\alpha | D_\Lambda |\psi_\alpha^* \psi'_\alpha} \nonumber \\
&\qquad = d \Tr \left[\psi_\alpha^*\otimes \psi'_\alpha\, (I\otimes \Lambda)(\Phi_d) \right] \nonumber \\
&\qquad = d \Tr \left[\id \otimes \psi'_\alpha\, (I\otimes \Lambda)\left((\psi_\alpha^* \otimes \id)\, \Phi_d\right) \right] \nonumber \\
&\qquad = d \Tr \left[\id \otimes \psi'_\alpha\, (I\otimes \Lambda)\left((\id\otimes \psi_\alpha)\, \Phi_d\right) \right] \\
&\qquad = d \Tr \left[\psi'_\alpha\,\Lambda\left(\Tr_1\, [ (\id\otimes \psi_\alpha)\, \Phi_d ] \right) \right] \nonumber \\
&\qquad = \Tr \left[\psi'_\alpha\, \Lambda( \psi_\alpha) \right] \nonumber \\
&\qquad = \braket{\psi'_\alpha| \Lambda(\psi_\alpha)|\psi'_\alpha} . \nonumber
\end{align}
Note that here we omitted the subscripts identifying the systems for simplicity, and we introduced the notation $\Tr_1$ for the trace over the first tensor factor in the third-to-last line. Continuing with the justification of~\eqref{general_bound_proof}, in~2 we introduced the state $R_{AA'}$ defined by~\eqref{R_gamma}, while in~3 we applied the partial transposition over $J$; this leaves the trace unchanged, as per~\eqref{PT_isometry}. In~4 we applied the operator inequality~\eqref{general_bound_proof_aux}, which is made possible by the fact that $\left(D_\Lambda^{AA'}\right)^{\Gamma_{\!J}} \!\geq 0$ because of Lemma~\ref{LOCCs_are_PPT_lemma}. Finally, in~5 we observed that
\begin{align}
\Tr_{A'}\! \left[ \left(D_\Lambda^{AA'}\right)^{\Gamma_{\! J}} \right] &= \left( \Tr_{A'} D_\Lambda^{AA'} \right)^{\Gamma_{\! J}} \nonumber \\
&= d \left( \Tr_{A'} \left(I_A\otimes \Lambda_{A''\to A'}\right) \left(\Phi_d^{AA''}\right) \right)^{\Gamma_{\! J}} \nonumber \\
&= d \left( \Tr_{A''} \Phi_d^{AA''} \right)^{\Gamma_{\! J}} \\
&= d \left( \frac{\id_{A}}{d} \right)^{\Gamma_{\! J}} \nonumber \\
&= \id_A \nonumber
\end{align}
thanks to the fact that $\Lambda$ is trace preserving, and moreover $\Tr \xi_A =1$ by hypothesis. Thanks to the definition of conditional min-entropy~\eqref{min_cond_entropy}, taking the infimum over $J\subseteq [n]$, $\kappa$, and $\xi_A$ such that~\eqref{general_bound_proof_aux} is obeyed yields~\eqref{general_bound}. This concludes the proof of~\eqref{general_bound}. Additional technical details are deferred to Appendix~\ref{infinite_dim_app}.

To derive~\eqref{handier_general_bound}, it suffices to take as ansatz $\xi_A = R_A^{\Gamma_J}$ in the second line of~\eqref{general_bound}. This is possible if $\psi_\alpha = \bigotimes_j \psi_{\alpha,j}$ because in that case
\bb
R_A^{\Gamma_{\!J}} &= \Tr_{A'} R_{AA'}^{\Gamma_{\!J}} \\
&= \sum_\alpha p_\alpha (\psi_\alpha^*)_A^{\Gamma_{\!J}} \\
&= \sum_\alpha p_\alpha \bigotimes_{j\in J} \psi_{\alpha,j}\otimes \bigotimes_{k\in J^c} \psi_{\alpha,k}^* \\
&\geq 0
\ee
and $\Tr_A R_A^{\Gamma_{\!J}} = \Tr_A R_A = \Tr_{AA'} R_{AA'} = 1$, so that $R_A^{\Gamma_{\!J}}$ is indeed a density operator. The fact that the right-hand side of~\eqref{handier_general_bound} is at most $1$ amounts to the operator inequality
\bb
R_{AA'}^{\Gamma_{\!J}} \leq R_A^{\Gamma_{\!J}} \otimes \id_{A'}\, .
\label{H_min_positive}
\ee
This can be verified as follows:
\begin{enumerate}[(i)] \itemsep0em
\item $\psi_{A'}^{\Gamma_{\!J}} \leq \id_{A'}$ holds for all pure states $\ket{\psi}_{A'}\in \HH_{A'}$ and for all  $J\subseteq [n]$, as one can see e.g.\ by writing out the Schmidt decomposition of $\ket{\psi}_{A'}$ with respect to the cut $A'_J: A'_{J^c}$;
\item in particular, $(\psi_\alpha')_{A'}^{\Gamma_{\!J}} \leq \id_{A'}$ for all $\alpha$;
\item thanks to the fact that $(\psi^*_\alpha)_{A}^{\Gamma_{\!J}}\geq 0$ because $\ket{\psi_\alpha^*}$ is a product state (Lemma~\ref{PPT_criterion_lemma}), we deduce that
\bb
(\psi_\alpha^*)_{A}^{\Gamma_{\!J}} \otimes (\psi'_\alpha)_{A'}^{\Gamma_{\!J}} \leq (\psi^*_\alpha)_{A}^{\Gamma_{\!J}} \otimes \id_{A'}\, ,
\ee
which immediately implies~\eqref{H_min_positive} upon multiplying by $p_\alpha$ and summing over $\alpha$.
\end{enumerate}
Finally, the restrictions on $J$ follow from Remark~\ref{nontrivial_J_rem}.
\end{proof}

\begin{rem}
The above proof yields an upper bound not only on the LOCC simulation fidelity, but indeed on the a priori possibly larger \emph{PPT simulation fidelity}. This is given by a formula similar to~\eqref{Fs}, but where the supremum over LOCC operations is replaced by a supremum over all PPT quantum channels~\cite{Rains1999, Rains2001}, i.e.\ over all quantum channels whose Choi state has a positive partial transpose.
\end{rem}

\section{A proposed implementation} \label{sec_proposed_implementation}

We proceed with a discussion of an experimental concept in which the quantum nature 
of the gravitational interaction could be tested in principle. Following a discussion of the required experimental parameter ranges, we then proceed to make the setting more concrete with a discussion of torsion pendula. We come to the conclusion that albeit being extremely challenging experimentally, such a set-up can conceivably satisfy all the constraints that have been identified here. The conceptual development of these experiments, their analysis  by means of Theorem~\ref{general_bound_thm} and the transfer to a realistic physical system represent one of our main contributions. In particular, on the mathematical side the computation of the right-hand side of~\eqref{handier_general_bound} will pose substantial technical challenges.

In a nutshell, the systems we consider are one-dimensional quantum harmonic oscillators that interact exclusively gravitationally, with otherwise arbitrary spatial geometry, and that are initialised in coherent states drawn from i.i.d.\ Gaussian ensembles. More in detail, consider a system of $n$ one-dimensional quantum harmonic oscillators $A_1,\ldots, A_n$ as in Figure~\ref{system_oscillators_fig}. The $j^\text{th}$ particle has mass $m_j$ and is bound to move on a straight line oriented in direction $\hat{n}_j$, where $\hat{n}_j\in \R^3$ is a unit vector. It is confined in the vicinity of a centre located at $\vec{R}^0_j$ by a harmonic potential with frequency $\omega_j$. The position of the $j^\text{th}$ particle is thus $\vec{R}_j = \vec{R}_j^0 + x_j \hat{n}_j$, so that the total Hamiltonian of the system is of the form
\bb
H_{\mathrm{tot}} &= \sum_j \left( \frac12 m_j \omega_j^2 x_j^2 + \frac{p_j^2}{2m_j} \right) - \sum_{j<k} \frac{G m_j m_k}{\left\|\vec{d}_{jk} - x_j \hat{n}_j + x_k \hat{n}_k\right\|}\, .
\label{system_oscillators_H_tot}
\ee
For what follows, it will be useful to define the angles $\theta_{jk}$, $\theta_{kj}$, and $\varphi_{jk}$ (see Figure~\ref{system_oscillators_fig}) by the relations 
\bb
\cos\theta_{jk} &\coloneqq \hat{n}_j \cdot \hat{d}_{jk}\, ,\\
- \cos\theta_{kj} &\coloneqq \hat{n}_k \cdot \hat{d}_{jk}\, , \\
\cos\varphi_{jk} &\coloneqq \hat{n}_j \cdot \hat{n}_k\, ,
\label{angles}
\ee
where $\hat{d}_{jk}\coloneqq \vec{d}_{jk} \big/ \big\|\vec{d}_{jk}\big\|$, and $\vec{d}_{jk} \coloneqq \vec{R}_k^0 - \vec{R}_j^0$.

\begin{figure}[ht]
\includegraphics[scale=1]{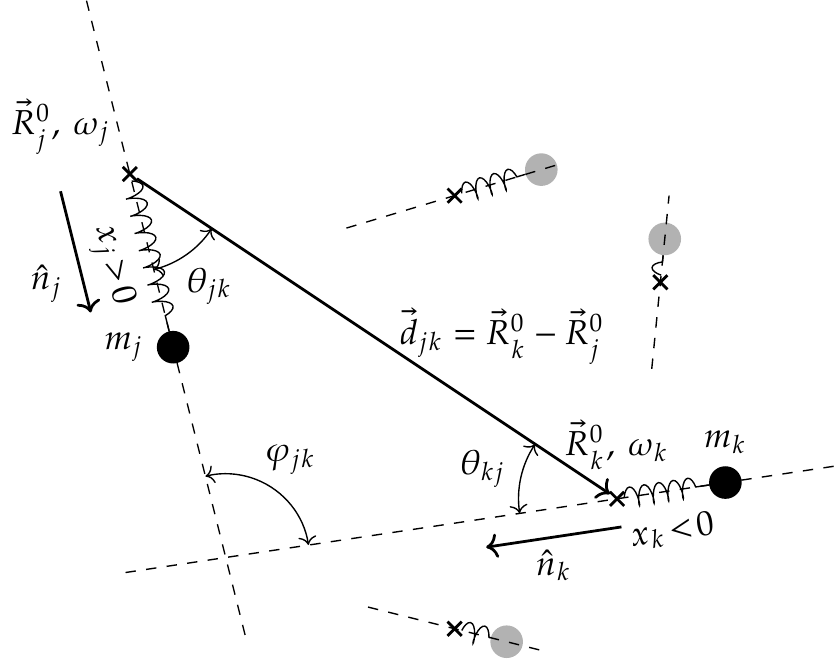}
\caption{A system of one-dimensional quantum harmonic oscillators. The various angles are defined in~\eqref{angles}. Note that for the most general three-dimensional arrangement the two dashed lines will not intersect.}
\label{system_oscillators_fig}
\end{figure}

\subsection{The concept} \label{subsec_proposal}

Our proposal for the setting considered in \S~\ref{subsec_experiments} is now completely specified as follows:
\begin{enumerate}[(1)]
\item The systems $A_j = A'_j$ are the one-dimensional quantum harmonic oscillators described above (Figure~\ref{system_oscillators_fig}), placed in space with arbitrary geometry.
\item The unitary $U_A$ to be simulated is simply the  time evolution operator $U_A = \exp \left[ - \frac{it}{\hbar}\, H_{\mathrm{tot}} \right]$ associated with the total Hamiltonian~\eqref{system_oscillators_H_tot}.
\item The oscillators are initialised in coherent states drawn independently from i.i.d.\ centred Gaussian ensembles with variance $1/\lambda$, where $\lambda>0$ is a fixed parameter. In other words, the ensemble $\mathcal{E}$ takes the form
\bb
\mathcal{E}_\lambda &\coloneqq \left\{ p_\lambda(\alpha),\, \ketbra{\alpha} \right\}_{\alpha\in \C^n}\, , \\
p_\lambda(\alpha) &\coloneqq \left(\frac{\lambda}{\pi}\right)^n e^{-\lambda \|\alpha\|^2}\, ,
\label{E_lambda}
\ee
where $\ket{\alpha}$ is the multi-mode coherent state defined by~\eqref{coherent}--\eqref{coherent_multimode}, and $\|\alpha\|^2\coloneqq \sum_{j=1}^n |\alpha_j|^2$.
\item Under assumptions (I)--(III) discussed below, the final states $\ket{\psi'_\alpha} \coloneqq U_A \ket{\alpha}_A$ turn out to be also coherent. Therefore, importantly, the final binary measurement $(\psi'_\alpha, \id-\psi'_\alpha)$ discussed in \S~\ref{subsec_unitary_simulation_LOCCs} \emph{can always be realised by local operations, namely, by (a)~applying local displacement operators on the oscillators; and then (b)~performing a phonon counting measurement $(\ketbra{0}, \id-\ketbra{0})$ on each oscillator locally.} The outcome ``$\psi'_\alpha$'' then corresponds to a zero-phonon detection on all modes.
\end{enumerate}

Let us assume for simplicity that all parameters $m_j, \omega_j, d_{jk}$ ($j\neq k$) are of the same order of magnitude $m,\omega, d$. We will analyse the above setting employing four assumptions, which we state now for clarity.

\begin{enumerate}[(I)]
\item The spatial amplitude of the local oscillations is much smaller than the distance between the oscillator centres. In formula,
\bb
\sqrt{\frac{n\hbar}{\lambda m \omega}} \ll \dmin\, .
\label{throw_away_3_order_approximation}
\ee
This assumption is needed to make the system mathematically tractable. We can use it to perform a Taylor expansion of the Hamiltonian~\eqref{system_oscillators_H_tot}.

\item Gravity yields the dominant contribution to the interaction Hamiltonian, which is well approximated by~\eqref{system_oscillators_H_tot}. This entails that the functional form of the gravitational potential should be the one predicted by Newton's law for distances of order $d$ and masses of order $m$~\cite{kapner2007tests}. Also, another relevant contribution to the interaction could come from Casimir forces. In the case of spheres with radii $R$ and under assumption~(I), these are negligible compared to gravity provided that~\cite{Pedernales2022}
\begin{align}
d &\gg R\, , \label{Casimir_small_1} \\
(d-2R)^6 &\gg \frac{207}{36\pi} \left(\frac{\vre-1}{\vre+2} \right)^2 \left(\frac{m_P}{m}\right)^2 R^6 ,
\label{Casimir_small_2}
\end{align}
where $\vre$ is the dielectric constant of the material of which the spheres are made, and $m_P$ is the Planck mass $m_P\coloneqq \sqrt{\frac{\hbar c}{G}} = \SI{2.18e-8}{\kg}$. The presence of other forces, e.g.\ due to static dipole moments of the test masses, may lead to even more stringent
constraints~\cite{Plenio2019} but as such dipole moments and hence the forces may, at least in principle, be removed, we choose to neglect 
them here.

\item There are at least two oscillators $A_j, A_k$ with \emph{almost identical} oscillation frequencies $\omega_j, \omega_k$, i.e.\ such that
\bb
|\omega_j - \omega_k| \ll \frac{G m}{\dmin^3 \omega}\, .
\label{almost_identical}
\ee
Moreover, if two oscillators $A_j, A_k$ do \emph{not} satisfy~\eqref{almost_identical}, then their frequencies satisfy
\bb
\frac{G m}{\dmin^3 \omega} \ll |\omega_j - \omega_k|\, .
\label{rotating_wave_assumption}
\ee
Finally, if \emph{all} frequencies are almost identical according to~\eqref{almost_identical}, we also require that
\bb
\frac{G m}{\dmin^3 \omega} \ll \min_j \omega_j\, .
\label{rotating_wave_assumption_n2}
\ee
\end{enumerate}

If~\eqref{almost_identical} is not obeyed for any pair $j,k$ with $j\neq k$, the rotating-wave approximation will wash out all effects due to the gravitational interaction, making its detection virtually impossible. Positing~\eqref{rotating_wave_assumption}, instead, is not strictly necessary to enable the mathematical analysis of this experiment, but we do it anyway because it simplifies the final expressions considerably. In fact, in what follows we will often formally assume that \emph{all} oscillators are characterised by almost identical frequencies. In this case we can rephrase~(III) as
\begin{enumerate}
\item[(III')] All frequencies $\omega_j$ are equal to $\omega$ up to a relative precision $\Delta\omega\coloneqq \max_{j\neq k} |\omega_j - \omega_k|$, and it holds that
\bb
\frac{\Delta\omega}{\omega} \ll \frac{G m}{\dmin^3 \omega^2} \ll 1\, .
\label{III'}
\ee
\end{enumerate}

\subsection{Quantitative analysis} \label{subsec_quantitative_analysis}

Due to assumption~(I), the interaction part of Hamiltonian~\eqref{system_oscillators_H_tot} can be Taylor expanded up to the second order. Since constant terms are physically irrelevant, they can be neglected right away. Terms that are linear in the canonical operators $x_j$, instead, do have a physical effect, but since they involve no interaction they do not affect the LOCC simulation fidelity anyhow. More formally, linear terms in the Hamiltonian result in local unitaries, and $F_{\cls}$ is invariant by local unitaries thanks to Lemma~\ref{local_unitaries_lemma}. As we argue in detail in Appendix~\ref{Taylor_app}, these considerations lead us to replace the original Hamiltonian~\eqref{system_oscillators_H_tot} with the effective Hamiltonian
\bb
H_{\mathrm{eff}} = \frac{\hbar}{2}\, \bar{r}^\intercal \left( \omega \id_{2n} + \widetilde{g} \right) \bar{r}\, ,
\label{effective_Hamiltonian}
\ee
where we employed the rearranged and dimensionless vector of canonical operators
\begin{align}
\bar{r} =&\ \left(\bar{x}_1,\ldots, \bar{x}_n,\bar{p}_1,\ldots, \bar{p}_n \right)^\intercal , \label{r_dimensionless} \\
\bar{x}_j \coloneqq&\ \sqrt{\frac{m_j \omega}{\hbar}}\,  x_j \, ,\quad \bar{p}_j \coloneqq \frac{1}{\sqrt{\hbar m_j \omega}} \, p_j
\label{dimensionless}
\end{align}
(note the different ordering with respect to~\eqref{r}), and we defined the matrix
\bb
\widetilde{g} \coloneqq&\ \begin{pmatrix} g & 0 \\ 0 & 0 \end{pmatrix} , \\[1ex]
g_{jk} =&\ \left\{ \begin{array}{ll} \displaystyle \sum_{\ell \neq j} \frac{G m_\ell}{d_{j\ell}^3 \omega} \left( 1-3\cos^2\theta_{j\ell}\right) & \ \text{$j=k$,} \\[4ex] \displaystyle -\frac{G \sqrt{m_j m_k}}{d_{jk}^3 \omega} \left( \cos\varphi_{jk} + 3 \cos\theta_{jk} \cos\theta_{kj} \right) & \ \text{$j\neq k$.} \end{array} \right.
\label{g}
\ee
It is worth remarking at this point that although the single-body terms do not contribute to the LOCC optimisation and thus do not change the LOCC simulation fidelity, they need to be known with precision, because they will influence the final state and hence the final measurement.

The Hamiltonian~\eqref{effective_Hamiltonian} is now purely quadratic. Hence, the time evolution operator resulting from letting~\eqref{effective_Hamiltonian} act for some time $t$ will be a Gaussian unitary. Thus, to make use of Theorem~\ref{general_bound_thm} we need to compute the right-hand side of~\eqref{handier_general_bound} for an a priori arbitrary Gaussian unitary. This calculation, which turns out to be highly non-trivial, is one of our main technical contributions.

\begin{thm}[(LOCC inequality for Gaussian unitaries)] \label{general_symplectic_bound_thm}
Let $S$ be an arbitrary $2n\times 2n$ symplectic matrix. Denote the corresponding Gaussian unitary over $n$ modes by $U_S$ (see~\eqref{U_S} for a definition). For $\lambda>0$, consider the Gaussian i.i.d.\ ensemble $\mathcal{E}_\lambda$ of $n$-mode coherent states with variance $1/\lambda$ defined by~\eqref{E_lambda}. Then the classical simulation fidelity defined by~\eqref{Fs} can be upper bounded by
\bb
F_{\cls}\left( \mathcal{E}_\lambda,\, U_S\right) \leq \min_{J\subseteq [n]} \frac{2^n(1+\lambda)^n}{\prod_{\ell=1}^{2n} \sqrt{2+\lambda+\left|z_\ell(\lambda,S,J)\right|}} ,
\label{general_symplectic_bound}
\ee
where $z_\ell(\lambda,S,J)$ is the $\ell^\text{th}$ eigenvalue of the Hermitian matrix
\bb
(1+\lambda) S^{-1} i\Omega_J S^{-\intercal} - i\Omega_J
\label{rotated_Omega_J}
\ee
and
\bb
\Omega_J &\coloneqq \bigoplus_{j\in J} \begin{pmatrix} 0 & 1 \\ -1 & 0 \end{pmatrix} \oplus \bigoplus_{k\in J^c} \begin{pmatrix} 0 & -1 \\ 1 & 0 \end{pmatrix}
\label{z_i_Omega_J}
\ee
with respect to a mode-wise decomposition. In~\eqref{general_symplectic_bound}, we can assume without loss of generality that $1\leq |J|\leq \floor{n/2}$.
\end{thm}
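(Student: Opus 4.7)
The plan is to apply the general LOCC inequality of Theorem~\ref{general_bound_thm}. Since the ensemble $\mathcal{E}_\lambda$ consists of coherent states that factorise across modes, the simplified bound~\eqref{handier_general_bound} applies and yields
\[
F_{\cls}(\mathcal{E}_\lambda, U_S) \leq \min_{J\subseteq [n]} d_{\max}\!\bigl(R_{AA'}^{\Gamma_{\!J}}\,\big\|\, R_A^{\Gamma_{\!J}}\otimes \id_{A'}\bigr),
\]
reducing the problem to an explicit evaluation of this max-relative entropy, which I would carry out by exploiting the Gaussian structure throughout.

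The first observation is that $R_{AA'}$ is itself a centred Gaussian state on the $2n$-mode system $AA'$. Indeed, $\ket{\alpha^*}$ is the coherent state of amplitude $\alpha^*$ (equivalently, the Fock-basis complex conjugate of $\ket{\alpha}$), while $U_S\ket{\alpha}$ is a pure Gaussian state with covariance $SS^\intercal$ and displacement $S\delta_\alpha$. Hence each integrand $\psi_\alpha^*\otimes \psi'_\alpha$ is a pure Gaussian state on $AA'$ whose $4n\times 4n$ covariance $\id_{2n}\oplus SS^\intercal$ is $\alpha$-independent and whose displacement depends linearly on $\alpha$. Averaging over $\alpha$---whose displacement vector $\delta_\alpha$ itself carries a Gaussian law with covariance proportional to $\lambda^{-1}\id_{2n}$---preserves Gaussianity and, by the law of total covariance, produces a Gaussian state whose covariance matrix $V$ can be written in closed form in terms of $S$ and $\lambda$. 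By~\eqref{partial_transpose_Gaussian}, $R_{AA'}^{\Gamma_{\!J}}$ remains Gaussian with covariance obtained from $V$ via the appropriate sign-flip conjugation, and the marginal $R_A^{\Gamma_{\!J}}$ turns out to be Gaussian with covariance a scalar multiple of $\id_{2n}$, independent of both $J$ and $S$.

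The central technical task is to evaluate
\[
d_{\max}\!\bigl(R_{AA'}^{\Gamma_{\!J}}\,\big\|\,R_A^{\Gamma_{\!J}}\otimes \id_{A'}\bigr) = \Bigl\|\bigl(R_A^{\Gamma_{\!J}}\bigr)^{-1/2}\!\otimes\id \cdot R_{AA'}^{\Gamma_{\!J}}\cdot \bigl(R_A^{\Gamma_{\!J}}\bigr)^{-1/2}\!\otimes\id\Bigr\|_\infty
\]
via~\eqref{d_max_lambda_max}. The sandwiched operator should itself be proportional to a centred Gaussian operator: all three factors act on the same $2n$-mode Hilbert space and have quadratic-form exponents (obtainable by inverting Williamson's normal form), so composing them yields a single quadratic exponent and an operator of the form $\mathrm{G}_1[\widetilde W,0]$ up to an overall scalar. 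Its norm is then $\prod_j 2/(\nu_j(\widetilde W)+1)$ by~\eqref{operator_norm_QCM}. To arrive at the form~\eqref{general_symplectic_bound}, I would compute the symplectic spectrum of $\widetilde W$ using~\eqref{symplectic_spectrum_eigenvalues} and, via block-matrix manipulations combined with the identity $S^{-\intercal}\Omega = \Omega S$, reduce the $4n$-dimensional problem to the $2n$-dimensional spectrum of $(1+\lambda)S^{-1}i\Omega_J S^{-\intercal} - i\Omega_J$. Conceptually, the factor $1+\lambda$ arises from combining the pure-state covariance with the $\alpha$-averaging contribution, the conjugation by $S^{-1}$ encodes the $A$--$A'$ correlations induced by $U_S$, and the replacement of $\Omega$ by $\Omega_J$ tracks the sign flips introduced by the partial transposition.

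The main obstacle will be the Gaussian-operator algebra of the third step: combining inverses, square roots, and products so as to certify that the result is manifestly of the form $\mathrm{G}_1[\widetilde W,0]$ with an explicit and tractable $\widetilde W$, and then performing the block-matrix reduction to the compact eigenvalue problem~\eqref{rotated_Omega_J}. Once this is in place, the overall constant $2^n(1+\lambda)^n$ in the numerator of~\eqref{general_symplectic_bound} follows from accounting for the Gaussian normalisations, and the restriction $1\leq |J|\leq \lfloor n/2\rfloor$ is inherited immediately from Remark~\ref{nontrivial_J_rem}.
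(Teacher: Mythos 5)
Your overall architecture mirrors the paper's: both approaches instantiate Theorem~\ref{general_bound_thm}, exploit the Gaussian structure of $R_{AA'}$, evaluate an operator norm via~\eqref{operator_norm_QCM}--\eqref{symplectic_spectrum_eigenvalues}, and then use cyclic/block-matrix identities together with $S^{-\intercal}\Omega = \Omega S$ to collapse a $4n\times 4n$ spectral problem down to the $2n\times 2n$ eigenvalue problem~\eqref{rotated_Omega_J}. Your identification of $R_{AA'}$ as a centred Gaussian state and of $R_A = R_A^{\Gamma_J} = \tau_{1/\lambda}^{\otimes n}$ as a thermal state is also correct.

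However, there is a genuine gap at the step you yourself flag as the main obstacle. You want to take $\xi_A = R_A^{\Gamma_J} = \tau_{1/\lambda}^{\otimes n}$ (i.e.\ invoke~\eqref{handier_general_bound} directly) and argue that the sandwiched operator
\begin{equation*}
M \coloneqq \bigl(\tau_{1/\lambda}^{-1/2}\bigr)^{\!\otimes n}_A\, R_{AA'}^{\Gamma_{\!J}}\, \bigl(\tau_{1/\lambda}^{-1/2}\bigr)^{\!\otimes n}_A
\end{equation*}
is proportional to $\mathrm{G}_1[\widetilde W, 0]$ for a finite, positive $\widetilde W$, whose norm can then be read off via~\eqref{operator_norm_QCM}. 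This is false: $M$ is bounded (indeed $\lambda_{\max}(M) \leq 1$, by the operator inequality $R_{AA'}^{\Gamma_J} \leq R_A^{\Gamma_J}\otimes \id_{A'}$), but it is \emph{not trace class}. Its trace is $\Tr_A\bigl[\tau_{1/\lambda}^{-1}\tau_{1/\lambda}\bigr]^{\!\otimes n} = (\Tr\id)^{\!n} = \infty$. Equivalently, using $\tau_{1/\lambda}^{-1/2}\ket{\alpha} \propto e^{\lambda|\alpha|^2/2}\ket{\sqrt{1+\lambda}\,\alpha}$, the exponential weight $e^{\lambda|\alpha|^2}$ exactly cancels the ensemble density $e^{-\lambda|\alpha|^2}$ and the integral over $\alpha$ no longer converges as an operator in $\T(\HH_{AA'})$. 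Hence $M$ is not of the form $\mathrm{G}_1[\widetilde W,0]$ for any covariance matrix $\widetilde W$ satisfying Williamson's hypotheses, and~\eqref{operator_norm_QCM} cannot be applied as stated.

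This is precisely why the paper works with the more general bound~\eqref{general_bound} rather than~\eqref{handier_general_bound}, and chooses the regularised ansatz $\xi_A = \tau_{1/\lambda_\mu}^{\otimes n}$ with $\lambda_\mu = (\lambda-2\mu)/(1+\mu) < \lambda$ for $\mu>0$. With a strictly hotter thermal state the cancellation above becomes a genuine Gaussian suppression $e^{-(\lambda-\lambda_\mu)|\alpha|^2}$, the sandwiched operator becomes a bona fide Gaussian state with finite covariance $W_{J,\mu}$, and~\eqref{operator_norm_QCM} applies. One then takes $\mu\to 0^+$ at the end, and the divergences are controlled not by $W_{J,\mu}$ (which diverges) but by $W_{J,\mu}^{-1}$, which converges to a rank-$2n$ matrix $\frac{1}{2+\lambda}L_J^{\phantom\intercal}L_J^\intercal$ --- the cancellation between the prefactor $\mu^{-n}$ and $\det(W_{J,\mu})^{-1/2}$ is exactly what produces the final finite bound. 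Your sketch would need to incorporate some such limiting or regularisation argument; without it, the identification of $M$ with a Gaussian operator $\mathrm{G}_1[\widetilde W,0]$ and the appeal to~\eqref{operator_norm_QCM} simply do not go through.
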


The proof of this theorem is quite involved, and will be presented in Appendix~\ref{subsec_proof_general_symplectic_bound}. However, what is of interest here is its application to the setting we are studying. This is captured by the following theorem.

\begin{thm}[(LOCC inequality for gravitationally interacting oscillators)] \label{recap_thm}
Consider a 3-dimensional array of $n$ one-dimensional quantum harmonic oscillators with identical frequencies $\omega$, characterised by masses $m_j$, angles $\theta_{jk}, \varphi_{jk}$, and distances $d_{jk}$, as defined in Figure~\ref{system_oscillators_fig}. Fix $\lambda>0$, and assume that the oscillators are initialised in coherent states drawn from the Gaussian i.i.d.\ ensemble $\mathcal{E}_\lambda$ with variance $1/\lambda$ defined by~\eqref{E_lambda}. Under assumptions~(I),~(II), and~(III') discussed above, the overall dynamics is approximately equivalent, up to local unitaries, to a Gaussian unitary $U_{S_{\mathrm{eff}}}$, where
\bb
S_{\mathrm{eff}} = \begin{pmatrix} \cos(gt/2) & \sin(gt/2) \\ -\sin(gt/2) & \cos(gt/2) \end{pmatrix}
\label{S_eff}
\ee
is an orthogonal symplectic matrix, and the symmetric matrix $g$ is defined in~\eqref{g}. The associated classical simulation fidelity~\eqref{Fs} can be upper bounded by
\bb
F_{\cls}\!\left( \mathcal{E}_{\!\lambda}, e^{-\frac{it}{\hbar}H_{\mathrm{tot}}} \right) &\approx F_{\cls}\!\left( \mathcal{E}_{\!\lambda}, U_{S_{\mathrm{eff}}} \right) \\
&\leq \min_{J\subseteq [n]} \frac{2^n(1+\lambda)^n}{\prod_{\ell=1}^{n} \left(2 + \lambda + \left|w_\ell ( \lambda, g t, J )\right|\right)}\, .
\label{f_simplified}
\ee
In~\eqref{f_simplified}, we denoted by $w_\ell\! \left( \lambda, g t, J \right)$ the $\ell^{\text{th}}$ eigenvalue of the Hermitian matrix
\bb
(1+\lambda)\, e^{i\frac{gt}{2}} \Xi_J e^{-i\frac{gt}{2}} - \Xi_J
\label{rotated_Xi_J}
\ee
and $\Xi_J$ is the $n\times n$ diagonal matrix with entries
\bb
(\Xi_J)_{jk} = \left\{ \begin{array}{rl} +1 & \quad\text{if $j=k\in J$,} \\[0.5ex] -1 & \quad\text{if $j=k\notin J$,} \\[0.5ex] 0 & \quad \text{if $j \neq k$.} \end{array} \right.
\label{Xi_J}
\ee
In~\eqref{f_simplified}, we can assume without loss of generality that $1\leq |J|\leq \floor{n/2}$.
\end{thm}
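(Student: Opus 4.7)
The plan is to reduce the full gravitational evolution to the Gaussian unitary $U_{S_{\mathrm{eff}}}$ by two successive approximations, apply Theorem~\ref{general_symplectic_bound_thm} to it, and then exploit the rotational structure of $S_{\mathrm{eff}}$ to compress the resulting $2n\times 2n$ eigenvalue problem into an $n\times n$ one. To carry out the first approximation, I would use assumption~(I) to Taylor expand the Newtonian potential in~\eqref{system_oscillators_H_tot} up to quadratic order in the local displacements $x_j$: the constant piece is physically irrelevant, the linear piece in each $x_j$ depends only on that coordinate and generates a \emph{local} displacement, and the quadratic piece, added to the local harmonic terms, produces exactly~\eqref{effective_Hamiltonian}. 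By Lemma~\ref{local_unitaries_lemma} the local displacements do not affect $F_{\cls}$ and can be dropped; the truncation error is controlled by the small parameter~\eqref{throw_away_3_order_approximation} and is to be tracked quantitatively in the appendix.

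Next, I would pass to the interaction picture with respect to the free harmonic Hamiltonian $H_{0}=\tfrac{\hbar\omega}{2}\,\bar r^{\intercal}\bar r$, which is permissible because by~(III') all oscillator frequencies agree up to $\Delta\omega$, with the residual mismatch being absorbed into further local unitaries via Lemma~\ref{local_unitaries_lemma}. Under this rotation $\bar x_j(t)=\bar x_j\cos(\omega t)+\bar p_j\sin(\omega t)$, so $\tfrac{\hbar}{2}\bar x^{\intercal}g\bar x$ splits into a slowly varying part $\tfrac{\hbar}{4}\bigl(\bar x^{\intercal}g\bar x+\bar p^{\intercal}g\bar p\bigr)$ plus counter-rotating terms oscillating at frequency $2\omega$. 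Under the scale separation $\|g\|/\omega\sim Gm/(d^{3}\omega^{2})\ll 1$ provided by~\eqref{III'}, a standard averaging (rotating-wave) argument shows that the counter-rotating pieces contribute only subleading corrections to the propagator. Applying~\eqref{quadratic<-->symplectic} to the resulting generator $\tfrac{\hbar}{2}\bar r^{\intercal}\diag(g/2,g/2)\bar r$ yields precisely the orthogonal symplectic matrix~\eqref{S_eff}. Absorbing the free evolution together with all local displacements/unitaries via Lemma~\ref{local_unitaries_lemma} reduces the simulation fidelity to $F_{\cls}(\mathcal{E}_{\lambda},U_{S_{\mathrm{eff}}})$, to which Theorem~\ref{general_symplectic_bound_thm} applies directly.

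Finally, to deduce~\eqref{f_simplified}, I would exploit that $S_{\mathrm{eff}}$ commutes with the standard symplectic form $\Omega$ (being orthogonal symplectic) and that $\Omega_J$ also commutes with $\Omega$ (by direct inspection), so that the $2n\times 2n$ Hermitian matrix $(1+\lambda)S_{\mathrm{eff}}^{-1}i\Omega_J S_{\mathrm{eff}}^{-\intercal}-i\Omega_J$ block-diagonalises on the two $\Omega$-eigenspaces $\ker(\Omega\mp i\,\id)$, each naturally isomorphic to $\C^{n}$. On the $-i$ eigenspace, a direct computation shows that $S_{\mathrm{eff}}$ and $S_{\mathrm{eff}}^{-\intercal}$ act as multiplication by the unitaries $e^{-igt/2}$ and $e^{igt/2}$, respectively, while $i\Omega_J$ acts as the real diagonal $n\times n$ matrix $\Xi_J$ defined in~\eqref{Xi_J}; the corresponding block is precisely~\eqref{rotated_Xi_J}, with eigenvalues $w_\ell$, whereas the conjugate block on the $+i$ eigenspace carries the eigenvalues $-w_\ell$. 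Since $|w_\ell|=|-w_\ell|$, the product $\prod_{\ell=1}^{2n}\sqrt{2+\lambda+|z_\ell|}$ in~\eqref{general_symplectic_bound} collapses to $\prod_{\ell=1}^{n}(2+\lambda+|w_\ell|)$, yielding~\eqref{f_simplified}.

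The main obstacle I anticipate is the quantitative control of the rotating-wave approximation in the second step: one must show that the propagator error incurred by replacing the exact evolution with the product $U_{0}\,U_{S_{\mathrm{eff}}}$ remains small enough, under~\eqref{III'}, to preserve the inequality as a genuine upper bound on the true classical simulation fidelity. This calls for a continuity-type estimate propagating the operator-norm error of the Gaussian approximation into the optimisation defining $F_{\cls}$, which I expect the appendix proof to carry out by tracking the approximation error explicitly as an additive remainder in~\eqref{f_simplified}.
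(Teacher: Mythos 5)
Your proposal is correct and follows essentially the same route as the paper: a Taylor expansion with the linear pieces absorbed by Lemma~\ref{local_unitaries_lemma}, a rotating-wave reduction (which the paper formalises via Lemma~\ref{rotating_wave_lemma} rather than an interaction-picture averaging argument, but the content is the same), an application of Theorem~\ref{general_symplectic_bound_thm} to $S_{\mathrm{eff}}$, and a block-diagonalisation of~\eqref{rotated_Omega_J} on the $\pm i$-eigenspaces of $\Omega$ — which is exactly the paper's $V_0$-conjugation in disguise, yielding the spectrum $\{w_\ell\}\cup\{-w_\ell\}$. One small slip: since $S_{\mathrm{eff}}$ is orthogonal, $S_{\mathrm{eff}}^{-\intercal}=S_{\mathrm{eff}}$, so it is the pair $S_{\mathrm{eff}}^{-1}$ and $S_{\mathrm{eff}}^{-\intercal}$ that act as $e^{igt/2}$ and $e^{-igt/2}$ on the relevant eigenspace, not ``$S_{\mathrm{eff}}$ and $S_{\mathrm{eff}}^{-\intercal}$''; but the block you write down is the correct one.
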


A complete mathematical proof of the above result can be found in Appendix~\ref{subsec_proof_recap_thm}.

One of the notable consequences of the above Theorem~\ref{recap_thm} is that the effective dynamics that the system undergoes can be approximated, up to local displacements, by a Gaussian unitary $U_{S_{\mathrm{eff}}}$, where $S_{\mathrm{eff}}$ is an orthogonal symplectic matrix. As we know from~\eqref{passive_on_coherent}, such a passive unitary maps coherent states to coherent states. (Note that the local displacement operators that result from the first-order terms we ignored in Appendix~\ref{Taylor_app} also share this property.) Since the system is initially prepared in a coherent state, it then follows that the global state of the system is at all times an (approximate) coherent state; in particular, due to~\eqref{coherent_multimode} \emph{it is approximately separable at all times.} Therefore, and rather remarkably, Theorem~\ref{recap_thm} allows us to detect the non-LOCCness of a dynamical process in which no entanglement is present at any point in time. The situation is somewhat reminiscent of what happens in quantum cryptography. 
The famous Bennett--Brassard `BB84' secret key distribution protocol~\cite{bennett1984quantum} has no entanglement at any time
but can be proven to be secure nevertheless. Indeed, at the root of the security proof of~\cite{shor2000simple} is the fact that
the security of BB84 is guaranteed for as long as the quantum channel that is being used has the capability in principle to establish
entanglement.

An important corollary of this discussion is that the final measurement, represented by the POVM $(\psi'_\alpha, \id-\psi'_\alpha) = (\ketbra{\alpha'},\, \id-\ketbra{\alpha'})$, where $\ket{\alpha'} = \ket{\alpha'_1}\otimes \ldots \otimes \ket{\alpha'_n}$ is the coherent state at the end of the protocol, can be implemented \emph{locally} by means of a relatively elementary procedure. It suffices to:
\begin{enumerate}[(i)]
    \item apply a local displacement operator $D(-\alpha') = D(-\alpha'_1)\otimes \ldots \otimes D(-\alpha'_n)$, which can be done by simply shifting the equilibrium position of each oscillator at the right time; and
    \item measure whether each oscillator contains $0$ vs at least $1$ phonons. This corresponds to carrying out the POVM $(\ketbra{0},\id-\ketbra{0})$. We then declare success (i.e.\ we guess the null hypothesis (NH)) if and only if all oscillators were found to have $0$ phonons.
\end{enumerate}
The fact that the final measurement is conceptually simple and that it can be implemented locally on each oscillator could simplify the experimental requirements considerably. Let us remark that this feature is not by construction; it is rather a somewhat unexpected consequence of the fact that the state of the system is approximately coherent at all times, as predicted by Theorem~\ref{recap_thm}.

The dependence of the right-hand side of~\eqref{f_simplified} on $\lambda$ and $t$ may seem a bit obscure. To find a simplified and somewhat more instructive expression, we can make use of two facts: (a)~the effect of the gravitational interaction integrated over time will likely be very weak in a practical experiment, thus we can expand the right-hand side of~\eqref{f_simplified} for small times; (b)~since $\lambda>0$ is ultimately a parameter of the experiment, we can try to optimise it in order to obtain maximum sensitivity. In fact, in all cases of practical interest it turns out that the optimal way to do so is to take $\lambda$ to be very small --- ideally, $\lambda\to 0$.
To see why, remember that we are trying to challenge a classical gravitational force to simulate (approximately) the correct dynamics on an arbitrary collection of coherent states drawn from i.i.d.\ Gaussian ensembles with variance $1/\lambda$. It is intuitively expected that the smaller $\lambda$, i.e.\ the greater the indeterminacy on the input coherent state, the more difficult it will be to carry out such a simulation. While we do not have full control over $F_{\cls}$, the same qualitative behaviour should be found in our upper bounds given by~\eqref{general_symplectic_bound} and~\eqref{f_simplified}. And indeed, this is seen to be the case in all practically interesting settings. Let us therefore formalise the following assumption.

\begin{enumerate}
\item[(IV)] The variance $1/\lambda$ of the Gaussian ensemble and the runtime of the experiment $t$ satisfy that
\bb
\max\left\{ \lambda,\, n \frac{Gmt}{\dmin^3 \omega} \right\} \ll 1\, .
\label{lambda_t_both_small}
\ee
\end{enumerate}
The origin of the second inequality can be understood by considering the special case $n=1$ in Eq.~\ref{elementary_evolution} where $\frac{G m t}{\dmin^3 \omega}=\pi$ yields the perfect swap and hence the most non-local channel possible. Waiting any longer does not add or is even detrimental to the discrimination of classical from quantum behaviour. 

The reader could wonder why we cannot take directly $\lambda\to 0$. Although this would not constitute a problem mathematically for the statements of Theorems~\ref{general_symplectic_bound_thm} or~\ref{recap_thm}, both of which make perfect sense and yield a non-trivial bound for $\lambda\to 0$, it would technically violate assumption~(I) (cf.~\eqref{throw_away_3_order_approximation}). Simply put, in that case the coherent state ensembles become so spread out that higher-order terms in the Taylor expansion of the gravitational Hamiltonian can no longer be neglected. Fortunately, the lower bound on $\lambda$ imposed by~\eqref{throw_away_3_order_approximation} is often many orders of magnitude below the other relevant parameter, i.e.\ the runtime measured in units of the time scale of the system, $\frac{Gmt}{\dmin^3 \omega}$. It is therefore possible to take $\lambda$ to be much smaller than $\frac{Gmt}{\dmin^3 \omega}$ while at the same time obeying~\eqref{throw_away_3_order_approximation}.

If this is true in theory, in practice having an ensemble with a large variance $1/\lambda$ may prove challenging. For this reason, we prefer to separate the more conservative assumption~(IV) above from the strengthened assumption~(IV') below, which requires more experimental control to be implemented but on the other hand yields a further simplification as well as a better scaling of the sensitivity with $t$.

\begin{enumerate}
\item[(IV')] The variance $1/\lambda$ of the Gaussian ensemble and the runtime of the experiment $t$ satisfy that
\bb
\frac{n \hbar}{m \dmin^2 \omega} \ll \lambda \ll \frac{Gmt}{\dmin^3 \omega} \ll \frac{1}{n}\, .
\label{lambda_effective_0_approximation}
\ee
(Note that the first inequality is a rephrasing of~\eqref{throw_away_3_order_approximation}, i.e.\ assumption~(I), and the latter is implied by~\eqref{lambda_t_both_small}, i.e.\ assumption (IV).) Note that~\eqref{lambda_effective_0_approximation} implies in particular that
\bb
t\gg \frac{n \hbar \dmin}{Gm^2}\, .
\label{lower_bound_t_hbar}
\ee
\end{enumerate}

If in addition to assumptions~(I),~(II), and~(III'), we take also either~(IV) or even~(IV'), we obtain a simplified version of Theorem~\ref{recap_thm} as follows.

\begin{thm}[(LOCC inequality, short-time expansion)] \label{recap_small_times_thm}
Under the same hypotheses of Theorem~\ref{recap_thm}, if in addition assumption~(IV) above is met, then the upper bound in~\eqref{f_simplified} can be expanded for small times as
\bb
F_{\cls}\left(\mathcal{E}_\lambda,\, U_{S_{\mathrm{eff}}} \right) &\leq 1 - \max_{\substack{J\subseteq [n],\\ |J|\leq n/2}} \sum_{\ell=1}^{|J|} \left( \sqrt{\lambda^2 + t^2 s_\ell^2(J)} - \lambda \right) + \Delta\, ,
\label{F_cls_expansion}
\ee
where
\bb
\Delta = O\left(t \|g\|_\infty \max\left\{ \lambda,\, t \|g\|_\infty \right\} \right) .
\label{Delta_big_O_estimate}
\ee
Here, $\|g\|_\infty$ is the operator norm~\eqref{operator_norm} of the matrix $g$ in~\eqref{g}, which only depends on the geometry of the system --- typically, $\|g\|_\infty \sim n \frac{Gmt}{d^3\omega}$. Also, $\{s_\ell(J)\}_{\ell=1}^{|J|}$ are the singular values of the $|J|\times (n-|J|)$ sub-block $g^{J,J^c}$ of $g$; in other words, labeling rows and columns of $g^{J,J^c}$ with $j\in J$ and $k\in J^c$, respectively, we have that
\bb
g^{J,J^c}_{jk} \coloneqq g_{jk} = -\frac{G \sqrt{m_j m_k}}{d_{jk}^3 \omega} \left( \cos\varphi_{jk} + 3 \cos\theta_{jk} \cos\theta_{kj} \right) .
\ee
If also assumption~(IV') is satisfied, then the expansion in~\eqref{F_cls_expansion} can be further simplified to
\bb
F_{\cls}\left(\mathcal{E}_\lambda,\, U_{S_{\mathrm{eff}}} \right) \leq&\ 1 - \eta t + O\!\left(\max\!\left\{n\lambda, \big(n \|g\|_\infty t \big)^2\right\}\! \right) ,
\label{F_cls_simple_expansion}
\ee
where the \deff{sensitivity} $\eta$ is given by
\bb
\eta \coloneqq&\ \max_{\substack{J\subseteq [n],\\ |J|\leq n/2}} \frac14 \left\| \left[g, \Xi_J\right] \right\|_1\, .
\label{sensitivity}
\ee
\end{thm}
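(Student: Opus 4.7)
The plan is to Taylor-expand the upper bound~\eqref{f_simplified} of Theorem~\ref{recap_thm} simultaneously in $t$ and $\lambda$. Fixing $J\subseteq[n]$, I would first expand the Hermitian matrix inside the product. Using $\|e^{iA} - I - iA\|_\infty \leq \tfrac12\|A\|_\infty^2$, one finds
\bb
(1+\lambda)\, e^{igt/2}\,\Xi_J\, e^{-igt/2} - \Xi_J = \lambda\,\Xi_J + \tfrac{(1+\lambda)\, it}{2}[g,\Xi_J] + R,
\ee
with $\|R\|_\infty = O(t^2\|g\|_\infty^2)$. Since $\Xi_J$ is diagonal with entries $\pm 1$, ordering indices as $J, J^c$ reveals the block-off-diagonal form
\bb
i[g,\Xi_J] = \begin{pmatrix} 0 & -2i\, g^{J,J^c}\\[0.5ex] 2i\, (g^{J,J^c})^\intercal & 0\end{pmatrix},
\ee
so the leading-order matrix $\lambda\Xi_J + \tfrac{(1+\lambda)t}{2}\cdot i[g,\Xi_J]$ has a square that is block-diagonal, with eigenvalues $\lambda^2 + t^2(1+\lambda)^2 s_\ell^2$ (for $\ell = 1,\ldots,|J|$, each of multiplicity~$2$) together with $\lambda^2$ of multiplicity $n - 2|J|$, where $s_\ell$ are the singular values of $g^{J,J^c}$. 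Weyl's perturbation inequality then yields $|w_\ell| = \sqrt{\lambda^2 + t^2(1+\lambda)^2 s_\ell^2} + O(t^2\|g\|_\infty^2)$ for $2|J|$ of the eigenvalues and $|w_\ell| = \lambda + O(t^2\|g\|_\infty^2)$ for the remaining $n-2|J|$.

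Next I would substitute these into the product in~\eqref{f_simplified}. The $n-2|J|$ eigenvalues equal to $\lambda + O(\cdots)$ contribute factors $2+2\lambda$, which together with $2^n(1+\lambda)^n$ simplify the ratio to
\bb
\prod_{\ell=1}^{|J|} \left(\frac{2(1+\lambda)}{2+\lambda+r_\ell}\right)^{\!\!2}\!,\qquad r_\ell \coloneqq \sqrt{\lambda^2 + t^2(1+\lambda)^2 s_\ell^2}.
\ee
Writing $\frac{2(1+\lambda)}{2+\lambda+r_\ell} = 1 - \frac{r_\ell-\lambda}{2+\lambda+r_\ell}$ and squaring yields $1 - (r_\ell - \lambda) + O((r_\ell + \lambda)^2)$ per factor. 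Multiplying the $|J|\leq n/2$ such factors, using assumption~(IV) to keep every small quantity small, and replacing $r_\ell$ by $\sqrt{\lambda^2+t^2 s_\ell^2}$ at the cost of an $O(\lambda t\|g\|_\infty)$ error per factor, gives~\eqref{F_cls_expansion} after minimising over $J$.

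For the simpler expansion~\eqref{F_cls_simple_expansion}, under assumption~(IV') I would use the elementary estimate $|\sqrt{\lambda^2+t^2 s_\ell^2} - \lambda - t s_\ell| \leq \min\{\lambda, t s_\ell\}$ (which follows by direct expansion at either end of the regime $t s_\ell \gtrless \lambda$). Summing over $\ell\leq|J|\leq n/2$ introduces at most $O(n\lambda)$ error and replaces the term by $t\sum_\ell s_\ell$. Finally, the block-off-diagonal structure of $[g,\Xi_J]$ implies its singular values are $2s_\ell$ each with multiplicity~$2$ (plus $n-2|J|$ zeros), so $\sum_\ell s_\ell = \tfrac14\|[g,\Xi_J]\|_1$; maximising over $J$ produces the sensitivity $\eta$.

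The main obstacle I anticipate is the careful bookkeeping of the three nested Taylor remainders --- of the matrix exponential, of the perturbed eigenvalues via Weyl, and of the product over $\ell$ --- so that the cumulative error factorises cleanly as $t\|g\|_\infty \cdot \max\{\lambda, t\|g\|_\infty\}$ in~\eqref{Delta_big_O_estimate}, and so that the explicit $n$-dependence only appears in the sharper bound~\eqref{F_cls_simple_expansion}.
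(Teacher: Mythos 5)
Your proof follows the same overall strategy as the paper's Appendix~\ref{subsec_proof_recap_small_times}: Taylor-expand the matrix in~\eqref{rotated_Xi_J}, exploit the block structure induced by $\Xi_J$ to compute the leading-order eigenvalues in closed form via the singular values of $g^{J,J^c}$, control the eigenvalue shift by matrix perturbation theory, and then expand the product of $2(1+\lambda)/(2+\lambda+|w_\ell|)$ factors.

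There are two harmless differences of technique. First, you invoke Weyl's inequality (operator-norm perturbation of Hermitian eigenvalues), whereas the paper uses the Hoffman--Wielandt theorem in the Hilbert--Schmidt norm, chasing the factor $\sqrt{n}$ from $\|\Xi_J\|_2=\sqrt{n}$ through Cauchy--Schwarz; both routes yield remainder estimates of the same order $O(n\,t^2\|g\|_\infty^2)$ once summed over the spectrum. Second, you retain the $(1+\lambda)$ factor in the first-order commutator, computing leading eigenvalues $\pm\sqrt{\lambda^2+(1+\lambda)^2 t^2 s_\ell^2}$ and then replacing them by $\sqrt{\lambda^2+t^2 s_\ell^2}$ afterwards; the paper instead puts the $\lambda\,\tfrac{it}{2}[g,\Xi_J]$ piece directly into the remainder $T$ (cf.~\eqref{expansion_rotated_Xi_J}--\eqref{first_estimate_Delta}). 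Both organisations of the bookkeeping work. One more minor point: $\|e^{iA}-I-iA\|_\infty\le\tfrac12\|A\|_\infty^2$ does not by itself bound the \emph{conjugation} remainder $e^{iA}Ye^{-iA}-Y-i[A,Y]$; you still need to expand the product of two exponentials and collect cross terms, exactly what the paper's Lemma~\ref{adjoint_perturbation_lemma} packages up.

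The one place where you should be more careful is the per-factor error in the product expansion. You write the error as $O\bigl((r_\ell+\lambda)^2\bigr)$; in the regime allowed by assumption~(IV) in which $\lambda\gtrsim t\|g\|_\infty$, that is $O(\lambda^2)$, which is \emph{larger} than the target $\Delta = O\bigl(t\|g\|_\infty\max\{\lambda,t\|g\|_\infty\}\bigr)$. The correct per-factor estimate is $O(r_\ell^2-\lambda^2)=O\bigl((1+\lambda)^2 t^2 s_\ell^2\bigr)$, which one gets by keeping the numerator $r_\ell-\lambda$ together with the denominator $2+\lambda+r_\ell$; this is precisely the paper's inequality~\eqref{bound_f_difference_2nd}, $|f(\lambda,w')-(1-\tfrac{|w'|-\lambda}{2})|\le\tfrac{|w'|^2-\lambda^2}{4}$. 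With that replacement (and with the product-to-sum step made rigorous as in Lemmas~\ref{product_difference_lemma} and~\ref{delta_k_lemma}) your argument matches the paper's $\Delta$. The second half of your sketch --- the estimate $|\sqrt{\lambda^2+t^2 s_\ell^2}-\lambda-t s_\ell|\le\min\{\lambda,t s_\ell\}$ and the identification $\sum_\ell s_\ell(J)=\tfrac14\|[g,\Xi_J]\|_1$ via the block-off-diagonal structure of $[g,\Xi_J]$ --- is correct and coincides with the paper's derivation of~\eqref{F_cls_simple_expansion}--\eqref{sensitivity}.
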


The proof of Theorem~\ref{recap_small_times_thm} can be found in Appendix~\ref{subsec_proof_recap_small_times}. In the above statement, the singular values of a matrix $X$ are simply the eigenvalues of $\sqrt{X^\dag X}$. If $X=X^\dag$ is Hermitian --- more generally, normal --- then its singular values are simply the absolute values of its eigenvalues. The trace norm $\|X\|_1\coloneqq \Tr \sqrt{X^\dag X}$ (cf.~\eqref{trace_norm}) is the sum of the singular values; for Hermitian $X$, it coincides with the absolute sum of the eigenvalues $x_\ell$ of $X$, in formula $\|X\|_1 = \sum_\ell |x_\ell|$.

Importantly, the estimate~\eqref{F_cls_simple_expansion} reveals that under assumption~(IV') \emph{the classical simulation fidelity decreases linearly with time.} The figure of merit for the scheme is reduced to the simple quantity $\eta$, which depends only on the geometry of the problem. This very desirable feature is a consequence of the fact that we could effectively take $\lambda\to 0$ by virtue of the stronger assumption~(IV'). In light of these considerations, it makes sense to call the quantity $\eta$ defined by~\eqref{sensitivity} the `sensitivity' of a given configuration.
On a different note, we remark that if only~(IV) rather than~(IV') is met, then~\eqref{F_cls_expansion} reveals that our upper bound on $F_{\cls}$ decreases only quadratically with the runtime $t$, for very small $t$.

Finally, it is worth noticing that the remainder term $\Delta$ in~\eqref{F_cls_expansion} can be controlled mathematically. Therefore, the order-of-magnitude estimates in~\eqref{Delta_big_O_estimate} can be made fully rigorous. We do so in Appendix~\ref{subsec_proof_recap_small_times}, where we prove a rough estimate
\bb
\Delta \leq \frac{n}{2} \left(\lambda\! \left( e^{t\|g\|_\infty}\! - 1 \right) + e^{t\|g\|_\infty}\! - 1 - t\|g\|_\infty\!\right) + \frac{n^2}{8} t^2 \|g\|_\infty^2\, .
\label{Delta_UB}
\ee
Here, $\|g\|_\infty$ is the operator norm~\eqref{operator_norm} of the matrix $g$ in~\eqref{g}, which can be readily computed given the geometric configuration of the oscillators. Upper bounding $\|g\|_\infty$ can however also be done in full generality \emph{for all} such configurations. Doing so yields the explicit bound
\bb
\Delta \leq&\ \frac{n}{2} \left(\lambda\! \left( e^{s}\! - 1 \right) + e^{s}\! - 1 - s \right) + \frac{n^2 s^2}{8}\, , \\
s \coloneqq&\ \gamma t \min\left\{6(n-1),\ 288\ln(n-1) + 966 \right\} , \\
\gamma \coloneqq& \frac{G \left(\max_j m_j\right)}{\left(\min_{j\neq k} d_{jk}^3\right) \omega}\, ,
\label{Delta_UB_universal}
\ee
also proved in Appendix~\ref{subsec_proof_recap_small_times} thanks to the geometrical result of Appendix~\ref{operator_norm_g_app}. Observe that if $s$ is sufficiently small, as our assumptions guarantee, we have that $e^{s} - 1 \approx s$ and $e^{s}\! - 1 - s \approx s^2/2$, implying that
\bb
\Delta \approx \frac{n}{2} \left(\lambda s + \frac{s^2}{2}\right) + \frac{n^2 s^2}{8} = O\left( n s \max\{\lambda, n s\} \right) ,
\ee
which in turn entails~\eqref{Delta_big_O_estimate}.

\subsection{Examples} \label{subsec_examples}

\subsubsection{Two oscillators on a line}

As discussed in the Introduction, the simplest system to which we can apply the above results is that formed by two identical quantum harmonic oscillators with masses $m$ and frequencies $\omega$, oscillating along the same line and whose centres lie at a distance $d$ (Figure~\ref{2_line_fig} below).

\begin{figure}[ht]
\includegraphics[scale=1]{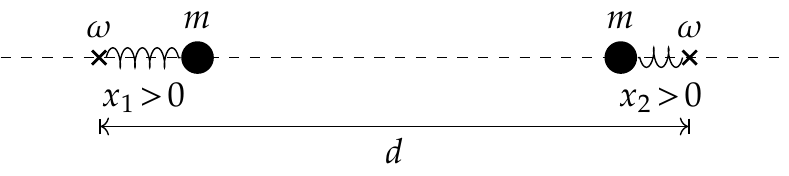}
\caption{Two harmonic oscillators aligned.}
\label{2_line_fig}
\end{figure}

In that case we have $n=2$, and the angles defined by~\eqref{angles} are $\theta_{12} = 0 = \varphi_{12}$ and $\theta_{21}=\pi$. Thus,~\eqref{g} becomes
\bb
g_{2,L} = 2\gamma \begin{pmatrix} -1 & 1 \\ 1 & -1 \end{pmatrix} ,
\ee
where
\bb
\gamma = \frac{Gm}{d^3\omega}
\label{gamma}
\ee
is the characteristic frequency here. The bound in~\eqref{f_simplified} can then be evaluated straightforwardly. First of all, as explained in Theorem~\ref{recap_thm} the only non-trivial choice for $J$ is $J = \{1\}$. In this case, the matrix in~\eqref{rotated_Xi_J} is just
\bb
&(1\!+\!\lambda)\, e^{\frac{it}{2} g_{2,L}}\, \Xi_J\, e^{-\frac{it}{2} g_{2,L}} - \Xi_J \\
&\quad = \left( (1\!+\!\lambda) \cos\!\left(2\gamma t\right)\! -\! 1\right) \sigma_{z} + (1\!+\!\lambda) \sin\!\left(2\gamma t\right) \sigma_{y} ,
\ee
where $\sigma_y$ and $\sigma_z$ denote the second and third Pauli matrices, respectively. Therefore,
\bb
w_{1} (\lambda\!, gt\!, J) &= - w_{2} (\lambda\!, gt\!, J) \\
&= \sqrt{\left( (1\!+\!\lambda) \cos(2\gamma t)\! -\! 1\right)^2 \!+ (1\!+\!\lambda)^2 \sin^2(2\gamma t)} \\
&= \sqrt{\lambda^2 + 4(1+\lambda)\sin^2(\gamma t)}\, ,
\ee
implying that the bound in~\eqref{f_simplified} in this case becomes
\bb
F_{\cls} \leq&\ f_{2,L}(\lambda, t) \coloneqq \frac{4(1\!+\!\lambda)^2}{\left(2\!+\!\lambda + \sqrt{\lambda^2 + 4(1\!+\!\lambda)\sin^2(\gamma t)}\right)^2}\, , 
\label{f_2L}
\ee
where $\gamma$ is defined by~\eqref{gamma}. Note that
\bb
f_{2,L}(0,t) = \frac{1}{\left(1+ \left| \sin(\gamma t)\right|\right)^2}\, ,
\label{f_2L_lambda=0}
\ee
which yields a dependence $f_{2,L}(0,t) = 1 - \eta_{2,L}t + O(t^2)$ with
\bb
\eta_{2,L} = \frac{\dd f_{2,L}(0,t)}{\dd t}\bigg|_{t=0} = 2\gamma
\label{eta_2L}
\ee
in~\eqref{F_cls_simple_expansion}. The function $f_{2,L}(\lambda, t)$ is plotted in Figure~\ref{f_2L_fig} for several values of $\lambda$, including $\lambda=0$.

\begin{figure}[ht]
\includegraphics[scale=1]{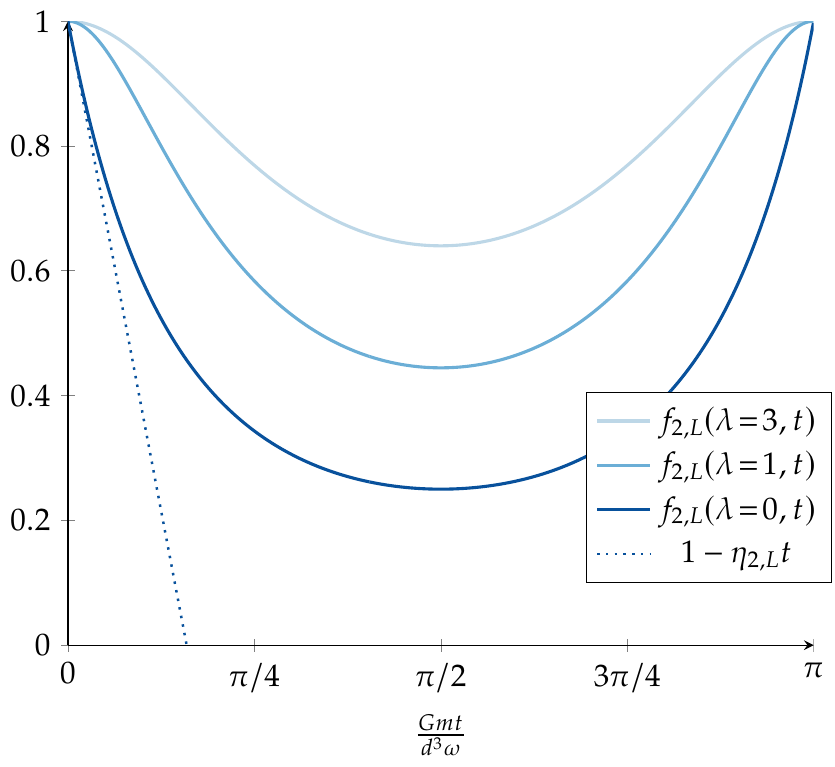}
\caption{The right-hand side of the LOCC inequality~\eqref{f_2L} plotted as a function of $\gamma t$ for several values of $\lambda$. Observe the approximately linear decrease with time of $f_{2,L}(0,t)$ for small times $t$, in line with the prediction of Theorem~\ref{recap_small_times_thm}.}
\label{f_2L_fig}
\end{figure}

\subsubsection{Many oscillators on a line} \label{subsubsec_many_oscillators}

A natural generalisation of the above scheme features $n$ identical oscillators, all with masses $m$ and frequencies $\omega$, aligned and equally spaced (Figure~\ref{4_line_fig}) at distance $d$. To simplify the analysis, it is useful to imagine that $n$ is even and to look at the short time limit only. 

\begin{figure}[ht]
\includegraphics[scale=1]{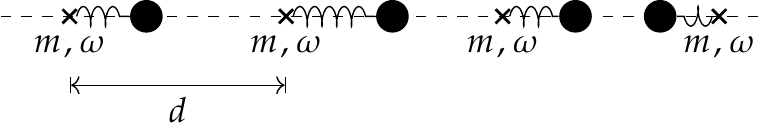}
\caption{Four identical oscillators aligned.}
\label{4_line_fig}
\end{figure}

A tight upper bound on $F_{\cls}$, i.e.\ a lower bound on $\eta$ (see~\eqref{sensitivity}), can be found by taking $J=\{1,3,\ldots, n-1\}$ as the set composed of all odd numbers between $1$ and $n-1$ (included). This yields
\bb
\eta_{n,L} \geq \frac14 \left\|[g,\Xi_J]\right\|_1 = \big\|K^{(n)}\big\|_1\, ,
\ee
where $K^{(n)}$ is an $(n/2)\times (n/2)$ real (but not symmetric) matrix with entries
\bb
K^{(n)}_{ij} = \frac{2\gamma}{\left|2(i-j)-1\right|^3}\, ,\qquad \forall\ i,j=1,\ldots, n/2\, ,
\ee
and $\gamma$ is given by~\eqref{gamma}. The trace norm of $K^{(n)}$ turns out to grow linearly in $n$, when $n$ is very large. Rather interestingly, the coefficient associated with this linear growth can be computed analytically. We have that
\bb
\big\|K^{(n)}\big\|_1 \sim n \gamma \zeta_L \qquad (n\to\infty)\, ,
\ee
where $\zeta_L$ is given by
\bb
\zeta_L \coloneqq \int_{-\pi}^{\pi}\frac{\dd \phi}{2\pi}\ \left|1 + e^{-i\phi/2} \chi_3(e^{-i\phi/2}) \right| \approx 1.267\, ,
\ee
where, for $|z|\leq 1$ and $s>1$,
\bb
\chi_\nu(z)\coloneqq \sum_{\ell=0}^\infty \frac{z^{2\ell+1}}{(2\ell+1)^\nu}
\ee
denotes the \deff{Legendre chi function}~\cite{MathWorld}. 
These functions are related to more familiar objects such as the polylogarithm $\mathrm{Li}_\nu(z)\coloneqq \sum_{\ell=1}^\infty \frac{z^\ell}{\ell^\nu}$ by the functional equations ~\cite{MathWorld}
\bb
\chi_\nu(z) = \frac12 \left(\mathrm{Li}_\nu(z) - \mathrm{Li}_\nu(-z)\right) = \mathrm{Li}_\nu(z) - 2^{-\nu} \mathrm{Li}_\nu(z^2)\, .
\ee
The key observation that is needed to compute the trace norm of $K^{(n)}$ is that it is a Toeplitz matrix, i.e.\ a matrix whose entries are constant along each diagonal. 
A famous result stemming from the work of Szeg\H{o}~\cite{Szego1920} and formalised by Avram and Parter~\cite{Parter1986,Avram1988} (see also the textbook~\cite{GRUDSKY})
allows to compute asymptotic averages of (functions of) the singular values of a large Toeplitz matrix by looking at the periodic function whose discrete Fourier coefficients generate the entries of the matrix. The details are left to the reader.

The above analysis shows that for large $n$ and very short times $t$ (see Theorem~\ref{recap_small_times_thm}) it holds that
\bb
F_{\cls} \lesssim 1 - n\zeta_L \gamma t \approx 1 - 1.267\cdot n\gamma t\, .
\label{F_cls_n_oscillators_line}
\ee
By virtue of~\eqref{eta_2L} we know that placing $n/2$ pairs of oscillators far apart from each other yields a dependence $F_{\cls} \lesssim 1 - n \gamma t$. Therefore, the single-line configuration explored here constitutes a genuine sensitivity improvement over a mere parallel repetition of the oscillator pair investigated above.

At this point, we could continue the list of examples by considering more general (e.g.\ genuinely 3-dimensional) configurations of oscillators. While this is certainly instructive, it is worth remarking that we have not been able to find any configuration that yields a faster increase of the sensitivity with $n$, (equivalently, a faster decrease of our upper bound on $F_{\cls}$), than that in~\eqref{F_cls_n_oscillators_line}. It therefore seems that the best configuration to run our experiment may be like that depicted in Figure~\ref{4_line_fig}, with a very large number $n\gg 1$ of oscillators.

\begin{rem}
    The sensitivity in~\eqref{F_cls_n_oscillators_line} grows linearly with $n$. Could another configuration of oscillators yield a better scaling in $n$, maybe a quadratic one? It is not too difficult to see that with our techniques it is not possible to obtain anything \emph{substantially} better than a linear scaling. Indeed, going back to~\eqref{sensitivity} and using Proposition~\ref{geometrical_prop} in Appendix~\ref{operator_norm_g_app}, one sees that
    \bb
        \eta = O\big(n \ln(n)\big)\, .
    \ee
    Therefore, already with this rather crude estimate we see that only a logarithmic improvement over a linear scaling is possible. As we mentioned, in all the families of examples we have examined the actual dependence of the sensitivity on $n$ seems to be at most linear.
\end{rem}

\subsection{Some experimental considerations} \label{subsec_experimental_considerations}


Here we would like to further discuss the above assumptions (I),~(II),~(III'), and~(IV). Starting from~(II), the first observation is that requiring that the Casimir effect is negligible compared to the gravitational interaction effectively determines a minimal scale for the system. Indeed, substituting in~\eqref{Casimir_small_2} the expression $m=\frac43 \pi R^3 \varrho$ for the mass of a sphere of density $\varrho$ yields immediately
\bb
(d-2R)^6 \gg \frac{207}{64\pi^3}\frac{m_P^2}{\varrho^2} \left( \frac{\vre-1}{\vre+2}\right)^2 .
\ee
Using the values for gold, 
i.e.\ $\varrho = \SI{1.93e4}{\kg \per \metre^3}$ and $\vre = 6.9$, gives
\bb
(d-2R)^6 \gg \SI{0.58e-25}{\metre^6} .
\ee
The ratio between right-hand and left-hand side is $<0.1$ provided that
\bb
d-2R \gtrsim \SI{e-4}{\metre} = \SI{100}{\um}\, .
\ee
To fix ideas, let us take
\bb
d = \SI{125}{\um}\, ,\quad R = \SI{12.5}{\um}\, ,
\ee
which satisfies also~\eqref{Casimir_small_1}. We are thus experimenting with gold spheres of mass $m \approx \SI{1.58e-10}{\kg}$.

We now move on to~(III'). Note that
\bb
\frac{m}{d^3} \approx 
\SI{80.8}{\kg \per \m^3} .
\ee
The second inequality in~\eqref{III'} then yields
\bb
\omega \gg \sqrt{\frac{Gm}{d^3}} \approx \SI{7.34e-5}{\Hz}\, .
\ee
If the above condition is satisfied we can look at the first inequality in~\eqref{III'}, which becomes
\bb
\label{eq:resonanceCond}
(\Delta \omega) \cdot \omega \ll \frac{Gm}{d^3} \approx \SI{5.39e-9}{\Hz^2} .
\ee
Therefore, the oscillators have to be manufactured with a precise control on the associated frequency. For the sake of continuing the discussion, let us assume that $\omega \sim \SI{e-3}{\Hz}$ and $\Delta\omega \sim \SI{e-7}{\Hz}$.

Taking $n=100$, $\eta\sim \frac{nGm}{d^3\omega}$, and $F_{\cls}\leq 0.9$ in~\eqref{F_cls_simple_expansion}, we infer that the runtime of a single repetition of the experiment is
\bb
t\sim (1-0.9) \frac{d^3 \omega}{nGm} \approx \SI{185}{\s}\, ,
\ee
or approximately $3$ minutes. This satisfies the inequality in~\eqref{lower_bound_t_hbar}, whose right-hand side is of the order of $\SI{8e-7}{\s}$. 

The only remaining inequality is~\eqref{lambda_effective_0_approximation}, which constrains the possible values of $\lambda$. With the above choices, observing that $\frac{n\hbar}{md^2\omega}\approx \SI{4.26e-12}{}$ we see that~\eqref{lambda_effective_0_approximation} would read
\bb
\SI{4.26e-12}{} \ll \lambda \ll 10^{-3} \ll 10^{-2}\, ,
\label{lambda_range}
\ee
which leaves an ample margin of choice for $\lambda$ so that the hypotheses of Theorem~\ref{recap_small_times_thm} are met. Let us remark in passing that $1/\lambda$ is the \emph{total} variance for all the $n$ oscillators combined. The variance per oscillator is instead $1/(n\lambda)$. Therefore, $\lambda \lesssim 10^{-3}$ is not as unrealistic as it may seem at first sight, as it requires to prepare single-mode coherent states of amplitude $|\alpha| \gtrsim \sqrt{10}$.


Besides these considerations, any experiment like the one we are proposing will be extremely sensitive to noise. In Appendix~\ref{Noise} we provide an analysis of some of the main sources of noise and the isolation conditions that they impose. These include, among others: random fluctuations in mass taking place outside of the apparatus, due for instance to atmospheric turbulence or change in temperature; the effect of collisions between our test masses and some of the gas molecules filling the chamber where the experiment is taking place; decoherence induced due to black-body radiation; and the deleterious effect of electric and magnetic stray fields.

\subsection{Assessment of a concrete experimental setup}
\label{Assessment}

Taking into account all the considerations above, we now look into a potential experimental implementation of this concept, by considering a minimal arrangement with two oscillators. In order to guarantee that their interaction is resonant, Eq.~\eqref{eq:resonanceCond} tells us that their frequencies should differ by an amount much smaller than ${Gm/(d^3 \omega)}$, which in turn implies that the frequency of each individual resonator needs to be fixed to the same degree of precision. We are thus looking for high-Q, low-frequency oscillators, or more precisely, mechanical resonators that satisfy ${\omega^2 / Q \ll G m / d^3}$, where ${Q = \omega / \Delta \omega}$ is the quality factor. We identify torsion pendula as promising candidates for this 
purpose. A torsion pendulum consists of a rigid body with a moment of inertia $I$ suspended from a thin wire that is attached to a fixed point, see Fig.~\ref{fig:torsion}a. Differential forces acting on the suspended object in a direction orthogonal to the wire induce rotations of the object in the horizontal plane, producing a torsion of the wire. The wire resists this deformation by applying a restoring torque $T$ that, for sufficiently small rotations, is directly proportional to the twisting angle $\theta$, that is,
\begin{equation}
T(\theta) = - \tau \theta,
\end{equation}
where $\tau$ is referred to as the torsion constant. Thus, the torsional degree of freedom behaves as a harmonic oscillator
\begin{equation}
\ddot \theta + 2 \gamma \dot \theta + \omega_I^2 \theta = 0,
\end{equation}
where the angular frequency is given by $\omega_I = \sqrt{\tau / I}$, and $\gamma$ is a damping constant that accounts for dissipation due to friction, which may arise both from the surrounding medium and from the internal wire structure. The torsion constant $\tau$ depends only on the geometry of the wire and is proportional to the 
fourth power of its diameter. Wires, made of different materials, can be manufactured with diameters down to the micrometer scale, resulting in torsion constants as low as $\tau \approx \SI{e-10}{\newton \m/\radian}$, corresponding to angular frequencies in the order of $\SI{}{\mHz}$ and Q factors as high as $10^5$~\cite{Schlamminger2017}. Thus, the operating parameter regime of the torsion pendula falls well within the requirements of Eq.~\eqref{eq:resonanceCond}. 

\begin{figure}[ht]
\begin{center}
   \includegraphics[width=\columnwidth]{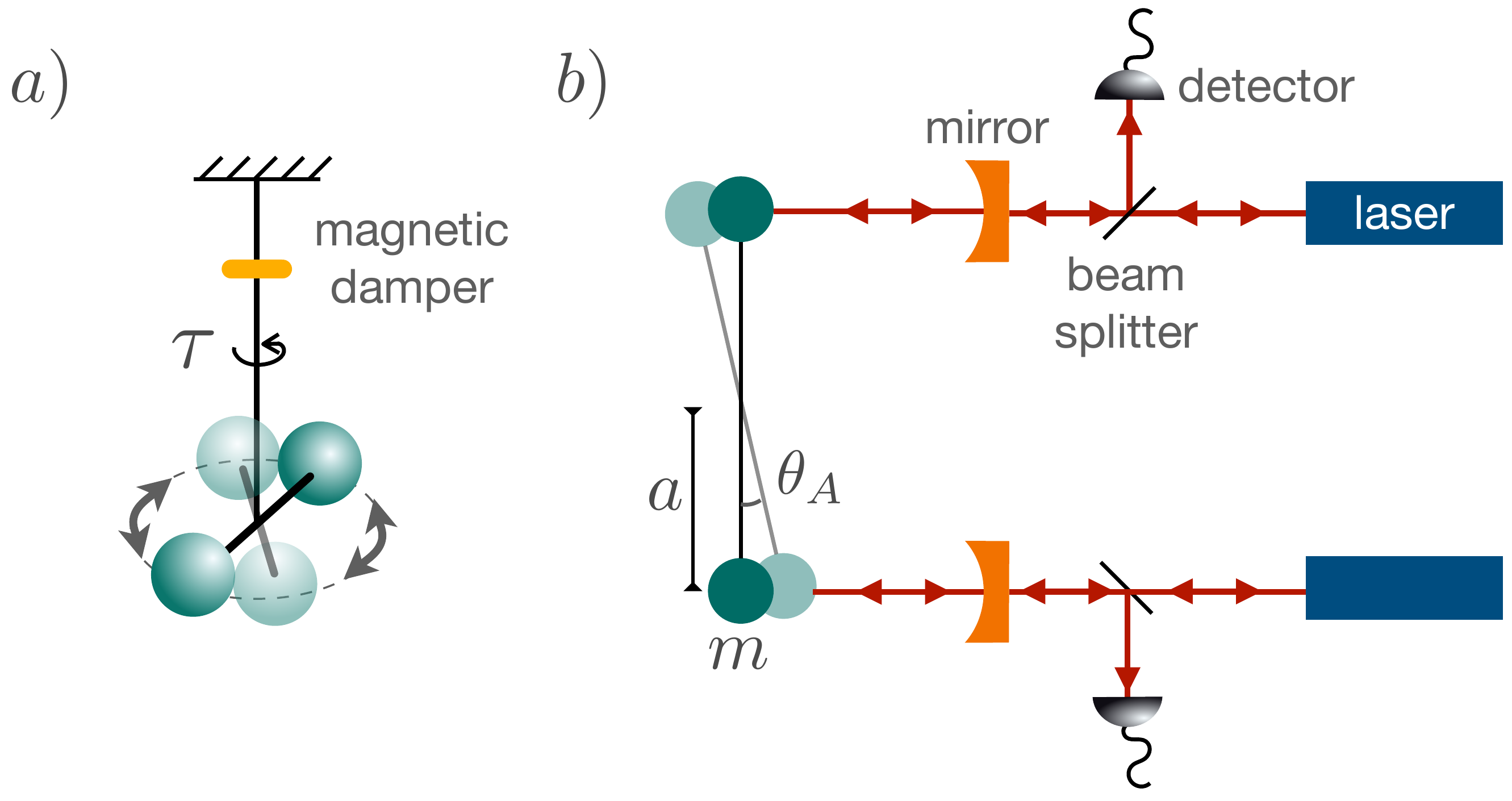} 
   \caption{{\bf Optomechanical torsion balance.} (a) shows a sketch of a torsion pendulum with a dumbbell-shaped body suspended from a thin wire with torsion constant $\tau$. A magnetic damper attached to the wire damps mechanical modes other than the torsional one~\cite{Luo2010}. (b) presents a scheme of the optomechanical system composed of the torsion pendulum and two optical cavities. The light output of each cavity is a measure of the position of the corresponding edge of the pendulum. Independent measurements of the position of each end of the pendulum can be added or subtracted to discriminate between purely rotational motion and pendular swinging. 
    }
   \label{fig:torsion}
   \end{center}
\end{figure}

In order to use torsion pendula in the envisioned protocol, we require to bring the torsional degree of freedom into the quantum regime and gain the ability to prepare and detect coherent states of motion. To the best of our knowledge, ground-state cooling of torsion pendula has not yet been demonstrated; however, we see no fundamental limitation that would prevent this possibility. One way to achieve this would be to couple the pendulum to an optical cavity and use the toolbox of optomechanical techniques~\cite{Marquardt2014, Liu2021}. 
In this scheme, one of the ends of the pendulum would act as a movable mirror and together with a second, fixed mirror form an optical cavity. In this way, not only one but two cavities could be coupled to the mechanical resonator, one at each end of the torsional pendulum, see Fig.~\ref{fig:torsion}b. This would have the advantage that it would allow for differential measurement and control of the pendulum, and in this way increase the control precision, by allowing the rejection of common pendular oscillatory modes. As a matter of fact, the coupling of torsion pendula to cavities has been studied, both theoretically~\cite{Luo2009,Shao2015} and experimentally~\cite{Wang2008, Ando2020}. However, reaching the resolved sideband regime, where the mechanical frequency of the pendulum is larger than the linewidth of the cavity, is daunting with mechanical frequencies in the order of mHz, as desired for our protocol. Therefore, sideband cooling will most likely not be available in our desired arrangement, and one has to resort to some sort of feedback cooling~\cite{Pinard1999}. In Appendix~\ref{GSCooling} we analyze this possibility and conclude that ground-state cooling of the torsional degrees of freedom of a pendulum is, in principle, possible, provided that technical noise sources can be brought below the thermal noise threshold.

\subsubsection{State preparation}

Once our system has been cooled down to the ground state, we would like to generate a coherent state of motion. In general, preparing a coherent state starting from the ground state of an oscillator amounts to displacing either the oscillator or its equilibrium position by a distance proportional to the size of the target coherent state. This can be done in several ways. For example, a displacement of the equilibrium position can be achieved by applying a constant torque on the wire, while a force acting on the oscillator for a finite time can be used to displace it. Provided that our degree of control over the force or the torque is sufficiently precise, this could be used to prepare a coherent state. The force can be applied by an additional actuator, for example by the use of Coulomb interaction if the oscillator is charged, or, as for the feedback force described in Appendix~\ref{GSCooling} for cooling, by use of the optomechanical coupling. In the following, we heuristically describe one possible way of preparing a coherent state by use of the optomechanical coupling. 

The dynamics of the optomechanical system for each of the cavities is described by the Hamiltonian ($\hbar =1$)
\begin{equation}
H = \omega_{\rm cav} a^\dag a + \omega_I a^\dag_\theta a_\theta + g \left(a_\theta + a_\theta^\dag\right) a^\dag a,
\end{equation}
where $a$ and $a_\theta$ are the annihilation operators for the cavity field and the rotational degree of freedom of the pendulum, respectively. Here, $\omega_{\rm cav}$ is the frequency of the cavity field, and ${g = \omega_{\rm cav} a \sqrt{\hbar /(I \omega_I L^2)}}$ gives the strength of the optomechanical coupling. In the interaction picture, the Hamiltonian transforms as
\begin{equation}
H_I^{\rm int} (t) = g \left(a_\theta e^{- i \omega_I t} + a^\dag_\theta e^{i \omega_I t}\right) a^\dag a.
\end{equation}
For the parameters discussed earlier, we find ourselves in the regime ${g / \omega_I \approx 10^3 \gg 1}$. Thus, for times ${t \ll 1/ \omega_I \approx \SI{100}{\s}}$, we can neglect the time dependence of the Hamiltonian and find that the corresponding time-evolution operator amounts to a displacement acting on the torsional degree of freedom with a  parameter ${\alpha = gt a^\dag a}$ proportional to the number of photons in the cavity, 
\begin{equation}
D = e^{-i g a^\dag\! a t \left(a_\theta + a_\theta^\dag\right)}.
\end{equation}
Therefore, by injecting in one of the two cavities a Fock state $\ket{n}$, a coherent state $\ket{\alpha = g n t}$ can be prepared. Similarly, a displacement in the opposite direction  ${\alpha = - gnt}$ can be achieved by exciting the other cavity instead.

However, thus far we have ignored the decay of the cavity field, which can significantly constrain the accessible coherent states. For example, consider that a single photon state is injected into the cavity. This will remain inside the cavity only for a time $1/\kappa$, amounting to a total displacement of ${\alpha = g/\kappa \approx 10^{-6}}$, for the parameters discussed above. To circumvent this limitation, we envision a single photon source that injects photons into the cavity at the same rate $\kappa$ that the photons leave the cavity. In this way, we also gain the ability to control the start and end time of the displacement, by turning the single-photon source on and off, achieving a total displacement ${\alpha = g t_{\rm sing}}$, where $t_{\rm sing}$ is the time that the single photon source is on. With this method achieving coherent states with displacement parameter ${\alpha \approx 10}$, as required by Eq.~\eqref{lambda_range}, would take a time ${t = \alpha / g \approx 1}$ s. 

\subsubsection{Gravitational interaction}

With our torsion pendulum in the quantum regime and the ability to prepare an arbitrary coherent state, we now look into the prospects of coupling it to a second torsion pendulum through gravitational interaction. To that end, we place two torsion pendula with their equilibrium orientations standing parallel to each other and separated by a distance $d$, as depicted in Fig.~\ref{fig:interaction}. A Faraday shield is placed between the two pendula~\cite{Schmoele2016} in order to avoid photons incident on either pendulum scattering into the other one, in this way suppressing any potential optical interaction between the two oscillators~\cite{Zemanek2010}. The shield will also suppress undesired electric interactions, e.g.\ due to stray charges, Casimir force, or induced dipole moments~\cite{Schmoele2016}. 
In order to act also against undesired magnetic interactions arising, for example, from stray magnetic moments in the body of the pendula, the shield can be further improved by adding to it a layer made of a mu-metal.
\begin{figure}[h]
\begin{center}
   \includegraphics[width=\columnwidth]{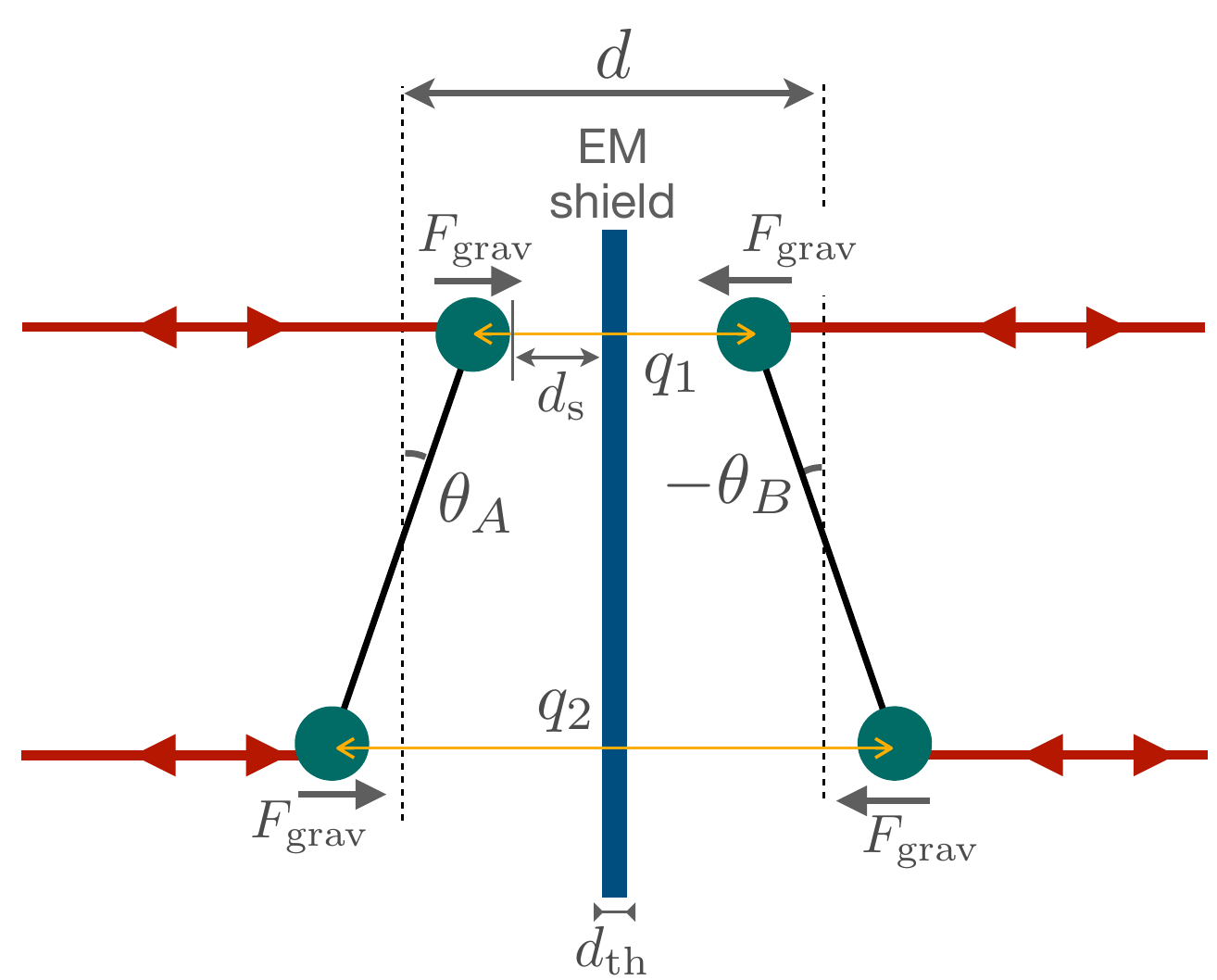} 
   \caption{{\bf Gravitational interaction between two torsion balances.} Two torsion pendula are placed with their equilibrium orientations (dashed lines) in parallel and let interact through gravity. An electromagnetic shield of thickness $d_{\rm th}$ is placed in between the two pendula to suppress electromagnetic interactions, EM shield. The minimum distance between the surface of any sphere and the EM shield is given by $d_s$. The rotational degrees of freedom of each pendulum are monitored through their coupling to two cavity fields (red lines).}
   \label{fig:interaction}
   \end{center}
\end{figure} 

The distance $d$ is chosen to guarantee that Casimir forces between the pendula and the shield are always weaker than the gravitational force between the two pendula. Strictly speaking, this is an overestimation of the required minimum distance. In fact, if the shield behaves as a perfect wall with no mechanical degrees of freedom, the Casimir interaction of the wall with each pendulum will never result in an interaction between the pendula, and thus, it would be enough to choose the distance as to prevent the spheres from sticking to the shield. In this case, the only requirement would be to consider the modification of the local dynamics of each oscillator in the light of this Casimir interaction, which, for small oscillations, will simply amount to a constant force and thus to a displacement of the equilibrium position of each oscillator. On the other hand, if the shield exhibits mechanical oscillatory modes, these could, at least in principle, mediate an undesired, non-gravitational interaction between the torsion pendula. Still, the condition that the Casimir force between the pendula and the shield is smaller than the gravitational force between the pendula would, in general, lead to an overestimation of $d$. This is because the effective interaction strength between the pendula mediated by the shield would decrease with the detuning between the oscillation frequencies of the shield and the pendula, which for a stiff shield is expected to be large. In any case, and to be on the safe side, we will stick to this stringent requirement, which, as we will see now, results in a surface-to-surface separation distance $d_{\rm s}$ that is anyway negligible with respect to the size of the spheres, and thus it will not affect the total distance $d$ significantly.  

In order to compute $d_{\rm s}$, we first estimate the maximum displacement that each pendulum is expected to have. According to Eq.~\eqref{lambda_range}, for a system of two oscillators, these have to be prepared in coherent states drawn at random from a normal distribution of width ${\Delta \alpha \approx \sqrt{10}}$.  Thus, the maximum angular displacement of each oscillator will be given by $
{\theta_{\rm max} = \sqrt{10} \Delta \theta_{\rm zpm}}$, with ${\Delta \theta_{\rm zpm} = \sqrt{\hbar/(I \omega_I)}}$ the width of the rotational ground state. For the parameter regime discuss here, this corresponds to ${\Delta \theta_{\rm zpm} \approx \SI{e-13}{\radian}}$, and thus, the angular displacements to be expected during the experiment are upper-bounded by ${\theta_{\rm max} < \SI{e-12}{\radian}}$. The variations in distance between the centers of mass of the gravitationally interacting spheres and the shield are of the order of ${a \theta_{\rm max} \approx \SI{e-13}{\m}}$, and thus, completely negligible compared to the radius of the sphere ${R= (3 m / 4 \pi \rho)^{1/3} \approx \SI{2e-3}{\m}}$. Here we have assumed a sphere of gold with mass density $\rho \approx \SI{1.9e4}{\kg/\m^3}$. The Casimir force between a metallic plate and a sphere can be computed to a good approximation using the proximity force theorem when the radius of the sphere $R$ is much larger than the surface-to-surface distance $d_{\rm s}$ between the sphere and the plate. The formula is~\cite{Capasso2001}
\begin{equation}
F_{\rm C} = - \frac{\pi^3 \hbar c}{360} R \frac{1}{d_{\rm s}}.
\end{equation}
We want to determine the minimum value of $d_{\rm s}$ above which the ratio ${F_{\rm g}/ F_{\rm C} > 1}$, where $F_{\rm g}$ is the gravitational force between the spheres. The ratio is given by
\begin{equation}
\frac{F_{g}}{F_{C}} = \frac{640}{\pi} \frac{G}{\hbar c} \rho^2 R^5 \frac{d_{\rm s}^3}{(2R + 2d_{\rm s} + d_{\rm th})^2},
\end{equation}
where $d_{\rm th}$ is the thickness of the shield. Assuming that ${R \gg \{d_{\rm s}, d_{\rm th}\}}$, we find that the required bound for $d_{\rm s}$ is given by
\begin{equation}
d_{\rm s} > \left( \frac{\pi}{160} \frac{\hbar c}{G} \frac{1}{\rho^2 R^3} \right)^{1/3}.
\end{equation}
For the parameters of our discussion, this amounts to ${d_{\rm s} > 10^{-6}}$ m, which is again a distance much smaller than the radius of the spheres. Thus, for subsequent calculations, we can assume, to a good approximation, that ${d \approx 2R}$. 

For small rotation angles $\theta_A$ and $\theta_B$, the relevant, instantaneous distances between the centers of mass of the four gravitationally interacting masses are given by ${q_1 = d + a (\theta_B - \theta_A)}$ and ${q_2 = d - a (\theta_B - \theta_A)}$, see Fig~\ref{fig:interaction}. For the parameters of our torsion pendula, the rest of the separation distances, e.g.\ between the front end of one pendulum and the rear end of the other one, are two orders of magnitude larger, resulting in gravitational forces four orders of magnitude weaker, and thus we will ignore them in the rest of our analysis.  As discussed earlier, at low energies the gravitational interaction can be described by a Hamiltonian term of Newtonian form, which in our case, with two pairs of interacting masses, takes the form
\begin{equation}
H_{\rm grav} = G m^2 \left( \frac{1}{q_1} + \frac{1}{q_2}\right) .
\end{equation}
For each pair, we can expand the interaction around ${q-d = 0}$ provided that ${a \Delta(\theta_A - \theta_B) \ll d}$,
\begin{equation}
 \frac{Gm^2}{q} \approx \frac{Gm^2}{d} - \frac{Gm^2}{d^2} (q - d) + \frac{2Gm^2}{d^3} (q-d)^2 + \dots \, .
\end{equation}
The term of our interest is the third term in the expansion, which is the first to introduce a quantum interaction between the Hilbert spaces of the two pendula. In particular, the form of the gravitational interaction between the rotational degrees of freedom of each pendulum, after summing over both pairs of particles, is
\begin{equation}
H_{\rm grav} = \dots + \hbar \gamma \left(a_\theta + a^\dag_\theta\right) \left(b_\theta + b^\dag_\theta\right) ,
\label{eq:interaction}
\end{equation}
where the interaction strength is given by 
\begin{equation}
\gamma = \frac{8 G m^2}{d^3} a^2 \Delta \theta_{zpm}^2 = \frac{4  G m^2 a^2}{d^3 \omega_I I},
\end{equation}
and $a_\theta$ and $b_\theta$ are the annihilation operators of rotational excitations in pendula $A$ and $B$, respectively. Assuming again the dumbbell shape of our pendulum $I = 2 m a^2$ and taking ${d \approx 2R}$, as discussed earlier, we get
\begin{equation}
\gamma = \frac{8 \pi}{3} \frac{G \rho}{\omega_I} \approx \SI{1.6e-4}{\Hz}\, .
\end{equation}
We check that this satisfies the regimes required to convert the interaction in Eq.~\eqref{eq:interaction} to the form of a beam-splitter interaction. That is, ${\Delta \omega = \omega/ Q \ll g}$ and ${\omega \gg g}$. For the first relation we find that ${10^{-8} \ll 10^{-5}}$, while for the second ${10^{-3} \gg 10^{-5}}$; thus, the check is satisfactory. 

Therefore, it seems that the proposed experimental arrangement is able to reach, at least in principle, all the required parameter regimes. The experiment would proceed according to the following protocol:
\begin{enumerate}

\item Both pendula are feedback cooled down to their rotational ground states.

\item The values of the initial coherent states are drawn from a normal probability distribution of width given by Eq.~\eqref{lambda_range}.

\item Each pendulum is optomechanically prepared in the corresponding coherent state.

\item The system is let to evolve coherently for a time $t$, under gravitational interaction.

\item The expected state of the pendula at time $t$ is computed, assuming the interaction follows a beam-splitter evolution.

\item A displacement operation is applied to bring the states computed in the previous step into the ground state.

\item The position of the pendula is monitored through their interaction with the optical cavities to detect if they are in the ground state.

\item The protocol is repeated a statistically significant number of times and the probability of finding the oscillators in their ground states is computed.

\item This is compared to the upper bound on the classical simulation fidelity for the corresponding time as given by Eq.~\eqref{f_2L}.

\item If the value is found to be higher, we have falsified that gravity acts as an LOCC channel.

\end{enumerate}

From the steps above, number~4 poses the greatest challenge. Our analysis in Sec.~\ref{subsec_examples} indicates that, for a two-oscillator system, the bound on the classical simulation fidelity decreases approximately as ${f_{2,L}(t) \approx 1 - 2 \gamma t}$ for times ${t \ll 1/ (2\gamma)}$. Thus, in order to observe a drop in the simulation fidelity of, say, $0.01$, with the parameters of our proposal, we need to let the oscillators interact coherently for a time ${t  \approx \SI{e2}{\s}}$. For an oscillator of quality factor $Q$ in contact with a thermal bath at temperature ${T_{\rm th}}$, one should expect phonon gains at a speed given by the heating rate $\Gamma_{\rm h} = k_B T_{\rm th}/ (\hbar Q) $. For small coherent states, as the ones considered here, a single phonon gain would be enough to displace the system by a distance on the order of the coherent state size. Thus, to guarantee that less than one phonon enters the system during each experimental run, we require that $\Gamma_{\rm h} t < 1$, which, for the times considered here, translates into the condition
\begin{equation}
\frac{T_{\rm th}}{Q} < \SI{e-13}{\K}\ .
\end{equation}

\begin{table}[h!] 
\caption{\textbf{Proposed set of parameters.} Here, symbols represent the following magnitudes: $m$ and $R$ are the mass and the radius of each sphere composing the pendulum; $a$, $\tau$ and $\omega_I$ are the arm length, torsion constant and oscillation frequency of the pendulum; $d_s$ is the surface-to-plate distance between the sphere and the Faraday shield; $d$ is the distance between the equilibrium positions of the two pendula; $\gamma$ the gravitational coupling between them; $t$ the experimental time of each experimental run; and $\frac{T_{\rm th}}{Q}$ the ratio between the environmental temperature and the quality factor. }
\label{tab:parameters}
\begin{center}
\begin{tabular}{c | l | c | l}
\textbf{Parameter} & \textbf{Value} & \textbf{Parameter} & \textbf{Value}\\
\hline
$m$(Kg) & $10^{-4}$ &  $d_s$(m) & $>10^{-6}$ \\
$R$(m) & $2 \cdot 10^{-3}$ & $d$(m) & $\approx 2 R$\\
$a$ (m) & $10^{-1}$ & $\gamma$(Hz) & $1.6 \cdot 10^{-4}$\\
$\omega_I$(Hz) & $7 \cdot 10^{-3}$ & $t$(s) & $10^2$\\
$\tau$(Nm/rad) & $10^{-10}$ & $\frac{T_{\rm th}}{Q}$(K) & $10^{-13}$
\end{tabular}
\end{center}
\end{table}

Admittedly, this is a very challenging condition to achieve experimentally. Considering temperatures in the order of $\SI{}{\milli\kelvin}$, e.g.\ achievable by attaching the pendulum to a dilution refrigerator, this would still require quality factors on the order of $Q = 10^{10}$. This is 5 orders of magnitude above the largest reported quality factors for torsion pendula~\cite{Newman2000, Schlamminger2017}. For some types of wire materials, such as sapphire~\cite{Ooi2018}, improvements in quality factors are to be expected from operation at lower temperatures. Further improvements could be achieved from operation at smaller amplitudes~\cite{Newman2000}, as required by our proposal, and from an enhancement of the vacuum conditions. Moreover, efforts to improve torsional quality factors over $Q> 10^8$ by suppressing dissipation mechanisms, such as surface losses or clamping losses, are underway for the construction of a gravitational-wave antenna based on torsion pendula~\cite{Ooi2018, Takamori2010}. Whether all of this will be enough to reach the quality factors required by our proposal remains to be seen. To enhance clarity, we have condensed the suggested set of experimental parameters in Table~\ref{tab:parameters}.

Before we close this section and move on to the conclusions, we refer the interested reader to Appendix~\ref{Noise}, where we discuss in some depth the main sources of noise that are likely to affect our experimental proposal.

\section{Discussion and conclusions} \label{sec_conclusions}

\subsection{Discussion} \label{subsec_discussion}

Let us summarise our contribution. We have described a general class of experiments that are capable, in principle, 
of testing the quantum nature, 
or non-LOCCness, of an interaction (\S~\ref{subsec_experiments} and Theorem~\ref{general_bound_thm}). Within this class, we have isolated a family of experiments based on gravitationally interacting quantum harmonic oscillators, pinpointing the assumptions (I)--(III) (plus the optional (IV)--(IV')) that need to be met in order for our analysis to apply. These $n$ oscillators are initialised in a random, normally distributed coherent state $\ket{\alpha} = \ket{\alpha_1}\otimes \ldots \otimes \ket{\alpha_n}$, and are subsequently left to interact exclusively via gravity. Theorem~\ref{recap_thm} shows that the state of the system after some time is well approximated by a coherent state $\ket{\alpha'} = \ket{\alpha'_1}\otimes \ldots \otimes \ket{\alpha'_n}$, which we can compute knowing $\alpha$ and the interaction Hamiltonian. The final step of the protocol is to implement the POVM $(\ketbra{\alpha'}, \id-\ketbra{\alpha'})$, whose outcomes we labelled with (NH) and (AH), respectively.
This can be done locally, without any entanglement assistance, by first applying the tensor product displacement unitary $D(\alpha') = D(-\alpha'_1)\otimes \ldots \otimes D(-\alpha'_n)$ and by then measuring each oscillator separately with the POVM $(\ketbra{0},\id-\ketbra{0})$. The global outcome (NH) corresponds to the case where all oscillators are found in the vacuum state, while (AH) corresponds to any other case.

Let us assume that the experiment we described in \S~\ref{subsec_experiments} could indeed be realised, and that a careful investigation of the apparatus were to reveal that assumptions (I)--(III), and maybe even~(IV) or~(IV'), are really obeyed. We would then use either Theorem~\ref{recap_thm} or even Theorem~\ref{recap_small_times_thm} to 
compute the relevant LOCC inequality, namely, an upper bound of the form
\bb
F_{\cls}\!\left( \mathcal{E}_{\!\lambda}, e^{-\frac{it}{\hbar}H_{\mathrm{tot}}} \right) \leq \overline{F}
\ee
on the LOCC simulation fidelity. By running many rounds of the experiment, we could check whether the empirical frequency $p_{\mathrm{NH}}$ associated with the outcome (NH) violates the above LOCC inequality, i.e.\ it satisfies
\bb
p_{\mathrm{NH}} \overset{?}{>} \overline{F} \geq F_{\cls}\!\left( \mathcal{E}_{\!\lambda}, e^{-\frac{it}{\hbar}H_{\mathrm{tot}}} \right) .
\label{violation?}
\ee
If that is the case, what can we conclude? We propose that the answer to this question is that one of the following must be true:
\begin{enumerate}[(A)]
\item either gravity somehow `knows' the initial state of the system, $\psi_\alpha$, and uses this knowledge to by-pass the above theoretical bounds, reproducing at the output the correct state with high fidelity; or 
\item \tcr{the dynamical map induced by the gravitational interaction between two or more bodies cannot be described by LOCC operations alone.}
\end{enumerate}
This is, of course, provided that the assumptions stated above are taken for granted. The crux of the argument is that if~(A) is false, and hence gravity does not know the initial state of the system, then all it can do via LOCCs is to reproduce the outcome (NH) with frequency at most $F_{\cls}$. This contradicts~\eqref{violation?}.

Concerning~(A), note that even an agent subjected to locality constraints can reproduce the output state of the gravitational dynamics with high fidelity \emph{provided that they know the input} $\psi_\alpha$. Indeed, as remarked below Theorem~\ref{recap_thm}, the output state will be well approximated by a coherent state, which is separable by virtue of~\eqref{coherent_multimode} and can therefore be created from product states by means of LOCCs. This reasoning shows that the experiment we have described, as it stands, cannot exclude a sort of super-deterministic explanation in which gravity is indeed described by a local and classical gravitational field, the behaviour of which is however correlated with our choice of $\psi_\alpha$ --- that, remember, we assumed to be unknown to the agent simulating the unitary dynamics. In other words, our LOCC inequalities have \emph{loopholes}, in the same way as Bell inequalities do. And in precisely the same way, 
it is not difficult to envisage how such loopholes could be closed, at least in principle.

To this end, one could use quantum mechanics and causality to generate the randomness needed to draw the state $\psi_\alpha$ from the ensemble $\EE_\lambda$. Due to the fact that $\EE_\lambda$ is a Gaussian coherent state ensemble, there is a conceptually simple way to do so. Before the experiment takes place, each oscillator should be paired with another, identical one, and the two should be prepared in a two-mode squeezed vacuum state $\ket{\Psi(r)}\coloneqq \frac{1}{\cosh(r)} \sum_{k=0}^\infty \tanh^k(r) \ket{kk}$, where $r>0$ is the squeezing parameter. The $n$ ancillary oscillators are then sent to a remote location. As the experiment begins, a heterodyne measurement represented by the POVM $\big(\ketbra{z}\big)_{z\in \C}$ is carried out on every ancillary oscillator, and the outcome $z=(z_1,\ldots, z_n)$ is transmitted at the speed of light towards the main laboratory. Following the measurement, the state in the main laboratory has immediately collapsed to the coherent state $\ket{\alpha}$, where $\alpha = z\tanh(r)$ is distributed normally (as in~\eqref{E_lambda}) with variance $1/\lambda = \cosh^2(r)$. The gravitational dynamics then starts to act on the system, although no information about the value of $z$ (equivalently, of $\alpha$) has reached the main laboratory yet.

When the signal containing $\alpha$ arrives, the appropriate measurement $(\psi'_\alpha, \id-\psi'_\alpha)$ is carried out on the main oscillators. The idea here is that the information on $\alpha$ reaches the main laboratory only at the end of the experiment; therefore, a malicious gravitational field would not have been able to use it to reproduce the final state with LOCCs only. Admittedly, this route may not be the most practical. Indeed, as we have seen in \S~\ref{subsec_experimental_considerations}, the experiment could well take minutes to complete; this would mean that the remote location should be a few light minutes away from the main laboratory. Nevertheless, it shows that a solution is conceptually possible.

As for~(B), note that it does not imply that gravity be described by a quantum field; in fact, it does not tell us much about how a theory of quantum gravity might look. It merely implies that whatever gravity is, it cannot be universally described by \tcr{LOCC operations. This excludes, for example, the notion of a classical field mediating the interaction, that is, a field that takes a specific value at each point in space and interacts locally with matter.}

If the empirical frequency $p_{\mathrm{NH}}$ in~\eqref{violation?} is close to $1$, then our experiment is also telling us something more, namely, that the dynamics of the system is represented to a good approximation by the unitary operator corresponding to the \emph{quantum} Newtonian Hamiltonian~\eqref{system_oscillators_H_tot}. What happens instead if the dynamics is really genuinely quantum, but it is not modelled by the Newtonian Hamiltonian~\eqref{system_oscillators_H_tot}, intended as a quantum operator? In that case the results of the experiment may well be inconclusive, simply because the measurement we are carrying out, $(\psi'_\alpha, \id-\psi'_\alpha)$, is not the right one to detect the final state of the system. Testing an alternative form of a quantum dynamics should be in principle possible by modifying that final measurement.

At this point it is prudent to note that the test described here requires some prior knowledge, specifically about the type of unitary evolution that the particles undergo when only subjected to gravitational interaction. This, in turn, requires knowledge of the mass 
of the test particles and of their distances. In real-world experiments, we will never know the quantities with absolute precision (see also Appendix \ref{Noise} for a more in-depth analysis of the main sources of noise and imperfections that we anticipate). We emphasize, however, that some degree of uncertainty in this regard, such as imprecise knowledge of mass, does not immediately invalidate our approach. Our LOCC inequality is a continuous function of the density matrix and the unitary operation involved. This follows from our Theorem~\ref{general_bound_thm}, given that the key quantity in question, the conditional min-entropy, is known to be continuous, as established in \cite{Tomamichel2010}. Consequently, while small uncertainties in the state or the unitary operation may reduce the gap between the highest fidelity achievable by LOCC gravity and by quantum gravity, due to the continuity of these properties only substantial uncertainties would render the test inconclusive.

We may also dispense with the requirement of prior knowledge assumed so far by employing process tomography on the channel created by the gravitational 
interaction, and possibly by other interactions, between Alice and Bob's particles. This can, once again, be accomplished by introducing coherent states on both Alice and Bob's particles and subsequently applying quantum state tomography through local measurements on the states resulting from this interaction. In principle, this method is sufficient to determine the characteristics of the channel without the need for prior channel knowledge. However, it is worth noting that this approach is more resource-intensive in terms of experimental resources compared to the previously presented protocol.

Finally, let us discuss how our proposal compares with previous ones. Most of the recent literature has focused on the idea of certifying the non-LOCCness of gravity by detecting the entanglement it generates, either in a matter-wave interferometer~\cite{lindner2005testing, Bose2017, Pedernales2020, Pedernales2022}, or between modes of light~\cite{Miao2020}, or 
else between quantum harmonic oscillators~\cite{Krisnanda2020, cosco2021enhanced, weiss2021large}. As previously mentioned, the main advantage of our approach is that it does away with the need to detect entanglement in the system altogether. This is because our theoretical bounds can certify the non-LOCCness of the dynamics even when entanglement is never present in the system.

To see how this conceptual distinction actually turns into a practical advantage, let us have a closer look at previous proposals such as~\cite{Krisnanda2020, cosco2021enhanced}. Note that the interaction Hamiltonian considered e.g.\ in~\cite[Eq.~(2)]{Krisnanda2020} is essentially the same as our~\eqref{H_eff_two}. Also, the base case considered in~\cite{Krisnanda2020}, that of two massive harmonic oscillators initialised in their ground states and interacting exclusively via gravity, is very similar to our setting, which features coherent states, i.e.\ displaced ground states. In light of these similarities, it is remarkable that the signals predicted by the two schemes are so different: while following~\cite{Krisnanda2020} one finds that the two oscillators achieve a maximum entanglement of the order of $\frac{2Gm}{d^3\omega^2} = \frac{2\gamma}{\omega}$ after a time $t\approx \pi/(2\omega)$, where $\gamma\coloneqq \frac{Gm}{d^3\omega}$ is as in~\eqref{gamma}, in \S~\ref{subsec_examples} we have shown that the right-hand side of our LOCC inequality, i.e.\ our upper bound on the classical simulation fidelity, decreases as $1-2\gamma t$ down to its smallest value $1/4$, achieved for $t=\pi/(2\gamma)$. Therefore, while waiting for longer boosts the signal in our scheme, at least in principle, the maximum signal strength in~\cite{Krisnanda2020, cosco2021enhanced} is bounded at all times by the typically small number $2\gamma/\omega$, at least \emph{in the absence of squeezing} --- more on that below. We believe that this advantage may prove important in practical implementations.

The technical reason that explains this fundamental difference is that the interaction Hamiltonian considered in both~\cite{Krisnanda2020, cosco2021enhanced} as well as in our work, of the form $x_1 x_2$, contains in fact two interaction terms of different types: (i)~a beam splitter term of the form $a^\dag b + a b^\dag$, which does not generate entanglement when the initial state is coherent; and (ii)~a two-mode squeezing term of the form $a b + a^\dag b^\dag$. Since only entanglement generation is considered in~\cite{Krisnanda2020, cosco2021enhanced}, it is only term~(ii) that plays a role there. The key drawback of that approach, however, is that such a term is strongly suppressed by the rotating-wave approximation. On the other hand, since we do away with the need to generate entanglement, the signal we predict, which comes mainly from term~(i) and thus survives the rotating-wave approximation, is substantially stronger.

It is important to note that although~\cite{Krisnanda2020, cosco2021enhanced} do discuss the case of entanglement generation starting from coherent states, this is done mainly for explanatory purposes, while the actual proposal there revolves around employing squeezed states of motion. This is a possible solution to obtain a decisive signal boost also in the entanglement-based scheme. However, preparing squeezed states is bound to entail a substantial increase in overall experimental complexity, while in our proposal we only employ much more easily attainable coherent states. For this reason, we believe that our scheme may have distinctive advantages over the squeezing-based ones.

The same considerations also apply when we compare our proposal to others based on matter-wave interferometry. These involve relatively large masses with their centers of mass prepared in a superposition of two distinct coherent states, that is, in a Schr\"odinger-cat-like state. Preparing these states requires the introduction of non-linearities in the Hamiltonian, either by adding an ancillary system, e.g. a two-level system, or by modifying the trapping potential to contain terms of order higher than quadratic. Arguably, both of these options entail experimental challenges significantly greater than that of preparing coherent states.
Indeed, as we mentioned in the Introduction, the heaviest object for which matter-wave interference has been observed is a large molecule with mass $\sim \SI{4e-23}{\kg}$~\cite{Fein2019}, still much too small to generate any appreciable gravitational field. 

Another proposal to which it is worth comparing our own is that contained in the recent work~\cite{Matsumura2022}. In that proposal, maps that can generate entanglement are distinguished from maps that cannot, even though no entanglement is actually generated during the experiment. In spite of this apparent similarity --- the absence of generated entanglement --- there are at least three crucial differences between~\cite{Matsumura2022} and our work.

The most important one from the conceptual standpoint, and certainly the most relevant for practical applications, is that although they contain no entanglement, the initial states of~\cite{Matsumura2022} are still highly delocalised states of relatively large masses. Such states are precisely what is required in entanglement-based proposals, and preparing them is well known to be exceptionally challenging from an experimental point of view. In this sense, the experimental difficulty associated with the protocol in~\cite{Matsumura2022} is comparable to that of entanglement-based tests. The initial states required by our proposal, on the contrary, are simple coherent states, arising naturally as (displaced) ground states of harmonic oscillators. As such, they are more easily obtained than the delocalised states of~\cite{Matsumura2022}.

The aim of the proposal in~\cite{Matsumura2022} is to certify that the dynamics induced by gravity is not a non-entangling map --- i.e.\ it \emph{can} generate entanglement. This stands in contrast with our framework, which aims to detect the non-LOCCness of the dynamics. Importantly, the class of non-entangling operations is strictly larger than the class of LOCC maps. That is, there are non-entangling map that are not LOCC~\cite{Bennett1999} --- in fact, the two classes lead to very different entanglement manipulation rates~\cite{HorodeckiBound, DNE-distillable}. Our result is threfore strictly more powerful as it allows us to discriminate whether gravity is of non-LOCC form (and therefore non-classical), even if it would not have the ability to generate entanglement. Furthermore, our set-up allows to test for even finer graduations, as different classes of operations, and thus different underlying physical models of gravity, will lead to their own fidelity thresholds, which may then be tested in our experiment. In this sense, our test is more refined not only than that in~\cite{Matsumura2022}, but also than experiments that test merely the generation of entanglement.

Finally, the non-entangling maps excluded by the witness found in~\cite{Matsumura2022} are \emph{only} those that satisfy the population-preservation hypothesis. While this is certainly a reasonable assumption in the context of that work, we would like to remark that our non-LOCC inequalities are much more general: their violation allows us to automatically exclude \emph{all} LOCC processes, without any restriction.

In summary, the majority of the existing literature 
has focused on utilising widely spread states, such as the aforementioned squeezed Gaussian and Schr\"odinger-cat states, for their ability to boost the rate of entanglement generation in entanglement-based tests~\cite{Pedernales2023}. However, besides the already discussed challenges in their preparation, these states also represent an increased dissipation channel, as their delocalisation not only enhances the interaction between the objects of interest but also the interaction of each of the bodies with their environment. In contrast, in our proposed test, the interacting bodies remain at all times in separable coherent states. We consider this another significant advantage of our protocol over previous ones, as this eliminates the 
obstacles related to 
state preparation and also removes the decoherence associated with highly delocalised systems. 


Naturally, our proposal is not without challenges. The primary ones lie in achieving the necessary long coherence times, high-quality factors, and low environmental temperatures, which themselves require important experimental progress. While comparing these difficulties to those associated with entanglement-based tests may be elusive, as the two frameworks ultimately present their own unique sets of challenges, we believe that our proposal opens a new experimental avenue. This is bound to be inherently enriching and positive, as it allows the exploration of diverse paths toward a shared goal.


\subsection{Conclusions} \label{subsec_conclusions}

The weight of modern physics lies on two well-tested albeit ill-matching pillars, namely, general relativity and quantum mechanics. While a marriage between the two has been sought right since their birth at the dawn of the XX century, the truth is that physicists still await an invitation to the wedding. Arguably, one of the aspects that are delaying this is the lack of \tcr{access to experiments where the source of a measurable gravitational field exhibits quantum properties such as superposition of its center of mass}. Typically, this has been considered to be an experimentally inaccessible regime with state-of-the-art technology; a perception that might be changing with the advent of quantum technologies, and in particular with that of quantum optomechanics, which is showing an ever-improving quantum control of massive systems. While a direct test of the quantisation of the gravitational field remains a daunting task, indirect ways of testing quantum aspects of gravity have recently stirred the physics community. In particular, significant interest has been raised towards experiments that could, in principle, test whether the gravitational interaction between masses with quantum mechanical degrees of freedom can be reproduced by using only local operations and classical communication (LOCC). To the best of our knowledge, all of the existing proposals in this respect rely on looking for the generation of entanglement mediated by gravity, a phenomenon that is impossible to reproduce if the interaction is modelled by an LOCC channel. 

Here we have shown, in stark contrast, that entanglement is not a necessary ingredient for this type of test. To achieve this we have established a general LOCC inequality, i.e.\ an upper bound on the fidelity that an LOCC protocol can achieve when trying to reproduce certain classes of unitary evolutions. This can then be tested for a subset of states, although, under such an evolution, these same states never get entangled. Finding final output state fidelities above the computed threshold would thus violate the LOCC inequality, indicating that the evolution is not of an LOCC type. This has immediate implications for the experimental efforts to implement this type of tests. Entanglement-based tests of gravity require the generation of large spatial superpositions of massive systems, a task of significant experimental challenge. Our result shows that this is not necessary, thereby removing one important obstacle towards this goal. We have provided a detailed analysis of the prospects for an experimental implementation of our ideas and shown that, while technology would still require significant improvement, these requirements are not harder to meet than those of entanglement-based tests. In fact, while our experiment would necessitate similarly low dissipation rates, it does not require the generation of coherently delocalised quantum states, which is extremely challenging experimentally, nor the detection of entanglement, which, for gravitational experiments, is expected to be extremely small and fragile to all sorts of decoherence sources, and thus very challenging to detect.

In short, our result represents a conceptual shift with respect to the established notions regarding tests of quantum aspects of gravity with quantum technologies, and with it opens the door to a novel class of experiments.

\medskip
\paragraph*{Acknowledgements.} We thank Gerardo Adesso, Jonathan Oppenheim, as well as an anonymous referee of QIP 2023 for useful comments. LL was partly supported by the Alexander von Humboldt Foundation. JSP and MBP are supported in part by the ERC via Synergy grant HyperQ (grant no.\ 856432) and the DFG via QuantERA project Lemaqume.

\bibliography{biblio}

\begin{thebibliography}{100}

\bibitem{Feynman1957}
R.~Feynman.
\newblock The role of gravitation in physics.
\newblock In C.~M. DeWitt and D.~Rickles, editors, {\em Report from the 1957
  Chapel Hill Conference}, 1957.

\bibitem{cones-3}
G.~Aubrun, L.~Lami, C.~Palazuelos, and M.~Pl\'{a}vala.
\newblock Entanglement and superposition are equivalent concepts in any
  physical theory.
\newblock {\em Physical Review Letters}, 128:160402, 2022.

\bibitem{Kafri-Taylor-1}
D.~Kafri and J.~M. Taylor.
\newblock A noise inequality for classical forces.
\newblock {\em Preprint arXiv:1311.4558}, 2013.

\bibitem{Krisnanda2017}
T.~Krisnanda, M.~Zuppardo, M.~Paternostro, and T.~Paterek.
\newblock Revealing nonclassicality of inaccessible objects.
\newblock {\em Physical Review Letters}, 119:120402, 2017.

\bibitem{Galley2020}
T.~D. Galley, F.~Giacomini, and J.~H. Selby.
\newblock A no-go theorem on the nature of the gravitational field beyond
  quantum theory.
\newblock {\em {Quantum}}, 6:779, 2022.

\bibitem{lindner2005testing}
N.~H. Lindner and A.~Peres.
\newblock Testing quantum superpositions of the gravitational field with
  {B}ose-{E}instein condensates.
\newblock {\em Physical Review A}, 71(2):024101, 2005.

\bibitem{Schmoele2016}
J.~Schm\"{o}le, M.~Dragosits, H.~Hepach, and M.~Aspelmeyer.
\newblock A micromechanical proof-of-principle experiment for measuring the
  gravitational force of milligram masses.
\newblock {\em Classical and Quantum Gravity}, 33(12):125031, 2016.

\bibitem{Bose2017}
S.~Bose, A.~Mazumdar, G.~W. Morley, H.~Ulbricht, M.~Toro\ifmmode~\check{s}\else
  \v{s}\fi{}, M.~Paternostro, A.~A. Geraci, P.~F. Barker, M.~S. Kim, and
  G.~Milburn.
\newblock Spin entanglement witness for quantum gravity.
\newblock {\em Physical Review Letters}, 119:240401, 2017.

\bibitem{Krisnanda2020}
T.~Krisnanda, G.~Y. Tham, M.~Paternostro, and T.~Paterek.
\newblock Observable quantum entanglement due to gravity.
\newblock {\em npj Quantum Information}, 6(1):12, 2020.

\bibitem{Miao2020}
H.~Miao, D.~Martynov, H.~Yang, and A.~Datta.
\newblock Quantum correlations of light mediated by gravity.
\newblock {\em Physical Review A}, 101:063804, 2020.

\bibitem{Chevalier2020}
H.~Chevalier, A.~J. Paige, and M.~S. Kim.
\newblock Witnessing the nonclassical nature of gravity in the presence of
  unknown interactions.
\newblock {\em Physical Review A}, 102:022428, 2020.

\bibitem{pedernales2020motional}
J.~S. Pedernales, G.~W. Morley, and M.~B. Plenio.
\newblock Motional dynamical decoupling for interferometry with macroscopic
  particles.
\newblock {\em Physical Review Letters}, 125(2):023602, 2020.

\bibitem{cosco2021enhanced}
F.~Cosco, J.~S. Pedernales, and M.~B. Plenio.
\newblock Enhanced force sensitivity and entanglement in periodically driven
  optomechanics.
\newblock {\em Physical Review A}, 103(6):L061501, 2021.

\bibitem{weiss2021large}
T.~Weiss, M.~Roda-Llordes, E.~Torrontegui, M.~Aspelmeyer, and O.~Romero-Isart.
\newblock Large quantum delocalization of a levitated nanoparticle using
  optimal control: Applications for force sensing and entangling via weak
  forces.
\newblock {\em Physical Review Letters}, 127(2):023601, 2021.

\bibitem{Pedernales2022}
J.~S. Pedernales, K.~Streltsov, and M.~B. Plenio.
\newblock Enhancing gravitational interaction between quantum systems by a
  massive mediator.
\newblock {\em Physical Review Letters}, 128:110401, 2022.

\bibitem{Danielson2022}
D.~L. Danielson, G.~Satishchandran, and R.~M. Wald.
\newblock Gravitationally mediated entanglement: {N}ewtonian field versus
  gravitons.
\newblock {\em Physical Review D}, 105:086001, 2022.

\bibitem{Christodoulou2022}
M.~Christodoulou, A.~Di~Biagio, M.~Aspelmeyer, C.~Brukner, C.~Rovelli, and
  R.~Howl.
\newblock Locally mediated entanglement through gravity from first principles.
\newblock {\em Physical Review Letters}, 130:100202, 2023.

\bibitem{Christodoulou2022-2}
M.~Christodoulou, A.~Di~Biagio, R.~Howl, and C.~Rovelli.
\newblock Gravity entanglement, quantum reference systems, degrees of freedom.
\newblock {\em Classical and Quantum Gravity}, 2023.

\bibitem{Sidajaya2022}
P.~Sidajaya, W.~Cong, and V.~Scarani.
\newblock Possibility of detecting the gravity of an object frozen in a spatial
  superposition by the {Z}eno effect.
\newblock {\em Physical Review A}, 106:042217, 2022.

\bibitem{Fragkos2022}
V.~Fragkos, M.~Kopp, and I.~Pikovski.
\newblock On inference of quantization from gravitationally induced
  entanglement.
\newblock {\em AVS Quantum Science}, 4:045601, 2022.

\bibitem{Doener2022}
M.~K. D{\"o}ner and A.~Gro{\ss}ardt.
\newblock Is gravitational entanglement evidence for the quantization of
  spacetime?
\newblock {\em Foundations of Physics}, 52(5):101, 2022.

\bibitem{Carney2022}
D.~Carney.
\newblock Newton, entanglement, and the graviton.
\newblock {\em Physical Review D}, 105:024029, 2022.

\bibitem{Streltsov2022}
K.~Streltsov, J.~S. Pedernales, and M.~B. Plenio.
\newblock On the significance of interferometric revivals for the fundamental
  description of gravity.
\newblock {\em Universe}, 8(2):58, 2022.

\bibitem{Spaventa2023}
G.~Spaventa, L.~Lami, and M.~B. Plenio.
\newblock On tests of the quantum nature of gravitational interactions in
  presence of non-linear corrections to quantum mechanics.
\newblock {\em Quantum}, 7:1157, 2023.

\bibitem{tabletop}
D.~Carney, P.~C.~E. Stamp, and J.~M. Taylor.
\newblock Tabletop experiments for quantum gravity: a user's manual.
\newblock {\em Classical and Quantum Gravity}, 36(3):034001, 2019.

\bibitem{Diosi-thesis}
L.~Di\'{o}si.
\newblock {\em A quantum-stochastic gravitation model and the reduction of the
  wavefunction}.
\newblock PhD thesis, Hungarian Academy of Sciences, 1986.

\bibitem{Diosi2011}
L.~Di{\'{o}}si.
\newblock The gravity-related decoherence master equation from hybrid dynamics.
\newblock {\em Journal of Physics: Conference Series}, 306:012006, 2011.

\bibitem{Kafri-Taylor-2}
D.~Kafri, J.~M. Taylor, and G.~J. Milburn.
\newblock A classical channel model for gravitational decoherence.
\newblock {\em New Journal of Physics}, 16(6):065020, 2014.

\bibitem{Kafri-Taylor-3}
D.~Kafri, G.~J. Milburn, and J.~M. Taylor.
\newblock Bounds on quantum communication via {N}ewtonian gravity.
\newblock {\em New Journal of Physics}, 17(1):015006, 2015.

\bibitem{Pfister2016}
C.~Pfister, J.~Kaniewski, M.~Tomamichel, A.~Mantri, R.~Schmucker, N.~McMahon,
  G.~Milburn, and S.~Wehner.
\newblock A universal test for gravitational decoherence.
\newblock {\em Nature Communications}, 7(1):13022, 2016.

\bibitem{Oppenheim2022b}
Jonathan Oppenheim, Carlo Sparaciari, Barbara {\v{S}}oda, and Zachary
  Weller-Davies.
\newblock Gravitationally induced decoherence vs space-time diffusion: testing
  the quantum nature of gravity.
\newblock {\em Nature Communications}, 14(1):7910, 2023.

\bibitem{Carlesso2017}
M.~Carlesso, M.~Paternostro, H.~Ulbricht, and A.~Bassi.
\newblock When {C}avendish meets {F}eynman: A quantum torsion balance for
  testing the quantumness of gravity.
\newblock {\em Preprint arXiv:1710.08695}, 2017.

\bibitem{Carlesso2019}
M.~Carlesso, A.~Bassi, M.~Paternostro, and H.~Ulbricht.
\newblock Testing the gravitational field generated by a quantum superposition.
\newblock {\em New Journal of Physics}, 21(9):093052, 2019.

\bibitem{Haine2021}
S.~A. Haine.
\newblock Searching for signatures of quantum gravity in quantum gases.
\newblock {\em New Journal of Physics}, 23(3):033020, 2021.

\bibitem{Howl2021}
R.~Howl, V.~Vedral, D.~Naik, M.~Christodoulou, C.~Rovelli, and A.~Iyer.
\newblock Non-{G}aussianity as a signature of a quantum theory of gravity.
\newblock {\em PRX Quantum}, 2:010325, 2021.

\bibitem{Marletto2017}
C.~Marletto and V.~Vedral.
\newblock Gravitationally induced entanglement between two massive particles is
  sufficient evidence of quantum effects in gravity.
\newblock {\em Physical Review Letters}, 119:240402, 2017.

\bibitem{LOCC}
E.~Chitambar, D.~Leung, L.~Man{\v{c}}inska, M.~Ozols, and A.~Winter.
\newblock Everything you always wanted to know about {LOCC} (but were afraid to
  ask).
\newblock {\em Communications in Mathematical Physics}, 328(1):303--326, 2014.

\bibitem{Plenio-Virmani}
M.~B. Plenio and S.~Virmani.
\newblock An introduction to entanglement measures.
\newblock {\em Quantum Information \& Computation}, 7(1):1--51, 2007.

\bibitem{Horodecki-review}
R.~Horodecki, P.~Horodecki, M.~Horodecki, and K.~Horodecki.
\newblock Quantum entanglement.
\newblock {\em Reviews of Modern Physics}, 81:865--942, 2009.

\bibitem{Fein2019}
Y.~Y. Fein, P.~Geyer, P.~Zwick, F.~Kia{\l}ka, S.~Pedalino, M.~Mayor,
  S.~Gerlich, and M.~Arndt.
\newblock Quantum superposition of molecules beyond 25 {kDa}.
\newblock {\em Nature Physics}, 15(12):1242--1245, 2019.

\bibitem{westphal2021measurement}
T.~Westphal, H.~Hepach, J.~Pfaff, and M.~Aspelmeyer.
\newblock Measurement of gravitational coupling between millimetre-sized
  masses.
\newblock {\em Nature}, 591(7849):225--228, 2021.

\bibitem{irreversibility}
L.~Lami and B.~Regula.
\newblock No second law of entanglement manipulation after all.
\newblock {\em Nature Physics}, 19(2):184--189, 2023.

\bibitem{eisert2000optimal}
J.~Eisert, K.~Jacobs, P.~Papadopoulos, and M.~B. Plenio.
\newblock Optimal local implementation of nonlocal quantum gates.
\newblock {\em Physical Review A}, 62(5):052317, 2000.

\bibitem{Brunner-review}
N.~Brunner, D.~Cavalcanti, S.~Pironio, V.~Scarani, and S.~Wehner.
\newblock Bell nonlocality.
\newblock {\em Reviews of Modern Physics}, 86(2):419, 2014.

\bibitem{BUCCO}
A.~Serafini.
\newblock {\em Quantum Continuous Variables: A Primer of Theoretical Methods}.
\newblock CRC Press, Taylor \& Francis Group, Boca Raton, USA, 2017.

\bibitem{Note1}
Such a protocol does exist, however, if we allow dependence on the input
  coherent states. Also, it is worth observing that the particular dynamics we
  consider \protect \emph {can} generate entanglement if applied to a different
  unentangled input state --- for example squeezed states. We will return to
  these points later.

\bibitem{gravitization-1}
R.~Penrose.
\newblock On the gravitization of quantum mechanics 1: Quantum state reduction.
\newblock {\em Foundations of Physics}, 44(5):557--575, 2014.

\bibitem{Karolyhazy1966}
F.~Karolyhazy.
\newblock Gravitation and quantum mechanics of macroscopic objects.
\newblock {\em Il Nuovo Cimento A (1965-1970)}, 42(2):390--402, 1966.

\bibitem{Diosi1987}
L.~Di\'{o}si.
\newblock A universal master equation for the gravitational violation of
  quantum mechanics.
\newblock {\em Physics Letters A}, 120(8):377--381, 1987.

\bibitem{Diosi1989}
L.~Di\'osi.
\newblock Models for universal reduction of macroscopic quantum fluctuations.
\newblock {\em Physical Review A}, 40:1165--1174, 1989.

\bibitem{Penrose1996}
R.~Penrose.
\newblock On gravity's role in quantum state reduction.
\newblock {\em General Relativity and Gravitation}, 28(5):581--600, 1996.

\bibitem{Tilloy2016}
A.~Tilloy and L.~Di\'osi.
\newblock Sourcing semiclassical gravity from spontaneously localized quantum
  matter.
\newblock {\em Physical Review D}, 93:024026, 2016.

\bibitem{Oppenheim2018}
Jonathan Oppenheim.
\newblock A postquantum theory of classical gravity?
\newblock {\em Physical Review X}, 13(4):041040, 2023.

\bibitem{Oppenheim2022}
J.~Oppenheim and Z.~Weller-Davies.
\newblock The constraints of post-quantum classical gravity.
\newblock {\em Journal of High Energy Physics}, 2022:80, 2022.

\bibitem{Oppenheim2023}
J.~Oppenheim and Z.~Weller-Davies.
\newblock Covariant path integrals for quantum fields back-reacting on
  classical space-time.
\newblock {\em Preprint arXiv:2302.07283}, 2023.

\bibitem{Massar1995}
S.~Massar and S.~Popescu.
\newblock Optimal extraction of information from finite quantum ensembles.
\newblock {\em Physical Review Letters}, 74:1259--1263, 1995.

\bibitem{Werner1998}
R.~F. Werner.
\newblock Optimal cloning of pure states.
\newblock {\em Physical Review A}, 58:1827--1832, 1998.

\bibitem{Horodecki-teleportation}
M.~Horodecki, P.~Horodecki, and R.~Horodecki.
\newblock General teleportation channel, singlet fraction, and
  quasidistillation.
\newblock {\em Physical Review A}, 60:1888--1898, 1999.

\bibitem{Bruss1999}
D.~Bru{\ss} and C.~Macchiavello.
\newblock Optimal state estimation for $d$-dimensional quantum systems.
\newblock {\em Physics Letters A}, 253(5):249--251, 1999.

\bibitem{Hammerer2005}
K.~Hammerer, M.~M. Wolf, E.~S. Polzik, and J.~I. Cirac.
\newblock Quantum benchmark for storage and transmission of coherent states.
\newblock {\em Physical Review Letters}, 94:150503, 2005.

\bibitem{Note2}
If $\protect \qopname \relax o{dim}\protect \pazocal {H}= n <\infty $, all
  three of these can be identified with the space of $n\times n$ complex
  matrices.

\bibitem{Uhlmann-fidelity}
A.~Uhlmann.
\newblock The ``transition probability'' in the state space of a *-algebra.
\newblock {\em Reports on Mathematical Physics}, 9(2):273--279, 1976.

\bibitem{HOLEVO-CHANNELS-2}
A.~S. Holevo.
\newblock {\em Quantum Systems, Channels, Information: A Mathematical
  Introduction}.
\newblock Texts and Monographs in Theoretical Physics. De Gruyter, Berlin,
  Germany, 2nd edition, 2019.

\bibitem{Stinespring}
W.~F. Stinespring.
\newblock Positive functions on {C*}-algebras.
\newblock {\em Proceedings of the American Mathematical Society},
  6(2):211--216, 1955.

\bibitem{Choi}
M.-D. Choi.
\newblock Completely positive linear maps on complex matrices.
\newblock {\em Linear Algebra and its Applications}, 10(3):285--290, 1975.

\bibitem{LL-Renyi}
K.~P. Seshadreesan, L.~Lami, and M.~M. Wilde.
\newblock R{\'e}nyi relative entropies of quantum {G}aussian states.
\newblock {\em Journal of Mathematical Physics}, 59(7):072204, 2018.

\bibitem{PeresPPT}
A.~Peres.
\newblock {Separability criterion for density matrices}.
\newblock {\em Physical Review Letters}, 77:1413--1415, 1996.

\bibitem{HorodeckiPPT}
M.~Horodecki, P.~Horodecki, and R.~Horodecki.
\newblock {Separability of mixed states: necessary and sufficient conditions}.
\newblock {\em Physics Letters A}, 223(1--2):1--8, 1996.

\bibitem{Simon00}
R.~Simon.
\newblock {Peres--Horodecki} separability criterion for continuous variable
  systems.
\newblock {\em Physical Review Letters}, 84:2726--2729, 2000.

\bibitem{Duan2000}
L.-M. Duan, G.~Giedke, J.~I. Cirac, and P.~Zoller.
\newblock Inseparability criterion for continuous variable systems.
\newblock {\em Physical Review Letters}, 84:2722--2725, 2000.

\bibitem{revisited}
L.~Lami, A.~Serafini, and G.~Adesso.
\newblock Gaussian entanglement revisited.
\newblock {\em New Journal of Physics}, 20(2):023030, 2018.

\bibitem{Jamiolkowski72}
A.~Jamio{\l}kowski.
\newblock Linear transformations which preserve trace and positive
  semidefiniteness of operators.
\newblock {\em Reports on Mathematical Physics}, 3(4):275--278, 1972.

\bibitem{Rains1997}
E.~M. Rains.
\newblock Entanglement purification via separable superoperators.
\newblock {\em Preprint arXiv:quant-ph/9707002}, 1997.

\bibitem{Datta08}
N.~Datta.
\newblock Min- and max-relative entropies and a new entanglement monotone.
\newblock {\em IEEE Transactions on Information Theory}, 55(6):2816--2826,
  2009.

\bibitem{TOMAMICHEL}
M.~Tomamichel.
\newblock {\em Quantum Information Processing with Finite Resources:
  Mathematical Foundations}, volume~5.
\newblock Springer, 2015.

\bibitem{vandenberghe_1996}
L.~Vandenberghe and S.~Boyd.
\newblock Semidefinite programming.
\newblock {\em SIAM Rev.}, 38:49--95, 1996.

\bibitem{HAYASHI}
M.~Hayashi.
\newblock {\em Quantum Information Theory: Mathematical Foundation}.
\newblock Graduate Texts in Physics. Springer Berlin Heidelberg, 2016.

\bibitem{Hiai1991}
F.~Hiai and D.~Petz.
\newblock The proper formula for relative entropy and its asymptotics in
  quantum probability.
\newblock {\em Communications in Mathematical Physics}, 143(1):99--114, 1991.

\bibitem{Ogawa2000}
T.~Ogawa and H.~Nagaoka.
\newblock Strong converse and {S}tein's lemma in quantum hypothesis testing.
\newblock {\em IEEE Transactions on Information Theory}, 46(7):2428--2433,
  2000.

\bibitem{Brandao2010}
F.~G. S.~L. Brand{\~a}o and M.~B. Plenio.
\newblock A generalization of quantum {S}tein's lemma.
\newblock {\em Communications in Mathematical Physics}, 295(3):791--828, 2010.

\bibitem{berta_composite}
M.~Berta, F.~G. S.~L. Brand{\~a}o, and C.~Hirche.
\newblock On composite quantum hypothesis testing.
\newblock {\em Communications in Mathematical Physics}, 385:55--77, 2021.

\bibitem{gap}
M.~Berta, F.~G. S.~L. Brand\~{a}o, G.~Gour, L.~Lami, M.~B. Plenio, B.~Regula,
  and M.~Tomamichel.
\newblock On a gap in the proof of the generalised quantum {S}tein's lemma and
  its consequences for the reversibility of quantum resources.
\newblock {\em Quantum}, 7, 2023.

\bibitem{Tomamichel2010}
M.~Tomamichel, R.~Colbeck, and R.~Renner.
\newblock Duality between smooth min- and max-entropies.
\newblock {\em IEEE Trans. Inf. Theory}, 56(9):4674--4681, 2010.

\bibitem{Adesso2008}
G.~Adesso and G.~Chiribella.
\newblock Quantum benchmark for teleportation and storage of squeezed states.
\newblock {\em Physical Review Letters}, 100:170503, 2008.

\bibitem{Owari2008}
M.~Owari, M.~B. Plenio, E.~S. Polzik, A.~Serafini, and M.~M. Wolf.
\newblock Squeezing the limit: quantum benchmarks for the teleportation and
  storage of squeezed states.
\newblock {\em New Journal of Physics}, 10(11):113014, 2008.

\bibitem{jensen2011quantum}
K.~Jensen, W.~Wasilewski, H.~Krauter, T.~Fernholz, B.~M. Nielsen, M.~Owari,
  M.~B. Plenio, A.~Serafini, M.~M. Wolf, and E.~S. Polzik.
\newblock Quantum memory for entangled continuous-variable states.
\newblock {\em Nature Physics}, 7(1):13--16, 2011.

\bibitem{Chiribella2014}
G.~Chiribella and G.~Adesso.
\newblock Quantum benchmarks for pure single-mode {G}aussian states.
\newblock {\em Physical Review Letters}, 112:010501, 2014.

\bibitem{Calsamiglia2009}
J.~Calsamiglia, M.~Aspachs, R.~Mu\~noz Tapia, and E.~Bagan.
\newblock Phase-covariant quantum benchmarks.
\newblock {\em Physical Review A}, 79:050301, 2009.

\bibitem{Yang2014}
Y.~Yang, G.~Chiribella, and G.~Adesso.
\newblock Certifying quantumness: Benchmarks for the optimal processing of
  generalized coherent and squeezed states.
\newblock {\em Physical Review A}, 90:042319, 2014.

\bibitem{Werner}
R.~F. Werner.
\newblock Quantum states with {E}instein-{P}odolsky-{R}osen correlations
  admitting a hidden-variable model.
\newblock {\em Physical Review A}, 40:4277--4281, 1989.

\bibitem{Siddiqui2020}
A.~U. Siddiqui and M.~M. Wilde.
\newblock Quantifying the performance of bidirectional quantum teleportation.
\newblock {\em Preprint arXiv:2010.07905}, 2020.

\bibitem{Siddiqui2022}
A.~U. Siddiqui and M.~M. Wilde.
\newblock The {SWAP} imposter: Bidirectional quantum teleportation and its
  performance.
\newblock {\em AVS Quantum Science}, 5(1):011407, 2023.

\bibitem{Rains1999}
E.~M. Rains.
\newblock Bound on distillable entanglement.
\newblock {\em Physical Review A}, 60:179--184, 1999.

\bibitem{Rains2001}
E.~M. Rains.
\newblock A semidefinite program for distillable entanglement.
\newblock {\em IEEE Transactions on Information Theory}, 47(7):2921--2933,
  2001.

\bibitem{kapner2007tests}
D.~J. Kapner, T.~S. Cook, E.~G. Adelberger, J.~H. Gundlach, B.~R. Heckel, C.~D.
  Hoyle, and H.~E. Swanson.
\newblock Tests of the gravitational inverse-square law below the dark-energy
  length scale.
\newblock {\em Physical Review Letters}, 98(2):021101, 2007.

\bibitem{Plenio2019}
J.~S. Pedernales, G.~W. Morley, and M.~B. Plenio.
\newblock Motional dynamical decoupling for matter-wave interferometry.
\newblock {\em arXiv:1906.00835}, 2019.

\bibitem{bennett1984quantum}
C.~H. Bennett and G.~Brassard.
\newblock Quantum cryptography: public key distribution and coin tossing.
\newblock In {\em Proc. IEEE International Conference on Computers, Systems and
  Signal Processing, Bangalore, India}, pages 175--179, 1984.
\newblock See also arXiv:2003.06557.

\bibitem{shor2000simple}
P.~W. Shor and J.~Preskill.
\newblock Simple proof of security of the bb84 quantum key distribution
  protocol.
\newblock {\em Physical Review Letters}, 85(2):441, 2000.

\bibitem{MathWorld}
J.~Keane and E.~W. Weisstein.
\newblock Legendre's chi-function.
\newblock {\em From MathWorld--A Wolfram Web Resource.}

\bibitem{Szego1920}
G.~Szeg{\H o}.
\newblock {Beitr{\"a}ge zur Theorie der Toeplitzschen Formen}.
\newblock {\em Math. Z.}, 6(3):167--202, 1920.

\bibitem{Parter1986}
S.~V. Parter.
\newblock On the distribution of the singular values of {T}oeplitz matrices.
\newblock {\em Linear Algebra and its Applications}, 80:115--130, 1986.

\bibitem{Avram1988}
F.~Avram.
\newblock On bilinear forms in {G}aussian random variables and {T}oeplitz
  matrices.
\newblock {\em Probability Theory and Related Fields}, 79(1):37--45, 1988.

\bibitem{GRUDSKY}
A.~B\"{o}ttcher and S.~M. Grudsky.
\newblock {\em Toeplitz matrices, asymptotic linear algebra, and functional
  analysis}.
\newblock Birkh\"{a}user Verlag, Basel, 2000.

\bibitem{Schlamminger2017}
C.~Rothleitner and S.~Schlamminger.
\newblock Invited review: Measurements of the {N}ewtonian constant of
  gravitation, {G}.
\newblock {\em Review of Scientific Instruments}, 88:111101, 2017.

\bibitem{Luo2010}
L.-C. Tu, Q.~Li, Q.-L. Wang, C.-G. Shao, S.-Q. Yang, L.-X. Liu, Q.~Liu, and
  J.~Luo.
\newblock New determination of the gravitational constant {G} with
  time-of-swing method.
\newblock {\em Physical Review D}, 82:022001, 2010.

\bibitem{Marquardt2014}
M.~Aspelmeyer, T.~J. Kippenberg, and F.~Marquardt.
\newblock Cavity optomechanics.
\newblock {\em Reviews of Modern Physics}, 86:1391, 2014.

\bibitem{Liu2021}
B.-B. Li, L.~Ou, Y.~Lei, and Y.-C. Liu.
\newblock Cavity optomechanical sensing.
\newblock {\em Nanophotonics}, 10(1):2799--2832, 2021.

\bibitem{Luo2009}
Q.~L. Wang, H.~C. Yeh, Z.~B. Zhou, and J.~Luo.
\newblock Improving the sensitivity of a torsion pendulum by using an optical
  spring method.
\newblock {\em Physical Review A}, 80:043811, 2009.

\bibitem{Shao2015}
Y.-J. Tan, Z.-K. Hu, and C.-G. Shao.
\newblock Noise limit of a torsion pendulum under optomechanical control.
\newblock {\em Physical Review A}, 92:032131, 2015.

\bibitem{Wang2008}
F.~Mueller, S.~Heugel, and L.~J. Wang.
\newblock Observation of optomechanical multistability in a high-{Q} torsion
  balance oscillator.
\newblock {\em Physical Review A}, 77:031802(R), 2008.

\bibitem{Ando2020}
K.~Komori, Y.~Enomoto, C.-P. Ooi, Y.~Miyazaki, N.~Matsumoto, V.~Sudhir,
  Y.~Michimura, and M.~Ando.
\newblock Attonewton-meter torque sensing with a macroscopic optomechanical
  torsion pendulum.
\newblock {\em Physical Review A}, 101:011802(R), 2020.

\bibitem{Pinard1999}
P.~F. Cohadon, A.~Heidmann, and A.~Pinard.
\newblock Cooling of a mirror by radiation pressure.
\newblock {\em Physical Review Letters}, 83:3174, 1999.

\bibitem{Zemanek2010}
K.~Dholakia and P.~Zem\'anek.
\newblock Colloquium: Gripped by light: Optical binding.
\newblock {\em Reviews of Modern Physics}, 82:1767, 2010.

\bibitem{Capasso2001}
H.~B. Chan, V.~A. Aksyuk, R.~N. Kleiman, D.~J. Bishop, and F.~Capasso.
\newblock Quantum mechanical actuation of microelectromechanical systems by the
  casimir force.
\newblock {\em Science}, 291(5510):1941--1944, 2001.

\bibitem{Newman2000}
M.~K. Bantel and R.~D. Newman.
\newblock High-precision measurement of torsion fiber internal friction at
  cryogenic temperatures.
\newblock {\em Journal of Alloys and Compounds}, 310:233--242, 2000.

\bibitem{Ooi2018}
C.~P. Ooi.
\newblock {\em Mechanical Loss of Crystal Fibres for Torsion Pendulum
  Experiments (Master's dissertation)}.
\newblock University of Tokyo, 2018.

\bibitem{Takamori2010}
M.~Ando, K.~Ishidoshiro, K.~Yamamoto, K.~Yagi, W.~Kokuyama, K.~Tsubono, and
  A.~Takamori.
\newblock Torsion-bar antenna for low-frequency gravitational-wave
  observations.
\newblock {\em Physical Review Letters}, 105:161101, 2010.

\bibitem{Pedernales2020}
J.~S. Pedernales, F.~Cosco, and M.~B. Plenio.
\newblock Decoherence-free rotational degrees of freedom for quantum
  applications.
\newblock {\em Physical Review Letters}, 125:090501, 2020.

\bibitem{Matsumura2022}
A.~Matsumura.
\newblock Role of matter coherence in entanglement due to gravity.
\newblock {\em {Quantum}}, 6:832, 2022.

\bibitem{Bennett1999}
C.~H. Bennett, D.~P. DiVincenzo, C.~A. Fuchs, T.~Mor, E.~Rains, P.~W. Shor,
  J.~A. Smolin, and W.~K. Wootters.
\newblock Quantum nonlocality without entanglement.
\newblock {\em Physical Review A}, 59:1070--1091, 1999.

\bibitem{HorodeckiBound}
M.~Horodecki, P.~Horodecki, and R.~Horodecki.
\newblock Mixed-state entanglement and distillation: Is there a ``bound''
  entanglement in nature?
\newblock {\em Physical Review Letters}, 80:5239--5242, 1998.

\bibitem{DNE-distillable}
L.~Lami and B.~Regula.
\newblock Distillable entanglement under dually non-entangling operations.
\newblock {\em Preprint arXiv:2307.11008}, 2023.

\bibitem{Pedernales2023}
Julen~S Pedernales and Martin~B Plenio.
\newblock On the origin of force sensitivity in tests of quantum gravity with
  delocalised mechanical systems.
\newblock {\em Contemporary Physics}, 64(2):147--163, 2023.

\bibitem{sinha2022dipoles}
K.~Sinha and P.~W. Milonni.
\newblock Dipoles in blackbody radiation: momentum fluctuations, decoherence,
  and drag force.
\newblock {\em Journal of Physics B: Atomic, Molecular and Optical Physics},
  55(20):204002, 2022.

\bibitem{BARNETT-RADMORE}
S.~Barnett and P.~M. Radmore.
\newblock {\em Methods in Theoretical Quantum Optics}.
\newblock Oxford Series in Optical and Imaging Sciences. Clarendon Press, 2002.

\bibitem{KATO}
T.~Kato.
\newblock {\em Perturbation theory for linear operators}.
\newblock Die Grundlehren der mathematischen Wissenschaften, Band 132.
  Springer-Verlag New York, Inc., New York, 1966.

\bibitem{BHATIA-MATRIX}
R.~Bhatia.
\newblock {\em Matrix Analysis}.
\newblock Graduate Texts in Mathematics. Springer New York, 2013.

\bibitem{HJ1}
R.~A. Horn and C.~R. Johnson.
\newblock {\em Matrix Analysis}.
\newblock Cambridge University Press, 1990.

\bibitem{ZHANG}
F.~Zhang.
\newblock {\em The Schur complement and its applications}, volume~4.
\newblock Springer Science \& Business Media, 2006.

\bibitem{INEQUALITIES}
G.~H. Hardy, J.~E. Littlewood, and G.~P\'{o}lya.
\newblock {\em Inequalities}.
\newblock Cambridge Mathematical Library. Cambridge University Press, 1952.

\bibitem{VARGA}
R.~S. Varga.
\newblock {\em Ger\v{s}gorin and his circles}.
\newblock Springer Science \& Business, 2010.

\bibitem{Pedernales2022b}
J.~S. Pedernales and M.~B. Plenio.
\newblock Robust macroscopic matter-wave interferometry with solids.
\newblock {\em Physical Review A}, 105:063313, 2022.

\bibitem{Note3}
A rigorous calculation for a one-dimensional untrapped particle actually gives
  \begin {equation*} \protect \frac {\protect \sqrt {2\pi } 16 R^2 t}{9\hbar
  m\omega }\protect \, P \protect \sqrt {m_0 kT} \ll 1\protect \, , \end
  {equation*} see e.g.~\cite [Appendix~B]{sinha2022dipoles}.

\bibitem{sellner2017improved}
S.~Sellner, M.~Besirli, M.~Bohman, M.~J. Borchert, J.~Harrington, T.~Higuchi,
  A.~Mooser, H.~Nagahama, G.~Schneider, C.~Smorra, T.~Tanaka, K.~Blaum,
  Y.~Matsuda, C.~Ospelkaus, W.~Quint, J.~Walz, Y.~Yamazaki, and S.~Ulmer.
\newblock Improved limit on the directly measured antiproton lifetime.
\newblock {\em New Journal of Physics}, 19(8):083023, 2017.

\bibitem{wineland1998experimental}
D.~J. Wineland, C.~Monroe, W.~M. Itano, D.~Leibfried, B.~E. King, and D.~M.
  Meekhof.
\newblock Experimental issues in coherent quantum-state manipulation of trapped
  atomic ions.
\newblock {\em Journal of Research of the National Institute of Standards and
  Technology}, 103(3):259, 1998.

\bibitem{gohil2020measurements}
C.~Gohil, P.~N. Burrows, N.~Blaskovic~Kraljevic, D.~Schulte, and B.~Heilig.
\newblock Measurements of sub-nt dynamic magnetic field shielding with soft
  iron and mu-metal for use in linear colliders.
\newblock {\em Journal of Instrumentation}, 15(12):P12030, 2020.

\bibitem{hebestreit2018sensing}
E.~Hebestreit, M.~Frimmer, R.~Reimann, and L.~Novotny.
\newblock Sensing static forces with free-falling nanoparticles.
\newblock {\em Physical review letters}, 121(6):063602, 2018.

\bibitem{Komori2020}
Y.~Michimura and K.~Komori.
\newblock Quantum sensing with milligram scale optomechanical systems.
\newblock {\em European Physical Journal D: Atomic, Molecular, Optical and
  Plasma Physics}, 74:126, 2020.

\end{thebibliography}

\appendix

\section{Detour: classical thresholds for teleportation as a special case of Theorem~\ref{general_bound_thm}} \label{subsec_telep_threshold}

The usefulness of our methods, and in particular of Theorem~\ref{general_bound_thm}, is best illustrated by applying it to the simplest possible case, that in which there are only two parties ($n=2$) and the $A_2$ and $A'_1$ sub-systems are trivial, $A_1\simeq A'_2$ are isomorphic, and moreover $U_{A\to A'} = \id_{A_1\to A'_2}$ is the identity channel. Although this setting is not immediately relevant to the application considered here, it has been the subject of extensive investigation in the context of classical simulation of teleportation. Indeed, the task at hand consists in simulating a perfect teleportation channel $A_1\to A'_2$ on an ensemble of pure states $\mathcal{E}$ using only operations that are LOCC with respect to the splitting $1:2$, that is, using measure-and-prepare (MP) channels $A_1\to A'_2$, i.e.\ channels of the form $\Lambda(X_{A_1}) = \sum_\ell \Tr\big[X_{A_1} E_\ell^{A_1}\big] \rho^{(\ell)}_{A'_2}$, where $(E_\ell)_\ell$ is a (discrete) POVM, and $\rho^{(\ell)}_{A'_2}$ are states. The quantity 
\bb
P_2 &= F_{\cls}(\mathcal{E}, \id_{A_1\to A'_2}) \\
&= \sup_{\Lambda\in \mathrm{MP}(A_1\to A'_2)} \sum_\alpha p_\alpha \Tr \left[ \Lambda(\psi_\alpha) \psi_\alpha \right] ,
\label{classical_fidelity_teleportation_def}
\ee
i.e.\ the maximal average fidelity obtainable by such a simulation, is also known in this context as the \deff{classical threshold}. This quantity has garnered a lot of attention as a benchmark for quantum experiments, and has been computed in a variety of scenarios: (a)~for Haar-random ensembles of qubit and qudit states~\cite{Massar1995, Werner1998, Horodecki-teleportation, Bruss1999}; (b)~for Gaussian ensembles of coherent states~\cite{Hammerer2005, BUCCO}; (c)~for ensembles of general Gaussian states~\cite{Adesso2008, Owari2008, jensen2011quantum, Chiribella2014}; and finally (d)~for ensembles of continuous-variable states invariant under some group action~\cite{Calsamiglia2009, Yang2014}. With our methods it is possible to recover several of these results swiftly and with a much more transparent proof. Applied to this special case, our Theorem~\ref{general_bound_thm} gives the bound
\bb
F_{\cls}(\mathcal{E}, \id_{A_1\to A'_2}) \leq \inf\left\{ \kappa: \sum_\alpha p_{\!\alpha}\, \psi_{\!\alpha}^{\phantom{'}}\! \otimes \psi'_{\!\alpha}\! \leq \kappa \xi\!\otimes\! \id\right\} .
\label{classical_fidelity_teleportation_bound}
\ee
Below we show how to recover from this two classic results.

\medskip
\paragraph{Haar-random qudit states.} Let $\EE_{\mathrm{H},d}$ be the ensemble composed of all qudit states with the Haar measure. Then it is known that~\cite{Massar1995, Werner1998, Horodecki-teleportation, Bruss1999}
\bb
F_{\cls}\left(\EE_{\mathrm{H},d}, \id_{A_1\to A'_2}\right) = \frac{2}{d+1}\, .
\label{F_cls_telep_Haar_random}
\ee
While a lower bound on the left-hand side is easy to find by making an appropriate ansatz for the measure-and-prepare channel $\Lambda$ in~\eqref{classical_fidelity_teleportation_def}, a matching upper bound can be derived from~\eqref{classical_fidelity_teleportation_bound} by remembering from~\eqref{R_AA'_Gamma_swap} that
\bb
\int \!\dd\psi\ \psi\otimes\psi = \frac{2 S}{d(d+1)}\, ,
\label{noisy_max_ent_state}
\ee
where $S$ is the projector onto the symmetric subspace, and then by making the ansatz $\xi = \id/d$ and $\kappa = 2/(d+1)$. Note that~\eqref{F_cls_telep_Haar_random} implies that the bound on the classical simulation fidelity of the swap operation we found in~\eqref{swap_bound} is in fact tight: indeed, one can always simulate a swap by teleporting each state on the other side separately.

\medskip
\paragraph{Ensembles of coherent states.} Let $p$ be a phase-invariant probability distribution on the complex plane $\C$, i.e.\ $p(e^{i\varphi}\alpha)=p(\alpha)$ for all $\varphi\in \R$ and $\alpha\in \C$.
Then one can consider the ensemble of coherent states $\EE_p = \{p(\alpha), \ketbra{\alpha} \}_{\alpha\in \C}$. A notable ensemble of this kind is the Gaussian ensemble $p(\alpha) = p_\lambda(\alpha) = \tfrac{\lambda}{\pi}\, e^{-\lambda |\alpha|^2}$, where $\lambda>0$, which will play a key role later in the paper. In~\cite{Hammerer2005} (see also~\cite[\S~8.1.2]{BUCCO}), it was shown that the corresponding classical threshold evaluates to
\bb
F_{\cls}\left(\EE_{p_\lambda}, \id_{A_1\to A'_2}\right) = \frac{1+\lambda}{2+\lambda}\, ,
\label{classical_fidelity_Hammerer}
\ee
converging in particular to $1/2$ as $\lambda\to 0$ and thus the ensemble becomes very spread out. Once again, establishing a lower bound on $F_{\cls}\left(\EE_{p_\lambda}, \id_{A_1\to A'_2}\right)$ is relatively easy --- a heterodyne measurement is the right ansatz --- and the difficult part is to prove an upper bound. The main drawbacks of the proof of the upper bound in~\cite{Hammerer2005} are that (i)~it is somewhat lengthy --- it occupies 6~pages in~\cite[pp.~232--237]{BUCCO} --- and (ii)~it cannot be easily generalised to arbitrary phase-invariant distributions. We can solve both of these issues with the help of our Theorem~\ref{general_bound_thm}.

\begin{prop} \label{teleportation_benchmark_coherent_prop}
For any phase-invariant probability distribution $p$ on $\C$, it holds that
\begin{align}
F_{\cls}\left(\EE_p, \id_{A_1\to A'_2}\right) &\leq \inf_q \sup_{N\in \N} \mu_N \sum_{k=0}^N \frac{1}{q_k} \binom{N}{k} \label{teleportation_benchmark_coherent_1} \\
&\leq \sup_{N\in \N} \mu_N \sum_{k=0}^N \frac{1}{\nu_k} \binom{N}{k} \, ,
\label{teleportation_benchmark_coherent_2}
\end{align}
where the infimum is over all probability distributions $q$ on $\N$, and
\begin{align}
\mu_N &\coloneqq \frac{1}{N!} \int\! \dd^2\!\alpha\ p(\alpha)\, e^{-2|\alpha|^2} |\alpha|^{2N}\, , \label{mu_N} \\
\nu_k &\coloneqq \frac{1}{k!} \int\! \dd^2\!\alpha\ p(\alpha)\, e^{-|\alpha|^2} |\alpha|^{2k}\, . \label{nu_k}
\end{align}
In particular, for $p(\alpha) = p_\lambda(\alpha) = \tfrac{\lambda}{\pi}\, e^{-\lambda |\alpha|^2}$ we recover~\eqref{classical_fidelity_Hammerer}.
\end{prop}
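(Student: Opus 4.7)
The plan is to apply the bound~\eqref{classical_fidelity_teleportation_bound} directly. For the identity channel on coherent inputs we have $\psi'_\alpha = \psi_\alpha = \ketbra{\alpha}$, so the task reduces to finding the smallest $\kappa \geq 0$ for which there exists a state $\xi$ with
\[
M \;\coloneqq\; \int\! \dd^2\!\alpha\; p(\alpha)\, \ketbra{\alpha}\otimes \ketbra{\alpha} \;\leq\; \kappa\, \xi\otimes \id\, .
\]
Phase invariance of $p$ is the key structural input: the integrand, and hence $M$, is invariant under $U_\varphi \otimes U_\varphi$ with $U_\varphi \coloneqq e^{i\varphi a^\dag a}$. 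Twirling the inequality over $\varphi\in [0,2\pi)$ preserves the trace-one normalisation of $\xi$, so one may assume without loss of generality that $\xi = \sum_k q_k \ketbra{k}$ is diagonal in the Fock basis, with $\{q_k\}_{k\in\N}$ a probability distribution.

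Next I would expand $\ket{\alpha}\otimes\ket{\alpha}$ in the Fock basis and integrate. The angular integration enforces $k+l = k'+l'$, while the remaining radial integral against $e^{-2|\alpha|^2}|\alpha|^{2N}$ produces the block decomposition
\[
M \;=\; \sum_{N=0}^{\infty} \mu_N\, \ketbra{v_N}\, ,\qquad \ket{v_N} \coloneqq \sum_{k=0}^N \sqrt{\tbinom{N}{k}}\, \ket{k,N-k}\, ,
\]
with $\mu_N$ as defined in~\eqref{mu_N}. This is a direct sum of rank-one operators supported on the distinct total-number eigenspaces, and on the $N$-th such eigenspace $\xi\otimes \id$ acts diagonally in the basis $\{\ket{k,N-k}\}_{k=0}^N$ with eigenvalues $q_0,\ldots,q_N$. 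Sandwiching with $\diag(q_0,\ldots,q_N)^{-1/2}$ turns the block-wise operator inequality $\mu_N \ketbra{v_N} \leq \kappa\,\diag(q_0,\ldots,q_N)$ into an operator-norm statement about a rank-one matrix, equivalent to the scalar condition $\kappa \geq \mu_N \sum_{k=0}^N \binom{N}{k}/q_k$. Taking the supremum over $N$ and the infimum over $\{q_k\}$ is exactly~\eqref{teleportation_benchmark_coherent_1}, and the ansatz $q_k = \nu_k$ -- a valid probability distribution since $\sum_k \nu_k = \int p(\alpha)\,\dd^2\!\alpha = 1$ -- immediately yields~\eqref{teleportation_benchmark_coherent_2}.

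To recover~\eqref{classical_fidelity_Hammerer} for $p = p_\lambda$, I would evaluate the two Gaussian radial integrals to get $\mu_N = \lambda/(\lambda+2)^{N+1}$ and $\nu_k = \lambda/(\lambda+1)^{k+1}$. The binomial identity $\sum_{k=0}^N \binom{N}{k}(\lambda+1)^{k} = (\lambda+2)^N$ then collapses $\mu_N \sum_k \binom{N}{k}/\nu_k$ to the constant $(\lambda+1)/(\lambda+2)$, \emph{independent of $N$}, so the supremum is trivial. The main non-trivial step is the block decomposition of $M$ described above: this is where phase invariance is leveraged to reduce an infinite-dimensional operator inequality to a countable family of scalar conditions, one per total-number sector. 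Everything else is routine computation; the matching lower bound -- giving equality in~\eqref{classical_fidelity_Hammerer} -- is standard and can be achieved by a heterodyne measure-and-prepare strategy.
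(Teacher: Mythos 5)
Your proof is correct and follows essentially the same route as the paper: expand $\int p(\alpha)\,\ketbra{\alpha}\otimes\ketbra{\alpha}$ in the Fock basis, use phase invariance to obtain a direct sum of rank-one operators over total-photon-number sectors, and reduce the operator inequality from~\eqref{classical_fidelity_teleportation_bound} block-by-block to the scalar condition $\kappa \geq \mu_N \sum_k \binom{N}{k}/q_k$; the ansatz $q_k = \nu_k$ then gives~\eqref{teleportation_benchmark_coherent_2}, and the Gaussian evaluation is identical. The only cosmetic differences are that you add a phase-twirling argument to justify restricting $\xi$ to be Fock-diagonal (the paper simply takes this as an ansatz, which suffices for an upper bound) and you phrase the rank-one reduction via a $D^{-1/2}$ sandwich rather than the equivalent criterion $b\,\ketbra{\psi}\leq A \Leftrightarrow b\braket{\psi|A^{-1}|\psi}\leq 1$.
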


\begin{proof}
We start by computing the operator appearing in~\eqref{classical_fidelity_teleportation_bound}, which in this case takes the form
\bb
&\int\hspace{-1.0ex} \dd^2\!\alpha\, p(\alpha) \ketbra{\alpha}\otimes \ketbra{\alpha} \\
&\eqt{1} \!\sum_{h,k,h',k'=0}^\infty \!\!\! \frac{\ketbraa{h}{k}\!\otimes\! \ketbraa{h'}{k'}}{\sqrt{h!\,k!\,h'!\,k'!}} \!\int\hspace{-1.0ex} \dd^2\!\alpha\, p(\alpha)\, e^{-2|\alpha|^2} \alpha^{h+h'} (\alpha^*)^{k+k'} \\
&\eqt{2} \sum_{N=0}^\infty \sum_{h,k=0}^N \frac{\ketbraa{h}{k}\!\otimes\! \ketbraa{N\!-\!h}{N\!-\!k}}{\sqrt{h!\,k!\,(N\!-\!h)!\,(N\!-\!k)!}} \!\int\hspace{-1.0ex} \dd^2\!\alpha\, p(\alpha)\, e^{-2|\alpha|^2} |\alpha|^{2N} \\
&\eqt{3} \sum_{N=0}^\infty \mu_N\! \sum_{h,k=0}^N \ketbraa{h}{k}\!\otimes\! \ketbraa{N\!-\!h}{N\!-\!k} \sqrt{\binom{N}{h}\binom{N}{k}} \\
&\eqt{4} \sum_{N=0}^\infty 2^N\mu_N \ketbra{\phi_N}\, .
\ee
Here, in~1 we used~\eqref{coherent}; in~2 we noticed that $\int\! \dd^2\!\alpha\, p(\alpha)\, e^{-2|\alpha|^2} \alpha^{h+h'} (\alpha^*)^{k+k'}=0$ unless $h+h'=k+k'\eqqcolon N$, due to the phase invariance of $p$; in~3 we employed the definition of $\mu_N$, given by~\eqref{mu_N}; finally, in~4 we defined the states
\bb
\ket{\phi_N} \coloneqq 2^{-N/2} \sum_{k=0}^N \sqrt{\binom{N}{k}} \ket{k}\otimes \ket{N-k}\, .
\ee
Note that the $\ket{\phi_N}$ are orthonormal states, because each of them belongs to a different subspace of fixed total photon number $N$, whose corresponding projector we denote by
\bb
\Pi_N \coloneqq \sum_{k=0}^N \ketbra{k}\otimes \ketbra{N\!-\!k}\, .
\ee

Now, to prove~\eqref{teleportation_benchmark_coherent_1}--\eqref{teleportation_benchmark_coherent_2} we resort once more to~\eqref{classical_fidelity_teleportation_bound}. Setting $\xi \coloneqq \sum_{k=0}^\infty q_k \ketbra{k}$, we see that the operator inequality
\bb
\sum_{N=0}^\infty 2^N\mu_N \ketbra{\phi_N} \leq \kappa\, \xi\otimes \id
\ee
is satisfied if and only if
\bb
2^N\mu_N \ketbra{\phi_N} &\leq \kappa\, \Pi_N (\xi\otimes \id) \Pi_N \\
&= \kappa \sum_{k=0}^N q_k \ketbra{k}\!\otimes\! \ketbra{N\!-\!k}
\ee
for all $N\in \N$. Remembering that $b \ketbra{\psi} \leq A$ if and only if $\psi\in \supp(A)$ and $b \braket{\psi|A^{-1}|\psi}\leq 1$, where the inverse of $A$ is taken on the support, it is immediate to deduce from this that the minimal $\kappa$ for which this can hold is given by the right-hand side of~\eqref{teleportation_benchmark_coherent_1}. The inequality in~\eqref{teleportation_benchmark_coherent_2}, instead, is obtained by making the ansatz $q_k=\nu_k$, where $\nu_k$ is given by~\eqref{nu_k}.

Finally, for the Gaussian case a straightforward calculation shows that $\mu_N = \lambda (2+\lambda)^{-N-1}$ and $\nu_k = \lambda (1+\lambda)^{-k-1}$, so that $\mu_N \sum_{k=0}^N \frac{1}{\nu_k} \binom{N}{k} = (1+\lambda)/(2+\lambda)$ for all $N\in \N$, recovering the Hammerer et al.'s result~\eqref{classical_fidelity_Hammerer}~\cite{Hammerer2005}.
\end{proof}

\section{Proof of Theorem~\ref{general_bound_thm} with general infinite-dimensional ensembles} \label{infinite_dim_app}

Throughout this appendix, we explain how to modify the proof of Theorem~\ref{general_bound_thm} to adapt it to the general setting where $\EE = \{ \dd\mu(\alpha), \psi_\alpha\}$ is a general ensemble of possibly infinite-dimensional states $\psi_\alpha = \ketbra{\psi_\alpha}$, where $\ket{\psi_\alpha}\in \HH$ and the only assumption on $\HH$ is that it is separable as a Hilbert space. Here, $\mu$ is a generic probability measure on some measure space, and accordingly~\eqref{Fs} takes the form
\bb
F_{\cls}(\mathcal{E}, U) \coloneqq \sup_{\Lambda\in \locc(A\to A')} \int \dd\mu(\alpha)\ \Tr \left[ \Lambda(\psi_\alpha) \psi'_\alpha \right] ,
\label{Fs_general}
\ee
while~\eqref{R_gamma} becomes
\bb
R_{AA'} \coloneqq \int \dd\mu(\alpha)\ (\psi_\alpha^*)_{A} \otimes (\psi'_\alpha)_{A'}\, . \label{R_gamma_general}
\ee
Clearly, it suffices to prove~\eqref{general_bound}, as~\eqref{handier_general_bound} follows exactly as in the proof of Theorem~\ref{general_bound_thm} presented in the main text. To achieve this, we modify~\eqref{general_bound_proof} as follows: let us take a generic LOCC operation $\Lambda\in\locc$ and consider $\kappa>0$ such that $R_{AA'}^{\Gamma_J} \leq \kappa\, \xi_A\otimes \id_{A'}$. Then
\begin{align}
&F_{\cls}(\mathcal{E}, U) \nonumber \\
&\quad = \int \dd\mu(\alpha)\ \braket{\psi'_\alpha| \Lambda(\psi_\alpha) |\psi'_\alpha} \nonumber \\
&\quad\eqt{1} \int \dd\mu(\alpha)\ \lim_{d\to\infty} \braket{(\psi_\alpha^*)_{A} (\psi'_\alpha)_{A'} \big| D_{\Lambda,d}^{AA'} \big| (\psi_\alpha^*)_{A} (\psi'_\alpha)_{A'}} \nonumber \\
&\quad \leqt{2} \liminf_{d\to\infty} \int \dd\mu(\alpha)\ \braket{(\psi_\alpha^*)_{A} (\psi'_\alpha)_{A'} \big| D_{\Lambda,d}^{AA'} \big| (\psi_\alpha^*)_{A} (\psi'_\alpha)_{A'}} \label{general_bound_calculation_infinite_dim} \\
&\quad\eqt{3} \liminf_{d\to\infty} \Tr \left[ R_{AA'} D_{\Lambda,d}^{AA'} \right] \nonumber \\
&\quad\leqt{4} \liminf_{d\to\infty} \kappa \nonumber \\
&\quad = \kappa\, .
\end{align}
The most delicate step in the above calculation is~1. There, we introduced the un-normalised Choi--Jamio{\l}kowski state $D_{\Lambda,d}^{AA'}$ defined over the first $d$ basis vectors, formally given by~\eqref{dynamical_matrix} and~\eqref{max_ent}, but where now $\{\ket{1},\ket{2},\ldots \}$ is a possibly infinite Hilbert basis of $\HH_A$. Calling $\Pi_d\coloneqq \sum_{\ell=1}^d \ketbra{\ell}$, one observes that~\eqref{max_ent_trick} can be modified so as to give $M\otimes \id \ket{\Phi_d} = \id\otimes \big(\Pi_d M^\intercal\big) \ket{\Phi_d}$ for any bounded operator $M$. Then, as in~\eqref{calculation_dynamical_matrix} we compute
\begin{align}
&\braket{\psi_\alpha^* \psi'_\alpha | D_{\Lambda,d} |\psi_\alpha^* \psi'_\alpha} \nonumber \\
&\qquad = d \Tr \left[\psi_\alpha^*\otimes \psi'_\alpha\, (I\otimes \Lambda)(\Phi_d) \right] \nonumber \\
&\qquad = d \Tr \left[\id \otimes \psi'_\alpha\, (I\otimes \Lambda)\left((\psi_\alpha^* \otimes \id)\, \Phi_d\right) \right] \nonumber \\
&\qquad = d \Tr \left[\id \otimes \psi'_\alpha\, (I\otimes \Lambda)\left((\id\otimes (\Pi_d \psi_\alpha))\, \Phi_d\right) \right] \label{calculation_dynamical_matrix} \\
&\qquad = d \Tr \left[\psi'_\alpha\,\Lambda\left(\Tr_1\, [ (\id\otimes (\Pi_d\psi_\alpha))\, \Phi_d ] \right) \right] \nonumber \\
&\qquad = \Tr \left[\psi'_\alpha\, \Lambda( \Pi_d \psi_\alpha \Pi_d) \right] \nonumber \\
&\qquad = \braket{\psi'_\alpha| \Lambda(\Pi_d \psi_\alpha \Pi_d)|\psi'_\alpha} , \nonumber
\end{align}
which implies in particular that
\bb
\lim_{d\to\infty} \braket{\psi_\alpha^* \psi'_\alpha | D_{\Lambda,d} |\psi_\alpha^* \psi'_\alpha} = \braket{\psi'_\alpha| \Lambda(\psi_\alpha)|\psi'_\alpha}
\ee
due to the strong convergence of $\Pi_d \psi_\alpha \Pi_d$ to $\psi_\alpha$ as $d\to\infty$ for each $\alpha$. This completes the justification of~1 in~\eqref{general_bound_calculation_infinite_dim}. Continuing, in~2 we applied Fatou's lemma, in~3 we used~\eqref{R_gamma_general}, and finally~4 follows as in~\eqref{general_bound_proof}. This completes the proof of Theorem~\ref{general_bound_thm} in the general case.

\section{Taylor expansion of the Hamiltonian of gravitationally interacting harmonic oscillators} \label{Taylor_app}

Consider a system of $n$ particles, each bound to a one-dimensional harmonic oscillator (Figure~\ref{system_oscillators_fig}). We assume that these particles interact with each other gravitationally, and that the oscillation amplitude is much smaller than the distance between the oscillator centres. In this appendix, we employ this approximation to perform a Taylor expansion of the total Hamiltonian.

In what follows, we denote the position of the $j^\text{th}$ centre with $\vec{R}_j^0\in \R^3$, the angular frequency of the harmonic oscillator with $\omega_j$, the mass of the particle with $m_j$, the direction of the oscillation line with $\hat{n}_j\in \R^3$, where $\|\hat{n}_j\|=1$, and the position of the particle along that line with $x_j\in \R$. The position of the $j^\text{th}$ particle is thus $\vec{R}_j = \vec{R}_j^0 + x_j \hat{n}_j$. Remembering that we set $\vec{d}_{jk}\coloneqq \vec{R}_k^0 - \vec{R}_j^0$, we see that the gravitational interaction between particles $i$ and $j$ contributes
\bb
-\frac{Gm_jm_k}{\left\| \vec{R}_j - \vec{R}_k\right\|}=-\frac{Gm_jm_k}{\left\|\vec{d}_{jk} - x_j \hat{n}_j + x_k \hat{n}_k\right\|}
\ee
to the total energy.

We start by computing
\begin{align}
&\left\|\vec{d}_{jk} - x_j \hat{n}_j + x_k \hat{n}_k\right\|^{-1} \\ &\qquad \eqt{1} \frac{1}{d_{jk}} \left\| \hat{d}_{jk} - \frac{x_j}{d_{jk}}\, \hat{n}_j + \frac{x_k}{d_{jk}}\, \hat{n}_k \right\|^{-1} \nonumber\\
&\qquad = \frac{1}{d_{jk}} \bigg( 1 - \frac{2}{d_{jk}} \left( x_j\, \hat{n}_j\cdot \hat{d}_{jk} - x_k\, \hat{n}_k\cdot \hat{d}_{jk} \right) \nonumber\\
&\hspace{12.39ex} + \frac{1}{d_{jk}^2} \left( x_j^2+x_k^2 - 2x_jx_k\, \hat{n}_j\cdot \hat{n}_k \right) \bigg)^{-1/2} \nonumber\\
&\qquad \eqt{2} \frac{1}{d_{jk}} \bigg( 1 - \frac{2}{d_{jk}} \left( x_j \cos\theta_{jk} + x_k \cos\theta_{kj} \right) \nonumber\\
&\hspace{12.39ex} + \frac{1}{d_{jk}^2} \left( x_j^2+x_k^2 - 2x_jx_k \cos\varphi_{jk} \right) \bigg)^{-1/2} \nonumber\\
&\qquad \eqt{3} \frac{1}{d_{jk}} \bigg( 1 + \frac{1}{d_{jk}} \left( x_j \cos\theta_{jk} + x_k \cos\theta_{kj} \right) \label{expansion_jnverse_distance} \\
&\hspace{12.39ex} - \frac{1}{2d_{jk}^2} \left( x_j^2+x_k^2 - 2x_jx_k \cos\varphi_{jk} \right) \nonumber\\
&\hspace{12.39ex} + \frac{3}{8} \left(\frac{2}{d_{jk}}\right)^2 \left( x_j \cos\theta_{jk} + x_k \cos\theta_{kj} \right)^2 \nonumber\\
&\hspace{12.39ex} + \text{third-order terms} \bigg) \nonumber\\
&\qquad \eqt{4} -\frac{1}{2d_{jk}^3} \bigg( \left(1-3\cos^2\theta_{jk}\right) x_j^2 + \left(1-3\cos^2\theta_{kj}\right) x_k^2 \nonumber\\
&\hspace{14.5ex} - 2 \left( \cos\varphi_{jk} + 3 \cos\theta_{jk} \cos\theta_{kj} \right) x_j x_k \bigg) \nonumber\\
&\hspace{6.6ex} + \text{constant} + \text{linear terms} + \text{third-order terms.} \nonumber
\end{align}
Here, in~1 we introduced the notation $d_{jk}\coloneqq \big\|\vec{d}_{jk}\big\|$ and $\hat{d}_{jk}\coloneqq \frac{\vec{d}_{jk}}{d_{jk}}$, in~2 we defined the angles $\theta_{jk},\theta_{kj},\varphi_{jk}\in [0,\pi]$ via the relations~\eqref{angles}, in~3 we Taylor expanded
\bb
(1+z)^{-1/2} = 1 - \frac12 z + \frac38 z^2 + O\left(|z|^3\right)\qquad (z\to 0)\, ,
\ee
and finally in~4 it is understood that a linear term is a generic addend $c_j x_j$, while a third-order term is of the form $c_{jk\ell} x_j x_k x_\ell$. 

The Hamiltonian of the system therefore can be expressed as
\begin{align}
H &\eqt{5} \sum_j \left( \frac12 m_j \omega_j^2 x_j^2 + \frac{p_j^2}{2m_j} \right) - \sum_{j<k} \frac{G m_j m_k}{\left\|\vec{d}_{jk} - x_j \hat{n}_j + x_k \hat{n}_k\right\|} \nonumber \\
&\eqt{6} \sum_j \left( \frac12 m_j \omega_j^2 x_j^2 + \frac{p_j^2}{2m_j} \right) \nonumber \\
&\quad + \frac12 \sum_{j<k} \frac{G m_j m_k}{d_{jk}^3} \nonumber \\
&\hspace{10ex} \times \bigg( \!\left(1\!-\!3\cos^2\theta_{jk}\right) x_j^2 + \left(1\!-\!3\cos^2\theta_{kj}\right) x_k^2 \nonumber \\
&\hspace{14ex} - 2 \left( \cos\varphi_{jk}\! +\! 3 \cos\theta_{jk} \cos\theta_{kj} \right) x_j x_k \bigg) \nonumber \\
&\quad + \text{constant} + \text{linear terms} + \text{third-order terms} \nonumber \\
&= \sum_j \left( \frac12 m_j \omega_j^2 x_j^2 + \frac{p_j^2}{2m_j} \right) \label{Hamiltonian_expanded} \\
&\quad + \frac12 \sum_j \left( \sum_{k:\,k\neq j} \frac{G m_j m_k}{d_{jk}^3} \left(1\!-\! 3\cos^2\theta_{jk}\right) \right) x_j^2 \nonumber \\
&\quad - \frac12 \sum_{j\neq k} \frac{G m_j m_k}{d_{jk}^3} \left( \cos\varphi_{jk} \!+\! 3 \cos\theta_{jk} \cos\theta_{kj} \right) x_j x_k \nonumber \\
&\quad + \text{constant} + \text{linear terms} + \text{third-order terms} \nonumber \\[.5ex]
&\eqt{7} \sum_j \hbar \omega_j \frac{\bar{x}_j^2+\bar{p_j}^2}{2} \nonumber \\[-1.5ex]
&\quad + \frac12 \sum_j \left( \sum_{k:\,k\neq j} \frac{G \hbar m_k}{d_{jk}^3\omega_j} \left(1-3\cos^2\theta_{jk}\right) \right) \bar{x}_j^2 \nonumber \\
&\quad - \frac12 \sum_{j\neq k} \frac{G \hbar \sqrt{m_j m_k}}{d_{jk}^3 \sqrt{\omega_j\omega_k}} \left( \cos\varphi_{jk} + 3 \cos\theta_{jk} \cos\theta_{kj} \right) \bar{x}_j \bar{x}_k \nonumber \\
&\quad + \text{constant} + \text{linear terms} + \text{third-order terms,} \nonumber 
\end{align}
where~5 is just~\eqref{system_oscillators_H_tot}, in~6 we plugged~\eqref{expansion_jnverse_distance}, and finally in~7 we employed the dimensionless variables $\bar{x}_j,\bar{p}_j$ defined by~\eqref{dimensionless}. The quadratic term in~\eqref{Hamiltonian_expanded}, specialised to the case where $\omega_j\equiv \omega$ for all $j$, yields~\eqref{g}.

Thus, to complete the motivation of~\eqref{effective_Hamiltonian} it only remains to justify why we can neglect the constant and linear terms appearing in~\eqref{Hamiltonian_expanded}. The constant terms clearly commute with everything else and amount to an overall phase, so they can safely be ignored. What about the linear terms? As it turns out, they can be ignored, too, because the unitary evolution corresponding to a sum of linear and a quadratic Hamiltonian can be expanded as
\bb
e^{-\frac{it}{\hbar} \left( H_{\mathrm{lin}} + H_{\mathrm{q}} \right)} = e^{-\frac{it}{\hbar} \,\widetilde{H}_{\mathrm{lin}}}\, e^{-\frac{it}{\hbar}\, H_{\mathrm{q}}}\, ,
\label{getting_rid_H_linear}
\ee
where $\widetilde{H}_{\mathrm{lin}}\neq H_{\mathrm{lin}}$ is a new linear Hamiltonian, but the $H_\mathrm{q}$ on the right-hand side is the same as on the left-hand side. The above observation can be proved in many ways, but the most immediate is perhaps via the Baker--Campbell--Hausdorff formula, which dictates that $e^{-\frac{it}{\hbar} \left( H_{\mathrm{lin}} + H_{\mathrm{q}} \right)} e^{\frac{it}{\hbar}\, H_{\mathrm{q}}}$ can be re-written as an exponential of $-\frac{it}{\hbar} H_{\mathrm{lin}}$ plus a complicated expression involving nested commutators of $-\frac{it}{\hbar} \left( H_{\mathrm{lin}} + H_{\mathrm{q}} \right)$ and $\frac{it}{\hbar}\, H_{\mathrm{q}}$; since there is only one quadratic term, which naturally commutes with itself, and the canonical commutation relations~\eqref{CCR} lower the degree of any expression involving commutators by $2$, a little thought reveals that only linear terms can survive any chain of nested commutators of this sort. Thus, $e^{-\frac{it}{\hbar} \left( H_{\mathrm{lin}} + H_{\mathrm{q}} \right)} e^{\frac{it}{\hbar}\, H_{\mathrm{q}}}$ must be the exponential of a linear operator.
Now, since $\widetilde{H}_{\mathrm{lin}} = \sum_j c_j x_j$ and $x_j$ acts only on the $j^\text{th}$ oscillator, we see that $e^{-\frac{it}{\hbar} \,\widetilde{H}_{\mathrm{lin}}} = \bigotimes_j V_j$ is in fact a product unitary. Thanks to Lemma~\ref{local_unitaries_lemma}, this has no effect on the LOCC simulation fidelity.


\section{Proof of Theorem~\ref{general_symplectic_bound_thm}} \label{subsec_proof_general_symplectic_bound}

This appendix is devoted to the presentation of a complete mathematical proof of Theorem~\ref{general_symplectic_bound_thm}. To apply Theorem~\ref{general_bound_thm} to the Gaussian coherent state ensemble described in the statement of Theorem~\ref{general_symplectic_bound_thm}, we first need to compute the operator $R_{AA'}$ given by~\eqref{R_gamma} for the case of interest. This is given by
\bb
R_{AA'} \coloneqq \int\!\dd\alpha\ p_\lambda(\alpha)\ketbra{\alpha^*}_A \otimes \left( U_S\ketbra{\alpha} U_S^\dag\right)_{A'}\, ,
\label{R_AA'_special}
\ee
where the systems $A,A'$ are composed of $n$ modes each. Note that
\begin{align}
R_A &= \Tr_{A'} R_{AA'} \nonumber \\
&= \int_{\C^n} \!\dd^n\!\alpha\ p_\lambda(\alpha) \ketbra{\alpha^*} \nonumber \\
&= \bigotimes_{j=1}^n \int_\C \!\dd^2\!\alpha_j\ \frac{\lambda}{\pi}\, e^{-\lambda |\alpha_j|^2} \ketbra{\alpha_j}_{A_j} \label{R_A_special}
 \\
&= \bigotimes_{j=1}^n \left(\tau_{1/\lambda}\right)_{A_j} , \nonumber
\end{align}
where
\bb
\tau_\nu\coloneqq \frac{1}{\nu\!+\!1}\sum_{n=0}^\infty \left( \frac{\nu}{\nu\!+\!1}\right)^n \ketbra{n} = \int \!\dd^2 z\ \frac{1}{\pi \nu}\, e^{-|z|^2/\nu} \ketbra{z}
\label{thermal}
\ee
is the thermal state with mean excitation number $\nu\geq 0$, expressed in the second line of the above equation by means of its $P$-representation~\cite[Eq.~(4.5.36)]{BARNETT-RADMORE}. 
This suggests that we should take a thermal state as an ansatz for $\xi_A$ in~\eqref{general_bound}. In the interest of rigour, we will choose $\xi_A = \big(\tau_{1/\lambda_\mu}^{\otimes n}\big)_A$ to be a thermal state with a slighly larger mean photon number than $1/\lambda$. Here, $\mu>0$ is a parameter that we will take to $0$ later on, and $\lambda_\mu$ is defined by
\bb
\lambda_\mu \coloneqq \frac{\lambda-2\mu}{1+\mu}\, .
\label{lambda_mu}
\ee
The precise functional dependence of $\lambda_\mu$ on $\mu$ is immaterial, as long as $\lambda_\mu > \lambda$ for all $\mu>0$ and $\lim_{\mu\to 0^+} \lambda_\mu = \lambda$. The specific choice in~\eqref{lambda_mu} is made for convenience. 

For every fixed $J\subseteq [n]$, we then obtain that
\begin{align}
&F_{\cls}(\mathcal{E}_\lambda, U_S) \nonumber \\
&\ \leqt{1} \underset{\mu\to 0}{\mathrm{liminf}}\ d_{\max} \left( R_{AA'}^{\Gamma_{\!J}} \,\Big\|\, \big(\tau_{1/\lambda_\mu}^{\otimes n}\big)_A \otimes \id_{A'}\right) \nonumber\\
&\ \eqt{2} \underset{\mu\to 0}{\mathrm{liminf}}\ \lambda_{\max} \left( \Big(\tau_{1/\lambda_\mu}^{-\frac12}\Big)^{\otimes n}_A\, R_{AA'}^{\Gamma_{\!J}} \, \Big(\tau_{1/\lambda_\mu}^{-\frac12}\Big)^{\otimes n}_A \right) \label{symplectic_proof_eq1} \\
&\ \eqt{3} \underset{\mu\to 0}{\mathrm{liminf}}\ \lambda_{\max} \left( \big(Q_\mu^{AA'}\big)^{\Gamma_{\!A_{\!J^c}}\Gamma_{\!A'_{\!J}}} \right)\, . \nonumber
\end{align}
The above steps are justified as follows: in~1 we used Theorem~\ref{general_bound_thm}, and in particular~\eqref{general_bound}, employing as ansatz $\xi_A=\big(\tau_{1/\lambda_\mu}^{\otimes n}\big)_A$ (see~\eqref{thermal}); in~2 we leveraged~\eqref{d_max_lambda_max} ---  we will soon see that what is enclosed in brackets is indeed a bounded operator --- and omitted a tensor product with $\id_{A'}$ for brevity; finally, in~3 we introduced the operator
\bb
Q_\mu^{AA'} \coloneqq&\, \left( \Big(\tau_{1/\lambda_\mu}^{-\frac12}\Big)^{\otimes n}_A\, R_{AA'}^{\Gamma_{\!J}} \, \Big(\tau_{1/\lambda_\mu}^{-\frac12}\Big)^{\otimes n}_A \right)^{\Gamma_{\!A_{\!J^c}}\Gamma_{\!A'_{\!J}}} \\
=&\, \Big(\tau_{1/\lambda_\mu}^{-\frac12}\Big)^{\otimes n}_A\, R_{AA'}^{\Gamma_A} \, \Big(\tau_{1/\lambda_\mu}^{-\frac12}\Big)^{\otimes n}_A  
\label{Q_AA'}\, .
\ee
where we chose the local Fock basis as the one with respect to which we are taking the transposition, and noted that a thermal state is diagonal in this basis.

Continuing, we compute
\begin{widetext}
\begin{align}
Q_\mu^{AA'} \eqt{4}& \int\! \dd^n\!\alpha\, p_\lambda(\alpha) \left(\tau_{1/\lambda_\mu}^{-\frac12}\right)^{\otimes n}_A \ketbra{\alpha}_A \left(\tau_{1/\lambda_\mu}^{-\frac12}\right)^{\otimes n}_A \otimes \left( U_S\ketbra{\alpha} U_S^\dag\right)_{A'} \nonumber \\
\eqt{5}&\ U_S^{A'} \left( \bigotimes_{j=1}^n \int_\C \!\dd^2\!\alpha_j\, \frac{\lambda}{\pi}\, e^{-\lambda |\alpha_j|^2} \left( \tau_{1/\lambda_\mu}^{-\frac12} \ketbra{\alpha_j} \tau_{1/\lambda_\mu}^{-\frac12}\right)_{A_j} \otimes \ketbra{\alpha_j}_{A'_j} \right) \Big(U_S^{A'}\Big)^\dag \nonumber \\
\eqt{6}&\ U_S^{A'} \left( \bigotimes_{j=1}^n \int_\C \!\dd^2\!\alpha_j\, \frac{\lambda(1\!+\!\lambda_\mu)}{\pi \lambda_\mu}\, e^{-(\lambda-\lambda_\mu) |\alpha_j|^2}\, \Ketbra{\sqrt{1\!+\!\lambda_\mu}\, \alpha_j}_{A_j} \otimes \ketbra{\alpha_j}_{A'_j} \right) \Big(U_S^{A'}\Big)^\dag \nonumber \\
\eqt{7}&\ U_S^{A'} \left( \bigotimes_{j=1}^n  U_{\varphi_\mu}^{A_j A'_j} \left(\int_\C \!\dd^2\!\alpha_j\, \frac{\lambda(1\!+\!\lambda_\mu)}{\pi \lambda_\mu}\, e^{-(\lambda-\lambda_\mu) |\alpha_j|^2} \ketbra{0}_{A_j} \otimes \Ketbra{\sqrt{2\!+\!\lambda_\mu}\, \alpha_j}_{A'_j} \right) \bigg(U_{\varphi_\mu}^{A_j A'_j}\bigg)^\dag \right) \Big(U_S^{A'}\Big)^\dag \label{symplectic_proof_eq2} \\
\eqt{8}&\ U_S^{A'} \left( \bigotimes_{j=1}^n  U_{\varphi_\mu}^{A_j A'_j} \left(\int_\C \!\frac{\dd^2\beta_j}{2\!+\!\lambda_\mu}\, \frac{\lambda(1\!+\!\lambda_\mu)}{\pi \lambda_\mu}\, e^{-\frac{\lambda-\lambda_\mu}{2+\lambda_\mu} |\beta_j|^2} \ketbra{0}_{A_j} \otimes \Ketbra{\beta_j}_{A'_j} \right) \bigg(U_{\varphi_\mu}^{A_j A'_j}\bigg)^\dag \right) \Big(U_S^{A'}\Big)^\dag \nonumber \\
\eqt{9}&\ \frac{1}{\mu^n}\left(\frac{(1\!+\!\lambda_\mu) \lambda \mu}{\lambda_\mu(\lambda\!-\!\lambda_\mu)}\right)^n U_S^{A'} \left( \bigotimes_{j=1}^n  U_{\varphi_\mu}^{A_j A'_j} \right) \left(\ketbra{0}_A\otimes \left(\tau_{1/\mu}^{\otimes n}\right)_{A'} \right) \left( \bigotimes_{j=1}^n  U_{\varphi_\mu}^{A_j A'_j} \right)^\dag \Big(U_S^{A'}\Big)^\dag \nonumber \\
\eqt{10}&\ \frac{1}{\mu^n}\left(\frac{1\!+\!\lambda}{2\!+\!\lambda} + o(1)\right)^n U_S^{A'} \omega_\mu^{AA'} \Big(U_S^{A'}\Big)^\dag\, . \nonumber
\end{align}
\end{widetext}
The above steps can be justified as follows. In~4 we used~\eqref{R_AA'_special} and~\eqref{Q_AA'}. In~5 we decomposed every operator except $U_S$ mode-wise. In~6 we noticed that 
\bb
\tau_{1/\zeta}^{-\frac12}\ket{\beta} = \sqrt{\frac{1+\zeta}{\zeta}}\, e^{\zeta |\beta|^2/2} \Ket{\sqrt{1+\zeta}\, \beta}\, ,
\ee
for all $\beta\in \C$ and $\zeta>0$, as can be verified directly by writing everything out in Fock basis. In~7 we set $\varphi_\mu\coloneqq \arccos\frac{1}{\sqrt{2+\lambda_\mu}}$, introduced the beam splitter unitary acting on two modes defined by
\bb
U_\varphi \coloneqq e^{\varphi (a^\dag b - a b^\dag)}\, ,
\label{BS}
\ee
where $a,b$ are the annihilation operators of the first and second mode, respectively, and recalled that
\bb
U_\varphi \ketbra{\beta, \gamma} = \ket{\beta \cos\varphi + \gamma \sin\varphi,\, -\beta\sin\varphi +\gamma \cos\varphi} .
\label{BS_coherent_states}
\ee
for all $\beta,\gamma\in \C$. Incidentally, $U_\varphi$ is the Gaussian unitary (cf.~\eqref{U_S}) corresponding to a rotation matrix --- which is in particular symplectic. Namely,
\bb
U_\varphi = U_{\mathrm{R}(-\varphi)},\qquad \mathrm{R}(\varphi) \coloneqq \begin{pmatrix} \cos\varphi & -\sin\varphi \\ \sin\varphi & \cos\varphi \end{pmatrix} .
\label{BS_as_symplectic}
\ee
In~9 we changed variable, setting $\beta_j\coloneqq \sqrt{2+\lambda_\mu}\, \alpha_j$, and leveraged~\eqref{thermal}. Finally, in~10 we noted that $\lim_{\mu\to 0^+}\frac{(1+\lambda_\mu) \lambda \mu}{\lambda_\mu(\lambda-\lambda_\mu)} = \frac{1 + \lambda}{2 + \lambda}$, and defined the state
\bb
\omega_\mu^{AA'} \coloneqq \left( \bigotimes_{j=1}^n  U_{\varphi_\mu}^{A_j A'_j} \right) \left(\ketbra{0}_A\otimes \left(\tau_{1/\mu}^{\otimes n}\right)_{A'} \right) \left( \bigotimes_{j=1}^n  U_{\varphi_\mu}^{A_j A'_j} \right)^\dag .
\ee
Now, from~\eqref{symplectic_proof_eq1} we continue by writing
\bb
&F_{\cls}(\mathcal{E}_\lambda, U_S) \\
&\leq \underset{\mu\to 0}{\mathrm{liminf}}\ \lambda_{\max} \left( \big(Q_\mu^{AA'}\big)^{\Gamma_{\!A_{\!J^c}}\Gamma_{\!A'_{\!J}}} \right) \\
&\eqt{11} \left( \frac{1\!+\!\lambda}{2\!+\!\lambda} \right)^n\! \underset{\mu\to 0}{\mathrm{liminf}}\ \frac{1}{\mu^n}\, \lambda_{\max}\left( \left( U_S^{A'} \omega_\mu^{AA'} {U_S^{A'}}^\dag \right)^{\Gamma_{\!A_{\!J^c}}\Gamma_{\!A'_{\!J}}} \right)
\label{summarised_bound}
\ee
where 11~comes from~\eqref{symplectic_proof_eq2}.

We now need to evaluate the rightmost side of~\eqref{summarised_bound}. To do so, it is instrumental to observe that the operator whose maximal eigenvalue we need to calculate is the partial transpose of a centred Gaussian state, and thus it is itself a centred Gaussian operator with trace one. Denoting with $W$ its quantum covariance matrix, we can thus write
\bb
\left( U_S^{A'} \omega_\mu^{AA'} {U_S^{A'}}^\dag \right)^{\Gamma_{\!A_{\!J^c}}\Gamma_{\!A'_{\!J}}} &= \mathrm{G}_1\left[W_{J,\mu},0\right] ,
\ee
where the right-hand side is defined by~\eqref{Gaussian_operator}. We can compute the operator norm of the above operator thanks to~\eqref{operator_norm_QCM}, provided that we first compute $W_{J,\mu}$. Start by observing that $\omega_\mu^{AA'}$ is a centred Gaussian state, too, for all $\mu>0$. We will denote its covariance matrix by $V\big[ \omega_\mu^{AA'}\big]$. We have that
\bb
&V\big[\omega_\mu^{AA'}\big] \\
&\eqt{12} \begin{pmatrix} (\cos \varphi_\mu)\id & (\sin\varphi_\mu) \id \\ - (\sin\varphi_\mu) \id & (\cos\varphi_\mu) \id \end{pmatrix} \begin{pmatrix} \id & 0 \\ 0 & \left(\frac{2}{\mu} +1 \right) \id \end{pmatrix} \\
&\quad \times \begin{pmatrix} (\cos \varphi_\mu)\id & -(\sin\varphi_\mu) \id \\ (\sin\varphi_\mu) \id & (\cos\varphi_\mu) \id \end{pmatrix} \\
&= \frac{1}{(2\!+\!\lambda_\mu)\mu} \begin{pmatrix} \left(2(1\!+\!\lambda_\mu) + \mu(2\!+\!\lambda_\mu)\right)\id & 2\sqrt{1\!+\!\lambda_\mu}\, \id \\ 2\sqrt{1\!+\!\lambda_\mu}\, \id & \left(2+(2\!+\!\lambda_\mu)\mu\right)\id \end{pmatrix}\! ,
\label{V_omega_mu}
\ee
where~12 is because
\bb
V\left[ U_\theta^{\phantom{\dag}}\, \rho\, U_\theta^\dag \right] = \begin{pmatrix} \cos\theta & \sin\theta \\ -\sin\theta & \cos\theta \end{pmatrix} V[\rho] \begin{pmatrix} \cos\theta & -\sin\theta \\ \sin\theta & \cos\theta \end{pmatrix}
\ee
holds in general for all two-mode states $\rho$ and all angles $\theta\in \R$, with $U_\theta$ being as usual the beam splitter unitary~\eqref{BS}. This follows immediately by combining~\eqref{BS_as_symplectic} and~\eqref{QCM_transformation_symplectic}. Hence,
\bb
&W_{J,\mu} \\
&\ = V\left[\left( U_S^{A'} \omega_\mu^{AA'} {U_S^{A'}}^\dag \right)^{\Gamma_{\!A_{\!J^c}} \Gamma_{\!A'_{\!J}}}\right] \\
&\ \eqt{13} \begin{pmatrix} \Sigma_{J^c}^A\!\! & \\ & \!\Sigma_{J}^{A'}\! \end{pmatrix} \begin{pmatrix} \id_{A}\! & 0\! \\ \!0 & \!S_{A'} \end{pmatrix} V\big[ \omega_\mu^{AA'}\big] \begin{pmatrix} \id_{A} & 0 \\ 0 & S_{A'}^{\intercal} \end{pmatrix} \begin{pmatrix} \Sigma_{J^c}^A & \\ & \Sigma_{J}^{A'} \end{pmatrix} \\
&\ = \begin{pmatrix} \Sigma_{J^c}^A\!\! & \\ & \!\Sigma_{J}^{A'} S_{A'}\! \end{pmatrix} V\big[ \omega_\mu^{AA'}\big]  \begin{pmatrix} \Sigma_{J^c}^A & \\ & S_{A'}^{\intercal} \Sigma_{J}^{A'} \end{pmatrix} ,
\label{W}
\ee
where in~13 we used~\eqref{QCM_transformation_symplectic} and~\eqref{partial_transpose_Gaussian}, introducing the matrices
\bb
\Sigma_{J^c}^A &= \bigoplus_{j\in J} \id_{A_j} \oplus \bigoplus_{k\in J^c} \begin{pmatrix} 1 & \\ & -1 \end{pmatrix}_{A_{k}}\, ,\\
\Sigma_J^{A'} &= \bigoplus_{j\in J} \begin{pmatrix} 1 & \\ & -1 \end{pmatrix}_{\!A'_j} \oplus \bigoplus_{k\in J^c} \id_{\!A'_{k}} \, ,
\ee
which represent the partial transpositions with respect to $A_{J^c}$ and $A'_J$. 

Now, including the factor $\mu^{-n}$ coming from~\eqref{summarised_bound}, let us write
\begin{align}
&\frac{1}{\mu^n}\, \lambda_{\max} \left(\left( U_S^{A'} \omega_\mu^{AA'} {U_S^{A'}}^\dag \right)^{\Gamma_{\!A_{\!J^c}}\Gamma_{\!A'_{\!J}}} \right) \nonumber \\
&\quad = \frac{1}{\mu^n}\, \lambda_{\max}\left( \mathrm{G}_1\!\left[ W_{J,\mu},\,0\right] \right) \nonumber \\
&\quad \eqt{14} \frac{1}{\mu^n} \prod_{\ell=1}^{2n} \frac{2}{1+\nu_\ell\!\left(W_{J,\mu}\right)} \nonumber \\
&\quad \eqt{15} \frac{2^{2n}}{\mu^n} \sqrt{\prod_{\ell=1}^{4n} \frac{1}{1+\left|\spec_\ell\!\left(W_{J,\mu}\Omega_{AA'}\right) \right|}} \label{operator_norm_computation} \\
&\quad \eqt{16} \frac{2^{2n}}{\mu^n} \frac{1}{\sqrt{\det (W_{J,\mu} \Omega_{AA'})}} \sqrt{\prod_{\ell=1}^{4n} \frac{1}{1+\left|\spec_\ell\!\left(\Omega_{AA'}W_{J,\mu}^{-1}\right)\right|}} \nonumber \\
&\quad \eqt{17} \frac{2^{2n}}{\mu^n} \frac{1}{\left( \frac{2}{\mu}+1 \right)^n} \sqrt{\prod_{\ell=1}^{4n} \frac{1}{1+\left|\spec_\ell\!\left(\Omega_{AA'}W_{J,\mu}^{-1}\right)\right|}}\, . \nonumber
\end{align}
The above steps are justified as follows: 14~comes from~\eqref{operator_norm_QCM}; in~15 we leveraged~\eqref{symplectic_spectrum_eigenvalues} and introduced the notation $\spec_\ell(M)$ for the $\ell^{\text{th}}$ eigenvalue of a matrix $M$ (the order is not important here); in~16 we observed that the inverse of the eigenvalues of an invertible matrix coincide with the eigenvalues of the inverse matrix; finally, 17~is a consequence of the fact that
\bb
\det (W_{J,\mu} \Omega_{AA'}) = \det W_{J,\mu} = \left( \frac{2}{\mu}+1\right)^{2n} ,
\ee
as becomes clear by looking at~\eqref{V_omega_mu} and~\eqref{W} and recalling that $\det S=1$ as $S$ is symplectic.

Plugging~\eqref{operator_norm_computation} into~\eqref{summarised_bound} yields
\bb
F_{\cls}(\mathcal{E}_\lambda, U_S) \leq\ & \left( \frac{2(1\!+\!\lambda)}{2\!+\!\lambda} \right)^n \\
& \times \underset{\mu\to 0}{\mathrm{liminf}}\ \sqrt{\prod_{\ell=1}^{4n} \frac{1}{1+\left|\spec_\ell\!\left(\Omega_{AA'}W_{J,\mu}^{-1}\right)\right|}}\, .
\label{symplectic_proof_eq3}
\ee
We will next show how to evaluate the limit in~\eqref{symplectic_proof_eq3}. Proceeding formally from~\eqref{V_omega_mu}, one can see that
\bb
&V\!\left[\omega_\mu^{AA'}\right]^{-1} \\
&= \begin{pmatrix} (\cos \varphi_\mu)\id & (\sin\varphi_\mu) \id \\ - (\sin\varphi_\mu) \id & (\cos\varphi_\mu) \id \end{pmatrix} \begin{pmatrix} \id & 0 \\ 0 & \frac{\mu}{2+\mu}\, \id \end{pmatrix} \\
&\qquad \times \begin{pmatrix} (\cos \varphi_\mu)\id & -(\sin\varphi_\mu) \id \\ (\sin\varphi_\mu) \id & (\cos\varphi_\mu) \id \end{pmatrix}\, .
\ee
Thus,
\bb
&\lim_{\mu\to 0} V\!\left[\omega_\mu^{AA'}\right]^{-1}\! \\
&\qquad = \begin{pmatrix} (\cos \varphi_0)\id & (\sin\varphi_0) \id \\ - (\sin\varphi_0) \id & (\cos\varphi_0) \id \end{pmatrix} \begin{pmatrix} \id & 0 \\ 0 & 0 \end{pmatrix} \\
&\qquad \qquad \times \begin{pmatrix} (\cos \varphi_0)\id & -(\sin\varphi_0) \id \\ (\sin\varphi_0) \id & (\cos\varphi_0) \id \end{pmatrix} \\
&\qquad = \frac{1}{2+\lambda} \begin{pmatrix} \id & -\sqrt{1+\lambda}\, \id \\[0.4ex] -\sqrt{1+\lambda}\, \id & (1+\lambda)\, \id \end{pmatrix} ,
\label{limit_V_mu}
\ee
where in the last line we remembered that $\varphi_\mu\coloneqq \arccos\frac{1}{\sqrt{2+\lambda_\mu}}$. Therefore, from~\eqref{W} we see that
\bb
&W_{J,0}^{-1} \\
&\coloneqq \lim_{\mu\to 0} W_{J,\mu}^{-1} \\
&\eqt{18} \begin{pmatrix} \Sigma_{J^c}^A & \\ & \Sigma_{J}^{A'} S_{A'}^{-\intercal} \end{pmatrix} V\!\left[ \omega_\mu^{AA'}\right]^{-1} \begin{pmatrix} \Sigma_{J^c}^A & \\ & S_{A'}^{-1} \Sigma_{J}^{A'} \end{pmatrix} \\
&\eqt{19} \frac{1}{2\!+\!\lambda} \begin{pmatrix} \Sigma_{J^c}^A & \\ & \!\!\!\Sigma_{J}^{A'}\! S_{A'}^{-\intercal}\!\! \end{pmatrix} \begin{pmatrix} \id & -\sqrt{1\!+\!\lambda}\, \id \\[0.5ex] -\sqrt{1\!+\!\lambda}\, \id & (1\!+\!\lambda)\, \id \end{pmatrix} \begin{pmatrix} \Sigma_{J^c}^A & \\ & \!\!\!S_{A'}^{-1} \Sigma_{J}^{A'}\!\! \end{pmatrix} \\
&= \frac{1}{2\!+\!\lambda} \begin{pmatrix} \Sigma_{J^c}^A & \\ & \!\!\!\Sigma_{J}^{A'}\! S_{A'}^{-\intercal}\!\! \end{pmatrix} \begin{pmatrix} \id \\[0.5ex] -\sqrt{1+\lambda}\, \id \end{pmatrix} \\
&\hspace{4ex} \times \begin{pmatrix} \id & -\sqrt{1+\lambda}\, \id \end{pmatrix} \begin{pmatrix} \Sigma_{J^c}^A & \\ & \!\!\!S_{A'}^{-1} \Sigma_{J}^{A'}\!\! \end{pmatrix} \\
&\overset{20}{\eqqcolon} \frac{1}{2\!+\!\lambda}\, L_{J}^{\phantom{\intercal\!}} L_{J}^\intercal . 
\label{W_0}
\ee
Here, in~18 we recalled~\eqref{W}, in~19 we used~\eqref{limit_V_mu}, and in~20 we defined
\bb
L_{J} \coloneqq \begin{pmatrix} \Sigma_{J^c}^A & \\ & \!\!\!\Sigma_{J}^{A'}\! S_{A'}^{-\intercal}\!\! \end{pmatrix} \begin{pmatrix} \id \\[0.5ex] -\sqrt{1+\lambda}\, \id \end{pmatrix} = \begin{pmatrix} \Sigma_{J^c}^A \\[0.7ex] -\sqrt{1+\lambda}\, \Sigma_{J}^{A'}\! S_{A'}^{-\intercal} \end{pmatrix} .
\label{L_J}
\ee
Thanks to~\eqref{W_0} and using the continuity of the spectrum~\cite[Chapter~II, \S~5.1]{KATO} we can take the limit $\mu\to 0$ explicitly in~\eqref{symplectic_proof_eq3}, obtaining that
\bb
F_{\cls}(\mathcal{E}_\lambda\!, U_S) \!\leq\! \left( \frac{2(1\!+\!\lambda)}{2\!+\!\lambda} \right)^n\!\! \sqrt{\prod_{\ell=1}^{4n} \frac{1}{1\!+\!\left|\spec_\ell\!\left(\Omega_{AA'} W_{\!J,0}^{-1}\right)\right|}} .
\label{summarised_bound_2}
\ee
Now,
\bb
&\spec\left( \Omega_{AA'} W^{-1}_{J,0} \right) \\
&\quad \eqt{21} \spec \left( \frac{1}{2+\lambda}\, \Omega_{AA'} L_J^{\phantom{\intercal\!}} L_J^\intercal \right) \\
&\quad \eqt{22} \spec \left( \frac{1}{2+\lambda}\, L_J^\intercal \Omega_{AA'} L_J^{\phantom{\intercal\!}} \right) \cup \{ \underbrace{0,\ldots, 0}_{\text{$2n$ times}}\} \\
&\quad \eqt{23} \spec \left( \frac{1}{2+\lambda} \left( \Omega_J - (1+\lambda)\, S^{-1}\Omega_J S^{-\intercal} \right) \right) \cup \{ \underbrace{0,\ldots, 0}_{\text{$2n$ times}}\} \\
&\quad \eqt{24} \frac{1}{2+\lambda} \spec\left(iZ_{\lambda,S,J}\right) \cup \{ \underbrace{0,\ldots, 0}_{\text{$2n$ times}}\}\, .
\label{W_0_symp_spectrum}
\ee
Here, in~21 we plugged in the expression~\eqref{W_0} for $W_{J,0}^{-1}$. In~22, instead, we recalled that for two matrices $M,N$ of sizes $h\times k$ and $k\times h$, respectively, if $h\geq k$ then
\bb
\spec(MN) = \spec(NM) \cup \{ \underbrace{0,\ldots,0}_{\text{$h-k$ times}} \}\, .
\ee
In~23 we calculated
\bb
&L_J^\intercal \Omega_{AA'} L_J \\
&\ = \begin{pmatrix} \Sigma_{J^c}^A & -\sqrt{1+\lambda}\, S_{A'}^{-1} \Sigma_{J}^{A'}\! \end{pmatrix} \begin{pmatrix} \Omega_A & \\ & \!\Omega_{A'}\!\! \end{pmatrix} \begin{pmatrix} \Sigma_{J^c}^A \\[0.7ex] -\sqrt{1+\lambda}\, \Sigma_{J}^{A'}\! S_{A'}^{-\intercal} \end{pmatrix} \\
&\ = \Sigma_{J^c} \Omega \Sigma_{J^c} + (1+\lambda) S^{-1} \Sigma_{J} \Omega \Sigma_{J} S^{-\intercal} \\
&\ = \Omega_J - (1+\lambda) S^{-1} \Omega_J S^{-\intercal} ,
\ee
where in the last line we observed, according to~\eqref{z_i_Omega_J}, that $\Sigma_{J^c} \Omega \Sigma_{J^c} = \Omega_J$ and analogously $\Sigma_{J} \Omega \Sigma_{J} = - \Omega_J$. Finally, in~24 we introduced the $2n\times 2n$ Hermitian matrix
\bb
Z_{\lambda,S,J} \coloneqq (1+\lambda) S^{-1} i\Omega_J S^{-\intercal} - i\Omega_J\, .
\label{Z_lambda_S_J}
\ee
From the above calculation, we see that the non-zero eigenvalues of $\Omega_{AA'} W_{J,0}^{-1}$ coincide in modulus with those of $Z_{\lambda,S,J}$, which we denoted in the statement of Theorem~\ref{general_symplectic_bound_thm} with $z_\ell(\lambda,S,J)$. Continuing from~\eqref{summarised_bound_2}, we thus obtain that
\begin{align}
F_{\cls}(\mathcal{E}_\lambda, U_S) &\leq \left( \frac{2(1\!+\!\lambda)}{2\!+\!\lambda} \right)^n \sqrt{\prod_{\ell=1}^{2n} \frac{1}{1+\frac{1}{2+\lambda}\left|z_\ell (\lambda, S,J)\right|}} \nonumber \\
&= \left( \frac{2(1\!+\!\lambda)}{2\!+\!\lambda} \right)^n \sqrt{\prod_{i=1}^{2n} \frac{2\!+\!\lambda}{2\!+\!\lambda\!+\!\left|z_\ell (\lambda, S,J)\right|}} \\
&= \frac{2^n(1\!+\!\lambda)^n}{\prod_{\ell=1}^{2n} \sqrt{2+\lambda+\left|z_\ell(\lambda,S,J)\right|}}\, . \nonumber
\end{align}
To conclude the proof, it suffices to minimise over $J\subseteq [n]$. Due to the fact that exchanging $J$ with $J^c$ amounts to the substitution $Z_{\lambda,S,J} \mapsto Z_{\lambda,S,J^c} = - Z_{\lambda,S,J}$, and the above upper bound only depends on the modulus of the eigenvalues, we see that restricting to subsets $J$ with $|J|\leq \floor{n/2}$ suffices. The same conclusion can be reached by appealing to Remark~\ref{nontrivial_J_rem}.

\section{Proof of Theorem~\ref{recap_thm}} \label{subsec_proof_recap_thm}

Most of the heavy lifting needed to prove Theorem~\ref{recap_thm} has been already done while establishing Theorem~\ref{general_symplectic_bound_thm}. Before we do the rest and provide a complete proof of Theorem~\ref{recap_thm}, it is worth to state a technical lemma that will play an important role in the argument. In it, we formalise a very simple version of the rotating-wave approximation.

\begin{lemma} \label{rotating_wave_lemma}
Let $X$ be a Hermitian matrix, and let $Z$ be any matrix of the same size as $X$. Then
\begin{equation}
\lim_{\kappa\to\infty} e^{-i\kappa X} e^{i(\kappa X + Z)} = e^{i\mathcal{P}_X(Z)}\, ,
\label{rotating_wave}
\end{equation}
where $\mathcal{P}_X(Z) \coloneqq \sum_j P_j Z P_j$ is the `$X$-pinched' operator $Z$, defined via the eigenprojectors $P_j$ of $X$.
\end{lemma}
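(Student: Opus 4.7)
The plan is to pass to the interaction picture generated by $\kappa X$. Set $W(t) := e^{-i\kappa X t} e^{i(\kappa X + Z) t}$, so the target identity becomes $\lim_{\kappa \to \infty} W(1) = e^{i \mathcal{P}_X(Z)}$. Differentiating and using that $X$ commutes with $e^{-i\kappa X t}$, one obtains the ODE $\dot W(t) = i Z_\kappa(t) W(t)$, $W(0) = I$, where $Z_\kappa(t) := e^{-i\kappa X t} Z e^{i\kappa X t}$. Diagonalising $X = \sum_j \lambda_j P_j$ with distinct $\lambda_j$ gives
\begin{equation*}
Z_\kappa(t) \;=\; \mathcal{P}_X(Z) + \sum_{j \neq k} e^{i\kappa(\lambda_k - \lambda_j) t}\, P_j Z P_k,
\end{equation*}
a time-independent ``secular'' part plus genuinely oscillating ones.

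The core of the argument is to show that the oscillating part contributes only an $O(1/\kappa)$ correction to the time-ordered exponential. I would write down the Dyson series
\begin{equation*}
W(1) \;=\; \sum_{n \geq 0} i^n \int_{0 \leq t_n \leq \cdots \leq t_1 \leq 1} Z_\kappa(t_1) \cdots Z_\kappa(t_n)\, dt_1 \cdots dt_n,
\end{equation*}
which is absolutely convergent and majorised by $e^{\|Z\|}$ uniformly in $\kappa$, since $\|Z_\kappa(t)\| = \|Z\|$. Expanding each $Z_\kappa(t_\ell)$ spectrally and using orthogonality $P_j P_k = \delta_{jk} P_j$, the surviving $n$-th order contributions have the form
\begin{equation*}
i^n\, P_{j_1} Z P_{j_2} Z \cdots Z P_{j_{n+1}} \int_{\Delta_n} \exp\Bigl(i\kappa \sum_{\ell=1}^n (\lambda_{j_\ell} - \lambda_{j_{\ell+1}})\, t_\ell\Bigr) dt_1 \cdots dt_n.
\end{equation*}
Iterated integration by parts over the simplex $\Delta_n$ (integrating out $t_n$ first, then $t_{n-1}$, etc.) shows that any such integral is $O(1/\kappa)$ whenever at least one of the relevant phases is non-zero. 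The only chains surviving in the $\kappa \to \infty$ limit are therefore the ``fully resonant'' ones with $j_1 = \cdots = j_{n+1}$; summing these reproduces precisely the $n$-th order term in the expansion of $e^{i\mathcal{P}_X(Z)}$, since $\sum_j P_j Z P_j Z \cdots Z P_j = \mathcal{P}_X(Z)^n$.

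The final step is a standard $\varepsilon/3$-style truncation argument. For any $N$, truncating both $W(1)$ and $e^{i\mathcal{P}_X(Z)}$ at order $N$ introduces a tail error bounded by $\sum_{n>N} \|Z\|^n/n!$, uniformly in $\kappa$; the finitely many terms with $n \leq N$ converge to their resonant limits at rate $O(1/\kappa)$. Sending $\kappa \to \infty$ first and $N \to \infty$ afterwards delivers the claim.

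I expect the main obstacle to be the careful bookkeeping of the non-resonant simplex integrals: in particular, when some partial sums of $(\lambda_{j_\ell} - \lambda_{j_{\ell+1}})$ happen to vanish even though not all individual summands do, the naive recursion produces boundary contributions whose cancellation must be tracked. A clean recursive bound that at every step extracts either a factor $1/\kappa$ or absorbs the step into an $O(1)$ constant (and whose degeneracies are handled uniformly) is the crux of the rigorous estimate. Everything else reduces to absolutely convergent manipulations of the Dyson series.
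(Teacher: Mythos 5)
Your proof is correct, but it follows a genuinely different route from the paper's. The paper passes directly through the spectral decomposition of the perturbed matrix $X + \epsilon Z$ (with $\epsilon = 1/\kappa$), citing Kato's perturbation theory to write $X + \epsilon Z = \sum_{j,k} \lambda_{jk}(\epsilon) P_{jk}(\epsilon)$ where $\lambda_{jk}(\epsilon) = \lambda_j + \epsilon z_{jk} + o(\epsilon)$ and $P_{jk}(\epsilon) \to P_{jk}$, with $z_{jk}, P_{jk}$ the eigendata of the block $P_j Z P_j$; the product $e^{-i\kappa X} e^{i(\kappa X + Z)}$ then converges termwise to $\sum_{j,k} e^{i z_{jk}} P_{jk} = e^{i\mathcal{P}_X(Z)}$ in just a few lines. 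Your approach instead goes through the interaction picture and the Dyson series, separating $Z_\kappa(t)$ into a secular part $\mathcal{P}_X(Z)$ and oscillatory off-block terms, and shows that every chain containing a non-resonant factor is suppressed. The two proofs trade off differently: the paper's is shorter but leans on external perturbation-theoretic machinery; yours is longer but self-contained and is essentially the rigorous derivation of the rotating-wave approximation a physicist would expect, making the \emph{origin} of the pinching transparent.

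One remark on your closing worry: the feared degeneracies do not in fact cause trouble. After the affine change of variables $s_m := t_m - t_{m+1}$ (with $t_{n+1}:=0$), the simplex $\Delta_n$ becomes the standard simplex $\{s \ge 0,\ \sum_m s_m \le 1\}$ and the phase becomes $\sum_m \nu_m s_m$ with $\nu_m = \lambda_{j_{m+1}} - \lambda_{j_1}$. For a non-resonant chain at least one $\nu_m \ne 0$; integrating out that single variable $s_m$ explicitly (its coefficient is fixed, the remaining integrand is independent of $s_m$, and the domain is a finite interval depending only on the other variables) already yields the uniform bound $O(1/(\kappa |\nu_m|))$, with no cancellation bookkeeping required. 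In any case, for your $\varepsilon/3$ truncation argument it suffices that each non-resonant simplex integral tends to zero, which is guaranteed by Riemann--Lebesgue alone; the $O(1/\kappa)$ rate is a bonus rather than a necessity.
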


\begin{proof}
Let the spectral decomposition of $X$ be $X=\sum_j \lambda_j P_j$, where each $\lambda_j$ is a distinct eigenvalue of $X$, so that $\lambda_j\neq \lambda_{j'}$ for $j\neq j'$. From the discussion in~\cite[Chapter~II, \S~2.1~\S~2.3]{KATO}, we see that for sufficiently small $\epsilon$ the spectral decomposition of $X+\epsilon Z$ is of the form $X+\epsilon Z = \sum_j \sum_k \lambda_{jk}(\epsilon) P_{jk}(\epsilon)$. Here, $\lambda_{jk}(\epsilon) = \lambda_j + \epsilon z_{jk}+ o(\epsilon)$ and $P_{jk}(\epsilon) = P_{jk} + o(1)$ as $\epsilon\to 0$, where $P_{jk}$ (respectively, $z_{jk}$) are the $k^\text{th}$ eigenprojector (respectively, the $k^\text{th}$ eigenvalue) of the $j^\text{th}$ block of $\mathcal{P}_X(Z)$ (cf.~\cite[Chapter~II, Eq.~(2.38)--(2.40)]{KATO}). Note that $\sum_k P_{jk} = P_j + o(1)$ (cf.~\cite[Chapter~II, Eq.~(2.39)]{KATO}), so that $P_{j'} P_{jk} = \delta_{j,j'} P_{jk}$. Putting all together,
\bb
&e^{-i\kappa X} e^{i\kappa (X + Z/\kappa)} \\
&\quad = \left( \sumno_{j'} e^{-i\kappa \lambda_{j'}} P_{j'} \right) \left( \sumno_j \sumno_k e^{i\kappa \lambda_{jk}(1/\kappa)} P_{jk}(1/\kappa) \right) \\
&\quad = \sum_{j,j'} \sum_k e^{-i\kappa\lambda_{j'} + i \kappa \lambda_{jk}(1/\kappa)} P_{j'} P_{jk}(1/\kappa) \\
&\quad = \sum_{j,j'} \sum_k e^{i\kappa (\lambda_j - \lambda_{j'}) + i z_{jk} + o(1)} \left( \delta_{j,j'} P_{jk} + o(1) \right) \\
&\quad \tends{}{\kappa\to\infty} \sum_j \sum_k e^{i z_{jk}} P_{jk} \\
&\quad = e^{i \mathcal{P}_X(Z)} ,
\ee
which completes the proof.
\end{proof}

We now move on to the proof of Theorem~\ref{recap_thm}. Since the effective Hamiltonian $H_{\mathrm{eff}}$ is given by $H_{\mathrm{eff}} = \frac{\hbar}{2} \bar{r}^\intercal (\omega \id_{2n} + \widetilde{g}) \bar{r}$ (see~\eqref{effective_Hamiltonian}), due to~\eqref{quadratic<-->symplectic} we have that
\bb
e^{-\frac{it}{\hbar}\, H_{\mathrm{eff}}} = U_S\, ,\qquad S = e^{\Omega(\omega \id_{2n} + \widetilde{g}) t} ,
\ee
where $\Omega$ is as usual the standard symplectic form defined in~\eqref{CCR}. Since we are estimating the LOCC simulation fidelity, thanks to Lemma~\ref{local_unitaries_lemma} we can freely apply a local unitary. We thus make the substitution
\bb
e^{-\frac{it}{\hbar} H_{\mathrm{eff}}} &\mapsto \left( \bigotimes\nolimits_j e^{i \frac{\hbar \omega}{2} \left(\bar{x}_j^2 + \bar{p}_j^2 \right)}\right) e^{-\frac{it}{\hbar}\, H_{\mathrm{eff}}}\, .
\label{recap_thm_proof_eq1}
\ee
Since
\bb
\bigotimes\nolimits_j e^{i \frac{\hbar \omega}{2} \left(\bar{x}_j^2 + \bar{p}_j^2 \right)} = e^{i \frac{\hbar \omega}{2} \sum_j \left(\bar{x}_j^2 + \bar{p}_j^2 \right)} = e^{i \frac{\omega t}{2} \bar{r}^\intercal \bar{r}}
\ee
and the correspondence $S\mapsto U_S$ is a group isomorphism (cf.\ discussion below~\eqref{quadratic<-->symplectic}), at the level of symplectic matrices the substitution~\eqref{recap_thm_proof_eq1} amounts to
\bb
S \mapsto e^{- \omega t \Omega} e^{\Omega(\omega \id_{2n} + \widetilde{g}) t}\, .
\ee
At this point, since the right-hand side is a symplectic matrix we could directly apply Theorem~\ref{general_symplectic_bound_thm}. Although this is certainly possible, here we want to leverage assumption~(III') above to simplify the resulting expression. To do so, we observe that because of assumption~(III'), and in particular of~\eqref{III'}, the oscillation frequency $\omega$ is much larger than the entries of $\widetilde{g}$. Taking formally the limit $\omega\to\infty$ amounts to making the rotating-wave approximation. We can do so rigorously by appealing to Lemma~\ref{rotating_wave_lemma}, which entails that the effective dynamics is well approximated by the symplectic matrix
\bb
S_{\mathrm{eff}} \coloneqq \lim_{\omega\to\infty} e^{- \omega t \Omega} e^{\Omega(\omega \id_{2n} + \widetilde{g}) t} = e^{\mathcal{P}_{i\Omega}(\Omega\widetilde{g}t)}\, ,
\ee
where
\bb
\mathcal{P}_{i\Omega}(Z) \coloneqq \frac12 \left(P_+ Z P_+ + P_- Z P_-\right) = \frac12 \left( Z + \Omega Z \Omega^\intercal\right)
\ee
is the $i\Omega$-pinching operation, and $P_\pm$ are the spectral projections corresponding to the eigenvalues $\pm 1$ of $i\Omega$, so that $i\Omega = P_+ - P_-$. It is easy to verify that
\bb
S_{\mathrm{eff}} = \mathcal{P}_{i\Omega}\big(\Omega \widetilde{g}\big) = \frac12 \left( \Omega \widetilde{g} + \widetilde{g} \Omega \right) = \frac12 \begin{pmatrix} 0 & g \\ -g & 0 \end{pmatrix} .
\ee
Therefore,
\bb
e^{\mathcal{P}_{i\Omega}(\Omega\widetilde{g}t)} = \begin{pmatrix} \cos\left(gt/2\right) & \sin\left(gt/2\right) \\ -\sin\left(gt/2\right) & \cos\left(gt/2\right) \end{pmatrix}
\ee
is indeed an orthogonal symplectic matrix.

Note that the right-hand side of~\eqref{general_symplectic_bound} is a continuous function of $S$, so that we can take the limit $\omega\to \infty$ on $S$ first and then compute the upper bound on $F_{\cls}$ via~\eqref{general_symplectic_bound}. Therefore, in the limit where assumptions~(I),~(II), and~(III') are met, and therefore in particular the above rotating-wave approximation can be made, we can compute an upper bound to $F_{\cls} \left( \EE_\lambda, e^{-\frac{it}{\hbar} H_{\mathrm{tot}}} \right)$ by evaluating the right-hand side of~\eqref{general_symplectic_bound} for $S=S_{\mathrm{eff}}$ given by~\eqref{S_eff}.

To perform this calculation effectively, it is useful to introduce the $2n\times 2n$ complex matrix
\bb
V_0 \coloneqq \frac{1}{\sqrt2} \begin{pmatrix} \id_n & \id_n \\ i\id_n & -i\id_n \end{pmatrix} ,
\label{V_0}
\ee
which satisfies
\bb
\Omega_J = \begin{pmatrix} 0 & \Xi_J \\ -\Xi_J & 0 \end{pmatrix} = V_0 \left(i\Xi_J \oplus (-i\Xi_J) \right) V_0^\dag
\label{Omega_J_diagonalised}
\ee
as well as
\bb
S_{\mathrm{eff}} &= \begin{pmatrix} \cos(g t/2) & \sin(gt/2) \\ -\sin(gt/2) & \cos(gt/2) \end{pmatrix} = V_0 \left( e^{i\frac{gt}{2}}\!\! \oplus e^{- i\frac{gt}{2}} \right) V_0^\dag ,
\label{S_eff_diagonalised}
\ee
where we use the shorthand notation $X \oplus Y \coloneqq \lsmatrix X & 0 \\ 0 & Y \rsmatrix$. Hence,
\begin{align}
&V_0^\dag \left((1+\lambda) S_{\mathrm{eff}}^{-1} i\Omega_J S_{\mathrm{eff}}^{-\intercal} - i\Omega_J\right) V_0 \nonumber \\
&\ \eqt{1} V_0^\dag \left((1+\lambda) S_{\mathrm{eff}}^{-1} i\Omega_J S_{\mathrm{eff}} - i\Omega_J\right) V_0 \nonumber \\
&\ = (1+\lambda) \left(V_0^\dag S_{\mathrm{eff}} V_0\right)^{-1}\! \left( V_0^\dag i\Omega_J V_0\right) V_0^\dag S_{\mathrm{eff}} V_0 - V_0^\dag i\Omega_J V_0 \nonumber \\
&\ \eqt{2} \left( -(1+\lambda) e^{-i\frac{gt}{2}} \Xi_J e^{i\frac{gt}{2}} + \Xi_J \right) \\
&\qquad \oplus \left( (1+\lambda) e^{i\frac{gt}{2}} \Xi_J e^{-i\frac{gt}{2}} - \Xi_J \right) \nonumber \\
&\ \eqt{3} \left(-\left( (1+\lambda) e^{i\frac{gt}{2}} \Xi_J e^{-i\frac{gt}{2}} - \Xi_J \right)^\intercal\right) \nonumber \\
&\qquad \oplus \left( (1+\lambda) e^{i\frac{gt}{2}} \Xi_J e^{-i\frac{gt}{2}} - \Xi_J \right) . \nonumber
\end{align}
Here, in~1 we used the fact that $S_{\mathrm{eff}}$ is an orthogonal matrix, 2~follows from~\eqref{Omega_J_diagonalised} and~\eqref{S_eff_diagonalised}, and in~3 we noticed that $g$ is real symmetric.
Remembering that $\{w_\ell(\lambda, gt, J) \}_{\ell=1,\ldots, n}$ is the spectrum of $(1+\lambda) e^{i\frac{gt}{2}} \Xi_J e^{-i\frac{gt}{2}} - \Xi_J$, the above calculation, together with the observation that a matrix and its transpose have the same spectrum, entails that
\bb
&\spec\left((1+\lambda) S_{\mathrm{eff}}^{-1} i\Omega_J S_{\mathrm{eff}}^{-\intercal} - i\Omega_J\right) \\
&\qquad = \{w_\ell(\lambda, gt, J)\}_{\ell=1,\ldots, n} \cup \{ - w_\ell(\lambda, gt, J)\}_{\ell=1,\ldots, n}\, .
\ee
Computing the right-hand side of~\eqref{general_symplectic_bound} using this information yields immediately the rightmost side of~\eqref{f_simplified}, thereby concluding the proof.

\section{Proof of Theorem~\ref{recap_small_times_thm}} \label{subsec_proof_recap_small_times}

We will now argue that if in addition to assumptions~(I),~(II), and~(III'), also assumption~(IV) is met, then we can expand the right-hand side of~\eqref{general_symplectic_bound} according to~\eqref{F_cls_expansion}, with $\Delta$ scaling as in~\eqref{Delta_big_O_estimate}, with explicit estimates reported in~\eqref{Delta_UB}--\eqref{Delta_UB_universal}. This will prove the first claim in Theorem~\ref{recap_small_times_thm}. Let us start by noting that the matrix in~\eqref{rotated_Xi_J} can be expanded as
\bb
(1+\lambda)\, e^{\frac{it}{2} g}\, \Xi_J\, e^{- \frac{it}{2} g} - \Xi_J &= \lambda \Xi_J + \, \frac{it}{2}\, [g,\Xi_J] + T\, ,
\label{expansion_rotated_Xi_J}
\ee
where $T$ is formally defined by~\eqref{expansion_rotated_Xi_J}, and contains terms of order $\lambda \frac{Gmt}{d^3\omega}$ or $\left(\frac{Gmt}{d^3\omega}\right)^2$. Although we do not indicate this explicitly, $T$ depends on $\lambda$, $gt$, and $J$. To estimate its magnitude, It is useful to introduce the Hilbert--Schmidt norm, defined for an $N\times N$ matrix $X$ by
\bb
\|X\|_2\coloneqq \sqrt{\Tr X^\dag X} = \sqrt{\sumno_{i,j=1}^N |X_{ij}|^2}\, .
\label{Hilbert_Schmidt_norm}
\ee
Let us now present a simple auxiliary lemma.

\begin{lemma} \label{adjoint_perturbation_lemma}
For all matrices $X,Y$, it holds that
\begin{align}
\left\| e^{X} Y e^{-X}\! - Y \right\|_2 &\leq \left(e^{2\|X\|_\infty}\! - 1 \right) \|Y\|_2\, , \label{adjoint_perturbation_1st} \\
\left\| e^{X} Y e^{-X}\! - Y - [X,Y] \right\|_2 &\leq \left(e^{2\|X\|_\infty}\! - 1 - 2\|X\|_\infty \right) \|Y\|_2\, . \label{adjoint_perturbation_2nd}
\end{align}
\end{lemma}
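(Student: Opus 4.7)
The plan is to prove both inequalities simultaneously by expanding the conjugation $e^X Y e^{-X}$ as a Hadamard (BCH-type) series in nested commutators and truncating appropriately. Concretely, I will use the identity
\begin{equation}
e^X Y e^{-X} = \sum_{k=0}^\infty \frac{1}{k!}\,\mathrm{ad}_X^k(Y),
\end{equation}
where $\mathrm{ad}_X(Y) \coloneqq [X,Y]$ and $\mathrm{ad}_X^k$ denotes $k$-fold iteration, with $\mathrm{ad}_X^0(Y) = Y$. This series converges absolutely in Hilbert--Schmidt norm for bounded $X$, which will be essential for legitimising the manipulations below.

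The key quantitative ingredient is a submultiplicativity-type bound controlling one application of $\mathrm{ad}_X$ in the mixed $(\|\cdot\|_\infty, \|\cdot\|_2)$-setting. Specifically, since $\|ZW\|_2 \leq \|Z\|_\infty \|W\|_2$ and $\|WZ\|_2 \leq \|Z\|_\infty \|W\|_2$ for any matrices $Z,W$, the triangle inequality yields
\begin{equation}
\|[X,W]\|_2 \leq 2\,\|X\|_\infty\, \|W\|_2.
\end{equation}
Iterating this $k$ times gives the uniform estimate $\|\mathrm{ad}_X^k(Y)\|_2 \leq (2\|X\|_\infty)^k \|Y\|_2$.

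To conclude, I simply subtract the first one or two terms of the Hadamard series and apply the triangle inequality together with this tail bound. For \eqref{adjoint_perturbation_1st}, I write $e^X Y e^{-X} - Y = \sum_{k\geq 1} \mathrm{ad}_X^k(Y)/k!$ and obtain
\begin{equation}
\|e^X Y e^{-X} - Y\|_2 \leq \sum_{k=1}^\infty \frac{(2\|X\|_\infty)^k}{k!}\, \|Y\|_2 = \bigl(e^{2\|X\|_\infty} - 1\bigr)\|Y\|_2.
\end{equation}
For \eqref{adjoint_perturbation_2nd}, noting that $\mathrm{ad}_X^1(Y) = [X,Y]$, the corresponding series starts at $k=2$, yielding the series sum $e^{2\|X\|_\infty} - 1 - 2\|X\|_\infty$.

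I do not anticipate a genuine obstacle here: the only mild subtlety is justifying the Hadamard expansion and the termwise norm bound, which is standard for bounded operators and follows from the exponential's absolutely convergent power series. The constants of $2$ in the exponential on the right-hand sides arise precisely from the factor $2$ in the commutator estimate $\|[X,\cdot]\|_{2\to 2} \leq 2\|X\|_\infty$; replacing this by a sharper (symmetric) bound would not help because both $XY$ and $YX$ need to be controlled independently in the triangle inequality.
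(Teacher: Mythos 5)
Your proof is correct and gives exactly the stated bounds, but it takes a genuinely different route from the paper. You expand the conjugation via the Hadamard (adjoint) series $e^X Y e^{-X} = \sum_{k\geq 0}\mathrm{ad}_X^k(Y)/k!$ and bound each term through the iterated commutator estimate $\|\mathrm{ad}_X^k(Y)\|_2 \leq (2\|X\|_\infty)^k\|Y\|_2$, which in turn follows from the mixed H\"older bound $\|[X,W]\|_2 \leq 2\|X\|_\infty\|W\|_2$ (equivalently, from expanding the $k$-fold commutator into $2^k$ monomials $\pm X^j Y X^{k-j}$). The paper instead avoids the Hadamard identity altogether: it writes $e^{\pm X} = \id + \Delta_1(\pm X)$ (respectively $\id \pm X + \Delta_2(\pm X)$ for the second inequality), distributes the product $e^X Y e^{-X}$, and bounds each piece using $\|\Delta_N(X)\|_\infty \leq e^{\|X\|_\infty} - \sum_{k<N}\|X\|_\infty^k/k!$ together with H\"older. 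Both arguments ultimately rest on the same inequality $\|ZW\|_2\leq\|Z\|_\infty\|W\|_2$ and produce identical constants, e.g.\ in the first case $\sum_{k\geq 1}(2\|X\|_\infty)^k/k! = (\delta_1+1)^2-1$ with $\delta_1 = e^{\|X\|_\infty}-1$. Your version is arguably more conceptually transparent because it shows where the factor $2$ in the exponent comes from, namely the operator norm of $\mathrm{ad}_X$ in the Hilbert--Schmidt setting; the paper's version is slightly more elementary in that it uses only the scalar power series of $e^x$ and never invokes the adjoint expansion. Either is perfectly acceptable.
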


\begin{proof}
We will use several properties of two matrix norms, namely the Hilbert--Schmidt norm~\eqref{Hilbert_Schmidt_norm} and the operator norm~\eqref{operator_norm}. For a general introduction to the topic, see the monograph by Bhatia~\cite[\S~IV.2]{BHATIA-MATRIX}. We will make use of only two special cases of H\"older's inequality~\cite[Corollary~IV.2.6]{BHATIA-MATRIX}, i.e.
\begin{align}
\left\| XY \right\|_2 &\leq \min\left\{\|X\|_\infty \|Y\|_2,\, \|X\|_2\|Y\|_\infty\right\} , \label{Hoelder_2_norm} \\
\left\| XY \right\|_\infty &\leq \|X\|_\infty \|Y\|_\infty\, , \label{Hoelder_operator_norm}
\end{align}
valid for all pairs of matrices $X,Y$. As an immediate application of~\eqref{Hoelder_operator_norm} as well as of the triangle inequality, note that the operator norm of the matrix-valued function
\bb
\Delta_N(X) \coloneqq e^X - \sum_{k=0}^{N-1} \frac{X^k}{k!} = \sum_{k=N}^\infty \frac{X^k}{k!}\, ,
\label{Delta_N}
\ee
where $N\in \N_+$ is a positive integer, satisfies that
\bb
\left\|\Delta_N(X)\right\|_\infty \leq \sum_{k=N}^\infty \frac{\|X\|_\infty^k}{k!} = \delta_N(\|X\|_\infty)\, ,
\label{operator_norm_Delta_N}
\ee
where on the right-hand side we have the scalar version of $\Delta_N$, i.e.\ $\delta_N(x) \coloneqq e^x - \sum_{k=0}^{N-1} \frac{x^k}{k!}$. We are now ready to delve into the proof of~\eqref{adjoint_perturbation_1st}, for which we write
\begin{align}
&\left\| e^{X} Y e^{-X} - Y\right\|_2 \nonumber \\
&\qquad = \left\| \left(\id + \Delta_1(X)\right) Y \left(\id + \Delta_1(-X)\right) - Y \right\|_2 \nonumber \\
&\qquad \leqt{1} \Big(\left\|\Delta_1(X)\right\|_\infty\! + \left\| \Delta_1(-X) \right\|_\infty\! \nonumber \\
&\qquad\qquad + \left\|\Delta_1(X)\right\|_\infty\! \left\| \Delta_1(-X) \right\|_\infty \Big)\, \|Y\|_2 \\
&\qquad \leqt{2} \left( 2\delta_1\left(\|X\|_\infty\right) + \delta_1\left(\|X\|_\infty\right)^2 \right) \|Y\|_2 \nonumber \\
&\qquad = \left( \big( \delta_1\left( \|X\|_\infty\right) + 1\big)^2 - 1 \right) \|Y\|_2 \nonumber \\ 
&\qquad = \left( e^{2\|X\|_\infty} - 1 \right) \|Y\|_2\, . \nonumber
\end{align}
Here, 1~comes from repeated applications of~\eqref{Hoelder_2_norm} and~\eqref{Hoelder_operator_norm}, and in~2 we used~\eqref{operator_norm_Delta_N}.

The proof of~\eqref{adjoint_perturbation_2nd} is entirely analogous. We have that
\bb
&\left\| e^{X} Y e^{-X} - Y - [X,Y]\right\|_2 \\
&\ = \left\| \left(\id + X + \Delta_2(X)\right) Y \left(\id - X + \Delta_2(-X)\right) - Y - [X,Y] \right\|_2 \\
&\ = \left\| \left(X + \Delta_2(X)\right) Y \left(- X + \Delta_2(-X)\right) + \Delta_2(X) Y + Y \Delta_2(-X) \right\|_2 \\
&\ = \left\| \Delta_1(X) Y \Delta_1(-X) + \Delta_2(X) Y + Y \Delta_2(-X) \right\|_2 \\
&\ \leqt{3} \left(\left\|\Delta_1(X)\right\|_\infty \left\|\Delta_1(-X)\right\|_\infty + 2\|\Delta_2(X)\|_\infty \right) \|Y\|_2 \\
&\ \leqt{4} \left( \delta_1\left(\|X\|_\infty\right)^2 + 2 \delta_2\left(\|X\|_\infty\right)\right) \|Y\|_2 \\
&\ = \delta_2\left(2\|X\|_\infty\right) \|Y\|_2\, ,
\ee
which is precisely~\eqref{adjoint_perturbation_2nd}. In the above calculation, 3~comes from~\eqref{Hoelder_2_norm}--\eqref{Hoelder_operator_norm}, while 4~is because of~\eqref{operator_norm_Delta_N}.
\end{proof}

With Lemma~\ref{adjoint_perturbation_lemma} at hand, we observe that
\bb
\|T\|_2 &= \left\| (1+\lambda)\, e^{\frac{it}{2} g}\, \Xi_J\, e^{- \frac{it}{2} g} - \Xi_J - \lambda \Xi_J - \, \frac{it}{2}\, [g,\Xi_J] \right\|_2 \\
&\leqt{1} \lambda \left\| e^{\frac{it}{2} g}\, \Xi_J\, e^{- \frac{it}{2} g} - \Xi_J \right\|_2 \\
&\quad + \left\| e^{\frac{it}{2} g}\, \Xi_J\, e^{- \frac{it}{2} g} - \Xi_J - \frac{it}{2}\, [g,\Xi_J] \right\|_2 \\
&\leqt{2} \sqrt{n} \left(\lambda \left( e^{t\|g\|_\infty} - 1 \right) + e^{t\|g\|_\infty} - 1 - t\|g\|_\infty\right) .
\label{first_estimate_Delta}
\ee
Here, 1~is simply the triangle inequality, while in~2 we applied Lemma~\ref{adjoint_perturbation_lemma} together with the observation that $\|\Xi_J\|_2 = \sqrt{n}$ because $\Xi_J$ has only $n$ non-zero entries, all of modulus $1$ (see~\eqref{Xi_J} and~\eqref{Hilbert_Schmidt_norm}).

Now, let
\bb
\left\{ w'_\ell(\lambda, gt, J) \right\}_{\ell=1}^n \coloneqq \spec\left( \lambda \Xi_J + \, \frac{it}{2}\, [g,\Xi_J] \right)
\label{w_prime}
\ee
denote the spectrum of the matrix representing the first-order term on the right-hand side of~\eqref{expansion_rotated_Xi_J}. We will omit for the time being the dependence on $\lambda$, $gt$, and $J$, and write simply $w'_\ell$. Now, since the remainder term $T$ is small, we expect $w'_\ell \approx w_\ell$ upon re-ordering, where $w_\ell$ is the spectrum of the left-hand side of~\eqref{expansion_rotated_Xi_J}. This intuition is in fact correct, and we can make it rigorous by using a classic result in perturbation theory known as the Hoffman--Wielandt theorem~\cite[Theorem~6.3.5]{HJ1}. It states that (up to re-ordering $w'_\ell$) we can make
\bb
\sqrt{\sumno_{\ell=1}^n |w_\ell - w'_\ell|^2} \leq \|T\|_2\, .
\label{Hoffman_Wielandt}
\ee
In our case this is easily seen to imply that
\bb
&\sumno_{\ell=1}^n |w_\ell - w'_\ell| \\
&\ \leqt{3} \sqrt{n} \sqrt{\sumno_{\ell=1}^n |w_\ell - w'_\ell|^2} \\
&\ \leqt{4} n \left(\lambda \left( e^{t\|g\|_\infty} - 1 \right) + e^{t\|g\|_\infty} - 1 - t\|g\|_\infty\right) ,
\label{1_norm_w_vs_w_prime}
\ee
where 3~is just the Cauchy--Schwarz inequality, and~4 comes from the Hoffman--Wielandt theorem together with~\eqref{first_estimate_Delta}.

This whole endeavour of estimating the difference between $w_\ell$ and $w'_\ell$ is rewarding because --- as it turns out --- we can easily compute analytically the $w'_\ell$. Let us see how this is done. Decompose the space as $\C^n = \C^{|J|} \oplus \C^{|J^c|}$, and set for brevity $m\coloneqq |J|$ and $K = K(J) \coloneqq g^{J,J^c}$. Using the fact that $\Xi_J = \lsmatrix \id_m & 0 \\ 0 & -\id_{n-m} \rsmatrix$ with respect to this decomposition, we notice that
\bb
\frac{i}{2} [g, \Xi_J] = \begin{pmatrix} 0_{m} & -i K \\[1ex] i K^\intercal & 0_{n-m} \end{pmatrix} .
\label{commutator_g_Xi_J}
\ee
Thus,
\bb
\lambda \Xi_J + \, \frac{it}{2}\, [g,\Xi_J] = \begin{pmatrix} \lambda \id_m & -it K \\[1ex] it K^\intercal & - \lambda \id_{n-m} \end{pmatrix} .
\ee
The eigenvalues of the above matrix can be computed straight away. In fact, for a generic $x\in \R$ we can evaluate its characteristic polynomial as
\begin{align}
&\det \left( \begin{pmatrix} \lambda \id_m & -it K \\[1ex] it K^\intercal & - \lambda \id_{n-m} \end{pmatrix} - x\id_n \right) \nonumber \\
&\ \eqt{5} (-\lambda - x)^{n-m} \det \left( (\lambda-x)\id_{m} - \frac{t^2}{-\lambda - x}\, K^\intercal K \right) \label{recap_small_times_proof_char_pol} \\
&\ \eqt{6} (-1)^{n-m} (\lambda+x)^{n-2m} \det\left( (\lambda^2-x^2)\id_{m} + t^2 K^\intercal K \right) . \nonumber
\end{align}
Here, 5~is because of Schur's formula~\cite[Theorem~1.1]{ZHANG}
\bb
\det  \begin{pmatrix} A & B \\ C & D \end{pmatrix} = \det(D) \det\left( A - B D^{-1} C\right) ,
\ee
valid when $D$ is invertible (to prove the identity of two polynomials, it suffices to do so on a dense set, so we can always assume that $x\neq -\lambda$). In~6 we took a factor $(\lambda+x)^m$ inside the determinant --- note that by assumption $m=|J|\leq n/2$. The $n$ roots of the polynomial in~\eqref{recap_small_times_proof_char_pol}, which are nothing but the numbers $\{w'_\ell\}_{\ell=1,\ldots,n}$ defined by~\eqref{w_prime}, are easy to find: there is $x=-\lambda$ with multiplicity $n-2m$, and the $2m$ additional roots $x = \pm \sqrt{\lambda^2 + t^2 s_\ell^2}$ (here, $\ell=1,\ldots, m$), where $s_\ell = s_\ell(J)$ are the square roots of the eigenvalues of $K^\intercal K = \left( g^{J,J^c}\right)^\intercal g^{J,J^c}\!$, i.e.\ the singular values of $K=g^{J,J^c}$. Thus, up to re-ordering
\bb
\{w'_\ell\}_{\ell=1,\ldots, n} = \left\{\pm \sqrt{\lambda^2 + t^2 s_\ell^2} \right\}_{\ell=1,\ldots,m} \cup \{-\lambda\}_{\ell=2m+1,\ldots, n}\, .
\label{w_prime_computed}
\ee

Before we put everything together, we need to look more closely at the functions appearing in the right-hand side of~\eqref{f_simplified}. Define
\bb
f(\lambda,w)\coloneqq \frac{2(1+\lambda)}{2+\lambda+|w|}\, .
\ee
Start by observing that the variation of $f$ in $w$ for fixed $\lambda$ is bounded by
\bb
&\left| f(\lambda,w) - f(\lambda,w')\right| \\
&\qquad = \frac{2(1+\lambda)}{(2+\lambda+|w|)(2+\lambda+|w'|)}\left||w| - |w'|\right| \\
&\qquad \leq \frac{2(1+\lambda)}{(2+\lambda)^2} \left|w-w'\right| \\
&\qquad \leq \frac{|w-w'|}{2}\, .
\label{bounds_f_fixed_lambda}
\ee
On the other hand, it also holds that
\bb
\left| 1 - f(\lambda,w') \right| \leq \frac{|w'|-\lambda}{2}
\label{bound_f_difference_1st}
\ee
and that
\bb
\left| f(\lambda,w') - \left(1-\frac{|w'|-\lambda}{2}\right) \right| &= \left|\frac{|w'|^2 - \lambda^2}{2(2+\lambda+|w'|)}\right| \\
&\leq \frac{|w'|^2 - \lambda^2}{4}\, .
\label{bound_f_difference_2nd}
\ee
as long as $|w'|\geq \lambda$. The last ingredients we need are a couple of elementary lemmata.

\begin{lemma} \label{product_difference_lemma}
Let $a_k,b_k\in [0,1]$ ($k=1,\ldots, N$) be $N$ pairs of numbers between $0$ and $1$. Then
\bb
\left|\prod_{k=1}^N a_k - \prod_{k=1}^N b_k \right| \leq \sum_{k=1}^N |a_k-b_k|\, .
\ee
\end{lemma}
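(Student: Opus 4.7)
The plan is to prove the inequality by a standard telescoping identity followed by the triangle inequality. Specifically, I would write the difference of the two products as a telescoping sum
\begin{equation*}
\prod_{k=1}^N a_k - \prod_{k=1}^N b_k = \sum_{k=1}^N \left(\prod_{j=1}^{k-1} b_j\right)(a_k - b_k)\left(\prod_{j=k+1}^{N} a_j\right),
\end{equation*}
with the convention that empty products equal $1$. This identity is verified by observing that in the sum the $k$-th term cancels the $b_j$-contribution at index $k$ from the $(k+1)$-th term, leaving only $\prod_k a_k$ (from $k=1$, the $a$-side) and $-\prod_k b_k$ (from $k=N$, the $b$-side).

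Having the telescoping identity, the proof is essentially finished. I would take absolute values on both sides, apply the triangle inequality, and then bound each factor $a_j,b_j$ in $[0,1]$ by $1$. This yields
\begin{equation*}
\left|\prod_{k=1}^N a_k - \prod_{k=1}^N b_k\right| \leq \sum_{k=1}^N \left(\prod_{j<k} b_j\right)\left(\prod_{j>k} a_j\right)|a_k - b_k| \leq \sum_{k=1}^N |a_k-b_k|,
\end{equation*}
which is precisely the claimed bound.

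An alternative (and perhaps cleaner) route is a one-line induction on $N$. The base case $N=1$ is trivial. For the inductive step, write $\prod_{k=1}^{N+1}a_k - \prod_{k=1}^{N+1}b_k = \left(\prod_{k=1}^N a_k\right)(a_{N+1} - b_{N+1}) + b_{N+1}\left(\prod_{k=1}^N a_k - \prod_{k=1}^N b_k\right)$, use the triangle inequality together with $\prod_{k=1}^N a_k\leq 1$ and $b_{N+1}\leq 1$, and finally apply the inductive hypothesis. No step presents any serious obstacle; the only thing to be careful about is that the hypothesis $a_k,b_k\in [0,1]$ (rather than just $|a_k|,|b_k|\leq 1$) guarantees that the intermediate products used as weights are themselves in $[0,1]$ and can be harmlessly replaced by $1$.
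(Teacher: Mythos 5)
Your main telescoping argument is exactly the paper's proof: write the difference of products as a telescoping sum of terms $\left(\prod_{j<k}b_j\right)(a_k-b_k)\left(\prod_{j>k}a_j\right)$, apply the triangle inequality, and bound the weight products by $1$ using $a_j,b_j\in[0,1]$. Correct and essentially identical to the paper's argument; the inductive alternative you sketch is also fine but not needed.
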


\begin{proof}
Setting $b_0 = a_{N+1} = 1$, we have that
\bb
&\left|\prod_{k=1}^N a_k - \prod_{k=1}^N b_k \right| \\
&\quad = \left|\sum_{k=1}^N \left(b_0\ldots b_{k-1} a_{k}\ldots a_{N+1} - b_0\ldots b_{k} a_{k+1}\ldots a_{N+1}\right) \right| \\
&\quad = \left|\sum_{k=1}^N b_0\ldots b_{k-1} \left( a_k - b_{k}\right) a_{k+1}\ldots a_{N+1} \right| \\
&\quad \leq \sum_{k=1}^N |a_k-b_k|\, ,
\ee
as claimed.
\end{proof}

\begin{lemma} \label{delta_k_lemma}
Let $\delta_k,\delta'_k\in [0,1]$ ($k=1,\ldots, N$) be $N$ pairs of numbers between $0$ and $1$. Then
\bb
\left| \prod_{k=1}^N (1-\delta_k) - 1 - \sum_{k=1}^N \delta'_k \right| \leq \sum_{k=1}^N |\delta_k - \delta'_k| + \sum_{1\leq h<k\leq N} \delta_h \delta_k \, .
\ee
\end{lemma}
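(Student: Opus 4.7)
The plan is to first establish the cleaner inequality
\begin{equation*}
\left| \prod_{k=1}^N (1-\delta_k) - \left(1 - \sum_{k=1}^N \delta_k\right) \right| \leq \sum_{1 \leq h < k \leq N} \delta_h \delta_k,
\end{equation*}
comparing the product to the linearised expression in the \emph{same} numbers $\delta_k$, and then handle the swap $\delta_k \mapsto \delta_k'$ by a single application of the triangle inequality, which costs at most $\sum_k |\delta_k - \delta_k'|$.

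For the main inequality I would proceed by induction on $N$. The base case $N=1$ is trivial (both sides are zero). For the inductive step, write $P_N \coloneqq \prod_{k=1}^N (1-\delta_k) = 1 - \sum_{k=1}^N \delta_k + R_N$, where by inductive hypothesis $|R_N| \leq \sum_{1 \leq h<k \leq N} \delta_h \delta_k$. Multiplying by $(1-\delta_{N+1})$ and collecting, one finds $P_{N+1} = 1 - \sum_{k=1}^{N+1} \delta_k + R_{N+1}$ with
\begin{equation*}
R_{N+1} = \delta_{N+1} \sum_{k=1}^N \delta_k + (1-\delta_{N+1}) R_N,
\end{equation*}
so using $1 - \delta_{N+1} \in [0,1]$ and the triangle inequality yields
\begin{equation*}
|R_{N+1}| \leq \delta_{N+1}\sum_{k=1}^N \delta_k + |R_N| \leq \sum_{1 \leq h<k \leq N+1} \delta_h \delta_k,
\end{equation*}
closing the induction. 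An alternative, non-inductive route would be to expand the product fully as $\prod_k(1-\delta_k) = \sum_{S \subseteq [N]} (-1)^{|S|} \prod_{k \in S} \delta_k$, isolate the $|S|=0,1$ terms, and bound the remaining alternating sum by truncating at $|S|=2$ (this is essentially Bonferroni's inequality, which is straightforward to prove directly for $\delta_k \in [0,1]$).

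To finish, observe that
\begin{equation*}
\left| \prod_{k=1}^N (1-\delta_k) - \Big(1 - \sum_{k=1}^N \delta'_k\Big) \right| \leq \left| \prod_{k=1}^N (1-\delta_k) - \Big(1 - \sum_{k=1}^N \delta_k\Big) \right| + \sum_{k=1}^N |\delta_k - \delta'_k|,
\end{equation*}
and combine with the preceding bound. No step looks particularly difficult; the only mildly delicate point is keeping track of signs when verifying that the cross term $\delta_{N+1} \sum_{k=1}^N \delta_k$ combines cleanly with $R_N$ to reproduce $\sum_{h<k \leq N+1} \delta_h \delta_k$, and ensuring that the hypothesis $\delta_k \in [0,1]$ is used exactly where needed (to get $(1-\delta_{N+1}) \in [0,1]$ and hence the monotone bound on $|R_{N+1}|$).
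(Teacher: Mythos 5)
Your proof is correct and takes essentially the same approach as the paper: the paper establishes the core bound $\bigl|\prod_{k=1}^N(1-\delta_k) - \bigl(1-\sum_{k=1}^N\delta_k\bigr)\bigr| \leq \sum_{1\leq h<k\leq N}\delta_h\delta_k$ via the telescoping identity $\sum_k\delta_k - \bigl(1-\prod_k(1-\delta_k)\bigr) = \sum_k \delta_k\bigl(1-\prod_{h<k}(1-\delta_h)\bigr)$ together with the Weierstrass-type estimate from Lemma~\ref{product_difference_lemma}, which is precisely the unrolled form of your inductive recursion $R_{N+1} = \delta_{N+1}\sum_{k\le N}\delta_k + (1-\delta_{N+1})R_N$, and the final triangle-inequality step to pass from $\delta_k$ to $\delta'_k$ is identical. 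One remark: the lemma as printed carries a sign typo, reading $\prod_k(1-\delta_k) - 1 - \sum_k\delta'_k$, whereas both the paper's own intermediate bound (which controls $\sum_k\delta_k - \bigl(1-\prod_k(1-\delta_k)\bigr)$, i.e.\ $\prod_k(1-\delta_k)-1+\sum_k\delta_k$) and the downstream application in Appendix~\ref{subsec_proof_recap_small_times} require $\prod_k(1-\delta_k) - 1 + \sum_k\delta'_k$, which is exactly the quantity $\prod_k(1-\delta_k) - \bigl(1-\sum_k\delta'_k\bigr)$ you bound; the version with the minus sign fails already at $N=2$ with $\delta_1=\delta_2=\delta'_1=\delta'_2=1$.
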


\begin{proof}
From Lemma~\ref{product_difference_lemma} we infer immediately that
\bb
0\leq 1 - \prod_{k=1}^N (1-\delta_k) \leq \sum_{k=1}^N \delta_k\, .
\label{product_delta_k_LB_1st}
\ee
The above inequality is of course well known, see e.g.~\cite[Chapter~2, \S~58]{INEQUALITIES}. Setting $\delta_0=0$, we have that
\bb
\sum_{k=1}^N \delta_k - \left(1 - \prod_{k=1}^N (1\!-\!\delta_k)\right) &= \sum_{k=1}^N \delta_k \left(1 - \prod_{h=0}^{k-1} (1\!-\!\delta_h) \right) ,
\ee
and hence
\bb
0\leq \sum_{k=1}^N \delta_k - \left(1 - \prod_{k=1}^N (1-\delta_k)\right) &\leq \sum_{k=1}^N \delta_k \sum_{h=1}^{k-1} \delta_h \\
&= \sum_{1\leq h<k\leq N} \delta_h \delta_k\, ,
\label{product_delta_k_LB_2nd}
\ee
where we made use of~\eqref{product_delta_k_LB_1st}. Putting all together, we have that
\bb
&\left| \prod_{k=1}^N (1\!-\!\delta_k) - 1 - \sum_{k=1}^N \delta'_k \right| \\
&\quad \leq \sum_{k=1}^N |\delta_k \!-\! \delta'_k| + \left|\prod_{k=1}^N (1\!-\!\delta_k) - 1 - \sum_{k=1}^N \delta_k \right| \\
&\quad \leq \sum_{k=1}^N |\delta_k \!-\! \delta'_k| + \sum_{1\leq h<k\leq N} \delta_h \delta_k \, ,
\ee
where in the last line we employed~\ref{product_delta_k_LB_2nd}. 
\end{proof}

We now have that
\begingroup
\addtolength{\jot}{.8ex}
\begin{align}
&\left|\frac{2^n(1+\lambda)^n}{\prod_{\ell=1}^n \left(2+\lambda+|w_\ell|\right)} - \left(1- \sum_{\ell=1}^{|J|} \left(\sqrt{\lambda^2+t^2 s_\ell^2} - \lambda\right) \right) \right| \nonumber \\
& \nonumber \\[-2.7ex]
&\ \eqt{7} \left|\prod_{\ell=1}^n f(\lambda, w_\ell) - \left(1- \frac12 \sum_{\ell=1}^{n} \left(|w'_\ell| - \lambda\right) \right) \right| \nonumber \\
&\ \leqt{8} \left|\prod_{\ell=1}^n f(\lambda, w_\ell) - \prod_{\ell=1}^n f(\lambda, w'_\ell) \right| \nonumber \\
&\ \quad + \left| \prod_{\ell=1}^n f(\lambda,w'_\ell) - \left(1- \frac12 \sum_{\ell=1}^{n} \left(|w'_\ell| - \lambda\right) \right) \right| \nonumber \\
&\ \leqt{9} \sum_{\ell=1}^n \left|f(\lambda, w_\ell) - f(\lambda, w'_\ell) \right| \nonumber \\
&\ \quad + \sum_{\ell=1}^n \left| 1 - f(\lambda,w'_\ell) - \frac{|w'_\ell|-\lambda}{2} \right| \nonumber \\
&\ \quad + \sum_{1\leq \ell < k \leq n} \left( 1 - f(\lambda,w'_\ell)\right) \left(1 - f(\lambda,w'_k)\right) \nonumber \\
&\ \leqt{10} \frac12 \sum_{\ell=1}^n \left|w_\ell - w'_\ell\right| + \frac14 \sum_{\ell=1}^n \left(|w'_\ell|^2 - \lambda^2\right) \nonumber \\
&\ \quad + \frac14 \sum_{1\leq \ell < k \leq n} \left(|w'_\ell|-\lambda\right)\left(|w'_k|-\lambda\right) \nonumber \\
&\ \leqt{11} \frac12 \sum_{\ell=1}^n \left|w_\ell - w'_\ell\right| \,+\, \frac{t^2}{2} \sum_{\ell=1}^m s_\ell^2 \,+\, t^2\!\! \sum_{1\leq \ell < k \leq m} s_\ell s_k \nonumber \\
&\ \eqt{12} \frac12 \sum_{\ell=1}^n \left|w_\ell - w'_\ell\right| \,+\, \frac{t^2}{2} \|K\|_1^2 \nonumber \\
&\ \leqt{13} \frac{n}{2} \left(\lambda\! \left( e^{t\|g\|_\infty}\! - 1 \right) + e^{t\|g\|_\infty}\! - 1 - t\|g\|_\infty\!\right) + \frac{n^2}{8}\, t^2 \|g\|_\infty^2\, . \hspace{3ex}
\label{basically_proof_recap_thm}
\end{align}
\endgroup
The above steps are justified as follows: in~7 we recalled the expression of $w'_\ell$, given by~\eqref{w_prime_computed}; 8~is simply the triangle inequality; in~9 we used Lemma~\ref{product_difference_lemma} for the first term --- note that $f(\lambda,w')\in [0,1]$ as long as $|w'|\geq \lambda$, which is the case here --- and Lemma~\ref{delta_k_lemma} for the second; 10~follows from the estimates in~\eqref{bounds_f_fixed_lambda}--\eqref{bound_f_difference_2nd}; in~11 we leveraged once again the expressions in~\eqref{w_prime_computed}, and noticed that $|w'_\ell|-\lambda \leq \sqrt{|w'_\ell|^2-\lambda^2}$; continuing, in~12 we remembered that the trace norm can also be written as the sum of the singular values, which in our case implies that $\|K\|_1 = \sum_{\ell=1}^m s_\ell$; finally, in~13 we employed~\eqref{1_norm_w_vs_w_prime} and observed that
\bb
n \|g\|_\infty \geq \|g\|_1 \geq \left\|\frac{i}{2}[g,\Xi_J]\right\|_1 = \left\| \begin{pmatrix} 0_m & -K \\ K^\intercal & 0_{n-m} \end{pmatrix} \right\|_1 = 2\|K\|_1
\label{norms_g_and_K}
\ee
thanks to the calculation in~\eqref{commutator_g_Xi_J}.

To complete the proof of~\eqref{F_cls_expansion} with the estimate on $\Delta$ given by~\eqref{Delta_UB}, it suffices to apply~\eqref{basically_proof_recap_thm} to the right-hand side of~\eqref{f_simplified}. The universal bound in~\eqref{Delta_UB_universal}, which implies immediately the big-\emph{O} estimate in~\eqref{Delta_big_O_estimate}, can be obtained by using in addition Proposition~\ref{geometrical_prop}, proved in Appendix~\ref{operator_norm_g_app} below. There, by means of a geometric reasoning we show that in a 3-dimensional space it is impossible to pack $n$ oscillators in such a way that one of them is close to all of the others. This in turn implies that $\|g\|_\infty \leq \gamma \min\left\{ 6(n-1),\ C_1 \ln(n-1) + C_2 \right\}$, where $\gamma$ is given by~\eqref{Delta_UB_universal}, and $C_1,C_2$ are some universal constants that do not depend on $n$.

We now move on to~\eqref{F_cls_simple_expansion}. To derive it from~\eqref{F_cls_expansion} using assumption~(IV'), i.e.~\eqref{lambda_effective_0_approximation}, it suffices to observe that
\bb
\sqrt{\lambda^2+t^2 s_\ell^2(J)} = t s_\ell + O(\lambda)\, ,
\ee
in turn implying that
\begin{align}
&1 - \max_{\substack{J\subseteq [n],\\ |J|\leq n/2}} \sum_{\ell=1}^{|J|} \left( \sqrt{\lambda^2 + t^2 s_\ell^2(J)} - \lambda \right) \nonumber \\
&\quad = 1 - t \max_{\substack{J\subseteq [n],\\ |J|\leq n/2}} \sum_{\ell=1}^{|J|} s_\ell(J) + O(n\lambda) \nonumber \\
&\quad = 1 - t \max_{\substack{J\subseteq [n],\\ |J|\leq n/2}} \left\| K(J) \right\|_1  + O(n\lambda) \\
&\quad = 1 - t \max_{\substack{J\subseteq [n],\\ |J|\leq n/2}} \frac14 \left\| [g,\Xi_J] \right\|_1  + O(n\lambda) \nonumber \\
&\quad = 1 - \eta t + O(n\lambda)\, , \nonumber
\end{align}
where we remembered the calculation in~\eqref{norms_g_and_K}, and defined the sensitivity $\eta$ as in~\eqref{sensitivity}. How does the above remainder $O(n\lambda)$ compare to the error terms inherited from~\eqref{Delta_big_O_estimate}, which are of the form $O\big(\lambda n \frac{Gmt}{d^3\omega} \big)$ and $O\Big(\big( n \frac{Gmt}{d^3\omega}\big)^2 \Big)$, provided that $\|g\|_\infty\sim n\frac{Gm}{d^3\omega}$, as in all cases of practical interest such as that in \S~\ref{subsubsec_many_oscillators}? It is not difficult to see that $n\lambda$ is much larger than the former by virtue of~\eqref{lambda_t_both_small}; hence, the total error can be expressed as in~\eqref{F_cls_simple_expansion}. Note that we expect $\eta$ to be of the order of
\bb
\eta \sim n \frac{Gm}{d^3\omega}\, ,
\ee
so that by virtue of~\eqref{lambda_effective_0_approximation} we have that
\bb
\eta t \gg \max\left\{ n\lambda,\, \left( n \frac{Gmt}{d^3\omega}\right)^2 \right\} ,
\ee
confirming the validity of the expansion in~\eqref{F_cls_simple_expansion}. This concludes the proof of Theorem~\ref{recap_small_times_thm}.








\section{The operator norm of the \emph{g} matrix} \label{operator_norm_g_app}

To obtain any concrete estimate from~\eqref{first_estimate_Delta}, we still need to upper bound $\|g\|_\infty$. 

\begin{prop} \label{geometrical_prop}
For any 3-dimensional arrangement of $n$ harmonic oscillators, the operator norm~\eqref{operator_norm} of the matrix $g$ defined by~\eqref{g} (cf.\ Figure~\ref{system_oscillators_fig}) satisfies that
\bb
\|g\|_\infty \leq \gamma \min\left\{ 6(n-1),\ C_1 \ln(n-1) + C_2 \right\} .
\label{geometrical}
\ee
Here, $\gamma = \frac{Gm}{d^3\omega}$, where $d\coloneqq \min_{j\neq \ell} d_{j\ell}$ is the minimal distance between oscillators and $m\coloneqq \max_j m_j$ is the maximal mass, and $C_1,C_2$ are some universal constants that do not depend on $n$. One can choose $C_1=288$ and $C_2=966$.
\end{prop}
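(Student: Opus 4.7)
Since $g$ is real symmetric, its operator norm coincides with its spectral radius, which is in turn upper bounded by the maximal absolute row sum: $\|g\|_\infty \leq \max_j \sum_{k} |g_{jk}|$. The plan is therefore to estimate each row sum, then take the maximum. Using the elementary inequalities $|\cos\varphi_{jk} + 3\cos\theta_{jk}\cos\theta_{kj}| \leq 4$ and $|1 - 3\cos^2\theta_{j\ell}| \leq 2$, together with $m_j \leq m$ and $\sqrt{m_jm_k} \leq m$, the off-diagonal entries satisfy $|g_{jk}| \leq \frac{4Gm}{d_{jk}^3\omega}$ and the diagonal ones satisfy $|g_{jj}| \leq \sum_{\ell\neq j}\frac{2Gm}{d_{j\ell}^3\omega}$. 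Summing over $k$, every row sum is bounded by
\bb
\sum_k |g_{jk}| \,\leq\, 6\gamma \sum_{\ell\neq j} \left(\frac{d}{d_{j\ell}}\right)^{\!3}.
\label{rowbd}
\ee

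The first bound in~\eqref{geometrical} follows immediately from \eqref{rowbd} by using the trivial estimate $d_{j\ell}\geq d$, which caps each summand by $1$ and yields $6(n-1)\gamma$. The substantive work is the logarithmic bound, which must leverage the $3$-dimensionality of the ambient space: intuitively, many oscillators cannot simultaneously be very close to $A_j$ because the $d_{j\ell}$'s are themselves lower-bounded by $d$ (through the pairwise condition $d_{j\ell}\geq d$, which is part of the definition of $d$). I would formalise this via a standard ball-packing argument: placing an open ball of radius $d/2$ around every oscillator yields a pairwise-disjoint collection; all balls corresponding to oscillators $\ell\neq j$ with $d_{j\ell}\leq r$ are contained in the ball of radius $r+d/2$ centred at $A_j$. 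Comparing volumes gives
\bb
N_j(r) \,\coloneqq\, \#\{\ell\neq j:\, d_{j\ell}\leq r\} \,\leq\, \left(\frac{2r+d}{d}\right)^{\!3},
\ee
so that, after ordering the distances $d_{j,\ell_1}\leq d_{j,\ell_2}\leq\ldots \leq d_{j,\ell_{n-1}}$ and setting $r=d_{j,\ell_k}$, one obtains the lower bound $d_{j,\ell_k} \geq \tfrac{d}{2}(k^{1/3}-1)$.

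With this in hand, the sum on the right-hand side of \eqref{rowbd} can be split at a threshold $k_0$ (chosen large enough that $(k_0^{1/3}-1)/2 \geq 1$, e.g.\ $k_0=27$): for $k<k_0$ each summand is bounded by $1$, contributing a constant $\leq k_0$; for $k\geq k_0$ the packing bound gives $(d/d_{j,\ell_k})^3 \leq 8(k^{1/3}-1)^{-3}$, and a crude comparison of $(k^{1/3}-1)^{-3}$ with $2/k$ (valid for $k\geq k_0$ after possibly increasing the constant) yields
\bb
\sum_{k=k_0}^{n-1} \frac{8}{(k^{1/3}-1)^3} \,\leq\, C\sum_{k=k_0}^{n-1}\frac{1}{k} \,\leq\, C\bigl(\ln(n-1)+1\bigr).
\ee
Combining the two regimes and multiplying by the prefactor $6\gamma$ from \eqref{rowbd} gives the asserted bound $\|g\|_\infty \leq \gamma(C_1\ln(n-1)+C_2)$ for explicit constants. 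The concrete numerical values $C_1=288$ and $C_2=966$ follow from carrying out the above estimates with a specific choice of threshold and of the comparison constant between $(k^{1/3}-1)^{-3}$ and $1/k$.

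The main obstacle I anticipate is the bookkeeping needed to pin down the explicit constants $C_1=288$ and $C_2=966$: the geometry is straightforward and the packing estimate is standard, but extracting sharp numerics requires a careful choice of the cutoff $k_0$ and a non-lazy treatment of the tail sum, including the error incurred when replacing $(k^{1/3}-1)^{-3}$ by $1/k$. Everything else --- reducing to the row-sum norm, bounding entries of $g$, and applying the volume-packing argument --- is essentially routine.
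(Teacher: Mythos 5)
Your proof is correct and uses the same first two steps as the paper's (Gershgorin/row-sum reduction to $6\gamma\max_j\sum_{\ell\neq j}(d/d_{j\ell})^3$, then the trivial $6(n-1)\gamma$ cap), but the packing argument that delivers the logarithmic bound is organized differently. The paper slices space around the reference oscillator into unit-thickness spherical hulls $S_k = \{k\le\|\vec v-\vec v_j\|<k+1\}$, bounds the cardinality $|R_k|\le 48(k^2+k+7/12)$ of oscillators in each hull by a disjoint-balls volume comparison, and then estimates $\sum_k|R_k|/k^3\le 48\ln(n-1)+161$. You instead bound the nearest-neighbour distances directly: a single volume comparison gives $N_j(r)\le((2r+d)/d)^3$, inverted to $d_{j,\ell_k}\ge\tfrac{d}{2}(k^{1/3}-1)$, and you then split the sum at a threshold $k_0$ and compare $(k^{1/3}-1)^{-3}$ with $1/k$ for the tail. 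Both are standard and both are correct; your version is arguably cleaner since it avoids the somewhat fiddly shell-by-shell counting. One small imprecision: the specific numbers $C_1=288$ and $C_2=966$ are artefacts of the paper's annular-shell count (they are $6\times 48$ and $6\times 161$) and would \emph{not} come out of your estimate with your stated choices — your route produces different constants (e.g.\ with $k_0=27$ you can get a tail comparison constant close to $27$ rather than $48$, so potentially a smaller $C_1$). Since the proposition only asserts the existence of some universal $C_1,C_2$, this is harmless, but it would be clearer to drop the claim that these particular values follow from your chain and simply state that explicit constants can be extracted.
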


\begin{proof}
We are going to use the simple bound
\bb
\|X\|_\infty \leq \max_{i=1,\ldots, N} \sum_{j=1}^N |X_{ij}|\, ,
\label{operator_norm_UB_max_row_norm}
\ee
valid for any $N\times N$ Hermitian (or anyway normal) $X$. To prove~\eqref{operator_norm_UB_max_row_norm}, a key first step is to notice that for normal $X$ we have that $\|X\|_\infty = \max_{\lambda\in \spec(X)}|\lambda|$ coincides with the largest modulus of the eigenvalues of $X$. From this, one can directly apply Gershgorin's theorem~\cite{VARGA} 
and conclude. Alternatively, it is possible to proceed directly: take an eigenvector $\ket{\psi} = \sum_{i=1}^N \psi_i \ket{i} \in \C^N \setminus\{0\}$ corresponding to the eigenvalue $\lambda$ such that $|\lambda| = \|X\|_\infty$. Choose $i\in \{1,\ldots, N\}$ such that $|\psi_i|=\max_{j=1,\ldots, N} |\psi_j| > 0$.
Then
\bb
|\lambda| |\psi_i| = \left|\sumno_{j}^{} X_{ij} \psi_j\right| \leq \left(\sumno_j |X_{ij}|\right) |\psi_i|\, ,
\ee
which proves~\eqref{operator_norm_UB_max_row_norm}. Applied to the matrix $g$ defined in~\eqref{g}, the estimate~\eqref{operator_norm_UB_max_row_norm} yields, upon straightforward computations, the bound
\bb
\|g\|_\infty &\leq 6\gamma \max_j \sum_{\substack{\ell=1,\ldots, n \\ \ell\neq j}} \left(\frac{d}{d_{j\ell}}\right)^3 ,
\label{g_operator_norm_UB}
\ee
where $\gamma = \frac{Gm}{d^3\omega}$, with $m = \max_j m_j$ being the maximum oscillator mass and $d = \min_{j\neq \ell} d_{j\ell}$ being the minimal distance between oscillator centres. How to proceed from here?

A simple solution is to estimate $\sum_{\ell:\,\ell\neq j} \frac{d^3}{d_{j\ell}^3}\leq n-1$, using the fact that \mbox{none of the terms in the sum exceed} $1$; in this way we arrive at
\bb
\|g\|_\infty \leq 6(n-1)\gamma\, ,
\ee
which reproduces the first bound in~\eqref{geometrical}. This is a good estimate when $n$ is not too large. However, a little thought reveals that for large $n$ it cannot be tight. Indeed, there is no geometrically feasible way of packing a very large number of points together so that the distance between any two of them is larger than a constant \emph{and}, at the same time, one of them is close to \emph{all} of the others. In fact, a simple geometric reasoning shows that for any arrangement of $n$ points $\vec{v}_1,\ldots, \vec{v}_n\in \R^3$ such that $\min_{j\neq \ell}\|\vec{v}_j - \vec{v}_\ell\|\geq 1$, it holds that
\bb
\max_j \sum_{\ell:\,\ell\neq j} \frac{1}{\|\vec{v}_j - \vec{v}_\ell\|^3} \leq 48 \ln(n-1) + 161\, ,
\label{geometrical_estimate}
\ee
i.e.\ the sum on the left-hand side grows at most logarithmically with $n$. Before proving~\eqref{geometrical_estimate}, note that we can plug it in~\eqref{g_operator_norm_UB} and obtain immediately
\bb
\|g\|_\infty &\leq 6\gamma \left( 48 \ln(n-1) + 161 \right) \\
&= \gamma \left(288\ln(n-1) + 966\right) ,
\ee
which concludes the proof.

Let us now prove~\eqref{geometrical_estimate}. Without loss of generality, we can assume that the maximum on the left-hand side is attained for $j=n$. For $k\in \N_+$, consider the spherical hulls of internal radius $k$ and external radius $k+1$ centred on $\vec{v}_n$, formally defined by
\bb
S_k \coloneqq \left\{ \vec{v}\in \R^3:\ k \leq \|\vec{v} - \vec{v}_n\| < k+1 \right\} .
\ee
Call $k_1 < k_2 <\ldots < k_m$ the $m$ distinct positive integers $k$ with the property that $S_k$ contains at least one of the $\vec{v}_\ell$, for some $\ell=1,\ldots, n-1$. Set $R_k \coloneqq S_k \cap \{\vec{v}_\ell \}_{\ell=1,\ldots,m}$. Note that since no $\vec{v}_\ell$ can be closer than $1$ to $\vec{v}_n$, all such points belong to $S_k$ for some $k\in \N_+$. Also, it holds that
\bb
k_a \geq a \quad\forall\ a=1,\ldots, m,\qquad m\leq n-1\, .
\label{geometric_lemma_proof_k_ell_LB}
\ee
Let us try to estimate the maximum cardinality of $R_k$. To this end, call $B_{1/2}(\vec{v}_\ell)$ the open Euclidean ball of radius $1/2$ centred on $\vec{v}_\ell$. By assumption, 
\bb
B_{1/2}(\vec{v}_\ell) \cap B_{1/2}(\vec{v}_{\ell'}) = \emptyset\qquad \forall\ \ell\neq \ell'\, .
\ee
Moreover, simple geometric considerations show that $\bigcup_{\vec{v}_\ell \in R_k} B_{1/2}(\vec{v}_\ell)$ is contained in a spherical hull of internal radius $k-1/2$ and external radius $k+3/2$, for all $k\in \N_+$. With a simple computation we surmise that
\bb
\frac{\pi}{6}\, |R_k| &= \sum_{\vec{v}_\ell \in R_k} \vol\left(B_{1/2}(\vec{v}_\ell)\right) \\
&\leq \frac43 \pi \left( \left(k+\frac32\right)^3 - \left(k-\frac12\right)^3 \right) \\
&= 8\pi \left(k^2+k+\frac{7}{12}\right) ,
\ee
so that
\bb
|R_k| \leq 48 \left(k^2+k+\frac{7}{12}\right) .
\label{geometric_lemma_proof_Rk_cardinality}
\ee
Now, we have that
\begin{align}
\sum_{\ell=1}^{n-1} \|\vec{v}_\ell - \vec{v}_n\|^{-3} &= \sum_{a=1}^m \sum_{\ell:\, \vec{v}_\ell \in R_{k_a}} \|\vec{v}_\ell - \vec{v}_n\|^{-3} \nonumber \\
&\leqt{1} \sum_{a=1}^m \frac{|R_{k_a}|}{k_a^3} \nonumber \\
&\leqt{2} \sum_{a=1}^m \frac{48 \left(k_a^2 + k_a +7/12\right)}{k_a^3} \nonumber \\
&\leqt{3} 48 \sum_{\ell=1}^{n-1} \frac{\ell^2 + \ell + 7/12}{\ell^3} \\
&\leq 48\left( \sum_{\ell=1}^{n-1} \frac{1}{\ell} + \sum_{\ell=1}^{\infty} \frac{1}{\ell^2} + \frac{7}{12} \sum_{\ell=1}^{\infty} \frac{1}{\ell^3} \right) \nonumber \\
&\leqt{4} 48 \left( \ln(n-1) + 1 + \frac{\pi^2}{6} + \frac{7}{12} \zeta(3) \right) \nonumber \\
&< 48 \ln(n-1) + 161\, . \nonumber
\end{align}
Here: 1~holds because $\|\vec{v}_\ell - \vec{v}_n\|\geq k$ whenever $\vec{v}_\ell\in S_k$; in~2 we employed~\eqref{geometric_lemma_proof_Rk_cardinality}; 3~follows due to~\eqref{geometric_lemma_proof_k_ell_LB}, thanks to the fact that $x\mapsto \frac{1}{x^3}\left(x^2+x+7/12\right)$ is a decreasing function; finally, in~4 we used the standard estimate $\sum_{\ell=1}^{n-1} f(\ell) \leq f(1) + \int_1^{n-1} \dd \ell\ f(\ell)$, valid whenever $f:[1,\infty) \to \R$ is non-negative and monotonically non-increasing, and introduced the zeta function $\zeta(s)\coloneqq \sum_{\ell=1}^\infty n^{-s}$ for $s>1$.

\end{proof}

\section{Noise sources}
\label{Noise}

In this appendix, we analyse some of the most relevant sources of noise for the proposal considered in the main text. 
\medskip

\paragraph{Mass fluctuations.}

Assume that a mass $\mu$ suddenly appears at a position $\vec{\delta}$ with respect to one of our oscillators. The gravitational potential induced on the oscillating mass $m$ with position $x \hat{n}$ can be written as
\bb
&-\frac{Gm\mu}{\big\|\vec{\delta} - x \hat{n}\big\|} \\
&\quad = \mathrm{constant} - \frac{Gm\mu}{\delta^2}\, (\hat{\delta}\cdot \hat{n})\, x + \mathrm{higher\ order\ terms.}
\ee
Introducing the normalised coordinate $\bar{x} \coloneqq \sqrt{\frac{\hbar \omega}{m}}\, x$ as in~\eqref{dimensionless} and setting $\cos\theta \coloneqq \hat{\delta}\cdot \hat{n}$, the resulting approximate effective Hamiltonian takes the form
\bb
\Delta H = - \frac{G\mu}{\delta^2}\,\sqrt{\frac{m}{\hbar \omega}}\,(\cos\theta)\, \bar{x}\, .
\ee
If the mass $\mu$ remains for a time $\Delta t$ and then suddenly disappears, the unitary induced by $\Delta H$ will be $e^{-\frac{i\, \Delta t}{\hbar}\, \Delta H} = D_v$, where $D_v$ is a displacement operator (see~\eqref{displacement}) with
\bb
\|v\| = \frac{G\mu}{\delta^2}\,\sqrt{\frac{m}{\hbar \omega}}\,|\cos\theta|\, \Delta t\, .
\label{v_noise}
\ee
Furthermore, since the mass is far away we can assume to first order that the effect will be the same on all oscillators. If a total of $t/\Delta t$ displacement operators of this sort act over a time interval $t$, and they are uncorrelated with each other, the overall effective displacement $v_{\mathrm{tot}}$ will satisfy
\bb
\|v_{\mathrm{tot}}\| = \frac{G\mu}{\delta^2}\,\sqrt{\frac{m t \Delta t}{3\hbar \omega}}\, ,
\label{v_noise_tot}
\ee
where we also used the fact that $\braket{\cos^2\theta} = 1/3$, with the average being over the entire solid angle. Now, if the external mass is detected its effect can be taken into account when computing the action of the unitary $U_S$. However, if it is undetected it will increase the probability of a type-1 error. This will be no longer zero, as predicted by~\eqref{P-I}, but instead will satisfy
\bb
1 - P^{\mathrm{noise}}_1 &= \int \!\frac{\dd^{2n}\!\alpha}{\pi^n}\ p_\lambda(\alpha) \left|\braket{\alpha| U_S^\dag D_{v_{\mathrm{tot}}}^{\otimes n} U_S^{\phantom{\dag}} |\alpha}\right|^2 \\
&= \left|\braket{0|D_{S^{-1}v_{\mathrm{tot}}^{\oplus n}}|0}\right|^2 \\
&= e^{-\frac12 \left\| S^{-1}v_{\mathrm{tot}}^{\oplus n} \right\|^2} \\
&= e^{-\frac{n}{2} \| v_{\mathrm{tot}} \|^2},
\ee
where we leveraged the identity $U_S^\dag D_v U_S^{\phantom{\dag}} = D_{S^{-1}v}$, which follows directly from~\eqref{U_S}, used~\eqref{displacement_vacuum} and~\eqref{coherent_multimode}, and assumed that $S$ (and thus also $S^{-1}$) is an orthogonal symplectic matrix, as established by Theorem~\ref{recap_thm}. For the sake of simplicity, in the above calculation we posited that the action of the noise takes place entirely at the end of the process. While this is not a realistic hypothesis, it suffices for the purpose of obtaining an order-of-magnitude estimate of the noise sensitivity. 

In order for the thought experiment we sketched in \S~\ref{subsec_experiments} not to be compromised by noise, the type-1 error probability in the noisy setting, $P_1^{\mathrm{noise}}$, should still be smaller than the type-2 error probability, in formula $P_1^{\mathrm{noise}} < P_2$. Considering as before $P_2=0.1$ and $n=100$, we obtain the requirement that $\|v_{\mathrm{tot}}\|^2 < \SI{2.1e-3}{}$. Using~\eqref{v_noise_tot}, this entails that
\bb
\frac{G\mu}{\delta^2}\,\sqrt{\frac{m t \Delta t}{\hbar \omega}} < \SI{7.9e-2}{}\, .
\ee
Considering as above $m = \SI{1.58e-10}{\kg}$, $\omega \sim \SI{e-3}{\Hz}$, $t= \SI{185}{s}$, and setting $\Delta t=\SI{1}{\s}$, $\mu=\SI{1}{\kg}$, we obtain an estimate
\bb
\delta < \SI{670}{\m}
\ee
for the minimal distance from the experiment of tolerable, kilogram-sized mass fluctuations that happen on time scales of seconds.

However, there may be a way to further increase this noise resistance by resorting to a simple trick. Namely, since far away masses act on all modes in the same way to first order, i.e.\ $\ket{\alpha_1}\otimes \ldots \otimes \ket{\alpha_n}\mapsto \ket{\alpha_1+\beta}\otimes \ldots \otimes \ket{\alpha_n+\beta}$, it may be possible to ignore this effect by focusing only on the \emph{relative} motion. By doing this, we would effectively restrict our attention to a sort of noise-free subspace. A similar idea has been proposed in a different context in~\cite{Pedernales2022b}.

\medskip
\paragraph{Random collisions with surrounding gas molecules}

We now consider as a noise source for our experiment random collisions with hydrogen gas molecules. Each molecule has mass $m_0 = 2.016\, \mathrm{amu} \approx \SI{3.35e-27}{\kg}$. At background temperature $T$, its quadratic average speed along one fixed axis
\bb
v_0 \coloneqq \braket{v_x^2} = \sqrt{\frac{kT}{m_0}}\, .
\ee
By comparison, what is the quadratic average speed $v$ of one of our oscillating masses? This can be deduced from the fact that $|\alpha|^2\sim 1/\lambda$ from the relation
\bb
\sqrt{\braket{v^2}} = \sqrt{\frac{\braket{p^2}}{m^2}} \sim \sqrt{\frac{\hbar m \omega |\alpha|^2}{m^2}} \sim \sqrt{\frac{\hbar \omega}{\lambda m}}\, .
\ee
Using $\lambda \geq \SI{4.26e-12}{}$ from~\eqref{lambda_range}, we have that $\sqrt{\braket{v^2}} < \SI{1.25e-8}{\m/\s} \ll v_0$ as long as $T\gg \SI{3.79e-20}{\K}$. Therefore, the approximation $\sqrt{\braket{v^2}} \ll v_0$ covers all realistic regimes.

Now, if the mass $m$ is still and a molecule with momentum $p_0 = m_0 v_0 =\sqrt{m_0 kT}$ bounces off it, it transfers to the mass a momentum $2p_0 \cos\theta$, where $\theta\in [0,\pi/2]$ is the angle of incidence. For the sake of getting an estimate, let us substitute $\cos\theta \mapsto \braket{|\cos\theta|} = 1/2$, where the average is taken on a uniform distribution over the entire solid angle. The momentum transfer then becomes simply $p_0$. Taking into account that the natural unit for momentum in a harmonic oscillator with mass $m$ and frequency $\omega$ is $\sqrt{\hbar m\omega}$, the corresponding displacement in phase space is thus
\bb
\Delta \alpha_0 = \frac{p_0}{\sqrt{\hbar m\omega}} = \sqrt{\frac{m_0 kT}{\hbar m\omega}} \sim \SI{5.28e-2}{} \sqrt{T[\SI{}{\K}]}\, ,
\ee
where $T[\SI{}{\K}]$ stands for the temperature measured in Kelvin. A sequence of $N$ random impacts at random times will therefore result in a total displacement in phase space
\bb
\Delta \alpha = \sqrt{N}\, \Delta \alpha_0 \sim \SI{5.28e-2}{} \sqrt{N\,T[\SI{}{\K}]}\, .
\ee
Since at the level of Hilbert space vectors
\bb
\left| \braket{\alpha| \alpha + \Delta \alpha} \right|^2 = e^{-|\Delta\alpha|^2} ,
\ee
we would like to have
\bb
|\Delta \alpha|^2 = N |\Delta \alpha_0|^2 = \frac{N m_0kT}{\hbar m\omega} \ll 1
\ee
in order not to destroy the signature of a gravity that behaves according to quantum mechanics. Naturally, this constrains the amount of molecule impacts we can tolerate. Since the impacts are random events, the only way to reduce their number is reducing the pressure of the gas our experiment is immersed in. What kind of low pressures are we talking about?

Let us assume that $N$ impacts occur over a time $t$ on our mass with side surface area $\pi R^2$ --- remember that we are dealing only with one-dimensional oscillators. Since each impact transfers on average a momentum $2p_0 \braket{\cos\theta} = p_0$, the total momentum transfer per unit surface area is
\bb
\frac{N \cdot p_0}{\pi R^2} = \frac{N \sqrt{m_0 kT}}{\pi R^2}\, .
\ee
Equating the momentum transfer per unit area and \emph{per unit time} with the gas pressure $P$ yields
\bb
N = \frac{\pi R^2 t P}{\sqrt{m_0 kT}}\, . 
\ee
The operationally important condition that $|\Delta\alpha|^2 \ll 1$ then becomes~\footnote{A rigorous calculation for a one-dimensional untrapped particle actually gives
\begin{equation*}
    \frac{\sqrt{2\pi} 16 R^2 t}{9\hbar m\omega}\, P \sqrt{m_0 kT} \ll 1\, ,
\end{equation*}
see e.g.~\cite[Appendix~B]{sinha2022dipoles}.}
\bb
|\Delta\alpha|^2 = \frac{\pi R^2 t}{\hbar m\omega}\, P \sqrt{m_0 kT} \ll 1\, ,
\ee
or with explicit numbers
\bb
\label{vacuum}
P[\SI{}{\Pa}]\, \sqrt{T[\SI{}{\K}]} \ll 10^{-15}\, .
\ee
These conditions are certainly very challenging to produce experimentally. However, they are entirely comparable e.g.\ with those required in~\cite[p.~4]{Bose2017} and indeed vacua at the level of~\eqref{vacuum} have been achieved experimentally in the context of measurements on antiprotons~\cite{sellner2017improved}.

It should be noted that this estimate assumes the limit in which the energy transfer in a single collision $E_r$ exceeds significantly the quantised energy of a single phonon in the trapping potential $\hbar\omega$. However, of interest is also the opposite parameter regime $\eta^2 = \frac{E_r}{\hbar\omega}\ll 1$ in which it becomes possible that the recoil from the background gas collision is taken up by the trapping potential as a whole and not the test mass which remains in the same motional state. More specifically, for a coherent state amplitude $\alpha$ the probability for the test mass take up one phonon of energy $\hbar\omega$ is given by $p = \eta^2(2|\alpha|^2 +1)$. In this situation, well known in ion trap physics as the Lamb--Dicke regime, the effect of collisional decoherence can be reduced further~\cite{wineland1998experimental}.

\medskip
\paragraph{Black-body radiation.}

For the estimate of the impact of momentum diffusion due to black body radiation, it is particularly important to account for the fact that the particle is harmonically bound at frequency $\omega$ and, for typical wavelengths of the black body radiation, deep in the Lamb--Dicke regime. Indeed, the recoil energy for a mass $m$ due to a photon of wavelength $\lambda$ is given by 
\begin{equation}
    E_r = \frac{2\pi^2\hbar^2}{\lambda^2 m}
\end{equation}
while the energy of a single phonon of frequency $\omega$ is given
by 
\begin{equation}
    E_p = \hbar\omega.
\end{equation}
At temperature $T$ the wavelength of maximal intensity is determined by Wien's displacement law to be 
\begin{equation}
    \lambda_p = \frac{b}{T}\, ,
\end{equation}
with $b=\SI{2898}{\um}$, whence the condition of the Lamb--Dicke limit, $E_r\ll E_p$, then yields
\begin{equation}
    T \ll \sqrt{\frac{m \omega b^2}{2\pi^2 \hbar}} \approx \SI{2.53e7}{\K}\, ,
\end{equation}
which can evidently be assumed to be satisfied. Now we estimate the probability for a phonon to be excited during the experiment of duration $t$. A particle of radius $R$ will exchange black-body radiation with its environment with a power $4\pi\sigma R^2 T^4$. Assuming for simplicity that these photons have the wavelength $\lambda_p$, then we find the number $N_t$ of scattered photons in the time interval $[0,t]$ to be given by
\begin{equation}
    N_t = \frac{2t\sigma R^2 T^3 b}{\hbar c}.
\end{equation}
The probability to excite a single phonon given the coherent state $|\alpha\rangle$ is then upper bounded by
\begin{equation}
    p = N_t \frac{E_r}{\hbar\omega} \big(|\alpha|^2 + 1\big) = \frac{4\pi^2\sigma R^2 T^5}{b m c \omega}\big(|\alpha|^2 + 1\big)t\, .
\end{equation}
Requiring this quantity to be much smaller than $1$ and inserting $m = \SI{1.58e-10}{\kg}$, $R=\SI{12.5}{\um}$, 
$\omega \sim \SI{e-3}{\Hz}$, $t= \SI{185}{\s}$ together with $|\alpha|^2=10$ we find the condition 
\begin{equation}
    T \ll 
    \SI{11}{\K}\, ,
\end{equation}
which is a relatively moderate requirement and compares favorably with~\cite{Bose2017}.

\medskip
\paragraph{Electric and magnetic stray fields.}

While it is evident that the test mass should be charge neutral to avoid random accelerations due to 
fluctuating electric field, we also need to assess the impact of diamagnetic and dielectric contributions
to electric and magentic field gradients. The acceleration of a diamagnetic particle in a magnetic field 
is given by
\begin{equation}
    a = \frac{\chi V B}{m\mu_0}\frac{\partial B}{\partial x}
\end{equation}
where $\mu_0 = 4\pi \times \SI{e-6}{\newton/\ampere^2}$ and $\chi = 6.9$ for gold. For a random but (over the duration of a single shot) constant choice of $B$ and $\frac{\partial B}{\partial x}$ we find 
\begin{equation}
    \Delta\alpha = \sqrt{\frac{\omega}{2m\hbar}} \frac{\chi V B \frac{\partial B}{\partial x}}{\mu_0}t^2 = \SI{7.79e11}{} \cdot B\frac{\partial B}{\partial x} 
\end{equation}
and for $B = \SI{1}{\nano\tesla}$ (achievable using $\mu$-metal shields~\cite{gohil2020measurements}) the moderate requirement of 
\begin{equation}
    \frac{\partial B}{\partial x} \ll \SI{1.3e-3}{\tesla/\m}\, .
\end{equation}
For the acceleration response of a dielectric body to an electric field gradient we find
\begin{equation}
    a = \frac{\epsilon\epsilon_0 V E }{m} \frac{\partial E}{\partial x}
\end{equation}
and thus for an electric field gradient of $E=\SI{1000}{\volt/\m}$ (which can be achieved in current levitation experiments~\cite{hebestreit2018sensing}) we find
\begin{equation}
    \Delta\alpha = \sqrt{\frac{\omega}{2m\hbar}} \epsilon\epsilon_0 V E \frac{\partial E}{\partial x} = 2962 \frac{\partial E}{\partial x} t^2
\end{equation}
 and the requirement 
\begin{equation}
    \frac{\partial E}{\partial x} \ll \SI{3.4e-4}{\volt/\m^2}\, .
\end{equation}
While necessitating some electric and magnetic field shielding, these represent relatively modest experimental requirements.

\section{Ground-state cooling of a torsion pendulum}
\label{GSCooling}

The idea here would be to produce an effective damping of the oscillator by applying a force proportional to the particle's velocity. For this, we are required to measure the position and velocity of the pendulum in real-time and adjust the input laser power accordingly, to increase or decrease the radiation force on the oscillator. The position is detected by measuring the output light intensity from the cavity. When the pendulum is displaced, the size of the cavity is changed, and thus its resonance frequency. Since the transmissivity of the input mirror depends on the frequency of the cavity, a change in cavity length affects the amount of light that enters the cavity. In turn, the amount of light that comes out of the cavity informs about its size and, therefore, about the position of the pendulum.  In order for this to be a viable cooling method, it is required that the measurement rate is greater than the thermal decoherence rate by ${\Gamma_{\rm meas} > \Gamma_{\rm th}/8}$~\cite{Komori2020}. The measurement rate is the time that it takes to resolve the zero point motion of the oscillator, $\Delta \theta_{\rm zpm}$, and it is given by
\begin{equation}
\Gamma_{\rm meas} = \frac{\Delta \theta^2_{\rm zpm}}{2 S^{\theta \theta}(\omega)},
\end{equation}
where $S^{\theta \theta}(\omega)$ is the spectral density of the angular position of the pendulum. The thermal decoherence rate is given by ${\Gamma_{\rm th} = k_B T_{\rm th}/(\hbar Q)}$, where $k_B$ is Boltzmann's constant and $T_{\rm th}$ the environmental temperature. Thus, we are looking to minimise the power spectral density of rotation angle $S^{\theta \theta} (\omega)$. Once technical noise sources are reduced below thermal noise, the variance of the angle is dominated by thermal torque fluctuations~\cite{Shao2015, Ando2020}. This is characterised by a spectral density
\begin{equation}
S^{\theta \theta}_{\rm th} = |\chi (\omega)|^2 \frac{4 K_B T_{\rm th} I \omega_I^2}{Q_m \omega},
\end{equation}
where $\chi (\omega)$ is the mechanical susceptibility of the oscillator. The requirement ${\Gamma_{\rm meas} > \Gamma_{\rm th}/8}$ is equivalent to requiring that the thermal spectral density is below the spectral density of the standard quantum limit, ${S^{\theta \theta}_{\rm th} < S^{\theta \theta}_{\rm sql}}$, a condition that is achieved provided that
\begin{equation}
\frac{C}{n_{\rm th}} > 1,
\end{equation}
where ${n_{\rm th} \approx k_B T_{\rm th}/ (\hbar \omega_I)}$ is the mean phonon occupation number at temperature $T_{\rm th}$, and  ${C = 2 g^2/(\gamma \kappa)}$ is the cooperativity of the optomechanical system, with ${g^2= P_{\rm circ} a \omega_L / (I L c \omega_I)}$ the squared optomechanical coupling.  Here, we have $P_{\rm circ}$ the optical power circulating inside the cavity, $\omega_L$ the laser frequency, $L$ the length of the cavity when the pendulum is at equilibrium, $a$ is the arm length of the pendulum, and $c$ the speed of light. Putting all things together, we get the following condition for ground-state cooling
\begin{equation}
\frac{\hbar a^2 Q \omega_L P_{\rm cir}}{I L \omega_I c \kappa} > k_B T_{\rm th}.
\end{equation}
In order to gain some intuition of how hard it is to achieve this regime in 
an experimental setting, let us put some numbers~\cite{Ando2020},
\begin{equation}
\begin{array}{lll}
P_{\rm cir} = 
\SI{10}{\watt}\, , && \omega_L = (2 \pi)\,\SI{2.8e14}{\Hz}\, , \\
L = \SI{9e-2}{\m}\, , && \kappa = (2 \pi)\,\SI{0.5e6}{\Hz}\, .
\end{array}
\end{equation}
With these values, we obtain 
\begin{equation}
\frac{I \omega_I}{a^2 Q} T_{\rm th}< \SI{1.6e-9}{\kg \Hz \K}
\label{eq:coolingCriteria}
\end{equation}
Let us consider that our pendulum has the shape of a dumbbell consisting of a bar of length $2a$ with two balls of mass $m$, each attached to one end of the bar, as depicted in Fig~\ref{fig:torsion}b. Assuming for simplicity that the weight of the bar is negligible compared to that of the balls, we can approximate $I \approx 2 m a^2$. For calculations, let us take ${m = \SI{e-4}{\kg}}$ and $a=\SI{e-1}{\m}$. Note that this is a mass significantly larger than those considered in earlier estimations. This becomes necessary when considering torsion pendula in order to achieve the required low frequencies. In the rest of our analysis we will show that all the required regimes are satisfied also in this larger mass regime. For our choice of mass we can reach an angular frequency of ${\omega_I = \SI{7e-3}{\Hz}}$ for the torsional mode, provided a state-of-the-art suspension wire with torsion constant $\tau = \SI{e-10}{\newton \m/\radian}$ is used. For these parameters, Eq.~\eqref{eq:coolingCriteria} reduces to
\begin{equation}
\frac{T_{\rm th}}{Q} > \SI{e-3}{\K}\, .
\end{equation}
 State-of-the-art torque sensitivities in torsion pendula stand just one order of magnitude above the standard quantum limit~\cite{Ando2020}. Provided that these can reach the standard quantum limit, for a pendulum with a quality factor of $10^5$, ground-state cooling might be possible even at room temperature, while pendula with lower quality factors will require previous cooling of the thermal bath to cryogenic temperatures.

\end{document}